\documentclass{article}
\usepackage[utf8]{inputenc}
\usepackage{fullpage}
\usepackage{amsmath,amssymb,amsthm,mathrsfs}
\usepackage{braket,physics}
\usepackage{complexity}
\usepackage[ruled,linesnumbered,noend]{algorithm2e}
\usepackage{setspace}
\usepackage[noend]{algpseudocode}
\usepackage{parskip}
\usepackage[colorlinks,linkcolor=Red,citecolor=Blue]{hyperref}
\usepackage[capitalize]{cleveref}
\usepackage[dvipsnames]{xcolor}
\usepackage{subcaption}
\usepackage{tikz}
\usetikzlibrary{quantikz}

\usepackage[page,header]{appendix}
\usepackage{titletoc}

\title{Pauli Manipulation Detection codes and Applications to Quantum Communication over Adversarial Channels}
\author{Thiago Bergamaschi\thanks{UC Berkeley. Email: \texttt{thiagob@berkeley.edu}}}
\date{\today}

\newif\ifnotes
\notestrue
\newcommand{\Enc}{\mathsf{Enc}}
\newcommand{\Auth}{\mathsf{Auth}}
\newcommand{\Dec}{\mathsf{Dec}}
\newcommand{\PMD}{\mathsf{PMD}}
\newcommand{\Aux}{\mathsf{Aux}}
\newcommand{\Anc}{\mathsf{Anc}}
\newcommand{\PTC}{\mathsf{PTC}}
\newcommand{\NM}{\mathsf{NM}}

\newcommand{\Acc}{\mathsf{Acc}}
\newcommand{\Rej}{\mathsf{Rej}}

\newcommand{\suppress}[1]{}

\newtheorem{theorem}{Theorem}[section]

\newtheorem{lemma}[theorem]{Lemma}
\newtheorem{corollary}[theorem]{Corollary}
\newtheorem{claim}[theorem]{Claim}
\newtheorem{definition}{Definition}[section]

\newtheorem{fact}{Fact}[section]
\newtheorem{task}{Task}

\newcommand{\comment}[1]{}

\renewcommand{\epsilon}{\varepsilon}

\SetKwInput{KwInput}{Input}

\SetKwInput{KwOutput}{Output}

\begin{document}
\maketitle

\begin{abstract}


We introduce and explicitly construct a quantum error-detection code we coin a ``Pauli Manipulation Detection” code (or PMD), which detects every Pauli error with high probability. We apply them to construct the first near-optimal codes for two tasks in quantum communication over adversarial channels.

Our main application is an approximate quantum code over qubits which can efficiently correct from a number of (worst-case) erasure errors approaching the quantum Singleton bound. Our construction is based on the composition of a PMD code with a stabilizer code which is list-decodable from erasures, a variant of the stabilizer list-decodable codes studied by \cite{Leung2006CommunicatingOA, Bergamaschi2022ApproachingTQ}.

Our second application is a quantum authentication code for ``qubit-wise'' channels, which does not require a secret key. Remarkably, this gives an example of a task in quantum communication which is provably impossible classically. Our construction is based on a combination of PMD codes, stabilizer codes, and classical non-malleable codes \cite{Dziembowski2010}, and achieves ``minimal redundancy" (rate $1-o(1)$).



\end{abstract}

\pagenumbering{arabic}
\newpage

\setcounter{tocdepth}{2}
{\small\tableofcontents}
\newpage

\section{Introduction}



Algebraic Manipulation Detection (AMD) codes, introduced by Cramer, Dodis, Fehr, Padr\'o and Wichs \cite{Cramer2008DetectionOA}, are a fundamental primitive at the intersection of coding theory and cryptography. They are a form of keyless message authentication code, which offers error-detection guarantees against additive errors by assuming that the tampering adversary cannot “see” the codeword. In other words, AMDs detect arbitrary bit-flip errors with high probability, so long as the error is independent of the internal randomness in the code. Although their error-detection guarantees may seem a bit restrictive at first, AMD codes have found numerous applications as building blocks to stronger cryptographic primitives, including secret sharing schemes \cite{Cramer2015LinearSS}, non-malleable codes \cite{Dziembowski2010}, fuzzy extractors \cite{Cramer2008DetectionOA}, and more.\footnote{We refer the reader to \cite{Cramer2013AlgebraicMD} for a review of AMD codes, and \cref{section:related} for more related work.}

\textbf{In this paper} we consider a quantum analog of AMD codes. In particular, we introduce and explicitly construct quantum error-detection codes - subspaces of complex vector spaces - that can detect arbitrary Pauli errors (of arbitrarily large weight). This error model includes both bit-flip and phase-flip errors, generalizing in a sense their classical counterparts. Inspired by their name, we refer to our codes as “Pauli Manipulation Detection” codes (or PMDs):\footnote{$\|\cdot\|_\infty$ is the operator norm, or Schatten infinity norm. We refer the reader to \Cref{def:PMD2} for a formal definition, and \cref{section:prelim} for background on Pauli operators and matrix norms.}


\begin{definition} 
We refer to a subspace $\Pi$ of an $n$-qubit Hilbert space as an $\epsilon$-$\PMD$ if, for every $n$-qubit Pauli error $E\neq \mathbb{I}$, 
\begin{equation}
    \|\Pi E\Pi\|_\infty\leq \epsilon.
\end{equation}
\end{definition}

In the (informal) definition above, $\Pi$ plays both the role of code-space, and the projection onto it. This definition implies that every Pauli error on the code is detected \textit{almost}-\textit{deterministically}: if a code-state $\ket{\psi}\in \Pi$ is corrupted by a Pauli error $E\neq \mathbb{I}$, the projective measurement onto the PMD code-space $(\Pi, \mathbb{I}-\Pi)$ does not detect the Pauli tampering with probability at most
\begin{equation}
 \bra{\psi}E^\dagger \Pi E \ket{\psi} \leq \|\Pi E^\dagger \Pi E\Pi \|_\infty = \|\Pi E\Pi\|^2_\infty\leq \epsilon^2.
\end{equation}
Crucially, the gentle-measurement lemma then implies that PMD codes are able to detect whether they were corrupted by a Pauli, without disturbing the corrupted code-state. That is, conditioned on measuring $\mathbb{I}-\Pi$,
\begin{equation}
    (\mathbb{I}-\Pi) E \ket{\psi} \approx E\ket{\psi}. 
\end{equation}
We leverage this key idea to construct the first efficient quantum codes approaching the information-theoretically optimal rate (or ``minimal redundancy") for two tasks in quantum coding theory and cryptography.

\textbf{Applications} Our main application is an approximate quantum code over qubits for the adversarial erasures channel, which is able to efficiently correct from a near-optimal number of erasure errors. Due to a well-known connection between quantum error correction and secret sharing \cite{Gottesman1999TheoryOQ, Cleve1999HOWTS, Smith2000QuantumSS, Crpeau2005ApproximateQE}, this result can also be understood as a near-optimal \textit{ramp} quantum secret sharing scheme with qubit shares. 

Our second application is a quantum \textit{tamper-detection} code for ``qubit-wise" channels. In turn, this result can be thought of as a quantum authentication code for a restricted, un-entangled, adversarial model, which does not require a secret key. Curiously, these codes provide a form of error-detection which is provably impossible with classical messages.

\subsection{Our Results}

\subsubsection{Constructions of PMD codes} 

Our first result is an explicit construction of PMD codes, which achieves rate near $1$ and inverse-exponential error. 

\begin{theorem}\label{theorem:results-pmdexplicit}
    For all sufficiently large integers $n$ and $\lambda | n$, there exists an $\epsilon$-$\PMD$ encoding $n-\lambda$ qubits into $n$ qubits with error $\epsilon \leq n^{1/2}\cdot 2^{1-\lambda/4}$. Moreover, it can be constructed and encoded efficiently.
\end{theorem}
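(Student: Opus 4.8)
The plan is to build the PMD subspace from a random-like stabilizer code whose stabilizer group is chosen so that no nontrivial Pauli error commutes with "many" of its elements. Concretely, I would let $\Pi$ be the code-projector of an $[[n, n-\lambda]]$ stabilizer code whose $\lambda$ stabilizer generators are drawn from a suitable algebraic family — e.g. built from a primitive element of $\mathbb{F}_{2^\lambda}$ acting on $n/\lambda$ blocks of $\lambda$ qubits — so that the induced map from Paulis to syndromes is a strong (almost pairwise-independent) hash. The key identity is the standard one: for a stabilizer code with stabilizer group $S$ and a Pauli $E$,
\[
\Pi E \Pi = \pm\,\Pi\,\cdot\,\mathbf{1}[E \text{ commutes with all of } S],
\]
unless $E$ itself lies in $S\setminus\{\mathbb{I}\}$ up to phase. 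So $\|\Pi E\Pi\|_\infty$ is either $0$ or $1$; a deterministic PMD in this sense is impossible, which is exactly why the theorem must be about a \emph{distribution} over codes (or a code with an auxiliary register), and the bound $\epsilon$ should be read as bounding the \emph{average} (over the code choice, or over an ancilla) of $\|\Pi E\Pi\|_\infty$, or equivalently an off-diagonal operator norm after tracing out ancillas. I would therefore set up $\Pi$ to act on $n$ qubits of which $\lambda$ carry a (secret or maximally-mixed ancilla) "seed," and reduce the PMD condition to: for every fixed nontrivial Pauli $E$ on the $n-\lambda$ data qubits tensored with an arbitrary Pauli on the seed, the probability over the seed that $E$ is undetected is at most $\epsilon^2$.

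The main steps: (1) Fix the algebraic code family and write down its stabilizer generators explicitly as $\lambda$ Paulis parametrized by powers of a field generator; verify the generators commute and are independent, so $\Pi$ has the right dimension and can be constructed and encoded in $\mathrm{poly}(n)$ time by Gaussian elimination over $\mathbb{F}_2$. (2) For a nontrivial Pauli $E$, expand the undetection event "$E$ commutes with the (seed-dependent) stabilizer group" as a system of $\mathbb{F}_2$-linear constraints on the seed; the symplectic form between $E$ and each generator becomes a linear functional of the seed bits whose coefficients are the symplectic pairings of $E$ with the fixed block-Paulis. (3) Show that for $E\neq\mathbb{I}$ this linear system is "generically" full rank — this is where the field structure does the work, analogous to a BCH/Reed–Solomon distance argument — so the fraction of seeds making $E$ undetected is at most $2^{-\Omega(\lambda)}$. (4) Union-bound is \emph{not} available over all $4^n$ Paulis for free, so I would instead prove the per-$E$ bound $2^{-\Omega(\lambda)}$ and then note the PMD definition only needs the bound per fixed $E$, giving $\epsilon \le \sqrt{n}\cdot 2^{1-\lambda/4}$ after optimizing constants (the $\sqrt n$ slack coming from the $n/\lambda$ blocks and a looseness of roughly $2^{-\lambda/2}$ per block, with the $\sqrt n$ absorbing block-count factors).

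The hard part will be Step (3): proving that the symplectic-constraint matrix attached to an arbitrary nontrivial Pauli $E$ has large rank \emph{simultaneously} over the whole seed space, i.e. that no low-weight or structured $E$ can "hide" by being almost-orthogonal to the algebraic stabilizer family. This is precisely the point where the construction must exploit a code with good distance-like properties (a BCH/AG-type bound), rather than a generic random code, because we need an \emph{explicit} and \emph{efficient} $\Pi$ rather than a probabilistic existence proof. I expect the cleanest route is to encode the seed-dependence so that the undetection polynomial in the seed is a nonzero polynomial of controlled degree and then apply Schwartz–Zippel over $\mathbb{F}_{2^\lambda}$; verifying nonvanishing for every $E\neq\mathbb{I}$ — including Paulis supported entirely on the seed register or entirely on one block — will require a short case analysis. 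Once that is in hand, the operator-norm bound, efficiency of encoding, and the final arithmetic giving $\epsilon\le n^{1/2}2^{1-\lambda/4}$ are routine.
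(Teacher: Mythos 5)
Your opening observation---that for a single stabilizer code $\|\Pi E\Pi\|_\infty$ is always $0$ or $1$---is correct, but the conclusion you draw from it, that ``a deterministic PMD in this sense is impossible'' and that $\epsilon$ ``should be read as bounding the average (over the code choice, or over an ancilla),'' is wrong, and that misreading steers your whole construction off course. The paper's PMD \emph{is} a single fixed subspace $\Pi$ satisfying the worst-case operator-norm bound $\|\Pi E \Pi\|_\infty\le\epsilon$ for every $E\neq\mathbb{I}$. It is obtained not by placing a \emph{mixed} seed on $\lambda$ qubits, but by encoding into a \emph{coherent superposition} over the keys of a purity-testing-code family: $\Enc\ket{\phi}=|K|^{-1/2}\sum_k \ket{k}\otimes\Enc_{Q_k}\ket{\phi}$, with $\Pi=\Enc(\mathbb{I}\otimes\ketbra{0})\Enc^\dagger=|K|^{-1}\sum_{k,k'}\ket{k}\!\bra{k'}\otimes\Enc_k(\mathbb{I}\otimes\ketbra{0})\Enc_{k'}^\dagger$. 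The off-diagonal $\ket{k}\!\bra{k'}$ terms are essential: they are what makes $\Pi$ a genuine projector rather than a classical mixture of stabilizer projectors, and they are what your ``maximally-mixed ancilla'' picture cannot reproduce. A mixed seed gives a random projector, not a subspace, and does not satisfy the definition at all.

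This structural difference matters because the analysis has to handle three qualitatively different Pauli errors, and your reduction only captures one of them. (i) If $E$ acts only on the ``code'' register, the bound follows from the PTC $\epsilon$-strong property---this is essentially the ``probability over the seed that $E$ is undetected'' bound you describe. (ii) If $E$ acts only on the key register as a phase $Z^b$, then $\Pi E\Pi=0$ exactly by destructive interference of the $\sum_k$---invisible to your per-seed probability framing. (iii) If $E$ contains a bit-flip $X^a$ ($a\neq 0$) on the key register, then $\Pi E\Pi$ involves cross-terms $\Pi_k E_C \Pi_{k+a}$ between two \emph{different} PTC codes, and bounding these requires a genuinely new property the paper isolates, \emph{pairwise detectability}: $\mathbb{P}_{k}[S(Q_k)\cap N(Q_{k+s})\neq\{\mathbb{I}\}]\le\delta$ for every shift $s\neq0$. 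Your proposal never identifies this case or this condition; without it the argument breaks for any error touching the key register. You do correctly anticipate the algebraic flavor of the construction (powers of a primitive element of $\mathbb{F}_{2^\lambda}$) and the Schwartz--Zippel step over $\mathbb{F}_{2^\lambda}$---indeed the paper proves pairwise detectability by showing that $S_\alpha,S_{\alpha+\beta}$ can only commute when $\alpha$ is a root of a degree-$O(n/\lambda)$ polynomial---but those ingredients are applied to prove a property (pairwise detectability) that your proposal doesn't even state. To repair the proposal you would need to: replace the mixed seed by a coherent key register, compute $\Pi E\Pi$ explicitly by splitting $E=E_K\otimes E_C$, and supply the missing analysis of the key-bit-flip case via pairwise detectability.
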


Our construction is based on the stabilizer ``Purity Testing" codes (PTCs) introduced by \cite{barnum2002authentication} in the context of quantum message authentication. Informally, a PTC is a pseudorandom set of stabilizer codes\footnote{Stabilizer codes \cite{Gottesman1997StabilizerCA} are a quantum analog of linear codes, defined by the joint $+1$ eigenspace of a set of commuting Pauli operators (the generators). A Pauli error is said to be detectable if it anti-commutes with a generator, see \cref{definition:stabilizercode}. } $\{Q_k\}_{k\in K}$ which detects every Pauli error with high probability (over random choice of the key $k$). We show that the natural approach of encoding a message state $\ket{\psi}$ into a superposition of PTC encodings,
\begin{equation}
    \Enc_\PMD\ket{\psi} = |K|^{-1/2}\sum_{k\in K}\ket{k}\otimes \Enc_{Q_k}\ket{\psi},
\end{equation}
\textit{almost} defines a PMD.\footnote{$\Enc_\PMD$ and the $\Enc_{Q_k}$ are unitaries, acting on a message register and ancilla qubits initialized to $\ket{0}$ (omitted for clarity).} Unfortunately, in general this construction is not a PMD. Nevertheless, we show that the family of PTC's designed in \cite{barnum2002authentication} satisfy a certain form of ``key manipulation security", which guarantees this construction is secure (and maybe of independent interest).

\subsubsection{Approximate Quantum Erasure Codes on the Quantum Singleton Bound}
\label{section:aqec-result}

Our first application of PMD codes is to approximate quantum error correction.

An erasure error (or detectable leakage) on a quantum error correcting code corresponds to an error on a known location of the code. Erasure correction has recently found renewed interest in the quantum computing community, due to connections with quantum secret sharing \cite{Cleve1999HOWTS}, fault tolerance \cite{Wu2022ErasureCF}, the information-disturbance tradeoff \cite{Kretschmann2006TheIT}, and even information retrieval in black holes \cite{Yoshida2017EfficientDF}.

The quantum Singleton or ``no-cloning" bound imposes an information-theoretic limit on the maximum number of erasures a quantum code can (even approximately) correct from. It states that no quantum code of rate $r$ can correct from more than a $\frac{1}{2}(1-r)+o(1)$ fraction of erasures, for any assymptotically small recovery error \cite{Knill1996TheoryOQ, Rains1997NonbinaryQC, Grassl2020EntropicPO}. Moreover, to approach this bound over small alphabet sizes, one fundamentally needs \textit{approximate} quantum error correction: A famous result by Rains \cite{Rains1996QuantumSE} showed that no exact quantum code on qubits can correct from more than even a $1/3$ fraction of erasures. 

Our main result in this work is a randomized construction of an efficient and near-optimal approximate quantum code over qubits for the adversarial (worst-case) erasures channel.

\begin{theorem}[Main Result]\label{theorem:main}
    For every rate $0<r<1$ and sufficiently large $n\in \mathbb{N}$, there exists a Monte Carlo construction of an $n$ qubit quantum code of rate $r$ which corrects from a fraction $\delta\geq \frac{1}{2}(1-r) - O(\log^{-1} n)$ of erasures, up to an inverse-exponentially small recovery error $\epsilon = 2^{-\Tilde{\Omega}(n)}$. The construction succeeds with probability $1-2^{-\Tilde{\Omega}(n)}$.
\end{theorem}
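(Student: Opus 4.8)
The plan is to concatenate two codes: an \emph{outer} stabilizer code $Q$ on $n$ qubits, which faces the channel and is list-decodable from a near-Singleton fraction of erasures, and an \emph{inner} $\PMD$ code (from \Cref{theorem:results-pmdexplicit}) encoding the $rn$-qubit message into the $k'$-qubit logical space of $Q$. I would take the outer rate $r' = r + O(\log^{-1}n)$, leaving room for $\lambda = \Theta(n/\log n)$ redundancy qubits of the $\PMD$, so that the composed code has rate exactly $r$; and since $\tfrac12(1-r') = \tfrac12(1-r) - O(\log^{-1}n)$, it then suffices to make $Q$ list-decodable from a fraction $\delta = \tfrac12(1-r') - \Theta(\log^{-1}n) = \tfrac12(1-r)-O(\log^{-1}n)$ of erasures. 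With this choice of $\lambda$, \Cref{theorem:results-pmdexplicit} gives a $\PMD$ with error $\epsilon_\PMD \le n^{1/2}2^{1-\lambda/4} = 2^{-\widetilde\Omega(n)}$.

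\textbf{The outer code.} I would take $Q$ to be a uniformly random stabilizer code of rate $r'$ — a mild variant of the list-decodable stabilizer codes of \cite{Leung2006CommunicatingOA, Bergamaschi2022ApproachingTQ} — and control, for every erasure pattern $T$ with $|T| = \delta n$, the subgroup $\mathcal{P}_T$ of logical Pauli operators of $Q$ that admit a representative supported entirely inside $T$. A symplectic dimension count shows that $\mathbb{E}[\log_2|\mathcal{P}_T|]$ vanishes once $\delta$ is below the Singleton threshold $\tfrac12(1-r')$, while $\Pr[\log_2|\mathcal P_T| \ge j]$ decays like $2^{-\Omega(j\cdot n|1-2\delta-r'|)}$; taking $\delta = \tfrac12(1-r') - \Theta(\log^{-1}n)$ makes the exponent rate $\Theta(n/\log n)$, so a union bound over the $\le 2^n$ erasure patterns shows that, with probability $1 - 2^{-\widetilde\Omega(n)}$, every $|\mathcal{P}_T|$ is at most $\mathrm{poly}(n)$. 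This is the Monte Carlo step, and I expect it to be the main technical obstacle: one must prove the requisite exponential tail bound on the rank of the relevant random symplectic system and verify that it beats the union bound over all $\binom{n}{\delta n}$ patterns in an $O(\log^{-1}n)$ window around the Singleton line; this is where the $O(\log^{-1}n)$ loss is genuinely incurred.

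\textbf{Decoding.} Given the surviving qubits and the (known) erased set $T$, the decoder first runs the stabilizer erasure decoder of $Q$: Gaussian elimination on the syndromes recovers the logical state up to an unknown, uniformly random $E_{j_0} \in \mathcal{P}_T$, so the $k'$-qubit logical register now holds $E_{j_0}\ket{\bar\psi}$ with $\ket{\bar\psi} = \Enc_\PMD\ket{\psi}$. It then \emph{enumerates} $\mathcal{P}_T$ (of size $\mathrm{poly}(n)$): for each candidate $E \in \mathcal{P}_T$ it applies the appropriate Pauli to move the register to $E^\dagger E_{j_0}\ket{\bar\psi}$ and measures the $\PMD$ projective measurement $(\Pi, \Id-\Pi)$. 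When $E = E_{j_0}$ the register is exactly $\ket{\bar\psi}\in\Pi$ and the test accepts with certainty, whereupon we output $\Dec_\PMD$ of the post-measurement state; when $E\neq E_{j_0}$ the register is $E'\ket{\bar\psi}$ for the nontrivial $k'$-qubit Pauli $E' = E^\dagger E_{j_0}\neq\Id$, so by the $\PMD$ property the test accepts with probability at most $\epsilon_\PMD^2$, and — crucially — the gentle-measurement lemma ensures this rejection perturbs the register by only $O(\epsilon_\PMD)$ in trace distance, so the decoder safely continues to the next candidate. A union bound over the $\mathrm{poly}(n)$ candidates bounds the probability that some wrong $E$ is ever accepted, plus the accumulated gentle-measurement disturbance, by $\mathrm{poly}(n)\cdot 2^{-\widetilde\Omega(n)} = 2^{-\widetilde\Omega(n)}$; hence the decoder recovers $\ket{\psi}$ up to recovery error $\epsilon = 2^{-\widetilde\Omega(n)}$. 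Running the same argument on the Choi state of the composed channel handles entangled message inputs, completing the proof.

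\textbf{Remark on the difficulty.} Only the probabilistic analysis of the outer code (the rank tail bound versus the exponentially large union bound) is genuinely delicate; once $Q$ is fixed, the interaction with the $\PMD$ is routine — list-decoding reduces the ambiguity to the explicit Pauli group $\mathcal{P}_T$, the $\PMD$ filters out every wrong coset representative by \Cref{theorem:results-pmdexplicit}, and gentle measurement makes the filtering non-destructive so that enumeration is legitimate.
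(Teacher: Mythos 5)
Your high-level architecture --- composing a $\PMD$ inner code (from \Cref{theorem:results-pmdexplicit}) with a stabilizer outer code that is list-decodable from a near-Singleton fraction of erasures, then decoding by enumerating the short list of candidate Pauli corrections and filtering with the $\PMD$ projector via gentle measurement --- is exactly the paper's construction (\Cref{lemma:aqeccsfrompmds}, together with \Cref{alg:algorithm1,alg:algorithm2}). Two places, however, deviate in ways worth flagging.

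\textbf{The outer code.} You propose to take $Q$ to be a uniformly random stabilizer code and to prove a tail bound on $|\mathcal P_T|$ directly via a symplectic dimension count, explicitly conceding this is where the work is. The paper sidesteps this: it takes $Q$ to be a random \emph{CSS} code, and then the statement ``at most $L^2$ logically-distinct Paulis supported on $T$ have a given syndrome'' reduces, via the $X$/$Z$ decomposition, to the two classical constituents each being $(\delta,L)$-erasure list-decodable --- which for random linear codes is exactly the classical theorem of \cite{Guruswami2001ListDF,Ding2014ErasureLC} (imported as \Cref{theorem:randomld}, giving \Cref{corollary:randomcss}). This is not merely a convenience: the off-the-shelf classical result already carries the exponential confidence and the $L=2^{O(1/\gamma)}$ list size you would otherwise need to re-derive, and the CSS structure lets the union bound be over two classical syndrome systems rather than over the full symplectic group. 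Your route is plausible but unproven in the proposal; the paper's is proven and shorter.

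\textbf{The adaptive case --- a genuine gap.} Your decoding analysis assumes that, after the syndrome measurement, the logical register holds $E_{j_0}\ket{\bar\psi}$ for some \emph{single} (if unknown) $E_{j_0}\in\mathcal P_T$. This is the correct picture for \emph{non-adaptive} erasures, where filling the erased slots with $\mathbb{I}/2$ and measuring the syndrome literally produces a \emph{classical mixture} over the list. But the theorem (via \Cref{definition:aqec}) allows an \emph{adaptive} adversary whose Kraus operators $A_a$ are arbitrary operators on the to-be-erased qubits, and \Cref{claim:syndromecollapse} shows that the post-syndrome state is then a \emph{coherent superposition} $\sum_i a_i E_i\ket{\bar\psi}$ over the list, not a mixture. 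The paper is explicit (see the discussion surrounding \Cref{section:overview-sparse-pauli-channel}) that handling these superpositions is what forces the coherent decoder of \Cref{alg:algorithm2} and the separate \Cref{claim:multipaulirecovery}, whose proof crucially uses the near-orthogonality $|\braket{\phi_i}{\phi_j}|\le\epsilon$ of corrupted $\PMD$ code-states to control the cross-term interference, and picks up a weaker $\epsilon^{1/2}L^{3/4}$ bound as a result. Your ``union bound over poly$(n)$ candidates plus accumulated gentle-measurement disturbance'' reasoning is the single-Pauli bound (\Cref{claim:singlepaulirecovery}) and does not, as written, extend to a superposition. Your closing remark about the Choi state addresses entanglement of the \emph{message} with a reference (diamond norm), which is a different issue from adaptivity of the \emph{adversary}. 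To close this gap you need an argument, as in \Cref{claim:multipaulirecovery}, that quantifies the interference between branches of the superposition using the $\PMD$ property.

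Everything else --- the rate budgeting, the efficiency of erasure list-decoding as a linear system, the final error accounting $\mathrm{poly}(n)\cdot 2^{-\widetilde\Omega(n)}=2^{-\widetilde\Omega(n)}$ --- is consistent with the paper.
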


Moreover, one can encode, decode, and sample a classical description of our codes efficiently in poly$(n)$ time. We describe the relation between \cref{theorem:main} and quantum secret sharing in \cref{section:related}.

A challenging consequence of the relaxation to approximate erasure correction is that non-adaptive (or oblivious) adversaries are fundamentally very different than adaptive adversaries \cite{Lin2018SecretSW}, which can use the erased qubits to learn partial information about the message (and modify their corruption). While against the former one can often borrow ideas from the well-studied random erasure error model \cite{Hayden2017ApproximateQE, Gullans2020QuantumCW}, in the adaptive setting we require very different techniques.

Our code construction is based on a recent approach to design quantum codes against adversarial errors, list-decodable stabilizer codes \cite{Leung2006CommunicatingOA, Bergamaschi2022ApproachingTQ}. However, known approaches to use list decoding for approximate quantum error correction either assume shared randomness between sender and receiver, or require a large qu\textit{d}it alphabet size. Thereby, they are unable to approach the quantum Singleton bound over qubits, at least in a ``one-shot" setting.

Our new insights lie in combining our PMD codes, with a new type of stabilizer code which is list-decodable \textit{from erasures}, an analog to the classical notion introduced by \cite{Guruswami2001ListDF}. In this application, our main technical contribution is a reduction from approximate quantum erasure correction to PMDs and erasure list-decoding, and is inspired by a classical result by \cite{Cramer2015LinearSS} (Eurocrypt 2015) on the construction of classical secret sharing schemes from AMDs and list-decodable codes.  



\subsubsection{Quantum Tamper-Detection codes for Qubit-wise channels}

Our second application of $\PMD$ codes is to designing quantum tamper-detection codes. 

Tamper-detection \cite{Jafargholi2015TamperDA, Boddu2021TamperDA} is an adversarial coding-theoretic guarantee which is closely related to error-detection and authentication, but \textit{without} a secret key. Informally, both quantum tamper-detection codes (TDCs) and quantum authentication schemes (QASs) \cite{barnum2002authentication, Hayden2016TheUC} are protocols in which a sender (Alice) conveys a (quantum) message $\psi$ to a receiver (Bob) over some insecure quantum channel $\Lambda$, with the guarantee that Bob either receives a state close to $\psi$, or aborts $\bot$.\footnote{See \cref{definition:qa} for a formal definition.}

Their distinction lies in the adversarial model: whereas QASs are secure against arbitrary channels $\Lambda$, TDCs only hope to address certain restricted tampering models. Naturally, the stronger adversarial model also comes with a drawback. \cite{barnum2002authentication} showed that, in general, sending an authenticated $k$ qubit message requires a secret key of at least $2k\cdot (1-o(1))$ bits shared between Alice and Bob. The motivating question we ask here is: 

\textit{Are there simpler adversarial models for which quantum tamper-detection is possible without a secret key?}

The model we study in this paper is that of ``qubit-wise" channels, where $\Lambda = \otimes_{i\in [n]}\Lambda_i$. Informally, this corresponds to Alice sending each qubit of her $n$ qubit code-state to $n$ distinct, non-communicating and un-entangled parties. These parties may strategize ahead of time and decide on a nefarious combination of distinct channels $\Lambda_i\neq \Lambda_j$, but they must act on the code-state independently of each other.\footnote{We remark that by an averaging argument, the adversaries are allowed shared randomness, just not shared entanglement.}

It should be clear that this task remains impossible classically. Indeed, in the absence of shared randomness between sender and receiver, the $n$ parties can always collude ahead of time, and substitute the entire message with a valid encoding of any other pre-agreed message $\hat{m}$. Bob will be none the wiser, and will decode to $\hat{m}$ without noticing that Alice's message has been completely replaced. Nonetheless, we show that in the quantum setting the scenario is drastically different:

\begin{theorem}\label{theorem:auth-results}
    For every sufficiently large $n\in \mathbb{N}$, there exists a Monte Carlo construction of a quantum tamper code for qubit-wise channels, of blocklength $n$ qubits, rate $1-O(\log^{-1}n)$, and error $\epsilon \leq 2^{-\Tilde{\Omega}(n)}$. The construction succeeds with probability $1-2^{-\Tilde{\Omega}(n)}$.
\end{theorem}

Although classical tamper detection is impossible in the ``bit-wise" setting, the closely related goal known as \textit{non-malleability} is possible. Classical non-malleable codes \cite{Dziembowski2010} are a relaxation of authentication and error correcting codes, where Bob is tasked with decoding Alice's message $m$ to either the original message $m$, rejection $\bot$, or a completely unrelated value $\Tilde{m}$. In contrast to authentication and error correction, non-malleable codes are much more versatile and can be constructed in a number of limited-adversarial settings, even without a secret key. 

Our code construction combines non-malleable codes with PMD codes and stabilizer codes, and is inspired by the design of a classical bit-wise non-malleable code by \cite{Cheraghchi2013NonmalleableCA}. However, we emphasize that our result is fundamentally stronger than any form of classical or quantum non-malleability (for qubit-wise channels, see \cref{section:overview-keyless}). In this application, our main technical contribution draws ideas from the quantum circuit lower bounds by \cite{Anshu2020CircuitLB}, to show how to use entanglement to evade the previously mentioned ``substitution" attack, and detect adversarial tampering without a secret key.

\subsection{Additional Related Work}
\label{section:related}

\textbf{Tamper-detection codes and Secret Sharing.} Our work is inspired by constructions of classical secret sharing schemes from ideas in tamper-resilient cryptography \cite{Cramer2001OnTC, Cramer2008DetectionOA, Cramer2015OptimalAM, Cramer2015LinearSS, Lin2018SecretSW}. A key ingredient in that line of work is the concept of an algebraic manipulation detection (AMD) code \cite{Cramer2008DetectionOA}. Simply put, an ``algebraic manipulation" is a form of adversarial data tampering, without prior knowledge of the data values. For instance, if our data is a value $x$ in a group $\mathcal{G}$, then an ``algebraic manipulation" corresponds to adding some value $\delta\in \mathcal{G}$ (without knowing $x$), resulting in $x+\delta$. An AMD code corresponds to an encoding of data into $\mathcal{G}$, in a way that any algebraic manipulation $\delta\neq 0$ is detected with high probability. \cite{Jafargholi2015TamperDA} generalized AMD codes to other tampering families, including low-degree polynomial and low-depth circuit tampering.

\cite{Boddu2021TamperDA} were the first to explore tamper-detection codes in the quantum setting. Their work focused on the concept of a ``Unitary Tamper Detection Code", a quantum code which can faithfully detect errors chosen from a restricted family of unitary operators. Among other results, they constructed explicit quantum codes for classical messages which could detect arbitrary Pauli errors, which they referred to as ``Quantum AMDs". Our PMDs arose as a natural generalization to quantum messages, albeit our techniques depart significantly. Moreover, \cite{Boddu2021TamperDA} don't attempt to address tampering \textit{channels} (just unitaries), whose lack of reversibility imply a unique set of challenges. 

\textbf{Approximate Quantum Error Correction.} \cite{Leung1997ApproximateQE} were the first to showcase the benefits of approximate error correction to quantum communication. They constructed a 4 qubit code which could correct a single amplitude dampening error, violating the ``Hamming bound" for quantum codes based on the Pauli basis. Several results then studied conditions in which approximate quantum error correction is even possible \cite{Barnum2000ReversingQD, SW02, Devetak2005ThePC, Kretschmann2006TheIT, Bny2009ConditionsFT, MN10, MN12, BO10, Hayden2017ApproximateQE}.

In this context, most related to our work are the results of \cite{Crpeau2005ApproximateQE, Leung2006CommunicatingOA, Bergamaschi2022ApproachingTQ}, who studied the adversarial error channel. While it is well known consequence of the Knill-Laflamme conditions \cite{Knill1996TheoryOQ} that no exact quantum code can correct from more than $n/4$ adversarial errors, \cite{Crpeau2005ApproximateQE} constructed approximate quantum codes which could correct $(n-1)/2$ adversarial errors, all the way up to the no-cloning bound! Their codes relied on an insightful combination of quantum error correction, quantum authentication, and classical secret sharing, but came at a cost of an exponentially large alphabet size and assymptotically decaying rate. 

\cite{Leung2006CommunicatingOA, Bergamaschi2022ApproachingTQ} address these issues using the list-decoding of stabilizer codes. \cite{Leung2006CommunicatingOA} presented a quantum code over qubits which beat the quantum Gilbert-Varshamov (GV) bound, albeit relied on an inefficient decoding algorithm, and a secret key shared between sender and receiver. \cite{Bergamaschi2022ApproachingTQ} showed how to efficiently approach the quantum Singleton bound over (large, but) constant-sized alphabets, by combining purity testing codes and classical secret sharing with efficient constructions of list-decodable stabilizer codes. While in this work we share many of the same tools as \cite{Bergamaschi2022ApproachingTQ}, what distinguishes our approaches is that (1) our goal is to design codes over qubits, and (2) we correct from a number of (erasure) errors much larger than the underlying code distance. Thereby, we cannot rely on techniques such as ``code-blocking" to hide shared randomness back into the code\footnote{As similarly done by \cite{Crpeau2005ApproximateQE} and \cite{Hayden2017ApproximateQE} to approach the no-cloning bound.}, nor can we leverage ``privacy" by relying on exact local-indistinguishability. 

\textbf{Quantum Secret Sharing.} \cite{Gottesman1999TheoryOQ, Cleve1999HOWTS, Smith2000QuantumSS, Crpeau2005ApproximateQE} showed that exact and approximate quantum error correction and quantum secret sharing are deeply connected. In a quantum $(n, t)$-threshold secret sharing scheme, a secret quantum state is encoded into $n$ shares such that any $t$ can be used to reconstruct the secret, but any $\leq t-1$ contain no information about the secret. By definition, this is already a quantum error correcting code, and \cite{Cleve1999HOWTS} noted that the converse is also true. Via the information-disturbance paradigm, even approximately correcting from quantum erasures implies the approximate privacy of the erased subset \cite{Kretschmann2006TheIT}.

A natural extension to threshold secret sharing is the concept of a $(n, t, p)$ \textit{ramp} secret sharing scheme, where the ``reconstruction" and ``privacy" thresholds are distinct. Any $t$ suffice to reconstruct the secret, but no $p$ of them reveal any information. Our result in \cref{theorem:main} can be understood as a near-optimal quantum ramp secret sharing scheme encoding secrets of $r\cdot n$ qubits into $n$ binary shares, where $t = (\frac{1+r}{2}+\gamma)\cdot n, p= (\frac{1-r}{2}-\gamma)\cdot n$, with recovery and privacy error $2^{-\Omega(\gamma\cdot n)}$.

\textbf{Non-Malleable Codes.} The first construction of an efficient non-malleable code by \cite{Dziembowski2010} was for ``bit-wise" tampering functions, which is the conceptual classical analog to the qubit-wise channels we study. Their construction was based on combinations of AMD codes and error-correcting codes (with extra secret-sharing properties). A series of works then built on their results by developing explicit and ``minimal redundancy" (rate $1-o(1)$) constructions \cite{Cheraghchi2013NonmalleableCA, nm-permutations, nm-rate-optimizing}. Qualitatively, our construction of tamper detection codes is inspired by \cite{Cheraghchi2013NonmalleableCA}, who combined a ``concatenated" non-malleable code with a pseudorandom permutation. The secret key for the permutation was then itself hidden back into the code, using a smaller, sub-optimal non-malleable code. 

Outside of the bit-wise setting, other notable examples of non-malleable codes include the well studied ``split-state'' model (where the codeword is segmented into $t = O(1)$ parts, instead of $n$ as in the bit-wise setting) \cite{DKO13,adl14, KOS17, ASK22, split-state-review}, as well as AC$^0$ \cite{marshalllowdepth} and poly-sized \cite{marshalllarge} circuits. 

\textbf{In concurrent work} \cite{Boddu2023SplitStateNC} introduced the notion of a non-malleable code for quantum messages, or a \textit{quantum non-malleable code}, and developed applications to quantum secret sharing schemes. Informally, in their definition, the receiver Bob is tasked with either outputting the original quantum message $\psi$ (preserving any side-entanglement), or a fixed state $\sigma$ (decoupled from any side-entanglement). They constructed said codes in the 2-split-state model (which is the hardest), even against entangled adversaries, by combining \textit{quantum-secure} versions of non-malleable codes and extractors \cite{Boddu2022NonMalleableCI, Batra2023QuantumSN} with quantum authentication schemes. While our works share the high level approach of using a non-malleable secret key to encrypt a quantum state, realizing this approach in our two (incomparable) settings requires different ideas and techniques.

We remark that while our (un-entangled) qubit-wise adversarial model is strictly weaker than their (entangled) split-state model, we emphasize that our tamper detection security guarantee is fundamentally stronger than non-malleability - and indeed, tamper detection is impossible against adversaries with unbounded entanglement. We further discuss the differences between these models in \cref{section:overview-keyless}.


\section{Technical Overview}
\label{section:overview}

In this section, we present a brief overview of the main ideas in this work. We assume familiarity with basic definitions of Pauli operators and stabilizer codes. A review of these concepts is presented in \cref{section:prelim}.

In \Cref{section:overview-sparse-pauli-channel}, we show how the intuition that PMDs detect Pauli errors without perturbing the corrupted code-states can be bootstrapped into correcting from ``sparse" Pauli channels. Then, in \Cref{section:overview-qlde}, we overview how list-decoding from erasures can be used to reduce approximate quantum erasure correction to correcting from sparse Pauli channels.

In \Cref{section:overview-keyless}, we overview our construction of tamper detection codes for qubit-wise channels from classical non-malleable codes, PMD codes, and stabilizer codes. Finally, in \Cref{section:overview-explicit}, we overview our explicit constructions of PMD codes from the PTC codes of \cite{barnum2002authentication}.

\subsection{The Sparse Pauli Channel Lemma}
\label{section:overview-sparse-pauli-channel}

The motivating idea behind our use of PMDs in approximate quantum error correction is to ask when a Pauli error $E$ can also be identified (and corrected), if it can be detected. To make this concrete, consider the following distinguishing task:

\begin{task} [Sparse Pauli Channel Correction]
    Let $\ket{\psi}\in \Pi$ be a $\PMD$ code-state and $E_1, E_2$ be two Pauli operators. Then, given the quantum state $E_i\ket{\psi}$ for unknown $i$ and the classical description of $E_1, E_2$, can we recover $\ket{\psi}$?
\end{task}

Naturally, there is a simple algorithm to recover a state close to $\ket{\psi}$. First apply $E_1^\dagger$ and measure $(\Pi, \mathbb{I}-\Pi)$. If the outcome is $\mathbb{I}-\Pi$, then, by the gentle-measurement lemma (see \Cref{claim:pmdapproximation}), we are left with a state $\approx E_1^\dagger E_2 \ket{\psi}$. One can then revert $E_1$, attempt the next correction $E_2^\dagger$, and measure $\Pi$ again. 

We refer to this task as ``Sparse Pauli Channel Correction" because it encapsulates the case where a receiver, Bob, is handed a probabilistic mixture of $\psi\in \Pi$ corrupted by a random Pauli error chosen from a small, known set of errors. In \Cref{claim:singlepaulirecovery} we formalize this approach, and show how to bootstrap the gentle-measurement lemma into recovering $\psi$ even when the error comes from a short list of $L$ possible candidate errors:

\begin{lemma}[Informal, \Cref{claim:singlepaulirecovery}]\label{lemma:results-sparsepauli}
Let $\mathcal{E} = \{E_1, \cdots, E_L\}$ be a list of $L$ distinct $n$ qubit Pauli operators. Then, given a classical description of $\mathcal{E}$, there exists a unitary $\Dec_\mathcal{E}$ on $n+L$ qubits which approximately recovers states in $\Pi$ from errors in $\mathcal{E}$:
\begin{equation}
    \| \Dec_\mathcal{E}\big(E\ket{\psi}\otimes \ket{0^L}\big) - \ket{\psi}\otimes \ket{\Aux_E}\|_2\leq 2\cdot L\cdot \epsilon \text{ for all code-states } \ket{\psi}\in \Pi \text{ and }E\in \mathcal{E},
\end{equation}
for some $L$ qubit state $\ket{\Aux_E}$ which only depends on $E$ and $\mathcal{E}$.
\end{lemma}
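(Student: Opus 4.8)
The plan is to realize the iterative guess-and-check procedure sketched above as an honest unitary on $n+L$ qubits, and then to bound the error it accumulates by an induction over its $L$ rounds. I would attach one ancilla qubit per candidate error and let round $j$ apply
\[
V_j \;:=\; (\text{c-}E_j)_j \cdot U_j \cdot (E_j^\dagger)_{\mathrm{code}},
\qquad\text{where}\qquad
U_j \;:=\; \Pi\otimes\mathbb{I}_j + (\mathbb{I}-\Pi)\otimes X_j ,
\]
i.e.\ provisionally apply the correction $E_j^\dagger$ to the code register, use $U_j$ to coherently copy into ancilla $j$ the bit ``did the code register leave $\Pi$?'', and then re-apply $E_j$ to the code register controlled on ancilla $j=\ket{1}$ --- so the correction is \emph{kept} exactly when the code register landed back inside $\Pi$. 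Each $V_j$ acts only on the code register and ancilla $j$, and $U_j$ is a sum of the two complementary projectors tensored with unitaries, so $V_j$ is manifestly unitary; I take $\Dec_{\mathcal E}:=V_L\cdots V_2V_1$, run with the ancillas initialized to $\ket{0^L}$, as the decoder (it depends only on $\mathcal E$ and the fixed code $\Pi$).

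Next I would fix a code-state $\ket{\psi}\in\Pi$ and let the true error be $E=E_m$ (we may assume $\mathbb{I}\notin\mathcal E$: the case $E=\mathbb{I}$ is the no-error case and needs only a trivial change to the ancilla below). The crux is the invariant: after round $j$ the global state is $\ell_2$-within $2j\epsilon$ of $(F_j\ket{\psi})\otimes\ket{a_j}$, where $(F_j,a_j)=(E_m,\,1^j0^{L-j})$ while $j<m$ and $(F_j,a_j)=(\mathbb{I},\,1^{m-1}0\,1^{j-m}0^{L-j})$ once $j\ge m$. The base case $j=0$ is the input $E_m\ket{\psi}\otimes\ket{0^L}$. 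For the inductive step, since $V_{j+1}$ is unitary and acts only on the code register and ancilla $j+1$, it suffices to analyze $V_{j+1}$ on $(F_j\ket{\psi})\otimes\ket{0}_{j+1}$. If $j+1=m$ this is the round that detects the error: $E_m^\dagger F_j=\mathbb{I}$, so the code register is \emph{exactly} $\ket{\psi}\in\Pi$ after $E_m^\dagger$, $U_m$ leaves ancilla $m$ at $\ket{0}$, the controlled correction does nothing, and we land exactly on $\ket{\psi}\otimes\ket{0}_m$ with no new error. If $j+1\ne m$, then $E_{j+1}^\dagger F_j$ is (up to phase) a nontrivial Pauli --- it is $E_{j+1}^\dagger E_m$ if $j+1<m$ and $E_{j+1}^\dagger$ if $j+1>m$, using distinctness of the $E_i$ --- so the $\PMD$ property together with $\ket{\psi}=\Pi\ket{\psi}$ gives $\|\Pi E_{j+1}^\dagger F_j\ket{\psi}\|_2\le\|\Pi E_{j+1}^\dagger F_j\Pi\|_\infty\le\epsilon$; hence the $\ket{0}_{j+1}$ branch produced by $U_{j+1}$ has norm at most $\epsilon$ and may be discarded, while on the $\ket{1}_{j+1}$ branch the controlled correction yields $E_{j+1}(\mathbb{I}-\Pi)E_{j+1}^\dagger F_j\ket{\psi}=F_j\ket{\psi}-E_{j+1}\Pi E_{j+1}^\dagger F_j\ket{\psi}$, which is within $\epsilon$ of $F_j\ket{\psi}$ --- this is precisely the gentle-measurement estimate of \Cref{claim:pmdapproximation}. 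Thus every round other than round $m$ costs at most $2\epsilon$, the invariant propagates, and after all $L$ rounds the state is within $2(L-1)\epsilon\le 2L\epsilon$ of $\ket{\psi}\otimes\ket{\Aux_E}$ with $\ket{\Aux_E}:=\ket{1^{m-1}0\,1^{L-m}}$, a computational-basis state depending only on $E$ and the ordered list $\mathcal E$.

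I expect the real work to lie in the bookkeeping of the inductive step rather than in any single estimate: one must confirm that after discarding the small $\Pi$-overlap the code register is genuinely close to \emph{the next Pauli applied to an exact code-state} --- not merely to ``some'' vector --- since that is exactly what licenses re-invoking the $\PMD$ inequality at the following round. This is what dictates the two-regime form of the invariant ($F_j=E_m$ before detection, $F_j=\mathbb{I}$ after) and the controlled (rather than unconditional) re-application of $E_j$ inside $V_j$. Finally, two cosmetic points: ``$L$ distinct Pauli operators'' should be read modulo global phase (equivalently, $E_i^\dagger E_j\not\propto\mathbb{I}$ for $i\ne j$), which is the convention already implicit in the definition of a $\PMD$; and $\Dec_{\mathcal E}$ is efficient whenever the reflection about $\Pi$ is, as holds for the stabilizer $\PMD$ of \Cref{theorem:results-pmdexplicit}.
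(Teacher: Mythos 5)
Your proof is correct and takes essentially the same approach as the paper's proof of \Cref{claim:singlepaulirecovery}: sequentially try each candidate correction, use the $\PMD$ projection to coherently test whether it succeeded, and bound the accumulated disturbance by an inductive application of the gentle-measurement estimate (\Cref{claim:pmdapproximation}), $O(\epsilon)$ per round. The one mechanical difference is that the paper's \Cref{alg:algorithm2} latches on acceptance --- the flag-controlled gate $A_i$ forces all later flags to $\ket{1}$ so that once the $\PMD$ accepts no further gate touches the code register --- whereas your circuit continues to apply and then undo the remaining candidate Paulis after detection, which works equally well but requires re-invoking the $\PMD$ inequality on $E_{j+1}^\dagger$ for $j+1>m$ and hence the $\mathbb{I}\in\mathcal E$ caveat you correctly flag and patch.
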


If $L = o(1/\epsilon)$, then PMDs can correct from Sparse Pauli Channels with asymptotically decaying error. In fact, in \Cref{section:AQEC} we present a quantum algorithm which can coherently correct from superpositions of errors in the span of $\mathcal{E}$, and thus generalizes the proposition above on mixtures of Pauli errors. As we later discuss, this coherent correction step is what later allows us to recover from \textit{adaptive} erasure errors.

\subsection{From List Decoding to Approximately Correcting Quantum Erasures}
\label{section:overview-qlde}

Our quantum erasure code construction combines PMD codes with stabilizer codes which are \textit{list-decodable from erasures}, a variant of the list-decodable stabilizer codes studied by \cite{Leung2006CommunicatingOA, Bergamaschi2022ApproachingTQ}. In particular, we consider their code composition: 
\begin{equation}
  \Enc(\ket{m})\equiv   \Enc  \ket{m}\otimes \ket{0^{a_\PMD}}\otimes \ket{0^{a_Q}} = \Enc_Q \big(\Enc_{\PMD} \otimes \mathbb{I}^{a_Q}\big) \ket{m}\otimes \ket{0^{a_\PMD}}\otimes \ket{0^{a_Q}}
\end{equation}

where the unitaries $\Enc_Q, \Enc_{\PMD}$ encode into the erasure list-decodable code and the PMD code respectively, and $a_\PMD, a_Q$ correspond to ancilla qubits. \Cref{fig:setup} represents this encoding setup, and the erasure process.

\begin{figure}[ht]
    \centering
    \includegraphics[width=0.7\textwidth]{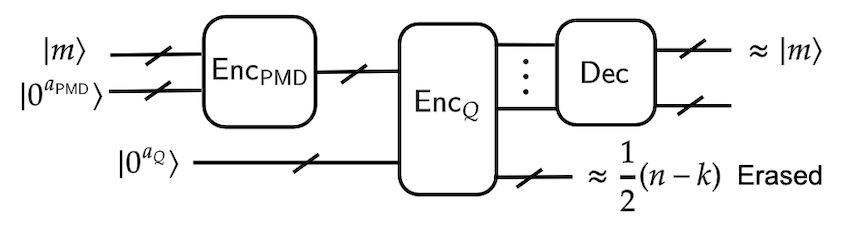}
    \caption{The composition of a $\PMD$ and QLDE codes, and an erasure error.}
    \label{fig:setup}
\end{figure}

One of our main technical contributions is designing an efficient decoding channel $\Dec$ for our code construction, that is, a reduction from approximate erasure correction to list-decoding from erasures and Pauli Manipulation Detection. In the rest of this subsection, we overview how an outer list-decoding step can be used to reduce the erasure channel to a sparse Pauli channel, which we can then handle with the inner PMD code and \Cref{lemma:results-sparsepauli}.

In classical coding theory, list-decoding \cite{Elias1957ListDF, woz1958} arose as a natural relaxation to ``unique'' decoding, where the decoder is tasked with recovering a list of candidate messages from a corrupted codeword, when recovering a unique one may no longer be possible. Formally, a code $C\subset \Sigma^n$ is said to be $(\tau, L)$ list-decodable if there are at most $L$ codewords of $C$ in any hamming ball of radius $\tau\cdot n$ in $\Sigma^n$. If $C$ is linear, with parity check matrix $H$, then this definition implies that at most $L$ additive errors $e_1, \cdots, e_L$ of bounded weight $\leq \tau\cdot n$ can have the same parity check syndrome $s = He_i$.

Analogously to linear codes, in a stabilizer code which is $(\tau, L)$ list-decodable, there are at most $L$ logically-distinct Pauli errors of weight $\leq \tau\cdot n$ consistent with any syndrome measurement \cite{Leung2006CommunicatingOA, Bergamaschi2022ApproachingTQ}. Operationally, what this implies is that measuring the syndrome of a bounded weight adversarial error channel, collapses the channel into a mixture over Pauli errors in a discrete list $\mathcal{E}=\{E_1, \cdots, E_L\}$, i.e., a sparse Pauli channel!

The notion of list-decoding \textit{from erasures} was introduced by \cite{Guruswami2001ListDF}, and conceptually corresponds to the natural extension to list-decoding when the error locations are known. Based on their work, we define:

\begin{definition}\label{definition:overiew-QLDE}
A stabilizer code on $n$ qubits is $(\tau, L)$ list-decodable from erasures (QLDE) if for all syndrome vectors $s$ and subsets $T\subset [n]$, $|T|\leq \tau\cdot n$, there are at most $L$ logically-distinct Pauli operators supported on $ T$ and of syndrome $s$.
\end{definition}

In \Cref{section:QLDE}, we show that a syndrome measurement on a QLDE code analogously ``collapses" the quantum erasure channel into a mixture of Pauli errors in a list $\mathcal{E}$. More importantly, (given the syndrome) we reason that one can always compute such a list of Pauli errors $\mathcal{E}=\{E_1, \cdots, E_L\}$ efficiently in terms of $n$ and $L$, since list-decoding stabilizer codes \textit{from erasures} actually corresponds to solving a linear system.\footnote{In contrast to normal list-decoding of linear codes, erasure list-decoding linear codes simply corresponds to solving a linear system, and thereby is efficient in terms of $n$ and $L$. See \cite{Guruswami2001ListDF}, or \Cref{section:QLDE} for a review.}

Together with our Sparse Pauli Channel \Cref{lemma:results-sparsepauli}, in \Cref{section:AQEC} we conclude that the code composition $\Enc = \Enc_Q\circ \Enc_\PMD$ efficiently corrects erasures:

\begin{lemma}[Informal, \Cref{lemma:aqeccsfrompmds}]\label{lemma:results-aqeccsfrompmds}
    The code composition of an $\epsilon$-$\PMD$ with a $(\tau, L)$ erasure list-decodable code corrects from a $\tau$ fraction of adversarially chosen erasures with up to a $O(\epsilon^{1/2}\cdot L^{3/4})$ recovery error in poly$(n, L)$ time. 
\end{lemma}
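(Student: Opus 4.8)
The plan is to decode in two stages: first strip off the outer erasure-list-decodable code $Q$ to reduce the adversarial erasure channel to a sparse Pauli channel on the inner PMD, then invoke Lemma \ref{lemma:results-sparsepauli} to undo the residual Pauli error. Concretely, suppose the adversary erases a subset $T\subset[n]$ with $|T|\le\tau n$, applying some (worst-case, possibly adaptive) channel $\Lambda_T$ supported on $T$. The first step of $\Dec$ measures the stabilizer syndrome $s$ of $Q$ on the (partially replaced) received state. I would argue, as sketched in \Cref{section:QLDE}, that because $Q$ is $(\tau,L)$ list-decodable from erasures, conditioned on any syndrome outcome $s$ the post-measurement state is (close to) a mixture — in fact a coherent superposition — of the code-state corrupted by Pauli errors drawn from an explicit list $\mathcal{E}=\{E_1,\dots,E_L\}$ of logically-distinct Paulis supported on $T$ with syndrome $s$. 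This list is computed efficiently by solving a linear system (the erasure list-decoding of a stabilizer/linear code), so this step runs in $\mathrm{poly}(n,L)$ time. After applying the Clifford correction that maps syndrome $s$ back to the trivial syndrome, the state lives back in the logical space of $Q$, i.e. in the image of $\Enc_\PMD$, corrupted by a superposition of the induced logical Paulis $\bar E_1,\dots,\bar E_L$ on the PMD.

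The second step applies the PMD decoder $\Dec_\mathcal{E}$ from Lemma \ref{lemma:results-sparsepauli} to this list of (at most $L$) logical Paulis. Since that lemma is stated for a fixed error $E\in\mathcal{E}$ with error $2L\epsilon$, and here we face a superposition/mixture over $\mathcal{E}$, I would lift it to the coherent setting: by linearity the unitary $\Dec_\mathcal{E}$ sends $\sum_i \alpha_i \bar E_i\ket\psi\otimes\ket{0^L}$ to $\ket\psi\otimes\big(\sum_i\alpha_i\ket{\Aux_{\bar E_i}}\big)$ up to error $2L\epsilon$ in operator norm on the code-space, which controls the error on arbitrary superpositions and hence on the channel input; the auxiliary register carrying $\sum_i\alpha_i\ket{\Aux_{\bar E_i}}$ is then discarded. (This is the point the technical overview flags as the reason the coherent version of Lemma \ref{lemma:results-sparsepauli} is needed, so I would cite the coherent strengthening proved in \Cref{section:AQEC}.) Composing the two steps, tracing out ancillas, and averaging over the syndrome outcome $s$ yields a recovery channel whose fidelity with the identity on the logical space is $1 - O(L\epsilon)$ per branch.

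The error bookkeeping is where the stated $O(\epsilon^{1/2}L^{3/4})$ comes from, and getting that exponent right is the main subtlety rather than the main obstacle. The naive bound from the previous paragraph is $O(L\epsilon)$, which is worse when $L$ is large; the improved bound is obtained by \emph{not} expanding the full list when it is long. I would split into cases by the size of $L$ (or rather, truncate $\mathcal{E}$ to a sub-list of size $\ell$ and analyze the two contributions): the PMD decoder applied to the $\ell$ retained errors contributes $O(\ell\epsilon)$, while the probability that the true error is \emph{not} among the $\ell$ retained ones is bounded using the PMD detection guarantee $\|\Pi E\Pi\|_\infty\le\epsilon$ — each omitted branch contributes at most $\epsilon^2$ to a detection/collision probability, so dropping $L-\ell$ of them costs at most $\sqrt{(L-\ell)\cdot\epsilon^2} = \sqrt{L-\ell}\cdot\epsilon$ in trace distance after a Cauchy–Schwarz step. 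Balancing $\ell\epsilon$ against $\sqrt{L}\,\epsilon$ is not quite it; balancing against the union-bound term $\sqrt{L/\ell}$ coming from gentle measurement (each of the $\ell$ sequential measurement-and-revert steps in $\Dec_\mathcal{E}$ incurring $O(\sqrt\epsilon)$, see \Cref{claim:pmdapproximation}) gives $\ell\sqrt\epsilon + \sqrt{L/\ell}$, minimized at $\ell\approx (L/\sqrt\epsilon)^{1/2}$, yielding $O(\epsilon^{1/2}L^{3/4})$ as claimed. So the real work is (i) making the ``syndrome measurement collapses the channel to a superposition over $\mathcal{E}$'' step rigorous — handling adaptivity of $\Lambda_T$ and the fact that $\mathcal{E}$ depends on the random outcome $s$ — and (ii) choosing the truncation parameter $\ell$ and pushing the three error terms (gentle-measurement loss, retained-list PMD error, and omitted-branch detection probability) through a careful triangle-inequality/Cauchy–Schwarz argument; the efficiency claim then follows since every sub-step (syndrome decode, linear-algebraic list computation, $\ell$ Clifford corrections) is $\mathrm{poly}(n,L)$.
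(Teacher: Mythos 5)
Your first stage is essentially right and matches the paper: measure the syndrome of $Q$, use erasure list-decoding (solving a linear system over $\mathbb{F}_q^{2n}$) to produce the list $\mathcal{E}=\{E_1,\dots,E_L\}$, then hand the residual (now a superposition of Paulis in $\mathcal{E}$ acting on the PMD) to a coherent PMD decoder. But your error bookkeeping in the third paragraph does not work, and it is not how the exponent $\epsilon^{1/2}L^{3/4}$ actually arises.

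First, the truncation idea is not in the paper and cannot be made to give the claimed bound. You propose balancing $\ell\sqrt\epsilon + \sqrt{L/\ell}$; minimizing this gives $\ell\approx(L/\epsilon)^{1/3}$ and a value $\sim L^{1/3}\epsilon^{1/6}$, not $L^{3/4}\epsilon^{1/2}$. Worse, your union-bound term $\sqrt{L/\ell}$ has no $\epsilon$-dependence at all, so it would not vanish even for an ideal PMD with $\epsilon=0$, which cannot be correct: Algorithm \ref{alg:algorithm2} with a perfect PMD recovers $\ket\psi$ exactly. The paper never drops list elements; the full list is retained and the $\ell$ gentle-measurement steps each cost $O(\epsilon)$ (not $O(\sqrt\epsilon)$) in $\ell_2$-norm by \Cref{claim:pmdapproximation}, accumulating to $2L\epsilon$ per branch in \Cref{claim:singlepaulirecovery}.

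Second, the actual source of the exponents is the lift from a single Pauli to a superposition in \Cref{claim:multipaulirecovery}, and you have inverted where the work is. Your second paragraph asserts the lift is ``by linearity... up to error $2L\epsilon$ in operator norm,'' but that is precisely the nontrivial step: the states $\ket{\phi_i}=E_i\Enc\ket\psi$ are not exactly orthogonal, only $\epsilon$-almost-orthogonal (via $\|\Pi E\Pi\|_\infty\le\epsilon$). The paper uses this to show $\sum_i|a_i|^2\le 1/(1-L\epsilon)$, then bounds the overlap of $U(\ket\phi\otimes\ket{0^L})$ with the ideal $\ket\psi\otimes\ket{\Aux}$ as $\ge 1-O(L^{3/2}\epsilon)$ via two Cauchy–Schwarz applications (one for $\sum_j|a_j|\le\sqrt{L\sum|a_j|^2}$, giving the extra $\sqrt L$ on top of the per-branch $2L\epsilon$). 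Converting that inner-product bound to trace distance via $\|\cdot\|_1\le 2\sqrt{1-|\langle\cdot,\cdot\rangle|^2}$ is what produces the square root, i.e.\ $\sqrt{L^{3/2}\epsilon}=L^{3/4}\epsilon^{1/2}$. So the $\epsilon^{1/2}$ comes from the overlap-to-distance conversion and the $L^{3/4}$ from $\sqrt{L\cdot L\epsilon/\epsilon}$, not from any truncation tradeoff. Your proposal as written would not reproduce the bound and hides the genuinely needed ingredient, which is the approximate orthogonality of the corrupted PMD code-states.
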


To instantiate \Cref{lemma:results-aqeccsfrompmds}, in \Cref{section:QLDE} we show how to build $(\tau, L)$-QLDE codes from the CSS construction \cite{Calderbank1996GoodQE, Steane1996SimpleQE} of classical linear codes which are list-decodable from erasures, and present a simple randomized code construction. Our randomized construction is based on random CSS codes, and we show that they are list-decodable from a number of erasures approaching the quantum Singleton bound via the classical results of \cite{Guruswami2001ListDF, Ding2014ErasureLC}:

\begin{theorem}[Informal version of \Cref{corollary:randomcss}] \label{theorem:results-randomcss}
For all $0<r, \gamma <1$ and sufficiently large positive integer $n$, a random rate $r$ CSS code on $n$ qubits is $(\frac{1}{2}(1-r-\gamma), 2^{O(1/\gamma)})$-QLDE with probability $\geq 1-n^{O(1)}\cdot 2^{-\Omega((1-r) \cdot n)}$.
\end{theorem}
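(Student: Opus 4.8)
The plan is to reduce the quantum statement entirely to a classical statement about random linear codes over $\mathbb{F}_2$, and then invoke the known analysis of erasure list-decoding for random codes from \cite{Guruswami2001ListDF, Ding2014ErasureLC}. The first step is to recall the CSS construction: a pair of classical linear codes $C_X, C_Z \subseteq \mathbb{F}_2^n$ with $C_Z^\perp \subseteq C_X$ gives a stabilizer code of rate $r = (\dim C_X + \dim C_Z - n)/n$, whose logical Pauli operators of $X$-type (resp. $Z$-type) are indexed by cosets $C_X/C_Z^\perp$ (resp. $C_Z/C_X^\perp$), and whose $X$-syndrome (resp. $Z$-syndrome) is read off by the parity check matrix $H_Z$ of $C_Z$ (resp. $H_X$ of $C_X$). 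So a Pauli $E = X^a Z^b$ supported on a set $T$ with $|T| \le \tau n$ is logically trivial iff $a \in C_Z^\perp$ and $b \in C_X^\perp$, and ``$\le L$ logically distinct Paulis on $T$ with a fixed syndrome'' factors as the product of the $X$-part count and the $Z$-part count. Hence it suffices to show: for a random linear code $C$ of appropriate rate, the number of vectors $a$ supported on $T$ (i.e. $a \in \mathbb{F}_2^T$) lying in a fixed coset of $C^\perp$ — equivalently a fixed syndrome class — is at most $\sqrt{L}$, for every $T$ with $|T| \le \tau n$ and every syndrome, except with the claimed failure probability.

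Second, I would translate ``$X$-part count'' into the standard erasure-list-decoding quantity. Restricting to coordinates in $T$ and quotienting by $C_Z^\perp$, the set of admissible $a$ with a fixed syndrome is a coset of the space $\{a \in \mathbb{F}_2^n : \mathrm{supp}(a) \subseteq T,\ H_Z a = 0\} / (\ldots \cap C_Z^\perp)$, whose size is governed by $\dim(C_Z|_T)$, the dimension of the code $C_Z$ punctured/shortened to $T$ — precisely the object Guruswami's erasure-list-decoding bound controls: a code is $(\tau, L)$-erasure-list-decodable iff shortening to any $\tau n$ coordinates leaves a space of dimension $\le \log_2 L$. So the goal reduces to: a random rate-$r_0$ linear code, when shortened to any set of $\le \tau n$ coordinates, has dimension $\le O(1/\gamma)$. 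Third, I would run the union bound / first-moment argument over $T$: for a uniformly random generator (or parity-check) matrix, the probability that shortening to a fixed $T$ of size $\tau n$ yields dimension $\ge \ell$ is at most roughly $\binom{\tau n}{\ell} 2^{-\ell(1 - \tau - r_0)n}$ — heuristically, each of $\ell$ ``extra'' codewords supported in $T$ costs a factor $2^{-(1-r_0)n}$ but only $2^{\tau n}$ such supports exist — and multiplying by $\binom{n}{\tau n} \le 2^{H(\tau)n}$ choices of $T$ still leaves the bound $n^{O(1)} \cdot 2^{-\Omega((1-r)n)}$ once $\ell = \Theta(1/\gamma)$ and $\tau = \tfrac12(1 - r - \gamma)$ are plugged in so that the rate of each of $C_X, C_Z$ sits a $\Theta(\gamma)$-margin below $1 - \tau$. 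Setting $L = 2^{O(1/\gamma)}$ for each of the two CSS components and multiplying gives the overall list size $2^{O(1/\gamma)}$.

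The main obstacle, and the part requiring care rather than the crude counting, is \textbf{handling the CSS orthogonality constraint} $C_Z^\perp \subseteq C_X$ while keeping both $C_X$ and $C_Z$ ``random enough'' for the shortened-dimension bound to hold simultaneously for \emph{both} codes and for \emph{all} $T$ (the quantifier over all exponentially many subsets $T$ is what forces the $2^{-\Omega((1-r)n)}$ slack to be spent carefully). The standard fix is to sample $C_Z^\perp$ as a uniformly random subspace of dimension $\tfrac{1-r}{2} n$ and then extend it to $C_X$ by adjoining uniformly random additional generators — or, following the CSS-over-self-dual-containing-codes recipe, to take $C_X = C_Z$ from a self-orthogonal-complement family — and to verify that conditioned on the containment, the relevant punctured codes still behave like uniform codes of the right rate up to negligible correction terms. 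The remaining bookkeeping — converting the fraction $\tfrac12(1 - r - \gamma)$ into the precise rate budget, absorbing lower-order $\log n$ and $H(\cdot)$ terms into the $O(\cdot)$ and $\Omega(\cdot)$, and citing \cite{Ding2014ErasureLC} for the tight threshold (so that the $-\gamma$ loss, not something larger, is what one pays) — is routine. The actual erasure-list-decoding-implies-QLDE direction is immediate from \Cref{definition:overiew-QLDE} once the $X$/$Z$ factorization above is in place.
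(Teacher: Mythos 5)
Your proposal follows essentially the same path as the paper: reduce via the $X$/$Z$ CSS factorization to two classical erasure-list-decoding statements (the paper's unnumbered claim showing $C_1, C_2$ both $(\delta,L)$-LDE implies $\mathrm{CSS}(C_1,C_2)$ is $(\delta,L^2)$-QLDE), handle the nestedness constraint by sampling $C_2^\perp$ randomly and extending to $C_1$ by further random generators so that each is marginally a uniform linear code of rate $(1+r)/2$, and then union-bound. The one place you diverge is that you sketch the first-moment / union-bound-over-$T$ argument for erasure list-decodability of random linear codes from scratch, whereas the paper simply cites it as a black box from \cite{Guruswami2001ListDF, Ding2014ErasureLC} (their \Cref{theorem:randomld}); your sketch is morally the proof of that cited theorem, so this is the same route, not a genuinely different one.
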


Unfortunately, we don't know yet of deterministically constructable stabilizer codes with these list-decoding guarantees (see \Cref{section:discussion}). However, our randomized constructions are efficiently sampleable, encodable and erasure list-decodable, and their construction succeeds (i.e. has the desired list decoding property) with exponentially high probability.

\subsection{Quantum Tamper Detection for Qubit-Wise Channels}
\label{section:overview-keyless}

The definition of quantum tamper detection we consider\footnote{In the first posting of this work, we stated an alternative definition referred to as ``keyless authentication" based on \cite{Hayden2016TheUC}. For this version, we changed the definition to the more modern simulation paradigm originally set by \cite{Boddu2023SplitStateNC} in the context of quantum non-malleable codes. However, none of our proofs were modified. } is a special case of that of \cite{Boddu2021TamperDA}, and can be understood as a keyless version of the definition of quantum authentication by \cite{Hayden2016TheUC}. We consider a public and keyless pair of channels $(\Enc, \Dec)$, where $\Enc$ encodes $k$ qubits into an $n$ qubit mixed state, and $\Dec$ decodes $n$ qubits into $k+1$ qubits, consisting of a message register $M$ and a flag $F$ indicating acceptance $\ket{\Acc}$ or rejection $\ket{\Rej}$. 

\cref{definition:qa} quantifies standard correctness and robust error-detection guarantees, when a quantum state on the register $M$ (possibly entangled with a side register $R$) is encoded into the code. 

\begin{definition} \label{definition:qa}
    A pair of quantum channels $(\Enc, \Dec)$ is an $\epsilon$-secure quantum tamper detection code for qubit-wise channels if they satisfy
\begin{enumerate}
    \item (Correctness) In the absence of tampering, for all messages $\psi_{MR}$: $( \Dec\circ \Enc\otimes \mathbb{I}_R )(\psi_{MR}) = \Acc\otimes \psi_{MR}$.
    \item (Tamper Detection) For every set of $n$ single qubit channels $(\Lambda_1, \cdots, \Lambda_n)$, there exists a constant $p_\Lambda\in [0, 1]$ such that, 
    \begin{equation}
        \forall \psi_{MR}: \bigg( \Dec\circ \big(\otimes_i^n \Lambda_i \big)\circ  \Enc\otimes \mathbb{I}_R \bigg)(\psi_{MR}) \approx_\epsilon p_\Lambda\cdot \Acc\otimes \psi  + (1-p_\Lambda)\cdot  \bot\otimes \psi_R  
    \end{equation}

    Where the error is measured in trace distance, and $\bot$ on $M$ and $F$ indicates the message is rejected. 
\end{enumerate}
\end{definition}

This definition allows the adversaries to share randomness, but explicitly does not allow them to communicate, share entanglement, nor be entangled with the message. Therefore, it is not composable in the standard (authentication) sense. Nevertheless, it still provides meaningful guarantees in a ``one-shot" setting. For instance, it can be used to establish entanglement, a secret key, or teleport over a state between Alice and Bob.\footnote{Here, we comment on the distinction between this model and the recent quantum non-malleable codes of \cite{Boddu2023SplitStateNC}. In a tamper-detection code, if Bob accepts with non-negligible probability $\omega(\epsilon)$, then he is guaranteed to have preserved entanglement with Alice. In a quantum non-malleable code, the adversaries may have decoupled Alice and Bob, and Bob will be none-the-wiser.}

In the rest of this subsection we overview \Cref{theorem:auth-results}. We begin by presenting a simplification of our main result, a tamper detection code of rate approaching $1/3$, which showcases the main techniques. For completeness, we present a self-contained proof of this simpler construction in \Cref{section:auth}. Afterwards, we present a high-level description of how we leverage code concatenation and pseudorandomness to achieve our rate $1-o(1)$ construction.

\subsubsection{The Rate $1/3$ Construction}

The key tool we use is a classical non-malleable code against bit-wise tampering functions \cite{Dziembowski2010, Cheraghchi2013NonmalleableCA, nm-permutations, nm-rate-optimizing}. Informally, a classical non-malleable code $(\Enc_\NM, \Dec_\NM)$ is a relaxation of error-correction and error-detection codes, where the decoder is tasked with either outputting (1) the original message, (2) rejecting $\bot$, \textbf{or} (3) outputting a message which is completely uncorrelated from the original message. The role of the non-malleable code in our construction is to establish a ``non-malleable secret key" between Alice and Bob. That is, a $\approx 2/3$ fraction of the qubit-wise channels will be used to send over an encoding $\Enc_{\NM}(s)$ of a uniformly random bitstring $s$, which will later be used to encrypt Alice's message state. 
 
 Alice encodes her message $\ket{\psi}$ first into the composition $\Tilde{Q}$ of a PMD code, and a stabilizer code $Q$ of block-length $N_Q$ and near-linear distance $d$. Then, she one-time-pads her state with a random $N_Q$ qubit Pauli $P^s$, indexed by a $2\cdot N_Q$ bit long secret key $s$, which is then encoded into $\NM$ (see \Cref{fig:third}). 
\begin{gather}
    \Enc(\psi) = \mathbb{E}_s \Enc_{\NM}(s)\otimes  P^s\cdot \Enc_{\Tilde{Q}}(\psi)\cdot (P^s)^\dagger, \text{ where } \Enc_{\Tilde{Q}}= \Enc_Q\circ\Enc_{\PMD}
\end{gather}
If the $\PMD$ code, the stabilizer code $Q$, and the classical non-malleable code $\NM$ all have rate $\approx 1 - o(1)$, then the resulting construction has rate $\approx \frac{1}{3}(1-o(1))$.

Our decoding algorithm is relatively simple to describe. After receiving Alice's message, Bob first decodes the non-malleable code using $\Dec_\NM$. Assuming he doesn't immediately reject, Bob obtains some string $\Tilde{s}$, and uses it to revert the Pauli one-time-pad applied on the quantum half of the encoding. After doing so, Bob measures the syndrome of $Q$, and if successful, projects onto the PMD code-space. Assuming all of these steps accept, at the end of the protocol Bob outputs the ``message" register of the PMD. 

\begin{figure}[h]
    \centering
        \includegraphics[width=.6\textwidth]{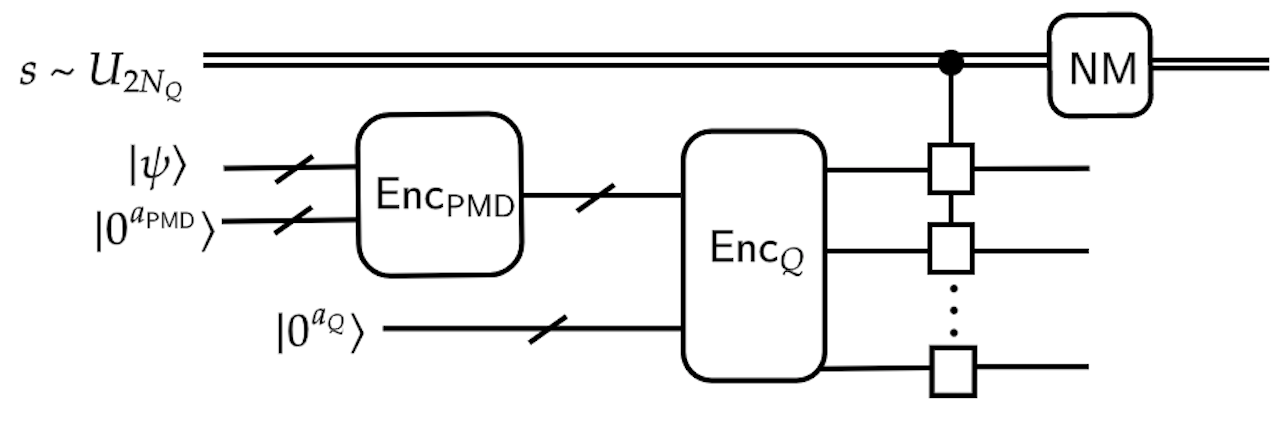} 
        \caption{$\Enc$ for the rate $1/3-o(1)$ construction. The unlabeled controlled gate represents the Pauli One-Time-Pad.}
        \label{fig:third}
\end{figure}

Informally, the classical non-malleability guarantees that the distribution over Bob's recovered key $\Tilde{s}$ is close to a convex combination over the original key $s$, and an uncorrelated key drawn from some arbitrary distribution $\mathcal{D}^\Lambda$ (\Cref{claim:classical-nm-third}). In our analysis, we can essentially consider these two cases separately:

In the first case, if Bob recovers the original key $s$, then on average over random $s$ he receives the outcome of a Pauli channel:
\begin{equation}
    \mathbb{E}_s (P^s)^\dagger \circ \Lambda\circ P^s\circ \Enc_{\Tilde{Q}}(\psi) = \sum_{E\in \text{Pauli}} |c_E|^2 \cdot E\cdot \Enc_{\Tilde{Q}}(\psi)\cdot E^\dagger
\end{equation}
Indeed, the operation on the left-hand-side above is the well known Pauli Twirl operation \cite{Aharonov2008InteractivePF} (\cref{fact:twirl}). Since the effective channel is a Pauli channel, we can rely on the properties of the PMD code to detect any adversarial tampering (\Cref{claim:third-auth-recovered}).

Conversely, if Bob does not recover the original key $s$, then since Pauli's are 1-Designs (\cref{fact:1design}), on average over a random Pauli he receives a uniformly random (encrypted) message:
\begin{equation}
    \mathbb{E}_s \Lambda\circ P^s \circ \Enc_{\Tilde{Q}}(\psi_{MR}) = \Lambda\circ \bigg(\mathbb{E}_s P^s \cdot \Enc_{\Tilde{Q}}(\psi_{MR}) \cdot (P^s )^\dagger\bigg) = \psi_R\otimes \Lambda(\mathbb{I}/2^{N_Q}) = \psi_R\otimes_i^{N_Q} \Lambda_i(\mathbb{I}/2)
\end{equation}

Moreover, from Bob's perspective, the state he receives $\otimes_i \Lambda_i(\mathbb{I}/2)$ is completely uncorrelated from Alice's message $\psi$! Conceptually, this guarantees that our code construction is already sort of a quantum ``non-malleable code". However, we can in fact say something much stronger. 

What makes this construction a tamper detection code, instead of just a non-malleable code, is the fact that if Bob does not receive the original key then his syndrome measurement will reject with high probability. This follows from the fact that his received state, on average over random $s$, is the (unentangled) \textit{product state} $\otimes_i \Lambda_i(\mathbb{I}/2)$. However, product states should be very far from the highly entangled code-spaces of stabilizer codes. We leverage a lemma by Anshu and Nirkhe \cite{Anshu2020CircuitLB} (\cref{lemma:clb}) to show that the probability Bob's syndrome measurement accepts (that is, outputs syndrome 0), if the key has been tampered with is $\leq 2^{-\Omega(d^2/N_Q)}$, exponentially decaying with the distance $d$ of the stabilizer code (\Cref{claim:third-auth-not-recovered}). 

Here, it is instructive to pause briefly and discuss why this or related constructions wouldn't be able to achieve tamper-detection if the adversaries were entangled. The first obstacle is that the classical non-malleable codes we used are, apriori, not robust to adversaries which can leverage quantum correlations. Nevertheless (modulo rate considerations), this obstacle is surmountable by leveraging recent constructions of quantum-secure non-malleable codes and extractors \cite{Boddu2022NonMalleableCI}, see recent work by \cite{Boddu2023SplitStateNC}. The truly fundamental issue is to quantify how far the adversaries pre-shared state is from the code-space of the stabilizer code. Indeed, if they could pre-prepare an arbitrary entangled state, they could replace the code-state themselves, ensuring that tamper-detection is impossible. 

\subsubsection{Rate Amplification}

Our approach to improve the rate is inspired by a construction of classical non-malleable codes by \cite{Cheraghchi2013NonmalleableCA}. The bottleneck in the previous construction is the amount of classical randomness used, and thus our first step is to replace the uniformly random Pauli one-time-pad $P^s$ by a $t$-wise independent Pauli Pad $P^{G(s)}$, generated using a short seed of length $\sigma = O(t\log k)$. Unfortunately, under a $t$-wise independent pad, the quantum half of $\Enc$ is no longer perfectly encrypted, and our proof techniques in the previous section (the Pauli Twirl and \cite{Anshu2020CircuitLB}'s result) break.

To address these issues, we use quantum code concatenation. Alice first encodes her message $\psi$ into a high rate, near-linear distance stabilizer code, and then encodes the symbols of the outer code\footnote{Technically, the symbols are first bundled up into blocks of size $k/\log^c k$, and then each block is encoded into $\Tilde{Q}$.} into the composition $\Tilde{Q}$ of a PMD code and another high rate stabilizer code of smaller block-length (See \Cref{fig:concat}).
\begin{figure}[h]
    \centering
        \includegraphics[width=.45\textwidth]{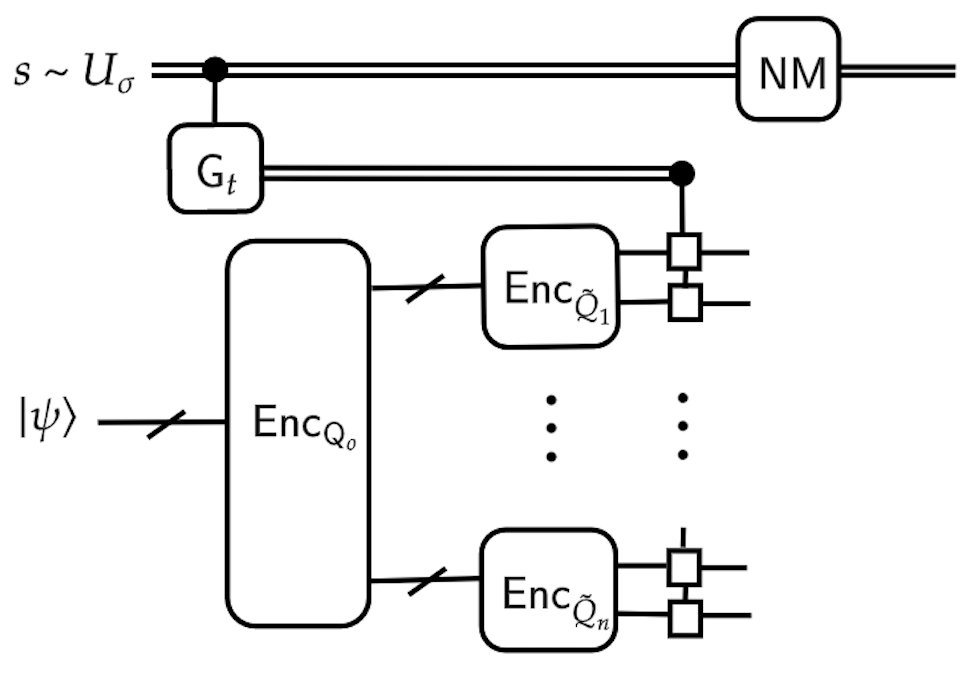} 
        \caption{The concatenated code with a pseudorandom Pauli Pad. Ancilla qubits removed for clarity.}
        \label{fig:concat}
\end{figure}

 Intuitively, if the block-length of the inner codes is $b < t$, then via the $t$-wise independence the marginal density matrix on each inner code is truly encrypted, and we can leverage the proof techniques of the uniformly random case. While this intuition does capture the case where Bob does not recover the key, the case where Bob recovers the original key is a little more subtle. Since $G(s)$ is only $t$-wise independent, we can no longer argue that the effective channel on the entire code is simply a Pauli channel.

The key insight to address the case where the key is recovered is to further divide into cases on whether each block is significantly tampered, or whether each tampering channel $\Lambda_j$ is ``sparse" in the Pauli basis (See \Cref{definition:eta-pauli}). While we defer the details to \Cref{section:analysis-rate1}, the underlying intuition is that if many of the channels in any given inner block $\Tilde{Q}_i$ are not ``sparse" in the Pauli basis, then the Pauli twirl argument should still imply that block $i$ rejects with high probability (\Cref{claim:not-many-eta}). Conversely, if many of the channels in any given inner block are ``sparse", then the PMD in that block should detect the presence of errors without distorting the state (\Cref{claim:many_eta}). 

\subsection{Explicit PMDs from Pairwise-Detectable Purity Testing}
\label{section:overview-explicit}

To conclude our technical overview, we discuss our construction of PMD codes from the stabilizer Purity Testing Codes (PTCs) by \cite{barnum2002authentication}. A PTC is a set of stabilizer codes which detects every Pauli error ``on average":

\begin{definition} 
    A set of stabilizer codes $\{Q_k\}_{k\in K}$ is called an $\epsilon$-strong $\PTC$ if for all non-trivial Pauli operators $E\neq \mathbb{I}$, it holds that
    \begin{equation}
        \mathbb{P}_{k\leftarrow K}\big[ E\in N_k\big] \leq \epsilon.
    \end{equation}
    where $N_k$ is the normalizer group (the undetectable errors) of $Q_k$.
\end{definition}

In other words, the probability that $E$ is detectable is $\geq 1-\epsilon$, for a random choice of code $k\leftarrow K$. \cite{barnum2002authentication} presented a particularly key-efficient family of PTCs, and used them to develop quantum message authentication codes:

\begin{theorem}[\cite{barnum2002authentication}]
For every sufficiently large $n, \lambda\in \mathbb{N}, \lambda|n$, there exists a set of  $[[n, n-\lambda]]_2$ stabilizer codes $\{Q_\alpha\}_{\alpha\in \mathbb{F}_{2^\lambda}}$ which form an $\epsilon$-strong $\PTC$ with error $\epsilon \leq n\cdot 2^{-\lambda}$.
\end{theorem}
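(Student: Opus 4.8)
The plan is to make the construction of \cite{barnum2002authentication} explicit in the symplectic representation and reduce the purity‑testing bound to counting roots of a univariate polynomial over $\mathbb{F}_{2^\lambda}$. Set $m = n/\lambda$ and identify $\mathbb{F}_2^n \cong \mathbb{F}_{2^\lambda}^m$ as $\mathbb{F}_2$‑vector spaces using a self‑dual basis of $\mathbb{F}_{2^\lambda}$ over $\mathbb{F}_2$ (which exists for all $\lambda$), so that an $n$‑qubit Pauli $E$ (modulo phase) corresponds to a vector $v_E = (\vec x, \vec z) \in \mathbb{F}_{2^\lambda}^m \times \mathbb{F}_{2^\lambda}^m$ via $X^{\vec a}Z^{\vec b}\leftrightarrow(\vec a,\vec b)$, and — because the basis is self‑dual — commutation of two Paulis is exactly $\mathrm{tr}_{\mathbb{F}_{2^\lambda}/\mathbb{F}_2}$ applied to the $\mathbb{F}_{2^\lambda}$‑symplectic form $\langle (\vec x,\vec z),(\vec x',\vec z')\rangle := \sum_{i=1}^m (x_i z_i' - z_i x_i')$. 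Given any $g \in \mathbb{F}_{2^\lambda}^{2m}\setminus\{0\}$, the $\lambda$ Paulis corresponding to $\delta_1 g,\dots,\delta_\lambda g$ (for $\{\delta_j\}$ an $\mathbb{F}_2$‑basis of $\mathbb{F}_{2^\lambda}$) pairwise commute since $\mathrm{tr}(\delta_i\delta_j\langle g,g\rangle)=0$, are $\mathbb{F}_2$‑independent since $g\neq 0$, and — after the standard choice of generator signs — generate a stabilizer group avoiding $-\mathbb{I}$; this defines a valid $[[n,n-\lambda]]_2$ stabilizer code whose normalizer (undetectable errors) is precisely $N_g = \{\,v : \langle v, g\rangle = 0\,\}$, using that $\mathrm{tr}(\delta\xi)=0$ for all $\delta$ iff $\xi=0$.

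The construction takes, for each $\alpha\in\mathbb{F}_{2^\lambda}$,
\begin{equation}
 g_\alpha := \big(1,\ \alpha,\ \alpha^2,\ \dots,\ \alpha^{m-1};\ \ \alpha^m,\ \alpha^{m+1},\ \dots,\ \alpha^{2m-1}\big)\in \mathbb{F}_{2^\lambda}^m\times\mathbb{F}_{2^\lambda}^m,
\end{equation}
and sets $Q_\alpha$ to be the stabilizer code above with check vector $g_\alpha$; since the first coordinate of $g_\alpha$ is $1\neq 0$, this is well defined for every $\alpha$, yielding the family $\{Q_\alpha\}_{\alpha\in\mathbb{F}_{2^\lambda}}$. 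For a fixed Pauli $E$ with data $v_E=(\vec x,\vec z)$,
\begin{equation}
 E\in N_\alpha \iff \langle v_E, g_\alpha\rangle = 0 \iff P_E(\alpha):=\sum_{i=1}^m x_i\,\alpha^{m+i-1}\ -\ \sum_{i=1}^m z_i\,\alpha^{i-1}\ =\ 0 .
\end{equation}
The point is that $P_E$ is a polynomial in $\alpha$ over $\mathbb{F}_{2^\lambda}$ of degree at most $2m-1$ whose $2m$ coefficients (degrees $0$ through $2m-1$) are exactly the components of $v_E$ — the degree‑$(i-1)$ coefficient is $z_i$ and the degree‑$(m+i-1)$ coefficient is $x_i$, the $X$‑ and $Z$‑exponent ranges being disjoint — so $P_E\equiv 0$ if and only if $v_E=0$, i.e. $E=\mathbb{I}$.

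Hence for every $E\neq\mathbb{I}$ the polynomial $P_E$ is nonzero of degree $\leq 2m-1$, giving $\big|\{\alpha\in\mathbb{F}_{2^\lambda}: E\in N_\alpha\}\big|\leq 2m-1$, and therefore
\begin{equation}
 \mathbb{P}_{\alpha\leftarrow\mathbb{F}_{2^\lambda}}\big[E\in N_\alpha\big]\ \leq\ \frac{2m-1}{2^\lambda}\ =\ \frac{2n/\lambda-1}{2^\lambda}\ \leq\ \frac{n}{2^\lambda},
\end{equation}
the last step using $2/\lambda\leq 1$ (the statement being vacuous for $\lambda=1$). This gives the claimed $\epsilon$‑strong $\PTC$ with $\epsilon\leq n\cdot 2^{-\lambda}$.

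The step I expect to be the crux is the choice of $g_\alpha$ that guarantees $P_E\not\equiv 0$ for \emph{every} nontrivial $E$: a naive choice reusing the same powers of $\alpha$ in the $X$‑ and $Z$‑halves would make all‑$Y$ errors (those with $\vec x=\vec z$) lie in $N_\alpha$ for every key and hence be permanently undetectable, so the essential design constraint is that the $X$‑ and $Z$‑coordinates of $g_\alpha$ occupy disjoint exponent ranges, which is exactly what makes the coefficient map $v_E\mapsto P_E$ injective and the code "test" every error. The remaining work — fixing a self‑dual $\mathbb{F}_2$‑basis of $\mathbb{F}_{2^\lambda}$ so that Pauli commutation is the trace of the $\mathbb{F}_{2^\lambda}$‑symplectic form, verifying that each $Q_\alpha$ genuinely has $n-\lambda$ logical qubits (full‑rank stabilizer), and choosing generator signs so the stabilizer group excludes $-\mathbb{I}$ — is routine bookkeeping.
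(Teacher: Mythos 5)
Your proposal matches the construction the paper recaps in \Cref{subsection:PTCs} (the $[[n,n-\lambda]]$ family with check vector $v_\alpha=(1,\alpha,\dots,\alpha^{2r-1})$ split into disjoint-degree $X$ and $Z$ halves, commutation via the $\mathbb{F}_2$-trace of an $\mathbb{F}_{2^\lambda}$-symplectic form through a self-dual basis), and your Schwartz--Zippel count of roots of the degree-$(2m-1)$ polynomial $P_E$ is the standard proof of the strong-PTC bound from \cite{barnum2002authentication}. Both the construction and the argument are correct and aligned with the paper's approach.
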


A natural ``candidate" $\PMD$ is simply a superposition over keys and PTC encodings. That is, if each $Q_k$ has a unitary encoding map $\Enc_k$ acting on some number $\Anc_{\PTC}$ of ancillas and the message state $\ket{\phi}$, then our PMDs encode $\ket{\phi}$ as a uniform superposition over keys (on a ``key" register) and PTC encodings (on a ``code" register):
\begin{equation}
    \Enc\ket{\phi}  = |K|^{-1/2}\sum_k \ket{k} \otimes \Enc_k \ket{\phi}
\end{equation}
There is a natural associated projection $\Pi = \Enc (\mathbb{I}\otimes \ketbra{0}^{\Anc_{key}+\Anc_{\PTC}})\Enc^{\dagger}$ onto the codespace as well (See \Cref{section:PMDs}). 

Intuitively, one expects that Pauli errors supported only on the ``code" register should be handled by the PTC property. The challenge really lies in what happens when the key register itself is corrupted. To build a rough idea on our proof technique, consider the following ``Key Manipulation" experiment: 

    Let us encode some state $\ket{m}$ into the $a$th stabilizer code $Q_a$, and then measure the syndrome of the $b$th code. When $a=b$, there is no tampering and the syndrome measurement naturally ``accepts" (i.e. we measure syndrome 0). When $a\neq b$, the probability we mistakenly accept is
    \begin{equation}
        \bra{m} \Enc^\dagger_a \Pi_{b}\Enc_a\ket{m} \leq \| \Pi_a \Pi_{b}\Pi_a \|_\infty,
    \end{equation}

    where $\Pi_k$ corresponds to the projection onto the $k$th stabilizer code-space $Q_k$. If $Q_k$ has stabilizer group $S_k$, then the projector onto $Q_k$ can be written as $\Pi_k = |S_k|^{-1} \sum_{\sigma\in S_k}\sigma$. We note that for any Pauli $\sigma$, $\Pi_k \sigma \Pi_k = 0$ unless $\sigma\in N_k$, thus

\begin{equation}
   \| \Pi_a \Pi_{b}\Pi_a \|_\infty \leq |S_b|^{-1} \sum_{\sigma\in S_{b}\cap N_a} \|\sigma\|_\infty \leq |S_b|^{-1}\cdot |S_{b} \cap N_a|
\end{equation}

That is, a measure of how many stabilizers of $S_b$ are ``undetectable'' to the code $Q_a$. We remark that on average over $a, b$, and, assuming all the stabilizer groups have size $|S_b| = 2^\lambda$, linearity of expectation and the $\epsilon$-strong condition tells us that 
\begin{equation}
   \mathbb{E}_{a\leftarrow K} \| \Pi_a \Pi_{b}\Pi_a \|_\infty \leq 2^{-\lambda}\bigg(1 + \sum_{E\in S_b\setminus\{\mathbb{I}\}}\mathbb{P}_{a\leftarrow K}[E\in N_a]\bigg)  \leq 2^{-\lambda} + \epsilon
\end{equation}
That is, these code-space projectors are almost orthogonal on average (for random $a, b$).

Unfortunately, this doesn't tell us much about correlated pairs of keys $(a, b)$. For instance, what about the pairs $(a,b=a+s)$, where $s$ is some fixed ``bit-flip" error? In particular, let us assume the set of keys $K$ form a group under addition, and consider pairs $(a, a+s)$ for uniformly random $a\leftarrow K$ and a fixed non-zero shift $s\in K$. We refer to the set of stabilizer codes $\{Q_a\}_{a\in K}$ as pairwise detectable, if over random $a\in K$ the probability some operator of $S_a$ is undetectible to the code $Q_{a+s}$ is small.

\begin{definition}
    A set of stabilizer codes $\{Q_k\}_{k\in K}$ is called $\delta$ pairwise-detectable if, for all $s\neq 0$, 
    \begin{equation}
        \mathbb{P}_{k\leftarrow K}\big[S_k \cap N_{k+s} \neq \{\mathbb{I}\}\big] \leq \delta
    \end{equation}
    
\end{definition}

In which case, $\mathbb{E}_{a\leftarrow K} \| \Pi_a \Pi_{a+s}\Pi_a \|_\infty \leq 1/2^\lambda + \delta$ for all non-zero $s\in K$. In \Cref{section:PMDs}, we prove that pairwise-detectability is precisely what we need to construct PMD codes, resulting in the following lemma:

\begin{lemma}[\Cref{lemma:pmds-from-ptcs}, restatement]
    If a set of $[[n, n-\lambda]]_2$ stabilizer codes $\{Q_k\}_{k\in K}$ is a $\epsilon$-strong PTC and is $\delta$ pairwise-detectable, then the code defined by $\Enc, \Pi$ above is a $(n+\log K, n-\lambda, \Delta)$-$\PMD$ with $\Delta\leq  \max(\epsilon, \sqrt{2^{-\lambda} + \delta})$
\end{lemma}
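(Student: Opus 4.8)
The plan is to fix an arbitrary nontrivial Pauli $E$ on the $n+\log|K|$ qubits and reduce $\|\Pi E\Pi\|_\infty$ to a statement about the individual code projectors $\Pi_k$ of the $Q_k$. Since $\Pi$ is an orthogonal projector whose image is exactly the span of the states $\ket{\widehat\phi}:=|K|^{-1/2}\sum_{k}\ket k\otimes\Enc_k\ket\phi$ (and $\|\widehat\phi\|=\|\phi\|$, as the $\ket k$ are orthonormal and each $\Enc_k$ is an isometry), we have $\|\Pi E\Pi\|_\infty=\sup|\bra{\widehat\psi}E\ket{\widehat\phi}|$ over unit message states $\ket\phi,\ket\psi$. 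Splitting the qubits into the code register (acted on by $E_{\mathrm{code}}$) and the $\log|K|$-qubit key register (acted on by $E_{\mathrm{key}}$), writing $E_{\mathrm{key}}=X^uZ^v$ up to phase and identifying $K\cong\mathbb F_2^{\log|K|}$ so that $E_{\mathrm{key}}\ket k=(-1)^{\langle v,k\rangle}\ket{k+u}$, a direct expansion and the substitution $k\mapsto k+u$ give
\begin{equation}
  \bra{\widehat\psi}E\ket{\widehat\phi}=\frac1{|K|}\sum_{k\in K}(-1)^{\langle v,k\rangle}\,\bra{\psi}\Enc_{k+u}^\dagger\,E_{\mathrm{code}}\,\Enc_{k}\ket{\phi}.
\end{equation}
Each summand is $\bra{\Enc_{k+u}\psi}E_{\mathrm{code}}\ket{\Enc_{k}\phi}$ with $\Enc_{k}\phi\in\operatorname{Im}\Pi_k$ and $\Enc_{k+u}\psi\in\operatorname{Im}\Pi_{k+u}$ unit vectors, so its modulus is at most $\|\Pi_{k+u}E_{\mathrm{code}}\Pi_k\|_\infty$, uniformly in $\phi,\psi$.

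Next I would bound these cross-terms, splitting on $u$. If $u=0$, then $\Pi_kE_{\mathrm{code}}\Pi_k=0$ unless $E_{\mathrm{code}}\in N_k$, in which case it has norm $\le 1$, so the triangle inequality gives $|\bra{\widehat\psi}E\ket{\widehat\phi}|\le\mathbb P_{k\leftarrow K}[E_{\mathrm{code}}\in N_k]\le\epsilon$ by the $\epsilon$-strong $\PTC$ property when $E_{\mathrm{code}}\ne\mathbb I$; and when $E_{\mathrm{code}}=\mathbb I$ we must have $v\ne 0$ (since $E\ne\mathbb I$), so the character sum $\sum_k(-1)^{\langle v,k\rangle}$ vanishes and $\bra{\widehat\psi}E\ket{\widehat\phi}=0$. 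If $u\ne 0$, I use $\|\Pi_{k+u}E_{\mathrm{code}}\Pi_k\|_\infty^2=\|\Pi_{k+u}\,(E_{\mathrm{code}}\Pi_kE_{\mathrm{code}}^\dagger)\,\Pi_{k+u}\|_\infty$, observe that $E_{\mathrm{code}}\Pi_kE_{\mathrm{code}}^\dagger=|S_k|^{-1}\sum_{\tau\in S_k}\chi(\tau)\tau$ is the (sign-twisted) code projector of the conjugate stabilizer group $E_{\mathrm{code}}S_kE_{\mathrm{code}}^\dagger$, and repeat the counting from \Cref{section:overview-explicit}: $\Pi_{k+u}\tau\Pi_{k+u}=0$ unless $\tau\in N_{k+u}$, whence $\|\Pi_{k+u}E_{\mathrm{code}}\Pi_k\|_\infty^2\le|S_k\cap N_{k+u}|/|S_k|=2^{-\lambda}|S_k\cap N_{k+u}|$.

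Finally I assemble the $u\ne 0$ case: the triangle inequality and the last bound give $\|\Pi E\Pi\|_\infty\le 2^{-\lambda/2}\,\mathbb E_{k\leftarrow K}\sqrt{|S_k\cap N_{k+u}|}\le 2^{-\lambda/2}\sqrt{\mathbb E_{k}|S_k\cap N_{k+u}|}$ by Jensen, and since $|S_k\cap N_{k+u}|\ge 1$ always, equals $\{\mathbb I\}$ except with probability $\le\delta$ by $\delta$ pairwise-detectability (with shift $s=u$), and is $\le|S_k|=2^\lambda$ otherwise, we get $\mathbb E_k|S_k\cap N_{k+u}|\le 1+2^\lambda\delta$ and hence $\|\Pi E\Pi\|_\infty\le\sqrt{2^{-\lambda}+\delta}$. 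Combined with the $u=0$ case this yields $\|\Pi E\Pi\|_\infty\le\max(\epsilon,\sqrt{2^{-\lambda}+\delta})$ for every $E\ne\mathbb I$, with the parameters $N=n+\log|K|$ and dimension $n-\lambda$ read off directly from the construction. I expect the main obstacle to be the $u\ne 0$ analysis: one has to correctly recognize $E_{\mathrm{code}}\Pi_kE_{\mathrm{code}}^\dagger$ as a (possibly sign-flipped) stabilizer projector so that the ``undetectable-element'' counting applies to it, and then route the final estimate through Jensen together with the trivial-versus-nontrivial intersection dichotomy to land exactly on $\sqrt{2^{-\lambda}+\delta}$ rather than a cruder additive bound. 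A secondary nuisance is ancilla bookkeeping --- verifying that $\Enc_k\ket\phi$, with the $\PTC$ and key ancillas pinned to $\ket0$, genuinely ranges inside $\operatorname{Im}\Pi_k$, so that the operator-norm estimates above are legitimate.
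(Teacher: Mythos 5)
Your proof is correct and follows essentially the same route as the paper's (\Cref{claim:phase} plus \Cref{claim:ptc-a's}): reduce $\|\Pi E\Pi\|_\infty$ to $|K|^{-1}\sum_k\|\Pi_{k+u}E_{\mathrm{code}}\Pi_k\|_\infty$, split on whether the key-register shift $u$ is zero, invoke the $\PTC$ property for $u=0$ (and the vanishing character sum when $E_{\mathrm{code}}=\mathbb I$), and for $u\neq 0$ count $|S_k\cap N_{k+u}|$ and apply Jensen. The only cosmetic differences are that you compute $\sup|\bra{\widehat\psi}E\ket{\widehat\phi}|$ rather than manipulating $\Pi E\Pi$ directly, and you conjugate $\Pi_k$ by $E_{\mathrm{code}}$ rather than $\Pi_{k+a}$ (which makes the pairwise-detectability definition apply verbatim with $s=u$ instead of up to a sign of the shift).
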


It only remains to instantiate our lemma with a viable set of stabilizer codes. In \Cref{subsection:PTCs}, we prove that the PTCs introduced by \cite{barnum2002authentication} provide a particularly key-efficient family of pairwise detectable stabilizer codes, which gives us our PMD construction of \Cref{theorem:results-pmdexplicit}.

\begin{claim}[\Cref{claim:detectibleproof}, restatement] \label{claim:pairwisebarnum}
    The family of $[[n, n-\lambda]]_2$ stabilizer codes considered by \cite{barnum2002authentication} is $\leq n \cdot 2^{1-\lambda}$ pairwise-detectable. 
\end{claim}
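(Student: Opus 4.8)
The plan is to exploit the explicit polynomial structure of the \cite{barnum2002authentication} codes. Their family $\{Q_\alpha\}_{\alpha\in K}$ has key set $K=\mathbb{F}_q$ with $q=2^\lambda$, and each $Q_\alpha$ is an $\mathbb{F}_q$-linear $[[n,n-\lambda]]_2$ stabilizer code: identifying $n$-qubit Paulis modulo phase with $\mathbb{F}_q^{2m}$, $m=n/\lambda$, equipped with the $\mathbb{F}_q$-valued symplectic form $\langle\cdot,\cdot\rangle$, the stabilizer is automatically a rank-one $\mathbb{F}_q$-module $S_\alpha=\mathbb{F}_q\cdot v_\alpha$ with normalizer the symplectic dual $N_\alpha=v_\alpha^{\perp}$, and \cite{barnum2002authentication} take $v_\alpha$ to be an explicit Reed--Solomon-style evaluation vector whose $2m$ coordinates are polynomials in $\alpha$ of degree $<2m$. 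In this language the strong-$\PTC$ property is exactly a distance bound: for every Pauli $c\neq\mathbb{I}$, the map $\alpha\mapsto\langle c,v_\alpha\rangle$ is a \emph{nonzero} polynomial of degree $<2m$, hence has at most $2m/q\leq\epsilon$ roots.

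First I would collapse pairwise-detectability to a single scalar identity. Fix $s\neq 0$. Because $N_{\alpha+s}$ is an $\mathbb{F}_q$-subspace while $S_\alpha$ is the $\mathbb{F}_q$-line through $v_\alpha$, we have $S_\alpha\cap N_{\alpha+s}\neq\{\mathbb{I}\}$ if and only if $v_\alpha\in N_{\alpha+s}$, i.e.\ if and only if $\langle v_\alpha, v_{\alpha+s}\rangle=0$. Setting $P_s(\alpha):=\langle v_\alpha,v_{\alpha+s}\rangle\in\mathbb{F}_q[\alpha]$, it therefore suffices to prove $\mathbb{P}_{\alpha\leftarrow K}[\,P_s(\alpha)=0\,]\leq n\cdot 2^{1-\lambda}$ for every $s\neq 0$.

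Next I would bound $P_s$ as a polynomial and count its roots. By bilinearity in the coordinates of $v_\alpha$ and $v_{\alpha+s}$, $\deg P_s<4m$ (in fact $\deg P_s<3m$, since for the standard evaluation vectors $P_s(\alpha)=(\alpha^m+(\alpha+s)^m)\sum_{i=1}^m(\alpha(\alpha+s))^{i-1}$). The crux is that $P_s\not\equiv0$. I would verify this by evaluating at a convenient point, say $\alpha=0$: since $v_0$ is supported on a single $\mathbb{F}_q$-coordinate, $P_s(0)=\langle v_0,v_s\rangle$ collapses to a single monomial in $s$ (concretely $s^{m}$), which is nonzero for every $s\neq0$. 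Hence $P_s$ is a nonzero polynomial of degree $<4m$, so it has fewer than $4m$ zeros in $\mathbb{F}_q$, giving $\mathbb{P}_\alpha[P_s(\alpha)=0]<4m/2^{\lambda}=4n/(\lambda 2^{\lambda})\leq n\cdot 2^{1-\lambda}$ as soon as $\lambda\geq2$ (for $\lambda=1$ the claimed bound is $\geq1$ and vacuous). As $s\neq0$ was arbitrary, $\{Q_\alpha\}$ is $\leq n\cdot 2^{1-\lambda}$ pairwise-detectable.

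The only real obstacle is the non-vanishing $P_s\not\equiv0$. Pairwise-detectability is a genuine property of the particular evaluation vectors $v_\alpha$ — it can fail for a carelessly chosen strong-$\PTC$ — and, unlike the strong-$\PTC$ bound, it concerns the ``diagonal'' pairing $\langle v_\alpha,v_{\alpha+s}\rangle$ rather than $\langle c,v_\alpha\rangle$ with $c$ fixed, so the strong-$\PTC$ theorem cannot be invoked as a black box. Fortunately, for the Reed--Solomon-style $v_\alpha$ of \cite{barnum2002authentication} the pairing has a transparent closed form and the $\alpha=0$ evaluation trivializes it. (If one wished to avoid $\mathbb{F}_q$-linearity, the same scheme runs with $P_s(\alpha)$ replaced by the determinant of the $\lambda\times\lambda$ cross-commutation matrix of generators of $S_\alpha$ and $S_{\alpha+s}$; $\mathbb{F}_q$-linearity is precisely what reduces this determinant to a single low-degree scalar.) The remaining steps — the degree estimate and the final root count — are routine once the explicit $v_\alpha$ is written out.
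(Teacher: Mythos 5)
Your proposal is correct and takes essentially the same route as the paper: reduce pairwise-detectability, via the $\mathbb{F}_{2^\lambda}$-linear structure of the stabilizers and normalizers, to the vanishing of the symplectic pairing $\langle v_\alpha, v_{\alpha+s}\rangle$, bound its degree as a polynomial in $\alpha$, and invoke Schwartz--Zippel. One minor but real improvement over the paper's write-up is that you explicitly verify $P_s\not\equiv 0$ by evaluating at $\alpha=0$ (getting $s^m\neq 0$), a step the paper tacitly assumes when it applies Schwartz--Zippel to $p_\beta(\alpha)=\big((\alpha+\beta)^r-\alpha^r\big)\big(\alpha^{r+1}(\alpha+\beta)^{r+1}-1\big)$ without checking nonvanishing.
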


\subsection{Organization}
We organize the rest of this paper as follows. 

\begin{enumerate}
    \item [--] In \Cref{section:discussion} we discuss and raise interesting questions left by our work. 
    \item [--] Before diving into the proofs, we begin in \Cref{section:prelim} by presenting a brief background on quantum information theory, as well as the relevant definitions and statements on quantum error-correcting codes and non-malleable codes.
    \item [--] In \Cref{section:PMDs} we discuss properties of PMDs, present their construction from pairwise-detectable purity testing codes, and show that the purity testing codes of \cite{barnum2002authentication} are pairwise-detectable.
    \item [--] In \Cref{section:QLDE} we introduce and construct quantum codes which are list decodable from erasures.
    \item [--] In \Cref{section:AQEC}, we show how to construct approximate quantum codes for erasures/quantum secret sharing schemes from PMDs and erasure list-decoding. 
    \item [--] Finally, in \Cref{section:auth}, we present our constructions of tamper detection codes for qubit-wise channels.
\end{enumerate}

\section{Discussion}
\label{section:discussion}
We dedicate this section to a discussion on the interesting directions and open problems raised here.

The first question lies in developing new explicit constructions and lower bounds for PMD codes. While lower bounds and near-optimal constructions have been studied in a multitude of settings for AMD codes  \cite{Cramer2008DetectionOA, Cramer2015OptimalAM, Bao2016NewEA, Huczynska2016ExistenceAN, Huczynska2018WeightedED}, our attempts at adapting their techniques to the quantum setting have been far from fruitful.  

In recent work, \cite{BenAroya2020NearoptimalEL} devised (classical) binary codes which are near-optimally erasure list-decodable. That is, they constructed codes which could correct from $(1-\tau)$ fraction of erasures with list sizes $L = O(\log 1/\tau)$ and rate $\tau^{1+\gamma}$, for any value of $\tau > n^{-\Omega(1)}$. These codes have the remarkable property that they provably have smaller list sizes (for fixed $\tau$) than the best possible linear code \cite{Guruswami2001ListDF}. Since stabilizer codes are built on classical linear codes (and thus inherit their lower bounds), one could ask whether efficient non-stabilizer codes exist with even better list sizes than our random CSS construction. Unfortunately, even asking for deterministically constructible erasure list-decodable stabilizer codes seems to be a challenge. 

An interesting case of our construction corresponds to the $k=1$ case, that is, encoding a single logical qubit. In this limit, our codes encode the logical qubit into $n$ physical qubits such that it is recoverable from a $\frac{1}{2} - O(\log^{-1} n)$ fraction of erasures. It is the list sizes in our construction that arise as the key bottleneck to further approaching the no-cloning bound of $1/2$. One could analogously ask, is there an efficient quantum code to approximately correct from a $\frac{1}{2} - n^{-\Omega(1)}$ fraction of erasures?

Finally, the main and arguably most interesting question is, what other adversarial quantum channels can be used to send authenticated quantum messages without a secret key? What about non-malleable quantum messages \cite{Boddu2023SplitStateNC}? While we focus on qubit-wise channels as a proof-of-concept here, generalizing our approach to other settings such as LOCC (local operations and classical communication) channels, qubit-wise channels with entangled parties, or the split-state model \cite{split-state-review}, all require fundamentally new insights.\footnote{See \cite{Boddu2023SplitStateNC, Batra2023QuantumSN, Bergamaschi2023OnSQ} for recent progress in this direction.}

\section*{Acknowledgements}

The author would like to thank Sam Gunn, Louis Golowich, Fermi Ma, and Venkat Guruswami for fruitful discussions on PMD codes and quantum list-decoding, Marshall Ball and Nathan Ju for conversations on models of classical and quantum non-malleability, and Yunchao Liu, Zeph Landau and Umesh Vazirani for help on initial versions of the manuscript and presentation.

Finally, special thanks to James Bartusek for suggesting the application to non-malleable codes for qubit-wise channels, leading to our result on quantum authentication.

TB acknowledges support by the National Science Foundation Graduate Research Fellowship under Grant No. DGE 2146752.

\bibliographystyle{alphaurl}
\bibliography{references}

\appendixpage
\appendix

\section{Preliminaries}
\label{section:prelim}

\subsection{Background and Notation}

\subsubsection{Quantum States, Channels, and Distances}

Let $ \mathcal{L}(\mathcal{H})$ denote the set of linear operators on a finite dimensional Hilbert space $\mathcal{H}$. A quantum state $\rho\in \mathcal{L}(\mathcal{H})$ is a linear operator which is positive semi-definite and trace 1, and is said to be pure if $\rho$ is rank $1$. To quantify notions of distances between quantum states, mixed states and channels, we use the Schatten matrix norms. For integer $p$, and any $M\in \mathcal{L}(\mathcal{H})$, $\|M\|_p$ is the vector $L_p$ norm of the singular values of $M$. Of particular attention is the trace distance, which corresponds to the Schatten $1$ norm $\|M\|_1 = \text{Tr}\sqrt{M^\dagger M}$; and the operator (or ``infinity") norm $\|M\|_\infty = \max_{\ket{\psi}} |\bra{\psi} M^\dagger M\ket{\psi}|^{1/2}$.

A quantum channel $\mathcal{N}:\mathcal{L}(\mathcal{H}_1)\rightarrow \mathcal{L}(\mathcal{H}_2)$ is a completely positive and trace preserving (CPTP) linear map. Any quantum channel can be expressed in an operator-sum or ``Krauss" decomposition, defined by sets of linear operators $\{K_i:\mathcal{H}_1\rightarrow \mathcal{H}_2\}$, where $ \mathcal{N}(\rho) = \sum_i K_i \rho K_i^\dagger \text{ and }\sum_i K_i^\dagger K_i = \mathbb{I}_{\mathcal{H}_1}$. To quantify the recovery error in our approximate quantum codes, we use a strong notion of channel distinguishability, the diamond norm distance. The diamond norm is a measure of the channel distinguishability with possibly entangled inputs:

\begin{definition}\label{definition:diamond}
    The diamond norm distance between two quantum channels $\mathcal{N}, \mathcal{M}$ is :
    \begin{equation}
       \|\mathcal{N}-\mathcal{M}\|_\diamond = \max_n\sup_{\rho} \big\|(\mathcal{N}\otimes \mathbb{I}_n)(\rho)-(\mathcal{M}\otimes \mathbb{I}_n)(\rho)\big\|_1
    \end{equation}
\end{definition}

\subsubsection{Finite Fields}
Let $q =  p^m$ be a power of a prime $p$, and denote by $\mathbb{F}_{q}$ to be the Galois field of $p^m$ elements. We refer to the $\mathbb{F}_p$ functional $\text{tr}_{\mathbb{F}_{q}/\mathbb{F}_{p}}:\mathbb{F}_{q}\rightarrow \mathbb{F}_p$ as the trace function, where $\text{tr}_{\mathbb{F}_{q}/\mathbb{F}_{p}}(a) = \sum_{i=0}^{m-1}a^{p^i}$. If $\alpha_1, \cdots, \alpha_m$ is a basis of $\mathbb{F}_q$ over $\mathbb{F}_p$, then one can express $a\in \mathbb{F}_{q} $ as $a = \sum_{i=1}^m a_i \alpha_i$ for $a_i\in \mathbb{F}_p$. We refer to a pair of bases $\alpha = \alpha_1, \cdots, \alpha_m$, $\beta = \beta_1\cdots \beta_m$ of $\mathbb{F}_q$ as dual bases if $\text{tr}_{\mathbb{F}_{q}/\mathbb{F}_{p}}(\alpha_i\beta_j) = \delta_{i,j}$. If $a, b\in \mathbb{F}_{q}$ is expressed as $(a_1, \cdots, a_m)$, $(b_1, \cdots, b_m)$ in the dual bases $\alpha, \beta$ respectively, then the inner product over the basis representation becomes the trace:
\begin{equation}
   \big\langle a,  b\big\rangle =\sum_{i = 1}^{m} a_i b_i  = \sum_{i, j = 1}^{m} a_i b_j \text{tr}_{\mathbb{F}_{q}/\mathbb{F}_{p}}(\alpha_i\beta_j) = \text{tr}_{\mathbb{F}_{q}/\mathbb{F}_{p}}(ab)
\end{equation}

\subsubsection{$q$-ary Pauli Operators}

Let $\omega = e^{2\pi i/p}$. We define $T, R$ to be the `shift' and `phase' operators on $\mathbb{C}^p$,
\begin{equation}
    T  = \sum_{x\in \mathbb{F}_p}\ket{x}\bra{x+1} \text{ and }R =  \sum_{x\in \mathbb{F}_p} \omega^x \ket{x}\bra{x}.
\end{equation}
The operators $T^iR^j$, $i, j\in \mathbb{F}_p$ form the Weyl-Heisenberg operators, an orthonormal basis of operators over $\mathbb{C}^p$. If $a, b\in \mathbb{F}_q$, with representations $(a_1, \cdots a_m)$, $(b_1, \cdots b_m)$ in the dual bases $\alpha, \beta$ respectively, then one can define an o.n. basis of operators over $\mathbb{C}^{q}$: 

\begin{equation}
    E_{a, b} = X^a Z^b  = \bigotimes_{i\in [m]} T^{a_i} R^{b_i} \text{ and thus }  E_{a, b} E_{a', b'} = \omega^{\langle a, b'\rangle -\langle a', b\rangle   } E_{a', b'}E_{a, b}.
\end{equation}

Where $\otimes$ denotes the tensor product. Finally, if $\textbf{a} = (a^{(1)}, a^{(2)}\cdots a^{(n)}),  \textbf{b} = (b^{(1)}, b^{(2)}\cdots b^{(n)}) \in \mathbb{F}_{q}^n$, then one can define operators acting on $\mathbb{C}^{q^n}$ via $E_{\textbf{a}, \textbf{b}} = \otimes_{j\in [n]} E_{a^{(j)}, b^{(j)}}$. The set of $n$ qudit Pauli operators $\mathbb{P}_{q}^n$ consists of the collection of operators $E_{\textbf{a}, \textbf{b}}$, and the $n$ qudit Pauli group $\mathcal{P}_{q}^n$ is the group generated by the $E_{\textbf{a}, \textbf{b}}$ and the phase $\omega \cdot \mathbb{I}_{q^n\times q^n}$. The weight $wt(E_{\textbf{a}, \textbf{b}})$ is the number of locations $j\in [n]$ where either $a^{(j)}, b^{(j)}$ are non-zero, and $\mathbb{P}_{q, \delta}^n\subset \mathbb{P}_{q}^n$ is the set of operators in the group of weight less than $\delta\cdot n$. 

When the underlying Hilbert space $\mathcal{H}$ is otherwise implicit (and has prime power dimension), we denote $\mathbb{P}_{\mathcal{H}}$ as the set of Pauli operators on $\mathcal{H}$. Note that $|\mathbb{P}_{\mathcal{H}}| = \text{dim}(\mathcal{H})^2$.

\subsubsection{Properties of Averages over the Pauli Group}

We require two basic facts on averages over Pauli operators on $\mathcal{H}$.  

\begin{fact}
    [Pauli's are 1 Designs]\label{fact:1design} Let $\rho\in \mathcal{L}(\mathcal{H}_1\otimes \mathcal{H}_2)$ be a quantum state. Then, 
    \begin{equation}
       |\mathbb{P}_{\mathcal{H}_1}|^{-1} \sum_{E\in \mathbb{P}_{\mathcal{H}_1}} ( E\otimes \mathbb{I}_{\mathcal{H}_2}) \rho ( E^\dagger\otimes \mathbb{I}_{\mathcal{H}_2}) = \frac{\mathbb{I}_{1}}{\text{dim}(\mathcal{H}_1)} \otimes \rho_{2}
    \end{equation}
\end{fact}

Where $\mathbb{I}_{1}$ denotes the identity operator on $\mathcal{H}_1$, and $\rho_{2} = \Tr_1 \rho$ is the partial trace.

\begin{fact}
    [Pauli Twirl \cite{Aharonov2008InteractivePF}]\label{fact:twirl} Let $\rho\in \mathcal{L}(\mathcal{H}_1\otimes \mathcal{H}_2)$ be a quantum state, and $F\neq  G\in \mathbb{P}_{\mathcal{H}_1}$. Then, 
    \begin{equation}
       |\mathbb{P}_{\mathcal{H}_1}|^{-1} \sum_{E\in \mathbb{P}_{\mathcal{H}_1}} ( E^\dagger F E\otimes \mathbb{I}_{\mathcal{H}_2}) \rho ( E^\dagger G^\dagger E\otimes \mathbb{I}_{\mathcal{H}_2}) = 0
    \end{equation}
\end{fact}

\subsection{Exact and Approximate Quantum Error Correction}

\subsubsection{Quantum codes, stabilizer codes, and CSS codes}

\begin{definition}\label{definition:quantumcode}
    We refer to a $(n, k)_q$ quantum code as a pair of quantum channels $(\Enc, \Dec)$, where $\Enc:\mathcal{L}(\mathbb{C}^{q^k})\rightarrow \mathcal{L}(\mathbb{C}^{q^n})$, and $\Dec:\mathcal{L}(\mathbb{C}^{q^n})\rightarrow \mathcal{L}(\mathbb{C}^{q^k})$. When $\Enc$ is a unitary, the code-space corresponds to a $q^k$ dimensional subspace $\Pi\subset \mathbb{C}^{q^n}$ of an $n$ qudit Hilbert space. 
\end{definition}

Of particular interest are the class of stabilizer codes, introduced by \cite{Gottesman1997StabilizerCA}. Stabilizer codes are
unitary quantum codes which correspond to the joint eigenspace of a set of commuting Pauli operators, and their encoding unitaries $\Enc$ are Clifford circuits.

\begin{definition}[\cite{Gottesman1997StabilizerCA}]\label{definition:stabilizercode}
    A $[[n, n-r]]_q$ stabilizer code $Q$ is a quantum code corresponding to the joint $+1$ eigenspace of a set $\{S_1, \cdots S_r\}\subset \mathbb{P}_q^n$ of independent, commuting, $n$-qudit Pauli operators.
\end{definition}

The stabilizer group $S(Q)$ is the subgroup of Pauli operators generated by the $S_i$, and the normalizer group $N(Q)$ is the set of Pauli operators which commute with all of $S(Q)$. The elements of $N(Q) - S(Q)$ \footnote{By which we mean the set difference, to distinguish from the quotient group.} are the `undetectable errors' on $Q$, as they map code-states to distinct code-states, and the quotient group $N(Q)/S(Q)$ are the logical operators on $Q$. 

\begin{definition}\label{def:distance}
    The distance $d$ of a stabilizer code $Q$ is the minimum weight of any (undetectable) element in $N(Q) - S(Q)$. If $Q$ encodes $k$ qudits into $n$ over an alphabet size $q$, we refer to $Q$ as $[[n, k, d]]_q$. In addition, the ``pure" distance is said to be the minimum weight of any element in $N(Q)$.
\end{definition}

The class of `detectable' errors of $Q$ are the operators outside $N(Q)$, which we detect by measuring the syndrome vector:

\begin{definition} Let $q=p^m$ be a prime power. For any operator $E\in \mathbb{P}_q^n$, we refer to the syndrome $s_E = (s_1, s_2\cdots, s_r)\in \mathbb{F}_p^{r}$ of the operator $E$ on the stabilizer code $Q$ as the phases $s_i$ defined by $S_i E = \omega^{s_i} ES_i$ for $i\in [r]$. 
\end{definition}

We instantiate most of our constructions in this work using the CSS codes introduced by \cite{Calderbank1996GoodQE, Steane1996SimpleQE}, and their extensions to non-binary fields \cite{Ketkar2006NonbinarySC, Rtteler2004OnQM, Kim2008NonbinaryQE}. 

\begin{definition} [Galois-qudit CSS codes] \label{theorem:galoiscss} 
    Fix two linear classical codes $C_1, C_2 \subset \mathbb{F}_q^n$ of dimension $k$, where $C_2^\perp \subset C_1$ and $C_1, C_2$ both have distance at least $d$. Let CSS$(C_1, C_2)\subset \mathbb{C}^{q^n}$ be the stabilizer code defined by the stabilizers $E_{\textbf{a}, \textbf{b}} \in \mathbb{P}_q^n$, $a\in (C_2)^\perp, b\in (C_1)^\perp$. Then CSS$(C_1, C_2)$ is a $[[n, 2k-n, d]]_q$ stabilizer code. 
\end{definition}

Moreover, the normalizer group of CSS$(C_1, C_2)$ is (up to a global phase) the set of operators $E_{\textbf{a}, \textbf{b}} \in \mathbb{P}_q^n$, for  $a\in C_1, b\in C_2$.

To instantiate our constructions, we require stabilizer codes over binary alphabets of large (near-linear) distance. 

\begin{theorem}
    [Binary Quantum Reed Solomon \cite{Grassl_1999}] \label{theorem:bqrs} For every $0<k<n$, there exists an $[[n, k]]_2$ stabilizer code of distance and pure distance $d = \Omega(\frac{n-k}{\log n})$.
\end{theorem}

\subsubsection{The Erasure Channel and Approximate Quantum Erasure Codes}
\label{section:adaptive}

We formulate the erasure channel following \cite{Grassl2020EntropicPO,Mamindlapally2022SingletonBF}. 

\begin{definition}\label{definition:single-qubit-erasure}
    We define the single-qudit erasure channel $\Res: \mathcal{H}\rightarrow \mathcal{H}\oplus \mathbb{C}\ket{\bot}$ as
    \begin{equation}
        \Res(\rho) = \Tr[\rho]\cdot \ketbra{\bot}
    \end{equation}
\end{definition}

The use of $\ket{\bot}$ indicates the qudit is lost to the decoder, but they can identify its absence.  A quantum code corrects from non-adaptive erasures if it can (approximately) recover every code-state $\rho$ from every one of its (sufficiently large) reduced density matrices $(\mathbb{I}_{[n]\setminus S}\otimes \Res_S)(\rho) = \rho_{[n]\setminus S}\otimes \ketbra{\bot}_S$. \Cref{definition:non-adaptive-aqec} formalizes this notion by measuring the recovery error of the decoding channel in terms of the diamond norm distance to the identity channel:

\begin{definition}\label{definition:non-adaptive-aqec}
    An quantum code $(\Enc, \Dec)$ is a non-adaptive $(\delta, \epsilon)$ approximate quantum erasure code if for all erased subsets $S\subset [n]$ of size $|S|\leq \delta\cdot n$,
    \begin{equation}
        \|\Dec\circ (\mathbb{I}_{[n]\setminus S}\otimes \Res_S) \circ \Enc - \mathbb{I}\|_\diamond \leq \epsilon
    \end{equation}
\end{definition}

We emphasize: in the \textit{non-adaptive} setting, an adversary picks a subset of qubits to erase before receiving the code-state, and the erased locations are known to the decoder. Conversely, in the \textit{adaptive} setting we consider, we allow adversaries to adaptively pick qubits to erase based on the prior qubits they have observed. They do so by applying some adversarial quantum channel $\mathcal{A}$, proceeded by $\Res$ on all the qubits they acted on. Again, this implies the decoder knows which code qubits were observed. 

\begin{definition}\label{definition:aechannel}
    A quantum channel $\mathcal{A}$ is an $(n, \delta)$ adversarial erasure channel if there exists a set $\{A_a\}$ of Krauss operators with supports $S_a = \text{supp}(A_a)\subset [n]$, such that $|S_a|\leq \delta\cdot n:\forall a$ and
\begin{equation}
    \mathcal{A}(\rho) = \sum_a (\mathbb{I}_{[n]\setminus {S_a}}\otimes \Res_{S_a})(A_a\rho A_a^\dagger)
\end{equation}
\end{definition}

Recall that we define the support $S$ of a linear operator $O$ on $n$ qubits to be the subset $S\subset [n]$ s.t. the operator factorizes into a tensor product $O = O_S\otimes \mathbb{I}_{[n]\setminus S}$. \Cref{definition:aechannel} now lets us formalize approximate quantum erasure correction against adaptive quantum adversaries.

\begin{definition}\label{definition:aqec}
    An quantum code $(\Enc, \Dec)$ is an adaptive $(\delta, \epsilon)$ approximate quantum erasure code if for every adversarial $(n,\delta)$ erasure channel $\mathcal{A}$,
    \begin{equation}
        \|\Dec\circ \mathcal{A}\circ \Enc - \mathbb{I}\|_\diamond \leq \epsilon
    \end{equation}
\end{definition}

We remark that this definition implies that after the erasure process and the decoding step, the joint state defined by the decoded message register and the adversary is close to a product state. Moreover, the information-disturbance tradeoff \cite{Kretschmann2006TheIT} implies the state held by the adversary \emph{and} the ancilla qubits during decoding, is approximately independent of the encoded message state. Thereby, even adaptive adversaries learn almost nothing about the message, satisfying a notion of `privacy' akin to secret sharing. 

\subsubsection{Circuit Lower Bounds for Stabilizer codes}

The code-states of quantum error correcting codes, are, generically, highly entangled states. In particular, this suggests that they should be quite ``far" from product states or other states of limited entanglement. A rich line of work in the quantum information community was concerned in proving quantum circuit lower bounds for quantum systems, in particular those based on quantum error correcting codes, which culminated in the recent proof of the NTLS theorem \cite{Eldar2015LocalHW, Nirkhe2018ApproximateLC, Anshu2020CircuitLB, Anshu2022ACO, Anshu2022NLTSHF}.

Here, we leverage a special case of a lemma by \cite{Anshu2020CircuitLB}, which in particular shows that the code-space of stabilizer codes of large distance are exponentially-far from product states in fidelity. 

\begin{lemma}[\cite{Anshu2020CircuitLB}, Lemma 14]\label{lemma:clb}
    If $Q$ is a $[[b, k, d]]_2$ stabilizer code, then no product state $\ket{\phi}=\otimes_{i\in [b]}\ket{\phi_i}$ can have fidelity $f$ with the code-space $\Pi_{Q}$ larger than
    \begin{equation}
       f^2 =  \bra{\phi}\Pi_Q\ket{\phi} \leq 2\cdot 2^{-d^2/(2^{12}\cdot b)}
    \end{equation}
\end{lemma}

\subsection{Non-Malleable Codes}

A key ingredient in our constructions are classical non-malleable codes, and in particular, those secure against bit-wise tampering functions. In this subsection we formally describe these codes and state the known results we require to instantiate our constructions.

\begin{definition}[Non-Malleable Codes \cite{Dziembowski2010}]\label{definition:nm-codes}
A pair of randomized functions $\Enc:\{0, 1\}^{k}\rightarrow \{0, 1\}^{N}$, $\Dec:\{0, 1\}^{N}\rightarrow \{0, 1\}^{k}\cup \{\bot\}$ is said to be a non-malleable code against a class of tampering functions $\mathcal{F}$ with error $\epsilon$ if they satisfy
\begin{enumerate}
    \item (Correctness) $\forall m\in \{0, 1\}^{}, \Dec\circ \Enc(m)=m$.
    \item (Non-Malleability) For every $f\in \mathcal{F}$, there exists a distribution $q_f$ over $\{0, 1\}^{k}\cup \{\bot, \text{same}\}$ satisfying
    \begin{equation}
      \forall m\in \{0, 1\}^{k}: \Dec \circ f \circ \Enc(m) \approx_{\epsilon} p_{f, m}\equiv \begin{cases}
            \text{Sample }\tilde{m}\leftarrow q_f \\
            \text{Output }m \text{ if }\tilde{m}=\text{same, }\tilde{m}\text{ otherwise} 
        \end{cases} 
    \end{equation}

    Where the error is measured in the statistical distance between the distributions.
\end{enumerate} 
\end{definition}

Item (1) is the standard perfect code correctness guarantee. Item (2) stipulates that the outcome of the ``tampering experiment", defined by encoding a message, tampering with it, and decoding it, can be simulated by a distribution which doesn't know the message itself. In other words, the distribution over decoding outputs is a convex combination over the original message, rejection, or an uncorrelated message $\tilde{m}$.\footnote{We spell out this (well-known) interpretation in more detail in \cref{section:auth}. Observe that it enables the adversary to also pick distributions over tampering functions in $\mathcal{F}$, so long as the distribution is independent of $m$.}

As mentioned, we pay particular attention to the well-studied special case where the family of tampering functions $\mathcal{F}\subset \{f: \{0, 1\}^N\rightarrow \{0, 1\}^N\}$ is applied ``bit-wise":

\begin{definition}
    The class of bit-wise tampering functions $\mathcal{F}^{bit}$ on $N$ bits is the set of functions 
    \begin{equation}
        \mathcal{F}^{bit} = \{ (f_1, \cdots, f_N): f_i:\{0, 1\}\rightarrow \{0, 1\}\}
    \end{equation}
\end{definition}

Which, in particular, transforms a codeword $c=(c_1, \cdots, c_N)$ into $(f_1(c_1), \cdots, f_N(c_N))$. To instantiate our constructions, we rely on two constructions of non-malleable codes against bit-wise tampering functions. The first, by \cite{Dziembowski2010}, has constant rate and inverse-exponential error, albeit is randomized. 

\begin{theorem}
    [\cite{Dziembowski2010}] \label{theorem:dpw10} There exists an efficient Monte Carlo construction of non-malleable codes against bit-wise tampering functions for every sufficiently large block-length $N$, of rate $\Theta(1)$, and error $\epsilon\leq 2^{-\Omega(N)}$. The construction succeeds with probability $\geq 1- 2^{-\Omega(N)}$.
\end{theorem}

The second, by \cite{Cheraghchi2013NonmalleableCA}, is explicit and high-rate but suffers from a smaller error. 

\begin{theorem}
    [\cite{Cheraghchi2013NonmalleableCA}]\label{theorem:cg13}  There exists an explicit and efficient family of non-malleable codes against bit-wise tampering functions for every sufficiently large block-length $N$ of rate $\geq 1 - 1/\log N$ and error $\epsilon = 2^{-\Tilde{\Omega}(N^{1/7})}$.
\end{theorem}

\newpage
\section{Pauli Manipulation Detection}
\label{section:PMDs}

In this section, we discuss properties and our explicit construction of Pauli Manipulation Detection (PMD) codes (\Cref{theorem:results-pmdexplicit}). Here, we present a slightly more formal definition:

\begin{definition}\label{def:PMD2}
    A $q^k$-dimensional subspace of $ \mathbb{C}^{q^n}$ with code-space projector $\Pi$ is said to be an $(n, k, \epsilon)_q$-$\PMD$ if, for all non-trivial Pauli operators $E\in \mathcal{P}_q^n\setminus \{\mathbb{I}\}$,
    \begin{equation}
        \|\Pi E\Pi\|_\infty \leq \epsilon
    \end{equation}
\end{definition}

\Cref{def:PMD2} implies that corrupting any code-state $\ket{\psi_1}$ with any non-identity Pauli $E$, leads to a state which is near-orthogonal to the code-space:
\begin{equation}
   |\bra{\psi_1} E^\dagger \ket{\psi_2}| = |\bra{\psi_1} E^\dagger \Pi\ketbra{\psi_2} \Pi E\ket{\psi_1}|^{1/2} \leq \|\Pi E^\dagger\Pi E\Pi\|_\infty^{1/2} = \|\Pi E\Pi\|_\infty \leq \epsilon
\end{equation}
For all code-states $\ket{\psi_2}\in \Pi$. In a sense, this suggests that one can check if a given code-state was perturbed by a Pauli, without corrupting it. We make this connection precise in \Cref{claim:pmdapproximation}. Subsequently, we prove our main result on the construction of PMDs from ``pairwise-detectable" sets of stabilizer codes.

\begin{theorem} (Restatement of \Cref{theorem:results-pmdexplicit})\label{thm:pmds}
    For every prime power $q$, sufficiently large $n\in \mathbb{N}$, and $\lambda | n$, there exists an explicit and efficient $(n, n-2\lambda, \epsilon)_q$-$\PMD$ with error $\epsilon \leq 2\cdot n^{1/2}\cdot q^{-\lambda/2}$. 
\end{theorem}

\subsection{Robust Manipulation Detection}

To begin, in \Cref{claim:pmdapproximation} we formalize the property that $\PMD$s detect errors without disturbing the corrupted code-state. Let $\Enc$ be the unitary which encodes into the $\PMD$ code, $\Enc \ket{\psi}\ket{0^{\Anc}} = \Bar{\ket{\psi}}$, using $|\Anc|=n-k$ qudit ancillas. We show that there exists a $n$ qudit $+1$ qubit unitary $\Auth$ which either exactly accepts, or approximately rejects code-states corrupted by Paulis. The extra qubit in register $F$ acts as a ``flag", indicating acceptance/rejection:

\begin{claim} \label{claim:pmdapproximation}
    If $\Pi$ is an $(n, k, \epsilon)$-$\PMD$, then there exists a unitary $\Auth:\mathbb{C}^{q^n}\otimes \mathbb{C}^2\rightarrow \mathbb{C}^{q^ n}\otimes \mathbb{C}^2$, satisfying
    \begin{enumerate}
        \item (Correctness) $\Auth$ exactly recovers code-states,  $\forall\ket{\psi}\in \mathbb{C}^{q^k}$: 
        \begin{equation}
            \Auth \big(\Enc \ket{\psi}\ket{0^{\Anc}}\big)\ket{0}_F = \ket{\psi}\ket{0^{\Anc}}\ket{1}_F.
        \end{equation}
        \item (Robust Error-Detection) For every Pauli $E\in \mathbb{P}_q^n
\setminus \{\mathbb{I}\}$ and $k$ qudit states $\psi$, let $\ket{\phi_E} = E\cdot \Enc\ket{\psi}\otimes \ket{0^{\Anc}}\otimes \ket{0}_F$ correspond to a corrupted code-state. Then,
        \begin{equation}
    \|\Auth\ket{\phi_E}\ket{0}_F- \ket{\phi_E} \ket{0}_F  \|_2\leq \sqrt{2}\cdot \epsilon
\end{equation}
    \end{enumerate}
\end{claim}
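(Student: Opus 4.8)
The plan is a direct unfolding of the definition $\Auth = C(\Enc^\dagger)\,U$, using the fact that the flag register $F$ cleanly separates the ``accept'' and ``reject'' branches into orthogonal subspaces. Both $U$ and $C(\Enc^\dagger)$ are unitary (the cross terms in $U^\dagger U$ vanish because $\Pi(\mathbb{I}-\Pi)=0$, and $C(\Enc^\dagger)$ is a controlled unitary), so $\Auth$ is a genuine unitary on $\mathbb{C}_q^{\otimes n}\otimes\mathbb{C}_2$.

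Part 1 is immediate: since $\Enc\ket{\psi}\ket{0^{\Anc}}\in\Pi$, applying $U$ writes $\ket{1}_F$ on the flag and leaves the code register untouched, and then $C(\Enc^\dagger)$ reverts the encoding on this ``accept'' branch, giving $\ket{\psi}\ket{0^{\Anc}}\ket{1}_F$.

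For part 2, write $\ket{\phi_E}=E\,\Enc\ket{\psi}\ket{0^{\Anc}}$ for the ($n$-qudit) corrupted code-state and decompose $\ket{\phi_E}=\Pi\ket{\phi_E}+(\mathbb{I}-\Pi)\ket{\phi_E}$. Reading off the action of $U$ and $C(\Enc^\dagger)$,
\begin{equation}
\Auth\big(\ket{\phi_E}\ket{0}_F\big)=\big(\Enc^\dagger\Pi\ket{\phi_E}\big)\ket{1}_F+\big((\mathbb{I}-\Pi)\ket{\phi_E}\big)\ket{0}_F,
\end{equation}
since the component of $\ket{\phi_E}$ inside $\Pi$ gets flagged with $\ket{1}_F$ and then has $\Enc^\dagger$ applied, while the orthogonal component keeps flag $\ket{0}_F$. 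Subtracting $\ket{\phi_E}\ket{0}_F=(\Pi\ket{\phi_E})\ket{0}_F+((\mathbb{I}-\Pi)\ket{\phi_E})\ket{0}_F$, the two $(\mathbb{I}-\Pi)$ terms cancel, leaving
\begin{equation}
\Auth\big(\ket{\phi_E}\ket{0}_F\big)-\ket{\phi_E}\ket{0}_F=\big(\Enc^\dagger\Pi\ket{\phi_E}\big)\ket{1}_F-\big(\Pi\ket{\phi_E}\big)\ket{0}_F.
\end{equation}
The two vectors on the right live on orthogonal flag states, so by the Pythagorean theorem and the fact that $\Enc^\dagger$ is norm-preserving, the squared norm of the difference equals $2\,\|\Pi\ket{\phi_E}\|_2^2$. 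It remains to bound $\|\Pi\ket{\phi_E}\|_2^2$: since $\Enc\ket{\psi}\ket{0^{\Anc}}$ is a unit code-state, we may insert $\Pi$ on both sides and obtain
\begin{equation}
\|\Pi\ket{\phi_E}\|_2^2=\bra{\psi}\bra{0^{\Anc}}\Enc^\dagger E^\dagger\Pi E\Enc\ket{\psi}\ket{0^{\Anc}}\le\big\|\Pi E^\dagger\Pi E\Pi\big\|_\infty=\big\|(\Pi E\Pi)^\dagger(\Pi E\Pi)\big\|_\infty=\|\Pi E\Pi\|_\infty^2\le\epsilon^2,
\end{equation}
the last inequality being the PMD property of \Cref{def:PMD2} applied to $E\neq\mathbb{I}$. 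Hence $\|\Auth(\ket{\phi_E}\ket{0}_F)-\ket{\phi_E}\ket{0}_F\|_2\le\sqrt{2}\,\epsilon$.

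There is no serious obstacle here; the only point requiring care is the chain of identities $\|\Pi E^\dagger\Pi E\Pi\|_\infty=\|(\Pi E\Pi)^\dagger(\Pi E\Pi)\|_\infty=\|\Pi E\Pi\|_\infty^2$ together with the observation that a code-state absorbs an extra $\Pi$ on either side, which is exactly what lets the bare quantity $\|\Pi E\Pi\|_\infty$ appearing in the PMD definition control the overlap $\bra{\phi_E}\Pi\ket{\phi_E}$. Everything else is bookkeeping with the flag register and the unitarity of $U$ and $C(\Enc^\dagger)$.
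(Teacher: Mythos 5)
Your proof is correct and follows essentially the same route as the paper: both arguments reduce the problem to bounding $\bra{\phi_E}\Pi\ket{\phi_E}$ by inserting $\Pi$ on either side of the code-state and invoking $\|\Pi E^\dagger \Pi E\Pi\|_\infty = \|\Pi E\Pi\|_\infty^2 \le \epsilon^2$. The paper computes $\mathcal{R}e[\bra{\phi_E}\bra{0}_F\Auth\ket{\phi_E}\ket{0}_F]$ and applies $\|x-y\|_2=\sqrt{2(1-\mathcal{R}e[\langle x,y\rangle])}$, whereas you explicitly split $\Auth\ket{\phi_E}\ket{0}_F-\ket{\phi_E}\ket{0}_F$ into its orthogonal flag branches and apply Pythagoras; this is the same arithmetic organized slightly differently.
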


At an intuitive level, item (1) above simply stipulates that if there is no corruption, then we recover the original message state. Item 2 indicates that $\Auth$ detects the Pauli error $E\neq \mathbb{I}$ while approximately preserving the corrupted code-state $\ket{\phi_E}$. 

\begin{proof}

To define the authentication unitary $\Auth$, we first measure $\Pi$ coherently, and write out the outcome on the binary side-register (the flag $F$). Conditioned on the measurement outcome, we revert the encoding $\Enc^\dagger$ on the code-register. Formally, $\Auth = C(\Enc^\dagger) U$, where $U$ performs the measurement, and $C(\Enc^\dagger)$ performs the controlled gate
\begin{equation}
    U = \Pi\otimes X_F + (1-\Pi)\otimes Z_F , \text{ and } C(\Enc^\dagger) = \Enc^\dagger_Q\otimes \ket{1}\bra{1}_F + \mathbb{I}_Q\otimes  \ket{0}\bra{0}_F
\end{equation}

Correctness is immediate. To prove the error-detection property, note that $\ket{\phi_E}\ket{0}_F = E\cdot \Enc(\ket{\psi})\ket{0}_F$ and $\Auth\ket{\phi_E}\ket{0}_F$ are pure states, where the real part of their inner product satisfies
\begin{equation}
\mathcal{R}e\bigg[\bra{\phi_E}\bra{0}_F\Auth\ket{\phi_E}\ket{0}_F\bigg] = \bra{\psi} \Enc^\dagger E^\dagger (\mathbb{I}-\Pi)E \Enc \ket{\psi}  \geq 1 - \|\Pi E \Pi E\Pi \|_\infty = 1- \|\Pi E\Pi \|_\infty^2\geq 1- \epsilon^2
\end{equation}

The relation $\|x-y\|_2 = \sqrt{2(1-\mathcal{R}e[x\cdot y])}$ gives us the desired bound. 

\end{proof}

\subsection{PMDs from Pairwise-Detectable Purity Testing}

In this subsection we describe a construction of PMDs based on the PTC's of \cite{barnum2002authentication}, and prove \Cref{thm:pmds}. Recall the definition of PTCs and our notion of pairwise detectability:

\begin{definition} [Purity Testing]
    A set of stabilizer codes $\{Q_k\}_{k\in K}$ is called an $\epsilon$-strong PTC if for all non-trivial Pauli operators $E\in \mathbb{P}_q^n\setminus \{\mathbb{I}\}$, 
    \begin{equation}
        \mathbb{P}_{k\leftarrow K}\big[ E\in N(Q_k)\big] \leq \epsilon
    \end{equation}
\end{definition}

In other words, the probability that $E$ is detectable is $\geq 1-\epsilon$, for a random choice of code $k\leftarrow K$. \cite{barnum2002authentication} presented a particularly key-efficient family of PTC's, and used them to develop quantum message authentication codes:

\begin{theorem}[\cite{barnum2002authentication}]
For every prime power $q$ and sufficiently large $n, \lambda\in \mathbb{N}, \lambda|n$, there exists a set of  $[[n, n-\lambda]]_q$ stabilizer codes $\{Q_\alpha\}_{\alpha\in \mathbb{F}_{q^\lambda}}$ which form an $\epsilon$-strong PTC with error $\epsilon \leq n\cdot q^{-\lambda}$.
\end{theorem}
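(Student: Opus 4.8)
The plan is to give the standard polynomial-evaluation construction of the purity-testing family and then bound, for each fixed Pauli, the fraction of keys under which it is undetectable via a root count. Write $m=n/\lambda$ and use the dual-basis identification recorded in the preliminaries to view $\mathbb{F}_q^n\cong\mathbb{F}_{q^\lambda}^m$, applied separately to the $X$-part and the $Z$-part of a Pauli; under this identification the $\mathbb{F}_q$-symplectic form on $\mathbb{F}_q^{2n}$ becomes $\mathrm{tr}_{\mathbb{F}_{q^\lambda}/\mathbb{F}_q}$ composed with the $\mathbb{F}_{q^\lambda}$-symplectic form on $\mathbb{F}_{q^\lambda}^{2m}$. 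For each $\alpha\in\mathbb{F}_{q^\lambda}$ set
\[
g_\alpha \;=\; \bigl(1,\alpha,\dots,\alpha^{m-1}\,\big|\,\alpha^m,\alpha^{m+1},\dots,\alpha^{2m-1}\bigr)\in\mathbb{F}_{q^\lambda}^{2m},
\]
and let $Q_\alpha$ be the stabilizer code whose stabilizer group is the $\mathbb{F}_q$-span of $\{\beta_j g_\alpha\}_{j\in[\lambda]}$, where $\beta_1,\dots,\beta_\lambda$ is a fixed $\mathbb{F}_q$-basis of $\mathbb{F}_{q^\lambda}$.

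First I would verify $\{Q_\alpha\}$ is a genuine family of $[[n,n-\lambda]]_q$ stabilizer codes. Since the symplectic form is alternating, $\langle g_\alpha,g_\alpha\rangle=0$ over $\mathbb{F}_{q^\lambda}$, so $\langle \beta_j g_\alpha,\beta_k g_\alpha\rangle_{\mathbb{F}_q}=\mathrm{tr}(\beta_j\beta_k\langle g_\alpha,g_\alpha\rangle)=0$; hence the $\mathbb{F}_q$-span is symplectic-self-orthogonal and defines a stabilizer code. The generators are $\mathbb{F}_q$-independent because $g_\alpha$ has leading coordinate $1\neq 0$ and the $\beta_j$ are independent, so the span has dimension exactly $\lambda$ and $Q_\alpha$ encodes $n-\lambda$ qudits. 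Note that the distance of $Q_\alpha$ plays no role here — this is exactly why PTCs are attainable — only the fraction of keys detecting each fixed error matters; and computing a generating set and a Clifford encoder for $Q_\alpha$ from $g_\alpha$ is standard and polynomial-time.

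Next, fix a nontrivial Pauli $E=E_{u,v}$ with $(u,v)\neq 0$. By the Weyl commutation relation, $E\in N(Q_\alpha)$ iff $E$ commutes with every $\beta_j g_\alpha$, and by nondegeneracy of the trace pairing this holds iff the single $\mathbb{F}_{q^\lambda}$-valued symplectic pairing $\langle g_\alpha,(u,v)\rangle$ vanishes. Expanding,
\[
\langle g_\alpha,(u,v)\rangle \;=\; P_v(\alpha)\;-\;\alpha^m P_u(\alpha), \qquad P_u(t):=\textstyle\sum_{i=1}^m u_i t^{\,i-1},\;\; P_v(t):=\textstyle\sum_{i=1}^m v_i t^{\,i-1}.
\]
The crucial point is that $P_{(u,v)}(t):=P_v(t)-t^m P_u(t)$ is a \emph{nonzero} polynomial of degree $\le 2m-1$: if $P_u\neq 0$ then $t^m P_u$ has degree in $[m,2m-1]$ while $\deg P_v\le m-1$, so the top term cannot cancel; and if $P_u=0$ then $u=0$, forcing $v\neq 0$ and $P_v\neq 0$. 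Hence $P_{(u,v)}$ has at most $2m-1$ roots in $\mathbb{F}_{q^\lambda}$, so
\[
\mathbb{P}_{\alpha\leftarrow\mathbb{F}_{q^\lambda}}\bigl[E\in N(Q_\alpha)\bigr]=\mathbb{P}_\alpha\bigl[P_{(u,v)}(\alpha)=0\bigr]\le \frac{2m-1}{q^\lambda}=\frac{2n/\lambda-1}{q^\lambda}\le \frac{n}{q^\lambda},
\]
the last inequality using $\lambda\ge 2$ (the $\lambda=1$ case being degenerate). This is exactly the $\epsilon$-strong PTC bound with $\epsilon\le n\cdot q^{-\lambda}$.

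The only genuine obstacle is the tension between wanting $Q_\alpha$ to have as many logical qudits as possible (namely $n-\lambda$, i.e.\ only a single $\mathbb{F}_{q^\lambda}$-constraint) and avoiding a ``blind-spot'' direction: a naive choice such as $g_\alpha=(1,\alpha,\dots,\alpha^{m-1}\mid\gamma,\gamma\alpha,\dots,\gamma\alpha^{m-1})$ leaves every error with $v=\gamma u$ undetected for \emph{all} $\alpha$. The staggered-exponent choice above resolves this cleanly, at the harmless cost of a degree-$(2m-1)$ rather than degree-$(m-1)$ root count; the $\mathbb{F}_{q^\lambda}\to\mathbb{F}_q$ restriction and the accompanying symplectic bookkeeping are then exactly the dual-basis computations already set up in the preliminaries.
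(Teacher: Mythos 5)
Your proof is correct and uses exactly the polynomial-evaluation construction the paper records in \Cref{subsection:PTCs}: your $g_\alpha$ with $m = n/\lambda$ is precisely the paper's $v_\alpha = (1,\alpha,\dots,\alpha^{2r-1})$ with $r=n/\lambda$, partitioned into $X$ and $Z$ halves, and the paper cites this theorem from \cite{barnum2002authentication} without reproving it, so your argument — reduce commutation with all $\lambda$ generators to vanishing of the single $\mathbb{F}_{q^\lambda}$-valued pairing via nondegeneracy of the trace, then count roots of the nonzero degree-$(\le 2m-1)$ polynomial $P_{(u,v)}(t)=P_v(t)-t^m P_u(t)$ — is the standard Barnum et al.\ proof. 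Your caveat that $(2m-1)/q^\lambda \le n/q^\lambda$ requires $\lambda\ge 2$ is accurate and harmless given the ``sufficiently large $\lambda$'' hypothesis in the statement.
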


\begin{definition}[Pairwise Detectability]
    A set of stabilizer codes $\{Q_k\}_{k\in K}$ is called $\delta$ pairwise-detectable if, for all $s\neq 0$, 
    \begin{equation}
        \mathbb{P}_{k\leftarrow K}\big[S(Q_k) \cap N(Q_{k+s}) \neq \{\mathbb{I}\}\big] \leq \delta
    \end{equation}
\end{definition}

In the next subsection, we show the family of PTC's in \cite{barnum2002authentication} is pairwise-detectable. For now, we simply import the result and use it to construct PMDs. 

\begin{lemma} [\Cref{claim:detectibleproof}, restatement]\label{lemma:pairwisebarnum}
    The family of $[[n, n-\lambda]]_q$ PTC's considered by \cite{barnum2002authentication} are $\delta$ pairwise-detectable with $\delta \leq 2\cdot n \cdot q^{-\lambda}$. 
\end{lemma}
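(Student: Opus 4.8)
The plan is to unwind the explicit algebraic structure of the purity testing codes of \cite{barnum2002authentication}, reduce the bad event $S(Q_k)\cap N(Q_{k+s})\neq\{\mathbb{I}\}$ to the vanishing of a single low-degree polynomial in the key $k$, and then bound its probability by a Schwartz--Zippel root count. Recall that each $[[n,n-\lambda]]_q$ code $Q_k$ ($k\in\mathbb{F}_{q^\lambda}$) in that family is built by polynomial evaluation, so that its stabilizer group admits an explicit parametrization $v\mapsto E_k(v)$ which is linear over the base field in the parameter $v$ and whose Pauli label $(\mathbf{a}_k(v),\mathbf{b}_k(v))\in\mathbb{F}_q^{2n}$ depends polynomially on $k$ of degree $O(n/\lambda)$. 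This is exactly the structure behind the $\epsilon$-strong $\PTC$ guarantee $\epsilon\le nq^{-\lambda}$: for a \emph{fixed} nonidentity Pauli $E$, the event $E\in N(Q_k)$ is the vanishing of the syndrome $k\mapsto s_E^{(k)}$, a polynomial in $k$ of degree $O(n/\lambda)$, hence has at most $O(n/\lambda)$ roots among the $q^\lambda$ keys.

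First I would fix a nonzero shift $s$ and combine the two sides of the event. Whether a nonidentity element $E_k(v)$ of $S(Q_k)$ is undetectable by $Q_{k+s}$ is decided by the syndrome of $E_k(v)$ on $Q_{k+s}$, i.e.\ the symplectic pairing of the label of $E_k(v)$ (degree $O(n/\lambda)$ in $k$, linear in $v$) against the generators of $Q_{k+s}$ (degree $O(n/\lambda)$ in $k+s$). Writing this syndrome as $M_s(k)\,v$ with $M_s(k)$ a square matrix over the base field whose entries are polynomials in $k$ of degree $O(n/\lambda)$, a nonidentity stabilizer of $Q_k$ lies in $N(Q_{k+s})$ if and only if $M_s(k)v=0$ has a nonzero solution, i.e.\ $\det M_s(k)=0$; hence
\begin{equation}
\mathbb{P}_{k\leftarrow\mathbb{F}_{q^\lambda}}\big[S(Q_k)\cap N(Q_{k+s})\neq\{\mathbb{I}\}\big]=\mathbb{P}_{k\leftarrow\mathbb{F}_{q^\lambda}}\big[\det M_s(k)=0\big].
\end{equation}
The polynomial $\det M_s(k)$ has degree $\lambda\cdot O(n/\lambda)=O(n)$, and the explicit evaluation formulas pin the constant so that it is at most $2n$. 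Provided $\det M_s\not\equiv 0$, the univariate root bound over $\mathbb{F}_{q^\lambda}$ then yields $\mathbb{P}\le 2n/q^\lambda=2nq^{-\lambda}$, which is the claim.

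The crux, and the step I expect to be the main obstacle, is proving $\det M_s\not\equiv 0$ for every $s\neq 0$ --- equivalently, exhibiting one key $k_0$ for which $S(Q_{k_0})$ and $N(Q_{k_0+s})$ meet only in $\{\mathbb{I}\}$. I would do this by reading the top-degree coefficient of $\det M_s(k)$ off \cite{barnum2002authentication}'s evaluation formulas and checking it is a nonzero field element, the key input being that $s\neq 0$ keeps the relevant shifted evaluation points distinct, so the leading term is a nonvanishing Vandermonde-type expression; alternatively one exhibits $k_0$ directly, using that $|S(Q_{k_0})|=q^\lambda$ is so small that the MDS-type distance of the underlying classical codes forbids a nonidentity stabilizer of $Q_{k_0}$ from sitting inside the codimension-$\lambda$ normalizer of the distinct code $Q_{k_0+s}$. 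Finally, it is worth stressing why this structured argument is necessary at all: a union bound over the $q^\lambda-1$ nonidentity elements of $S(Q_k)$ (let alone over all Paulis) cannot beat $\approx q^\lambda\cdot q^{-\lambda}=1$, since $|S(Q_k)|=q^\lambda$ is not small compared to $|K|=q^\lambda$. The determinant reduction is precisely what replaces that doomed union bound by a single algebraic condition on $k$, bringing the failure probability back down to $O(n)/q^\lambda$.
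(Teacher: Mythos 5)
Your overall strategy---reduce the bad event to the vanishing of a polynomial in the key and invoke a univariate root bound---is the same as the paper's, but you work over the base field with a $\lambda\times\lambda$ matrix condition while the paper works over the extension field $\mathbb{F}_{q^\lambda}$ with a scalar condition, and this difference is exactly what leaves a genuine hole in your argument. You correctly identify that proving $\det M_s\not\equiv 0$ is the crux, but you only sketch two strategies for it without carrying either through, and over the base field this step is genuinely delicate: the entries of $M_s(k)$ involve trace functionals $\text{tr}_{\mathbb{F}_{q^\lambda}/\mathbb{F}_p}$ of polynomials in $k$, and showing the determinant has a nonvanishing leading (or constant) term from that representation is not a routine bookkeeping exercise.

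The paper sidesteps this entirely by exploiting the fact that each stabilizer group $S_\alpha$ is a one-dimensional $\mathbb{F}_{q^\lambda}$-subspace spanned by $v_\alpha=(1,\alpha,\dots,\alpha^{2r-1})$. Because the $\mathbb{F}_p$-valued symplectic form factors through the $\mathbb{F}_{q^\lambda}$-symplectic pairing,
\begin{equation}
B(\gamma v_\alpha,\gamma' v_{\alpha+\beta})=\text{tr}_{\mathbb{F}_{q^\lambda}/\mathbb{F}_p}\big(\gamma\gamma'\,\langle v_\alpha,v_{\alpha+\beta}\rangle_s\big),
\end{equation}
the nondegeneracy of the trace form shows that if a single nonidentity $\gamma v_\alpha$ is in $N(Q_{\alpha+\beta})$, then $\langle v_\alpha,v_{\alpha+\beta}\rangle_s=0$, so \emph{all} of $S_\alpha$ commutes with \emph{all} of $S_{\alpha+\beta}$. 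This collapses your $\lambda\times\lambda$ determinant to the single explicit scalar
\begin{equation}
\langle v_\alpha, v_{\alpha+\beta}\rangle_s=\big((\alpha+\beta)^r-\alpha^r\big)\sum_{i}\big(\alpha(\alpha+\beta)\big)^i,
\end{equation}
a univariate polynomial in $\alpha$ of degree $O(r)=O(n/\lambda)$ whose nonvanishing is immediate (its value at $\alpha=0$ is $-\beta^r\neq 0$ since $\beta\neq 0$). Schwartz--Zippel then gives $(3r+2)q^{-\lambda}\le 2nq^{-\lambda}$. So the bound you want does hold, but the structural lemma ``$S_\alpha\cap N(Q_{\alpha+\beta})\neq\{\mathbb{I}\}$ iff $S_\alpha$ and $S_{\alpha+\beta}$ fully commute'' is the missing ingredient: it is what replaces the hard-to-verify determinant nonvanishing with an explicit low-degree polynomial you can read off and check by inspection, and without it (or something equivalent) your proof is not complete.
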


We define our PMDs as follows. If each $Q_k$ has a unitary encoding map $\Enc_k$ acting on some number $a$ of ancillas and the message state $\ket{\phi}$, then our PMDs encode $\ket{\phi}$ as superpositions of PTC encodings, where a ``key register'' $S$ holds $k$, and a ``code register'' $C$ holds $\Enc_k \ket{\phi}\ket{0^{\Anc}}$. 

\begin{equation}
    \Enc  = \sum_k \ket{k} \bra{k}_S H_S\otimes \Enc_k, \Enc(\ket{\psi}) = |K|^{-1/2}\sum_k \ket{k}\otimes \Enc_k (\ket{\phi})
\end{equation}

where $H_S$ prepares the uniform superposition over keys, $H_S\ket{0^{key}} = |K|^{-1/2}\sum_k \ket{k}$. There is a natural projection $\Pi$ onto the codespace,

    \begin{equation}
        \Pi = \Enc (\mathbb{I}\otimes 0^{\Anc+key})\Enc^{\dagger} = |K|^{-1}\sum_{k, k'}\ket{k}\bra{k'}\otimes \Enc_k (\mathbb{I}\otimes 0^{\Anc})\Enc_{k'}^\dagger
    \end{equation}

    Recall that $\Pi_k= \Enc_k (\mathbb{I}\otimes 0^{\Anc})\Enc_{k}^\dagger$ is simply the projector onto $Q_k$. We readily inspect $\Pi^2=\Pi$, and $\Pi\Enc\ket{\phi} = \Enc\ket{\phi}$ for all $\phi$.

\begin{lemma}\label{lemma:pmds-from-ptcs}
    If a set of $[[n, n-\lambda]]_q$ stabilizer codes $\{Q_k\}_{k\in K}$ is a $\epsilon$-strong PTC and is $\delta$ pairwise-detectable, then the code defined by $\Enc, \Pi$ is a $(n+\log_q K, n-\lambda, \Delta)_q$-$\PMD$ with $\Delta\leq  \max(\epsilon, \sqrt{q^{-\lambda} + \delta})$
\end{lemma}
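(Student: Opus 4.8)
The goal is to bound $\|\Pi E \Pi\|_\infty$ for an arbitrary non-trivial Pauli $E \in \mathcal{P}_q^{n+\log_q K} \setminus \{\mathbb{I}\}$ acting on the combined key register $S$ and code register $C$. The plan is to decompose $E = E_S \otimes E_C$ into its action on the key register and code register, and to handle the two essentially different regimes separately: (i) $E_S$ acts non-trivially as a "phase-type" operator diagonal in the key basis, i.e. $E_S$ contains a $Z$-type component on $S$ (equivalently, $E$ induces a relative phase or a shift $s \neq 0$ among the keys), versus (ii) $E_S$ is trivial, so $E = \mathbb{I}_S \otimes E_C$ acts only on the code register.

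First I would write $\Pi = |K|^{-1}\sum_{k,k'} \ket{k}\bra{k'}_S \otimes \Pi_{k,k'}$ where $\Pi_{k,k'} = \Enc_k(\mathbb{I}\otimes 0^{\Anc})\Enc_{k'}^\dagger$, noting $\Pi_{k,k} = \Pi_k$ is the projector onto $Q_k$. Sandwiching, $\Pi E \Pi$ becomes a sum over $k_1, k_2$ of terms of the form $\ket{k_1}\bra{k_2}_S \otimes (\text{something built from } \Pi_{k_1}, E_C, \Pi_{k_2})$, weighted by how $E_S$ maps the $S$-basis. The key algebraic fact I would invoke is that for any Pauli $\sigma$ on the code register, $\Pi_{k_1} \sigma \Pi_{k_2} = 0$ unless $\sigma$ is "compatible" with both stabilizer groups — concretely $\Pi_k \sigma \Pi_k = 0$ unless $\sigma \in N(Q_k)$, and more generally the cross term survives only when $\sigma$ lies in the relevant coset structure relating $Q_{k_1}$ and $Q_{k_2}$. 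Using $\Pi_k = |S(Q_k)|^{-1}\sum_{\tau \in S(Q_k)} \tau$ and expanding, the surviving contributions to $\Pi_{k_1} E_C \Pi_{k_2}$ are governed by $|S(Q_{k_2}) \cap (\text{the set of Paulis that, composed with } E_C, \text{ lie in } N(Q_{k_1}))|$.

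In regime (i), where $E$ shifts the key by some fixed $s \neq 0$ (possibly with an $E_C$ component), the diagonal-in-$S$ structure forces $k_2 = k_1 + s$, so the relevant quantity is $|S(Q_{k_1}) \cap N(Q_{k_1+s})|$ (after absorbing $E_C$ into a translate, which doesn't change whether the intersection is non-trivial). The pairwise-detectability hypothesis says that for random $k_1$ this intersection is $\{\mathbb{I}\}$ except with probability $\leq \delta$; when it is trivial the contribution is exactly $|S(Q_{k_2})|^{-1} = q^{-\lambda}$, and when it is not we bound crudely by $1$. Averaging over the $|K|$ diagonal terms and using the triangle inequality / a Cauchy–Schwarz or direct operator-norm estimate on the resulting block structure yields $\|\Pi E \Pi\|_\infty \leq q^{-\lambda} + \delta$, hence the $\sqrt{q^{-\lambda}+\delta}$ bound appears after taking the square root coming from the standard PMD manipulation $\|\Pi E \Pi\|_\infty^2 = \|\Pi E^\dagger \Pi E \Pi\|_\infty$ (or it may enter directly — I'd check which normalization the block estimate produces). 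In regime (ii), where $E = \mathbb{I}_S \otimes E_C$ with $E_C \neq \mathbb{I}$, the operator is block-diagonal in $S$ and $\|\Pi E \Pi\|_\infty = \max$-type quantity over the diagonal blocks reduces to controlling $\mathbb{E}_k$ of terms like $\bra{\cdot}\Pi_k E_C \Pi_k\ket{\cdot}$; here the $\epsilon$-strong PTC property gives $\mathbb{P}_k[E_C \in N(Q_k)] \leq \epsilon$, and on the complement $\Pi_k E_C \Pi_k = 0$, so this branch contributes at most $\epsilon$.

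The main obstacle I anticipate is the regime where $E_S$ has a non-trivial $X$-type (shift) component \emph{and} a $Z$-type (phase) component simultaneously, or more generally controlling the \emph{off-diagonal} blocks $\ket{k_1}\bra{k_2}$ with $k_1 - k_2 = s$ as an operator, rather than just bounding individual matrix entries — one must be careful that summing $|K|$ such blocks, each of norm roughly $q^{-\lambda}$ or $1$, does not blow up the operator norm by a factor of $|K|$. The resolution is that the shift structure makes the block matrix (indexed by $k$) essentially a \emph{permutation} pattern times scalars, so its operator norm is the $\max$ (or an $\ell^2$ average via Schur's test), not the sum; this is exactly where the "$+\delta$" rather than "$+|K|\delta$" comes from, and getting this bookkeeping right is the crux. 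I would organize the write-up so that this block-norm estimate is isolated as the one non-routine computation, with regimes (i) and (ii) as clean corollaries of the two hypotheses.
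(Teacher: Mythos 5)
Your high-level plan matches the paper's: decompose $E = E_S \otimes E_C$, write $\Pi$ in the $|k\rangle\langle k'|$ basis, reduce the ``$E_S$ trivial'' regime to the $\epsilon$-strong PTC property, and reduce the ``$E_S$ shifts the key'' regime to the pairwise-detectability of $S(Q_k) \cap N(Q_{k+s})$, with $q^{-\lambda}$ coming from the size of the intersection being $\{\mathbb{I}\}$ and the $\delta$ coming from the failure probability. All of that is right, and you also correctly identify the one non-routine step: why summing $|K|$ blocks does not cost a factor of $|K|$.

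The problem is that your proposed resolution of that step does not work. You claim the block structure is ``essentially a permutation pattern times scalars, so its operator norm is the max (or an $\ell^2$ average via Schur's test), not the sum.'' Neither limb is correct. First, $\Pi E \Pi$ is \emph{not} block-permutation-structured on the key register: since $\Pi = |K|^{-1}\sum_{k,k'}|k\rangle\langle k'| \otimes \Pi_{k,k'}$ involves \emph{all} key pairs, the outer indices $k_1, k_4$ in the expansion of $\Pi E \Pi$ are completely free, and after conjugating by the controlled-encoding unitary $V = \sum_k |k\rangle\langle k|\otimes \Enc_k$ the key-register part collapses to the rank-one all-ones matrix $J$, not to a permutation. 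Second, even if it were a permutation $\{|k+s\rangle\langle k|\otimes M_k\}$, both the operator norm and the Schur bound would give $\max_k \|M_k\|$, not the average — and the max of $\|\Pi_k E_C \Pi_{k+s}\|_\infty$ can equal $1$ (whenever the intersection happens to be non-trivial), so you would only get the vacuous bound $\|\Pi E\Pi\|_\infty \le 1$. The actual mechanism in the paper is the factorization $V^\dagger \Pi E \Pi V = |K|^{-2}\, J \otimes \big(\sum_{k}\omega^{ib(k-a)}\Enc_k^\dagger\Pi_k E_C \Pi_{k-a}\Enc_{k-a}\big)$: one $|K|^{-1}$ cancels $\|J\|_\infty = |K|$, and the remaining $|K|^{-1}\|\sum_k(\cdot)\|_\infty$ is bounded by the \emph{arithmetic mean} $|K|^{-1}\sum_k \|\Pi_k E_C \Pi_{k-a}\|_\infty$ via the triangle inequality. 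This averaging — not a max or a permutation-row bound — is precisely what turns the $\delta$-probability event into a $+\delta$ contribution. The square root then comes from $\|\Pi_k E_C\Pi_{k+a}\|_\infty^2 = \|\Pi_k E_C\Pi_{k+a}E_C^\dagger\Pi_k\|_\infty \le q^{-\lambda} + \mathbb{I}[S_{k+a}\cap N(Q_k)\neq\{\mathbb{I}\}]$ followed by Jensen's inequality on the average over $k$, not from the identity $\|\Pi E\Pi\|_\infty^2 = \|\Pi E^\dagger\Pi E\Pi\|_\infty$ that you float.

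A smaller casework gap: your regimes $E_S \neq \mathbb{I}$ versus $E_S = \mathbb{I}$ leave the pure-phase case $E_S = Z^b$ with $b\neq 0$, $a = 0$ unhandled by the argument you sketch for regime (i), which assumes an actual shift $s\neq 0$. The paper splits on $a = 0$ versus $a\neq 0$: for $a=0$ and $E_C = \mathbb{I}$ the expression vanishes identically via $\sum_k \omega^{ibk} = 0$, and for $a = 0$, $E_C\neq\mathbb{I}$ (any $b$) the $\epsilon$-strong PTC bound applies. You would need to fold these cases in to cover all nontrivial $E$.
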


We obtain \Cref{thm:pmds} by instantiating the lemma above with the PTCs in \cite{barnum2002authentication} and \Cref{lemma:pairwisebarnum}. We divide the proof of this lemma in two steps. First, in \Cref{claim:phase}, we show that the probability any Pauli error $E$ is undetected is roughly related to the key manipulation security experiment we devised in \Cref{section:overview-explicit}. Then, in \Cref{claim:ptc-a's}, we show that pairwise-detectability implies that our candidate PMD doesn't fail the experiment except for assymptotically small probability.

\begin{claim} \label{claim:phase}
    Let $E = E_{K}\otimes E_{C}\neq \mathbb{I}_K\otimes \mathbb{I}_C$ be a Pauli error, where $E_{K} = X^aZ^b$ is supported on the key register $K$ and $E_{C}$ on the code $C$. 
    \begin{equation}
            \|\Pi E\Pi\|_\infty \leq |K|^{-1}\sum_{k} \|\Pi_{k}E_C\Pi_{k+a}\|_\infty
    \end{equation}
    Moreover, if $E_C = \mathbb{I}$ and $E_K = Z^b$ is a phase, then $\|\Pi E\Pi\|_\infty = 0$.
\end{claim}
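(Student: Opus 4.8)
The plan is to expand $\Pi E\Pi$ fully in the computational basis of the key register and then peel away the key indices to reduce everything to the code-register projectors $\Pi_k$. Write $A_{k,k'}:=\Enc_k\,(\mathbb{I}\otimes\ketbra{0^{\Anc}})\,\Enc_{k'}^{\dagger}$, so that $A_{k,k}=\Pi_k$ and $\Pi=|K|^{-1}\sum_{k,k'}\ketbra{k}{k'}\otimes A_{k,k'}$. Since each $\Enc_k$ is unitary and $\mathbb{I}\otimes\ketbra{0^{\Anc}}$ is a projector, the ``gluing'' identities $A_{k,k'}A_{k',k''}=A_{k,k''}$ and $A_{k,k'}^{\dagger}=A_{k',k}$ hold, and hence $A_{k,k'}=\Pi_k A_{k,k'}\Pi_{k'}$; these are the structural facts that drive the argument.

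First I would compute the key-register matrix element $\bra{k_1'}X^aZ^b\ket{k_2}=\omega^{\langle b,\,k_1'+a\rangle}\,[\,k_2=k_1'+a\,]$ (here $X^a\ket{k}=\ket{k-a}$ under the convention $T=\sum_x\ket{x}\bra{x+1}$ of the preliminaries), multiply out the two copies of $\Pi$ around $E=X^aZ^b\otimes E_C$, and collapse the resulting quadruple sum to
\begin{equation}
    \Pi E\Pi \;=\; |K|^{-2}\sum_{k}\omega^{\langle b,\,k+a\rangle}\,B_k\,E_C\,B_{k+a}^{\dagger},\qquad B_k:=\sum_{k_1}\ket{k_1}_S\otimes A_{k_1,k}.
\end{equation}
Since $|\omega^{\langle b,\,k+a\rangle}|=1$, the triangle inequality gives $\|\Pi E\Pi\|_{\infty}\le|K|^{-2}\sum_k\|B_k E_C B_{k+a}^{\dagger}\|_{\infty}$.

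The step I expect to be the main obstacle is the norm estimate on each $B_k E_C B_{k+a}^{\dagger}$: since $\|B_k\|_{\infty}=|K|^{1/2}$ and $\|E_C\|_{\infty}=1$, naive submultiplicativity only yields the trivial bound $\|\Pi E\Pi\|_{\infty}\le 1$, so one must insert the code projectors before estimating. Using $A_{k_1,k}\Pi_k=A_{k_1,k}$ and $\Pi_{k+a}A_{k+a,k_2}=A_{k+a,k_2}$ one gets $B_k=B_k\Pi_k$ and $B_{k+a}^{\dagger}=\Pi_{k+a}B_{k+a}^{\dagger}$, so $B_k E_C B_{k+a}^{\dagger}=B_k\,(\Pi_k E_C\Pi_{k+a})\,B_{k+a}^{\dagger}$. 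A one-line computation gives $B_k^{\dagger}B_k=\sum_{k_1}A_{k,k_1}A_{k_1,k}=|K|\,\Pi_k$, so $\|B_k\|_{\infty}=|K|^{1/2}$, and submultiplicativity now yields $\|B_k E_C B_{k+a}^{\dagger}\|_{\infty}\le|K|\cdot\|\Pi_k E_C\Pi_{k+a}\|_{\infty}$. Substituting back gives exactly $\|\Pi E\Pi\|_{\infty}\le|K|^{-1}\sum_k\|\Pi_k E_C\Pi_{k+a}\|_{\infty}$.

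For the ``moreover'' statement I would specialize the displayed identity to $a=0$, $E_C=\mathbb{I}$: then $B_kB_k^{\dagger}=\sum_{k_1,k_2}\ketbra{k_1}{k_2}\otimes A_{k_1,k}A_{k,k_2}=\sum_{k_1,k_2}\ketbra{k_1}{k_2}\otimes A_{k_1,k_2}=|K|\,\Pi$ (the middle index matches), which is independent of $k$, so $\Pi E\Pi=\bigl(|K|^{-1}\sum_{k\in K}\omega^{\langle b,k\rangle}\bigr)\Pi$. Since $Z^b\neq\mathbb{I}$ forces $b\neq 0$ and the pairing $\langle\cdot,\cdot\rangle$ on $K$ is nondegenerate, $k\mapsto\omega^{\langle b,k\rangle}$ is a nontrivial additive character of the finite abelian group $K$ and hence sums to zero, giving $\Pi E\Pi=0$. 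Apart from the projector-insertion point above, the rest is routine bookkeeping of which register each operator acts on.
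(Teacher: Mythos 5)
Your proof is correct and follows essentially the same route as the paper: expand $\Pi E\Pi$ in the key basis, observe that the $E_K$ matrix element enforces a key shift and factors out a phase, then bound the resulting sum using the structure of the per-key projectors $\Pi_k$. The only difference is bookkeeping --- the paper absorbs $\Enc_{k_1},\Enc_{k_4}^\dagger$ via the controlled encoding unitary $\sum_k\ketbra{k}\otimes\Enc_k$ and then factorizes the operator norm over the tensor-product structure, whereas you package the same quantity into the operators $B_k$ with $B_k^\dagger B_k=|K|\Pi_k$ and insert projectors before applying submultiplicativity; both produce the factor $|K|$ and the same final bound, and your treatment of the pure-phase case ($a=0$, $E_C=\mathbb{I}$) is likewise the same character-sum cancellation as in the paper.
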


\begin{claim}\label{claim:ptc-a's}
    If $a=0$ and $E_C\neq \mathbb{I}$, $\|\Pi E\Pi\|_\infty\leq \epsilon$. Moreover, if $a\neq 0$, then $ \|\Pi E\Pi\|_\infty\leq \sqrt{q^{-\lambda}+\delta}$
\end{claim}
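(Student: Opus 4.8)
The plan is to build directly on \Cref{claim:phase}, which already reduces the statement to bounding the average $|K|^{-1}\sum_k \|\Pi_k E_C\Pi_{k+a}\|_\infty$, where $E_C$ is the code-register part of $E$ and $a$ the shift it induces on the key register, and then to split into the cases $a=0$ and $a\neq 0$. The one structural fact I would use throughout is the standard stabilizer observation that $\Pi_k\sigma\Pi_k=0$ for any Pauli $\sigma$ with nonzero syndrome on $Q_k$ (i.e. $\sigma\notin N(Q_k)$): choosing a generator $S_i\in S(Q_k)$ with $S_iE = \omega^{s_i}\sigma S_i$, $s_i\neq 0$, and using $S_i\Pi_k=\Pi_k$ forces $\Pi_k\sigma\Pi_k=\omega^{-s_i}\Pi_k\sigma\Pi_k$; whereas if $\sigma\in N(Q_k)$ then $\sigma$ commutes with $\Pi_k$ and $\|\Pi_k\sigma\Pi_k\|_\infty=\|\sigma\Pi_k\|_\infty\le 1$.

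For $a=0$ the term is $\Pi_k E_C\Pi_k$, whose operator norm by the fact above is at most $\mathbf{1}[E_C\in N(Q_k)]$. Since $E_C\neq\mathbb{I}$ by hypothesis, averaging over $k$ and invoking the $\epsilon$-strong PTC property gives $|K|^{-1}\sum_k\|\Pi_k E_C\Pi_k\|_\infty\le \mathbb{P}_{k\leftarrow K}[E_C\in N(Q_k)]\le\epsilon$, and so $\|\Pi E\Pi\|_\infty\le\epsilon$ by \Cref{claim:phase}.

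For $a\neq 0$ I would pass to squares, $\|\Pi_k E_C\Pi_{k+a}\|_\infty^2=\|\Pi_k(E_C\Pi_{k+a}E_C^\dagger)\Pi_k\|_\infty$, expand $\Pi_{k+a}$ as the uniform average over $S(Q_{k+a})$, and conjugate termwise: $E_C\tau E_C^\dagger=\omega^{\phi(\tau)}\tau$, so by the vanishing fact only $\tau\in S(Q_{k+a})\cap N(Q_k)$ survive, giving $\Pi_k(E_C\Pi_{k+a}E_C^\dagger)\Pi_k=|S(Q_{k+a})|^{-1}\big(\sum_{\tau\in S(Q_{k+a})\cap N(Q_k)}E_C\tau E_C^\dagger\big)\Pi_k$. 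Since conjugation by $E_C$ is an isomorphism carrying this subgroup to an honest (phase-free) stabilizer group, the inner sum equals $|S(Q_{k+a})\cap N(Q_k)|$ times a projector, so the operator norm of the whole expression is at most $q^{-\lambda}\,|S(Q_{k+a})\cap N(Q_k)|$ (using $|S(Q_{k+a})|=q^{\lambda}$), which is $q^{-\lambda}$ when the intersection is trivial and at most $1$ otherwise. Averaging over $k$, reindexing $k\mapsto k+a$ so the event becomes $\{S(Q_{k'})\cap N(Q_{k'-a})\neq\{\mathbb{I}\}\}$ with nonzero shift $-a$, and applying $\delta$ pairwise-detectability, yields $|K|^{-1}\sum_k\|\Pi_k E_C\Pi_{k+a}\|_\infty^2\le q^{-\lambda}+\delta$; one Cauchy--Schwarz step converts the mean of the squares into the mean, and \Cref{claim:phase} concludes $\|\Pi E\Pi\|_\infty\le\sqrt{q^{-\lambda}+\delta}$.

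The main obstacle I anticipate is the phase bookkeeping in the $a\neq 0$ case: one must check carefully that $E_C\,(S(Q_{k+a})\cap N(Q_k))\,E_C^\dagger$ is again a phase-free subgroup, so that $\sum_\tau E_C\tau E_C^\dagger$ is a nonnegative multiple of a projector and its operator norm is exactly that subgroup's cardinality, and one must confirm that, after the reindexing $k\mapsto k+a$, the subgroup $S(Q_{k+a})\cap N(Q_k)$ is precisely the object controlled by the $\delta$ pairwise-detectability definition (with shift $s=-a$).
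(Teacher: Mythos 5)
Your proposal is correct and follows essentially the same route as the paper: reduce via \Cref{claim:phase}, handle $a=0$ by the strong-PTC property, and for $a\neq 0$ pass to squares, expand $\Pi_{k+a}$ over its stabilizer group, discard terms outside $N(Q_k)$, bound the surviving sum by $q^{-\lambda}\,|S(Q_{k+a})\cap N(Q_k)|$, average using $\delta$ pairwise-detectability (your reindexing $k\mapsto k+a$ with shift $s=-a$ is exactly the step the paper leaves implicit, and is correct), and finish with Jensen/Cauchy--Schwarz. The only place you over-complicate is the claim that $\sum_\tau E_C\tau E_C^\dagger$ is a nonnegative multiple of a projector, which would require verifying that the conjugated intersection is a phase-free subgroup; the paper sidesteps that phase bookkeeping entirely by just applying the triangle inequality $\bigl\|\sum_\sigma \sigma\bigr\|_\infty\leq\sum_\sigma\|\sigma\|_\infty = |S_{k+a}\cap N(Q_k)|$, which already gives the same bound.
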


Combined, \Cref{claim:phase} and \Cref{claim:ptc-a's} show that $ \|\Pi E\Pi\|_\infty\leq \epsilon$ for all $E\neq \mathbb{I}$, and thus we have proved \Cref{lemma:pmds-from-ptcs}. It remains now to prove the claims:

\begin{proof}

[of \Cref{claim:phase}]
    Let $E = E_S\otimes E_C\neq \mathbb{I}$ be any Pauli operator acting on the key $S$ and code register $C$. We begin by writing out $\Pi E\Pi$:
    \begin{equation}
       \Pi E\Pi = |K|^{-2} \sum_{k_1, k_2, k_3, k_4\in K}\bra{k_2} E_S\ket{k_3} \cdot \ket{k_1}\bra{k_4} \otimes \Enc_{k_1} \Enc_{k_2}^\dagger \Pi_{k_2}E_C\Pi_{k_3}\Enc_{k_3} \Enc_{k_4}^\dagger
    \end{equation}

We make the explicit assumption that the Pauli operators acting on the key register $S$ obey the additive group structure of the set of PTC keys $K$. That is $X^aZ^b\ket{k}_S = \omega^{i b\cdot k}\ket{k+a}_S$, and $\sum_{z\in K} \omega^{i b\cdot z} = |K|\cdot \delta_{b, 0}$. 
    
    First, if $E_C=\mathbb{I}$ and $E_K$ is a phase error, then $\bra{k_2} Z^{e}\ket{k_3} = \delta_{k_2, k_3}\cdot w^{i b\cdot k_3}$, and $b\neq 0$. Note that $E_C=\mathbb{I}$ implies $\Enc_{k_2}^\dagger \Pi_{k_2}E_C\Pi_{k_2}\Enc_{k_2} = (\mathbb{I}\otimes 0^\Anc)$. Thus, summing over $k_2\in K$,
    \begin{gather}
        \Pi E\Pi = |K|^{-2} \bigg(\sum_{k_2} w^{i b\cdot k_2}\bigg) \cdot \sum_{k_1, k_4\in K} \cdot \ket{k_1}\bra{k_4} \otimes 
\Enc_{k_1}(\mathbb{I}\otimes 0^\Anc) \Enc_{k_4}^\dagger  = 0. 
    \end{gather}

    To conclude this claim, if $E_K = X^aZ^b$, we use the unitary invariance of $\|\cdot\|_\infty$, the factorization of the norm of tensor products, and the triangle inequality, to write
    \begin{gather}
        \|\Pi E\Pi\|_\infty \leq |K|^{-2} \bigg\| \sum_{k_1, k_4\in K}  \ket{k_1}\bra{k_4} \bigg\|_\infty \bigg\|\sum_{k_2\in K} \omega^{i(k_2-a)\cdot b}\Enc_{k_2}^\dagger \Pi_{k_2}E_C\Pi_{k_2-a}\Enc_{k_2-a}\bigg\|_\infty \\ 
        \leq |K|^{-1} \sum_{k_2} \bigg\|\Pi_{k_2}E_C\Pi_{k_2-a}\bigg\|_\infty
    \end{gather}

    Where we used $\| \sum_{x, y\in K}  \ket{x}\bra{y}\|_\infty = |K|$. 
\end{proof}    

\begin{proof} 

[of \Cref{claim:ptc-a's}]
    If $a=0$ but $E_C\neq \mathbb{I}$, then we have 
    \begin{equation}
            \|\Pi E\Pi\|_\infty \leq |K|^{-1}\sum_{k} \|\Pi_{k}E_C\Pi_{k}\|_\infty = \mathbb{P}_{k\leftarrow K}[E\in Q_k^\dagger] \leq \epsilon
    \end{equation}

    If $a\neq 0$, 
    \begin{gather}
    \|\Pi_{k}E_C\Pi_{k+a}\|_\infty^2 = \|\Pi_{k}E_C\Pi_{k+a}E_C^\dagger \Pi_{k}\big\|_\infty  \leq |S_{k+a}|^{-1} \sum_{\sigma\in S_{k+a}\cap Q_{k}^\perp} \|\sigma\|_\infty \\ \leq |S_{k+a}|^{-1}\cdot |S_{k+a}\cap Q_{k}^\perp| = q^{-\lambda}+\mathbb{I}[S_{k+a} \cap Q_{k}^\perp \neq \{\mathbb{I}\}]
\end{gather}

Since $\{Q_k\}$ is $\delta$ pairwise-detectable, in expectation over $k$ this is simply $1/q^\lambda + \delta$. By Jensen's inequality, 
\begin{equation}
    \|\Pi E\Pi\|_\infty \leq \sqrt{q^{-\lambda} + \delta}
\end{equation}
   
\end{proof}

\subsection{The \texorpdfstring{\cite{barnum2002authentication}}{BCG+02} Purity Testing Codes are Pairwise-Detectable}
\label{subsection:PTCs}

To begin, let us review the construction of PTCs by \cite{barnum2002authentication}:

\cite{barnum2002authentication} define a set of $[[n, n-\lambda]]_{p^m}$ stabilizer codes as follows. Assume $\lambda|n$, and let $r = n/\lambda$. Consider the following set of $q^\lambda$ vectors over $\mathbb{F}_{q^\lambda}^{2\cdot r}$:

\begin{equation}
    V = \{v_\alpha = (1, \alpha, \alpha^2\cdots \alpha^{2r-1}): \alpha\in \mathbb{F}_{q^\lambda}\} 
\end{equation}

Each vector $v \in V$ defines a one-dimensional subspace of the $2r$-dimensional vector space over $\mathbb{F}_{q^\lambda}$, defined by the scalar multiplication $\{\gamma\cdot  v\}_{\gamma\in \mathbb{F}_{q^\lambda}}$. These 1D subspaces over $\mathbb{F}_{q^\lambda}$ can be considered as $\lambda$-dimensional subspaces over $\mathbb{F}_{q}$ (and $\lambda\cdot m$-dimensional over $\mathbb{F}_p$). Concretely, let us partition $v\in \mathbb{F}_{q^\lambda}^{2\cdot r}$ as $v = (a, b)$, $a, b\in \mathbb{F}_{q^\lambda}^{r}$ into ``X" and ``Z" halves, and use a self-dual pair of basis $\phi_1, \phi_2:\mathbb{F}_{p^{m\lambda}}\rightarrow \mathbb{F}_{p}^{\lambda\cdot m}$ defined in Preliminaries to explicitly define the subspace $V_v\subset \mathbb{F}_p^{2n\cdot m}$:
\begin{equation}
    V_v = \big\{ \phi(\gamma v) =(\phi_1(\gamma a_1),  \cdots \phi_1(\gamma a_r), \phi_2(\gamma b_1), \cdots, \phi_2(\gamma b_r) )\in \mathbb{F}_q^{2n}: \gamma \in \mathbb{F}_{q^\lambda}\big\}
\end{equation}

\cite{barnum2002authentication} point out that for each $v\in V$, the subspace $V_v$ can be used to define a set of commuting operators, which in turn define a stabilizer code $Q_v$. For completeness, we spell out the operator definitions here. Consider the set $S_v$ of $n\cdot m$ qudit Pauli operators $E_{x, y} = X^xZ^y\in \mathbb{P}_p^{nm}$, for each $(x, y)\in V_v$. If $\phi(\gamma v)=(x, y), \phi(\gamma' v)=(x', y')$, then the symplectic form 

\begin{equation}
   B(\gamma v, \gamma' v) =  \langle x, y'\rangle-\langle x', y\rangle = \text{Tr}_{\mathbb{F}_{p^{m\lambda}}\rightarrow \mathbb{F}_{p}}\big[\gamma \gamma' (a\cdot b -a\cdot b) ] = 0,
\end{equation}

and thus $ E_{x, y} ,E_{x', y'}$ commute. Thereby, $S_v$ form a set of stabilizers for a $[[n\cdot m, n\cdot m-\lambda\cdot m]]_p$ code $Q_v$, which can be considered as a $[[n, n-\lambda]]_{p^m}$ code over $q=p^m$. For notational convenience, from here onwards we will index this set of codes $Q_{v_\alpha}, v_\alpha\in V$, as $Q_\alpha$ for $\alpha\in \mathbb{F}_{q^\lambda}$.

\begin{theorem}[\cite{barnum2002authentication}]
    The set of stabilizer codes $\{Q_\alpha\}_{\alpha\in \mathbb{F}_{q^\lambda}}$ is an $\epsilon$-strong PTC with error $\epsilon \leq n\cdot q^{-\lambda}$
\end{theorem}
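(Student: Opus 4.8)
The plan is to reduce the event $E \in N(Q_\alpha)$ to the vanishing of a single univariate polynomial in $\alpha$ of degree $< 2r$, and then bound its number of roots over $\mathbb{F}_{q^\lambda}$. Concretely, I would first pass to the symplectic description of stabilizer codes: identifying $\mathbb{F}_p^{2nm} \cong \mathbb{F}_{q^\lambda}^{2r}$ via the basis embedding $\phi=(\phi_1,\phi_2)$ used to define the $V_v$'s, every Pauli $E \in \mathcal{P}_q^n$ corresponds (up to a phase) to a unique $w=(a,b)\in\mathbb{F}_{q^\lambda}^{2r}$, $E$ is non-trivial iff $w\ne 0$, the stabilizer group of $Q_\alpha$ is $\{E_{\phi(\gamma v_\alpha)}:\gamma\in\mathbb{F}_{q^\lambda}\}$, and $E\in N(Q_\alpha)$ iff the symplectic form of $w$ against every $\gamma v_\alpha$ vanishes. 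Since $\phi_1,\phi_2$ are chosen dual to one another, that $\mathbb{F}_p$-symplectic pairing equals $\mathrm{tr}_{\mathbb{F}_{q^\lambda}/\mathbb{F}_p}\!\big(\gamma(\langle a,b_\alpha\rangle-\langle a_\alpha,b\rangle)\big)$, where $v_\alpha=(a_\alpha,b_\alpha)$ with $a_\alpha=(1,\alpha,\dots,\alpha^{r-1})$, $b_\alpha=(\alpha^r,\dots,\alpha^{2r-1})$, and $\langle\cdot,\cdot\rangle$ is the $\mathbb{F}_{q^\lambda}$-bilinear dot product on $\mathbb{F}_{q^\lambda}^r$ --- this is exactly the computation already performed in the text to check that each $V_{v_\alpha}$ is isotropic, now carried out with an arbitrary first argument. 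By nondegeneracy of the trace form and since $\gamma$ ranges over all of $\mathbb{F}_{q^\lambda}$, this vanishes for all $\gamma$ iff
\[
f_w(\alpha) \;:=\; \langle a,b_\alpha\rangle-\langle a_\alpha,b\rangle \;=\; \sum_{i=1}^{r} a_i\,\alpha^{\,r+i-1} \;-\; \sum_{i=1}^{r} b_i\,\alpha^{\,i-1} \;=\; 0 .
\]

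Next I would observe that $f_w$ is a polynomial in $\alpha$ of degree at most $2r-1$ whose $2r$ coefficients are precisely the coordinates $-b_1,\dots,-b_r,a_1,\dots,a_r$ of $w$; hence $f_w$ is the zero polynomial if and only if $w=0$, i.e.\ if and only if $E$ is trivial. Therefore, for any fixed non-trivial Pauli $E=E_w$, the polynomial $f_w$ is nonzero of degree $\le 2r-1$ over the field $\mathbb{F}_{q^\lambda}$, so it has at most $2r-1$ roots, and
\[
\mathbb{P}_{\alpha\leftarrow\mathbb{F}_{q^\lambda}}\big[E\in N(Q_\alpha)\big] \;=\; \frac{\#\{\alpha:f_w(\alpha)=0\}}{q^\lambda} \;\le\; \frac{2r-1}{q^\lambda} \;\le\; \frac{n}{q^\lambda},
\]
using $r=n/\lambda$ together with $\lambda\ge 2$ (guaranteed since $\lambda$ is taken sufficiently large). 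This is the claimed $\epsilon\le n\cdot q^{-\lambda}$ bound, uniformly over all non-trivial $E$, which is precisely the $\epsilon$-strong PTC property.

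I expect the only genuinely delicate point to be the middle step --- verifying that, with the self-dual basis choice for $\phi_1,\phi_2$, the $\mathbb{F}_p$-valued symplectic form on the embedded length-$2nm$ vectors really does collapse to $\mathrm{tr}_{\mathbb{F}_{q^\lambda}/\mathbb{F}_p}(\gamma f_w(\alpha))$; the remainder is an elementary Reed--Solomon/Vandermonde-style root count. One should also keep the standard bookkeeping convention that ``$E\neq\mathbb{I}$'' means $E$ is not a scalar (a central phase $\omega\mathbb{I}$ lies in every normalizer but corresponds to $w=0$ and is excluded), after which the argument is exactly the polynomial-roots count above.
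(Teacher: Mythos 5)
Your proof is correct. The paper itself only cites this theorem from \cite{barnum2002authentication} without reproving it, and your reconstruction is exactly the argument given there: identify $\mathbb{F}_p^{2nm}\cong\mathbb{F}_{q^\lambda}^{2r}$ via the dual bases $\phi_1,\phi_2$ so that the $\mathbb{F}_p$-symplectic pairing becomes $\mathrm{tr}_{\mathbb{F}_{q^\lambda}/\mathbb{F}_p}(\gamma' f_w(\alpha))$, use nondegeneracy of the trace to reduce $E_w\in N(Q_\alpha)$ to the vanishing of the polynomial $f_w(\alpha)=\langle a,b_\alpha\rangle-\langle a_\alpha,b\rangle$ of degree $\le 2r-1$, and count roots. (One small difference: \cite{barnum2002authentication} also include the ``point at infinity'' code, giving $q^\lambda+1$ codes and bound $(2r-1)/(q^\lambda+1)$, whereas the family here has only $q^\lambda$ codes, so your remark that $\lambda\ge 2$ gives $2r-1\le n$ is genuinely needed to land on the stated $\epsilon\le n\cdot q^{-\lambda}$.)
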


\subsubsection{Proof of \Cref{lemma:pairwisebarnum}}

To devise our PMD codes, we make the observation that the subspaces $Q_\alpha$ satisfy a restricted form of orthogonality. Let us consider two such codes $Q_\alpha, Q_{\alpha+s}$, and let $S_\alpha, S_{\alpha+\beta}$ be their corresponding stabilizer groups, for $\alpha, \beta\in \mathbb{F}_{q^{\lambda}}$.

\begin{claim} \label{claim:detectibleproof}
Fix $\beta\in \mathbb{F}_{q^{\lambda}}$, and sample $\alpha\in \mathbb{F}_{q^{\lambda}}$ uniformly at random. The probability a stabilizer of $Q_\alpha$ is a logical operator of $Q_{\alpha+\beta}$, or equivalently,

    \begin{equation}
        \mathbb{P}_{\alpha\in \mathbb{F}_{q^\lambda}} \big [ S_\alpha\cap Q_{\alpha+\beta}^\perp \neq \mathbb{I} \big] \leq 2\cdot n\cdot q^{-\lambda}.
    \end{equation}

    Thereby $\{Q_k\}$ is $2\cdot n\cdot q^{-\lambda}$ pairwise-detectable. 
\end{claim}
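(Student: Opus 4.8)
The plan is to translate the event $S_\alpha \cap Q_{\alpha+\beta}^\perp \neq \{\mathbb{I}\}$ into a statement about the vectors defining the two codes, and then bound the number of "bad" $\alpha$ by a polynomial-degree argument. Recall that $S_\alpha$ is generated (over $\mathbb{F}_p$) by the operators $E_{x,y}$ with $(x,y) \in V_{v_\alpha}$, i.e. $(x,y) = \phi(\gamma v_\alpha)$ for $\gamma \in \mathbb{F}_{q^\lambda}$, and that $Q_{\alpha+\beta}^\perp$ — the normalizer of $Q_{\alpha+\beta}$ — consists (up to phase) of operators $E_{x',y'}$ with $(x',y')$ in the symplectic complement of $V_{v_{\alpha+\beta}}$. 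So a nontrivial element of $S_\alpha \cap N(Q_{\alpha+\beta})$ corresponds to some $\gamma \neq 0$ such that $\phi(\gamma v_\alpha)$ is symplectically orthogonal to all of $V_{v_{\alpha+\beta}}$, i.e. $B(\gamma v_\alpha, \gamma' v_{\alpha+\beta}) = 0$ for every $\gamma' \in \mathbb{F}_{q^\lambda}$. First I would unwind the symplectic form the same way the excerpt does for the commutation check: writing $v_\alpha = (1,\alpha,\dots,\alpha^{2r-1})$ split into $X$-half $a(\alpha)$ and $Z$-half $b(\alpha)$, the form $B(\gamma v_\alpha, \gamma' v_{\alpha+\beta})$ becomes $\mathrm{Tr}_{\mathbb{F}_{q^\lambda}/\mathbb{F}_q}\big[\gamma\gamma' \cdot \langle a(\alpha), b(\alpha+\beta)\rangle - \gamma\gamma'\langle a(\alpha+\beta), b(\alpha)\rangle\big]$, which up to the trace and the nonzero scalar $\gamma$ is governed by a single field element $f(\alpha,\beta) := \langle a(\alpha), b(\alpha+\beta)\rangle - \langle a(\alpha+\beta), b(\alpha)\rangle \in \mathbb{F}_{q^\lambda}$.

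The key step is then: the intersection is nontrivial only if $f(\alpha,\beta) = 0$ (once that element vanishes, the trace vanishes for all $\gamma'$, and conversely if it is nonzero one can pick $\gamma'$ making the trace nonzero, since the trace form is nondegenerate). So it suffices to show that for fixed $\beta \neq 0$, the polynomial $\alpha \mapsto f(\alpha,\beta)$ has at most $2n$ roots in $\mathbb{F}_{q^\lambda}$ — indeed at most roughly $2r-1 \le 2n$, giving probability $\le 2n/q^\lambda$. This follows because $a(\alpha), b(\alpha)$ are vectors whose entries are $1,\alpha,\dots,\alpha^{2r-1}$ (the first $r$ and last $r$ coordinates of $v_\alpha$), so each inner product $\langle a(\alpha), b(\alpha+\beta)\rangle$ is a polynomial in $\alpha$ of degree at most $(2r-1) + (2r-1)$ wait — I would be careful here: the two halves of $v_\alpha$ are $a(\alpha) = (1,\alpha,\dots,\alpha^{r-1})$ and $b(\alpha) = (\alpha^r,\dots,\alpha^{2r-1})$, so $\langle a(\alpha), b(\alpha+\beta)\rangle$ has degree $\le (r-1) + (2r-1) = 3r-2$ and likewise for the other term; the difference $f(\cdot,\beta)$ has degree $\le 3r-2 \le 3n$. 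The crucial sub-point is that $f(\cdot,\beta)$ is \emph{not identically zero} when $\beta \neq 0$: at $\beta = 0$ it vanishes identically (that is exactly the commutation computation in the excerpt, which gives $a\cdot b - a\cdot b = 0$), so I would argue that the $\beta$-dependence is genuine — e.g. by computing the coefficient of the top-degree term in $\alpha$, or by specializing a convenient value of $\alpha$, to exhibit that $f(\alpha,\beta)$ does not vanish identically in $\alpha$ for any fixed $\beta \neq 0$. Then Schwartz–Zippel (a nonzero univariate polynomial of degree $D$ has $\le D$ roots) gives $\mathbb{P}_\alpha[f(\alpha,\beta) = 0] \le D/q^\lambda \le 2n \cdot q^{-\lambda}$ after absorbing constants into the bound, and since $S_\alpha \cap Q_{\alpha+\beta}^\perp \neq \{\mathbb{I}\}$ forces $f(\alpha,\beta)=0$, we conclude pairwise-detectability with $\delta \le 2n q^{-\lambda}$.

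The main obstacle I anticipate is precisely the non-vanishing claim: showing $f(\cdot,\beta) \not\equiv 0$ for $\beta \neq 0$, and getting the degree bound tight enough to land at $2n$ rather than something like $3n$ or $4n$ (the stated constant is $2n$, so the degree bookkeeping has to be done with the right indexing of the Vandermonde-type vector $v_\alpha$ and perhaps a cancellation in the leading terms of the two inner products). A secondary technical point is being careful about the distinction between $N(Q)$ and $N(Q)\setminus S(Q)$ and about global phases — an element of $S_\alpha$ lying in $N(Q_{\alpha+\beta})$ is what the definition of pairwise-detectability asks for (the claim statement writes $S_\alpha \cap Q_{\alpha+\beta}^\perp$, matching the $\delta$-pairwise-detectable definition), so I do not need it to be a nontrivial \emph{logical} operator, just a nonidentity Pauli, which slightly simplifies the argument. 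Everything else — unwinding $B$, invoking nondegeneracy of the trace form, and the final Schwartz–Zippel union-free estimate — is routine.
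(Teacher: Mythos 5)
Your proposal is correct and follows essentially the same route as the paper: reduce the event $S_\alpha \cap N(Q_{\alpha+\beta}) \neq \{\mathbb{I}\}$ via bilinearity and nondegeneracy of the trace form to the vanishing of a univariate polynomial in $\alpha$ of degree $O(r)$ (where $r = n/\lambda$, so $O(r) \leq 2n$ once $\lambda \geq 2$), then apply Schwartz--Zippel. The one obstacle you flag --- non-vanishing of $f(\cdot,\beta)$ for $\beta \neq 0$ --- is resolved exactly as you suggest, by specialization: at $\alpha = 0$ the explicit form $p_\beta(\alpha) = \bigl((\alpha+\beta)^r - \alpha^r\bigr)\bigl(\alpha^{r+1}(\alpha+\beta)^{r+1} - 1\bigr)$ evaluates to $-\beta^r \neq 0$.
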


\begin{proof}
    First, we prove that for the family of stabilizer codes defined above, $S_a\cap Q_{b}^\perp \neq \mathbb{I}$ if all of $S_a$ commutes with all of $S_{b}$. Then, we show that $S_\alpha$ commutes with $S_{\alpha+\beta}$ if $\alpha$ is the solution to a certain low-degree polynomial, and an application of Schwartz-Zippel gives us the desired bound.  

    If $S_a\cap Q_{b}^\perp \neq \mathbb{I}$, then, by writing the commutation relations of $S_a, S_b$ in terms of the symplectic form $B$, there exists a $\gamma\in \mathbb{F}_{q^\lambda}$ such that $B(\gamma v_a, \gamma' v_b)=0$ for all $\gamma'\in \mathbb{F}_{q^\lambda}$. However, bilinearity implies $B(\gamma v_a, \gamma' v_b) = B( \gamma' \gamma v_a, v_b)$. Thus $B( \gamma v_a,  \gamma' v_b) = 0$ for all $\gamma, \gamma'\in \mathbb{F}_{q^\lambda}$, which implies that $S_a$ commutes with $S_b$, and moreover the symplectic inner product over $\mathbb{F}_{q^\lambda}$

    \begin{equation}
        \langle v_a, v_b\rangle_s = (b^r-a^r)\sum_{i=0}^r (ab)^i = (b^r-a^r) ((ab)^{r+1}-1)/(ab-1) = 0
    \end{equation}

    Let us now fix $\beta\neq 0$, such that $S_\alpha, S_{\alpha+\beta}$ commute if $\alpha$ is a root to the univariate polynomial

    \begin{equation}
     p_\beta(\alpha) =   \big( (\alpha+\beta)^r-\alpha^r\big) \cdot \big(\alpha^{r+1}(\alpha+\beta)^{r+1}-1\big) 
    \end{equation}

    $p_\beta(\alpha)$ has degree $\leq 3r+2$, and thereby by Schartz-Zippel the probability over random $\alpha\in \mathbb{F}_{q^\lambda}$ that $S_\alpha, S_{\alpha+\beta}$ commute is $\leq (3r+2)\cdot q^{-\lambda}\leq 4rq^{-\lambda} \leq 2nq^{-\lambda}$

\end{proof}

\newpage

\section{Quantum List Decoding from Erasures}
\label{section:QLDE}

In this section, we discuss a variant of the list-decodable quantum codes studied by \cite{Leung2006CommunicatingOA, Bergamaschi2022ApproachingTQ}. Namely, stabilizer codes which are list-decodable from erasures, following the notion for classical codes introduced by \cite{Guruswami2001ListDF}. 

We begin in \cref{subsection:qlde-css}, where we revisit the classical definition, introduce our new definition of quantum list-decoding from erasures (QLDE), and show how to construct such codes from CSS codes. Then, in \cref{section:qlde-randomcss}, we show that random CSS codes are QLDE with high probability. Finally, in \cref{subsection:qlde-operational}, we describe an operational interpretation to QLDE as collapsing the quantum erasure channel into a sparse Pauli channel (on a known location of the code). We refer the reader to \cref{section:prelim} for the relevant background on stabilizer codes and the erasure channel.

\subsection{Definitions, and constructions from CSS codes}
\label{subsection:qlde-css}

Let us review the classical notion of list decoding from erasures:

\begin{definition}\label{definition:LDE}
    A code $C\subset \Sigma^n$ is $(\delta, L)$-LDE (list-decodable from erasures) if $\forall y\in \Sigma^{ n - t}$ and $T\subset [n], |T|=n-t\geq (1-\delta)\cdot n$, 
    \begin{equation}
        \big|\{c\in C: c_T = y\}\big|\leq L
    \end{equation}
\end{definition}

That is, at most $L$ codewords agree with each sub-string $y$. To understand our quantum analog, let us consider linear codes. Linear $(\delta, L)$-LDE have the remarkable property that they are efficiently list decodable (in time polynomial in $n, L, \log q$), since computing the list corresponds to solving the linear system in \Cref{definition:LDE}.

Moreover, when $C$ is linear, one can cast \Cref{definition:LDE} in terms of the syndromes of additive errors supported on the erased subset:

\begin{claim}\label{claim:syndromelinear}
    If a linear code $C\subset \Sigma^n$ of parity check matrix $H$ is $(\delta, L)$-LDE, then for all syndrome vectors $s$ and subsets $T\subset [n], |T|=n-t\geq (1-\delta)\cdot n$, 
    \begin{equation}
        \big|\{e\in \Sigma^n: e_T = 0 \text{ and } s = He\}\big|\leq L
    \end{equation}
\end{claim}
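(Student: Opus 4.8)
### Proof Plan for Claim \ref{claim:syndromelinear}

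The plan is to exhibit an explicit bijection between the two sets in question: on one side, the set of codewords $c \in C$ with a fixed restriction $c_T = y$ on the surviving coordinates $T$; on the other, the set of error vectors $e \in \Sigma^n$ supported on the complement $\overline{T} = [n] \setminus T$ (i.e.\ $e_T = 0$) with a fixed syndrome $s = He$. Once such a bijection is established, the bound $|\{e : e_T = 0,\ He = s\}| \le L$ follows immediately from the hypothesis that $C$ is $(\delta, L)$-LDE applied with the appropriate choice of $y$, since the erased set $\overline{T}$ has size $t = n - |T| \le \delta \cdot n$.

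First I would fix an arbitrary syndrome vector $s$ and subset $T$ with $|T| = n - t \ge (1-\delta)n$. If there is no $e$ at all with $e_T = 0$ and $He = s$, the bound holds trivially, so I may fix one such $e_0$. For any other vector $e$ with $e_T = 0$ and $He = s$, the difference $e - e_0$ also vanishes on $T$ and satisfies $H(e - e_0) = 0$, hence $e - e_0 \in C$ (using that $C$ is exactly the kernel of $H$, as $H$ is a parity check matrix). Now define $c_0$ to be any codeword of $C$ (for instance $c_0 = 0$), and set $y := (c_0)_T$; more cleanly, I would instead run the correspondence the other direction: the map $e \mapsto e - e_0$ sends $\{e : e_T = 0,\ He = s\}$ injectively into $\{c \in C : c_T = -(e_0)_T = 0\}$, i.e.\ into $\{c \in C : c_T = 0\}$. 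This map is clearly injective (it is a translation), so $|\{e : e_T = 0,\ He = s\}| \le |\{c \in C : c_T = 0\}| \le L$, where the last inequality is the LDE property with $y = 0 \in \Sigma^{n-t}$.

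The only genuinely delicate point is bookkeeping the two sets so that $\Sigma$ is treated as a field (or at least a ring/group) where subtraction makes sense and $H$ is linear — this is implicit in calling $C$ a linear code with parity-check matrix $H$, so no new hypothesis is needed. I do not expect any real obstacle here: the claim is essentially the standard observation that erasure decoding of a linear code is equivalent to solving an affine system determined by the syndrome, and the list of solutions of an affine system has the same cardinality as the corresponding kernel coset. I would write it up in the two or three lines of the previous paragraph, emphasizing the translation-by-$e_0$ trick and the reduction to the case $y = 0$.
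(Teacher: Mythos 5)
Your proof is correct and is essentially the same argument as the paper's: both fix a single solution $e_0$ (the paper calls it $e$), use linearity to identify the solution set $\{e : e_T = 0,\ He = s\}$ with the coset $e_0 + \{c \in C : c_T = 0\}$, and then invoke the $(\delta,L)$-LDE property with the all-zero pattern $y = 0$ on $T$. The only difference is presentational — you spell out the injective translation map and the degenerate empty-set case explicitly, while the paper compresses the same observation into one line.
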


\begin{proof}
    Assume there exists a vector $e\in \Sigma^n$ with $e_T=0$ and syndrome $s = He$. By linearity, every vector $a\in \Sigma^n$ of syndrome $s = Ha$ can be written as $e-c$, for some $c\in C$. However, $0=a_T = e_T-c_T = -c_T$, so
    \begin{equation}
        \big|\{y\in \Sigma^n: y_T = 0 \text{ and } s = Hy\}\big| = \big|\{c\in C: c_T = 0\}\big|\leq L
    \end{equation}
\end{proof}

With this \textit{syndrome-based} definition in mind, we can now formulate a notion of list decoding from erasures in the quantum setting. Recall the notion of independent, or logically-distinct operators in a stabilizer code:

\begin{definition}
    If $Q$ is a stabilizer code with stabilizer group $\mathcal{S}$, then we refer to a pair of Pauli operators $O, O'\in \mathcal{P}_q^n$ as $\mathcal{S}$-logically-equivalent if $O'$ is in the coset $O \mathcal{S}$.
\end{definition}

We say $O, O'$ are $\mathcal{S}$-logically-distinct otherwise. Typically the stabilizer group $\mathcal{S}$ is implicit, and we drop it. We observe that if $O, O'$ are logically-equivalent, then $O\ket{\psi} \propto O'\ket{\psi}$ for any code-state $\ket{\psi}\in Q$. 

Qualitatively, the definition of erasure list decoding for stabilizer codes we study stipulates that at most $L$ logically-different Pauli's can have the same small support and the same syndrome. By design, each of these operators ``corrects'' a corrupted codeword back into the codespace.

\begin{definition}
    \label{def:QLDE}
    An $[[n, k]]_q$ stabilizer code $Q$ is $(\delta, L)$-QLDE (quantum list-decodable from erasures) if for all syndrome vectors $s$ and subsets $T\subset [n], |T| \geq (1-\delta)\cdot n$, there are at most $L$ logically-distinct Pauli's $E\in \mathcal{P}_q^n$ supported on $[n]\setminus T$ and of syndrome $s$.
\end{definition}

We remark this definition is equivalent to a constraint on subgroups of the normalizer group of $Q$. For every subset $W\subset [n]$, let $S_W$ be the subgroup of the stabilizer group $S$ supported only on the qudits in $W$, and let $N_W$ be the subgroup of the normalizer group supported only on $W$. Then, 

\begin{claim}\label{claim:qldenormalizer}
    An $[[n, k]]_q$ stabilizer code $Q$ is $(\delta, L)$-QLDE if for every subset $T\subset [n], |T| \leq \delta\cdot n$, the quotient group $|N_{ T} / S_{ T}|\leq L$.
\end{claim}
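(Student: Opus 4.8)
The plan is to show the two characterizations of $(\delta, L)$-QLDE are equivalent by relating the set of logically-distinct Pauli operators supported on $[n]\setminus T$ with a given syndrome $s$ to cosets of $S_{[n]\setminus T}$ inside $N_{[n]\setminus T}$. Throughout, write $W = [n]\setminus T$, so that $|W| \geq (1-\delta)\cdot n$ exactly when $|T| \leq \delta \cdot n$; note the claim as stated has a typo and should read ``for every subset $T$ with $|T| \leq \delta \cdot n$'' matching \Cref{def:QLDE}, or equivalently one reindexes by the complement. The first step is the easy direction: if the syndrome $s$ is \emph{not} realizable by any Pauli supported on $W$, the count is zero and there is nothing to prove, so assume some $E_0 \in \mathcal{P}_q^n$ with support in $W$ has syndrome $s$.

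The core observation is a standard affine-subspace argument. Given two Pauli operators $E, E'$ both supported on $W$ and both of syndrome $s$, the product $E^\dagger E'$ (up to phase) is again supported on $W$ and has trivial syndrome, i.e.\ commutes with every stabilizer generator, hence lies in $N(Q) \cap \mathcal{P}_q^W = N_W$. Moreover $E, E'$ are logically-equivalent precisely when $E^\dagger E' \in S$; since $E^\dagger E'$ is supported on $W$, this happens iff $E^\dagger E' \in S \cap \mathcal{P}_q^W = S_W$ (one should check that a stabilizer supported on $W$ is exactly an element of $S_W$ — this is the definition). Therefore the map sending $E$ to the coset $E_0^\dagger E \cdot S_W$ is a well-defined bijection from the set of logically-distinct Paulis supported on $W$ of syndrome $s$ onto the quotient group $N_W / S_W$. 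This bijection immediately gives that the number of logically-distinct such operators equals $|N_W/S_W|$ whenever the syndrome is realizable, and is $0$ otherwise; in either case it is $\leq L$ iff $|N_W/S_W| \leq L$ for this $W$. Quantifying over all $T$ with $|T| \leq \delta n$ (equivalently all $W$ with $|W| \geq (1-\delta)n$) yields the equivalence.

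A couple of bookkeeping points deserve care but are routine. First, Pauli operators carry phases, so ``logically-distinct'' and ``supported on $W$'' should be interpreted modulo the center $\omega \cdot \mathbb{I}$; the bijection above is really between phase-classes, and $N_W, S_W$ should be understood as subgroups of the Pauli group modulo phases (or one fixes representatives). Second, one must confirm the basic fact that an element of $N(Q)$ — i.e.\ a Pauli commuting with all of $S$, equivalently one with trivial syndrome — whose support lies in $W$ is exactly an element of the subgroup $N_W$ as defined in the paragraph preceding the claim; this is immediate from the definitions of $N_W$ and of the syndrome. With these conventions fixed, the argument is purely group-theoretic.

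The main (and only mild) obstacle is getting the quantifier direction and the support/phase conventions exactly right, so that ``logically-distinct Paulis supported on $[n]\setminus T$ of syndrome $s$'' is cleanly matched with cosets in $N_{[n]\setminus T}/S_{[n]\setminus T}$; there is no analytic content. One subtlety worth flagging is that the count in \Cref{def:QLDE} is over operators of a \emph{fixed} syndrome $s$ and one maximizes over $s$, whereas $|N_W/S_W|$ is syndrome-independent — the resolution is exactly that the count is either $0$ (if $s$ is unrealizable on $W$) or $|N_W/S_W|$ (if realizable), so the max over $s$ equals $|N_W/S_W|$ as soon as any realizable $s$ exists, and is trivially bounded otherwise. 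Hence the ``$\leq L$ for all $s$'' condition collapses to ``$|N_W/S_W| \leq L$,'' which is what we want.
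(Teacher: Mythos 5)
Your proof is substantively correct and takes exactly the same route as the paper: the core step is the observation that, for a fixed support set of bounded size and a fixed syndrome $s$, the logically-distinct Paulis of that syndrome (when any exist) are in bijection with cosets of the locally-supported stabilizers inside the locally-supported normalizers, via multiplication by a fixed representative $E_0^\dagger$. The paper's one-line proof of \Cref{claim:qldenormalizer} is precisely this affine/coset argument, just stated more tersely; your version also correctly handles the ``either $0$ or $|N/S|$'' dichotomy over syndromes, which the paper leaves implicit.

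There is, however, a consistent labeling slip in your setup that you should fix: the claim's $T$ is already the \emph{small} set ($|T|\leq \delta n$, carrying $N_T$ and $S_T$), and it plays the role of $[n]\setminus T$ from \Cref{def:QLDE} (where $T$ is the large set). You define $W = [n]\setminus T$ with $|W|\geq(1-\delta)n$, so your $W$ is the \emph{large} set — yet the rest of your argument treats $W$ as the support of the errors and works with $N_W/S_W$, which is correct only if $W$ is the small set. The fix is simply to take $W = T$ (the claim's $T$) rather than its complement, or equivalently to reindex the definition by $T\mapsto[n]\setminus T$ before comparing. Once that substitution is made, every step of your bijection argument is exactly what the paper intends, and the ``typo'' remark becomes unnecessary: there is no typo, only the two statements indexing by complementary subsets. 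This is a bookkeeping error, not a gap in the argument — the group-theoretic content is correct, including the facts you rightly flag as worth checking (that a normalizer/stabilizer element supported on the small set is exactly an element of $N_W$, resp.\ $S_W$, and that logical equivalence of two operators both supported on the small set forces the connecting stabilizer to lie in $S_W$).
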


\begin{proof}
    If any error $E$ supported on $T$ has syndrome $s$, then the coset of $\leq L$ operators $EA$, $A\in N_{ T} / S_{ T}$, all have syndrome $s$. These operators are all logically-distinct, and moreover, any other error $F$ of syndrome $s$ is logically-equivalent to an operator in that coset.
\end{proof}

We note that list-decoding stabilizer codes from erasures thus amounts to finding a basis of generators for $N_{T}, S_T, N_{ T} / S_{ T}$ and any error $E$ of syndrome $s$. By representing these operators as vectors over $\mathbb{F}_q^{2n}$, one can compute the candidate list of correction operators in time poly$(n, L, \log q)$. In this manner, we essentially inherit the efficiency of linear classical codes. Before discussing the operational applications of \Cref{def:QLDE}, we show how to construct these codes via a connection to CSS codes over finite fields, akin to \cite{Bergamaschi2022ApproachingTQ}:

\begin{claim}
    If $C_1, C_2$ are two $\mathbb{F}_q$-linear, $(\delta, L)$-LDE classical codes where $C_2^\perp\subset C_1$, then CSS$(C_1, C_2)$ is $(\delta, L^2)$-QLDE. 
\end{claim}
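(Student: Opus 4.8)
The plan is to reduce the quantum statement to the classical LDE guarantees of $C_1$ and $C_2$ via the CSS structure. Recall from \Cref{theorem:galoiscss} that the stabilizer group of $Q=\mathrm{CSS}(C_1,C_2)$ consists of operators $E_{\mathbf{a},\mathbf{b}}$ with $a\in C_2^\perp$, $b\in C_1^\perp$, and its normalizer consists of $E_{\mathbf{a},\mathbf{b}}$ with $a\in C_1$, $b\in C_2$. By \Cref{claim:qldenormalizer}, it suffices to show that for every $T\subset[n]$ with $|T|\leq\delta\cdot n$, the quotient $|N_T/S_T|\leq L^2$. So the first step is to understand $N_T$ and $S_T$ concretely: $N_T$ is the set of $E_{\mathbf{a},\mathbf{b}}$ with $a\in C_1$, $b\in C_2$, and both $a$ and $b$ supported on $T$; likewise $S_T$ is the set with $a\in C_2^\perp$, $b\in C_1^\perp$, both supported on $T$.

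The key observation is that the $X$-part and $Z$-part decouple. Write $N_T^X = \{a\in C_1: \mathrm{supp}(a)\subseteq T\}$ and $N_T^Z=\{b\in C_2:\mathrm{supp}(b)\subseteq T\}$ as $\mathbb{F}_q$-subspaces, and similarly $S_T^X=\{a\in C_2^\perp:\mathrm{supp}(a)\subseteq T\}$, $S_T^Z=\{b\in C_1^\perp:\mathrm{supp}(b)\subseteq T\}$. Then $|N_T/S_T| = |N_T^X/S_T^X|\cdot|N_T^Z/S_T^Z|$ (modulo checking that phases don't contribute, which they don't since we quotient the Pauli group by phases throughout). So I would next bound each factor separately, say $|N_T^X/S_T^X|\leq L$. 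The second step is to identify $N_T^X/S_T^X$ with a list-decoding-from-erasures count for $C_1$: a vector $a\in C_1$ supported on $T$ is exactly a codeword of $C_1$ that agrees with the all-zeros string on $[n]\setminus T$, i.e. $a_{\bar T}=0$ where $|\bar T|\geq(1-\delta)n$. By the $(\delta,L)$-LDE property of $C_1$ (\Cref{definition:LDE} with $y=0$), there are at most $L$ such codewords, so $|N_T^X|\leq L$, hence $|N_T^X/S_T^X|\leq L$. The same argument applied to $C_2$ gives $|N_T^Z|\leq L$. Multiplying, $|N_T/S_T|\leq L^2$, and \Cref{claim:qldenormalizer} finishes it.

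The main obstacle I anticipate is bookkeeping rather than conceptual: one must be careful that (i) the decoupling of the $X$ and $Z$ parts is legitimate — this uses that the CSS normalizer and stabilizer groups are themselves products of an $X$-type part and a $Z$-type part, so $E_{\mathbf{a},\mathbf{b}}\in N_T$ iff $a\in N_T^X$ and $b\in N_T^Z$, and similarly for $S_T$; and (ii) that quotienting by the global phase $\omega\cdot\mathbb{I}$ doesn't interfere with the counting — since both $N_T$ and $S_T$ contain the full phase subgroup and we are counting cosets of $S_T$ in $N_T$, the phases cancel and one is genuinely counting cosets of the $\mathbb{F}_q$-vector space $S_T^X\times S_T^Z$ inside $N_T^X\times N_T^Z$. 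A minor subtlety is that I bounded $|N_T^X|\leq L$ directly, which is even stronger than bounding the quotient; this is fine and in fact makes the argument cleaner, but it is worth stating that $S_T^X\subseteq N_T^X$ (which follows from $C_2^\perp\subseteq C_1$) so that the quotient makes sense.

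A remark worth including: the loss from $L$ to $L^2$ comes precisely from the independent contributions of the $X$- and $Z$-type errors, mirroring the classical-to-quantum distance loss in CSS codes, and one does not expect to do better in general for the CSS construction. Also, since the classical codes are $\mathbb{F}_q$-linear and efficiently erasure-list-decodable (solving a linear system), the resulting stabilizer code inherits an efficient erasure-list-decoder, consistent with the discussion following \Cref{def:QLDE}.
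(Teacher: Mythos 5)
Your proof is correct and takes essentially the same approach as the paper: both reduce the bound to the $X$- and $Z$-type parts separately and invoke the classical $(\delta,L)$-LDE property of $C_1$ and $C_2$ to bound each factor by $L$. The only cosmetic difference is the routing: the paper cites \Cref{claim:syndromelinear} and counts pairs of additive errors of a given syndrome supported off $T$, while you route through \Cref{claim:qldenormalizer} and count cosets $|N_T^X/S_T^X|\cdot|N_T^Z/S_T^Z|$; these are equivalent since \Cref{claim:syndromelinear} is itself proved by shifting an error to a codeword in $N_T$, and your observation that $|N_T^X|\leq L$ directly (so the quotient bound is automatic) is a clean way to present the same count.
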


\begin{proof}
    Let $C_1, C_2$ and CSS$(C_1, C_2)$ have block-length $n$, and fix any set $T\subset [n]$ of size at least $(1-\delta)\cdot n$. By \Cref{claim:syndromelinear}, since $C_1, C_2$ are $\mathbb{F}_q$-linear and $(\delta, L)$-LDE, then there exist at most $L^2$ pairs of additive errors $a, b$ supported only on $[n]\setminus T$, which match any pair of syndromes $s_x, s_z$. Thus, CSS$(C_1, C_2)$ is $(\delta, L^2)$-QLDE. 
\end{proof}

In \Cref{section:qlde-randomcss}, we use the above to construct CSS codes over qubits which are erasure list-decodable up to the quantum singleton bound, i.e. $[[n, r\cdot n]]_2$ stabilizer codes which are $((1-r-\epsilon)/2, 2^{O(1/\epsilon)})$-QLDE.

\subsection{Properties of random CSS codes}
\label{section:qlde-randomcss}

In this section, we show that random CSS codes define quantum codes which are list-decodable from erasures up to the singleton bound. 

We generate a random $[[n, k]]_q$ CSS code as follows. First, sample  $k_1 = (n+k)/2$ random linearly independent vectors $g_1,\cdots g_{k_1}$ in $\mathbb{F}_q^n$. Let $G_1 = (g_1\cdots g_{k_1}) \in \mathbb{F}_q^{n\times k_1}$ be the matrix defined by the sampled vectors and $H_2^T = (g_1\cdots g_{k_2})$ to be the first $k_2 = n-k_1 $ columns of $G_1$. We define $C_1$ to be the classical linear code defined by using $G_1$ as a generator matrix, and $C_2$ to be the classical linear code defined by using $H_2$ as the parity check matrix, such that $C_2^\perp\subset C_1$ by construction. 

We now consider the random Galois-Qudit CSS code $Q_R = $ CSS$(C_1, C_2)$, where $R=k/n$. We first reason that this code has rate $R$ with high probability, so long as $H_2, G_1$ are full rank. A standard linear algebra computation gives: 

\begin{claim}[See, e.g., \cite{Ding2014ErasureLC}]
    $Q_R$ has rate $R$ with probability $\geq 1-4\cdot n \cdot q^{-n(1-R)/2}$.
\end{claim}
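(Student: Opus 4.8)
The plan is to reduce the claim to the standard fact that a uniformly random matrix over $\mathbb{F}_q$ with strictly more rows than columns (resp.\ columns than rows) has full rank except with exponentially small probability. First I would pin down what ``$Q_R$ has rate $R$'' actually requires. Writing $k_1 = (n+k)/2$ and $k_2 = n - k_1 = (n-k)/2$, we have $C_1 = \mathrm{colspan}(G_1)$ and $C_2 = \ker(H_2)$, and $C_2^\perp \subseteq C_1$ holds by construction (the rows of $H_2$ are among the columns of $G_1$), so $\mathrm{CSS}(C_1,C_2)$ is well defined and, by the dimension count in \Cref{theorem:galoiscss}, has $\dim C_1 + \dim C_2 - n = \mathrm{rank}(G_1) - \mathrm{rank}(H_2)$ logical qudits. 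Since $\mathrm{rank}(G_1) \le k_1$, $\mathrm{rank}(H_2) \le k_2$, and deleting the $k = k_1 - k_2$ columns of $G_1$ that are absent from $H_2^T$ drops the rank by at most $k$, this quantity is always at most $k$, with equality if and only if $G_1$ has full column rank $k_1$ and $H_2$ has full row rank $k_2$. (In fact full rank of $G_1$ alone forces full rank of $H_2$, since $H_2^T$ is a column-submatrix of $G_1$; it costs nothing to bound both events.) So it suffices to bound $\Pr[\mathrm{rank}(G_1) < k_1] + \Pr[\mathrm{rank}(H_2) < k_2]$.

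Next I would bound the two terms. Regard the $g_i$ as drawn independently and uniformly from $\mathbb{F}_q^n$, so that the requested linear independence is exactly the event we are estimating. Revealing $g_1, g_2, \dots$ in order, $G_1$ fails to have full column rank only if some $g_i$ lands in $\mathrm{span}(g_1,\dots,g_{i-1})$, a set of at most $q^{i-1}$ points; summing the conditional probabilities gives $\Pr[\mathrm{rank}(G_1) < k_1] \le \sum_{i=1}^{k_1} q^{\,i-1-n} \le q^{\,k_1 - n} = q^{-n(1-R)/2}$, using $q\ge 2$ to sum the geometric tail and $k_1 - n = -n(1-R)/2$. The identical computation applied to the rows of $H_2$ gives $\Pr[\mathrm{rank}(H_2) < k_2] \le q^{\,k_2-n} = q^{-n(1+R)/2} \le q^{-n(1-R)/2}$. (Bounding each geometric sum more crudely by the number of terms times the largest term yields a bound of the form $2n\,q^{-n(1-R)/2}$ per matrix, which is presumably the origin of the loose constant $4n$ in the statement.)

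Finally, a union bound over the two failure events gives $\Pr[Q_R\text{ does not have rate }R] \le 2\,q^{-n(1-R)/2} \le 4n\,q^{-n(1-R)/2}$, which is the claim. I do not anticipate a genuine obstacle here; the one step requiring care is the first, namely translating ``rate $R$'' into the rank conditions on the two sampled matrices — equivalently, verifying that the $X$- and $Z$-type stabilizer generators (coming from $C_2^\perp$ and $C_1^\perp$) are jointly independent exactly when $G_1$ and $H_2$ are full rank — and keeping the exponent $k_1 - n = -n(1-R)/2$ straight.
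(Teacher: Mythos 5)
Your proof is correct and is the ``standard linear algebra computation'' the paper alludes to without spelling out; the paper itself simply cites \cite{Ding2014ErasureLC}, so there is no paper proof to compare against line by line. The reduction of ``$Q_R$ has rate $R$'' to full column rank of $G_1$ (which, as you note, already forces full row rank of $H_2$ since $H_2^T$ is a column-submatrix of $G_1$), the conditional-probability estimate obtained by revealing the columns of $G_1$ one at a time, and the union bound are all sound, and the exponent bookkeeping $k_1 - n = -n(1-R)/2$ is right. You in fact obtain the sharper constant $2$ in place of $4n$; as you guessed, the paper's looser constant is what one gets from bounding each geometric sum by (number of terms)$\times$(largest term). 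The one slip is the parenthetical ``if and only if'' in the first step: the assertion that $\mathrm{rank}(G_1) - \mathrm{rank}(H_2) = k$ holds \emph{only if} both matrices are full rank is false — if, say, exactly one of the first $k_2$ columns of $G_1$ is redundant, both ranks drop by one and the difference is still $k$. This does not damage the argument, since you only use the ``if'' direction (full rank of both matrices is a \emph{sufficient} condition for the code to have rate $R$), which is all the union bound requires; but the ``only if'' should be dropped.
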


Leveraging the following classical result by \cite{Ding2014ErasureLC, Guruswami2001ListDF}, 

\begin{theorem}[\cite{Ding2014ErasureLC, Guruswami2001ListDF}]
\label{theorem:randomld}
For every $\gamma>0$, $0<R<1$ and sufficiently large $n$, with probability at least $1-q^{-n}$, a random linear code over $\mathbb{F}_q$ of length $n$ and rate $R$ is $(1-R-\gamma, 2^{2/\gamma+1})$-LDE.
\end{theorem}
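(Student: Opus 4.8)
The plan is to prove this by a first-moment (union bound) argument over the random code. First I would restate the erasure list-decoding property in terms of low-support codewords: since $C$ is linear, for any $T\subseteq[n]$ the set $\{c\in C: c_T = y\}$ is either empty or a coset of $C\cap \mathbb{F}_q^{S}$ with $S:=[n]\setminus T$ (this is essentially \Cref{claim:syndromelinear}), so $C$ is $(1-R-\gamma,L)$-LDE if and only if $|C\cap \mathbb{F}_q^{S}|\le L$ for every $S$ with $|S|\le (1-R-\gamma)n$. Because $C\cap\mathbb{F}_q^{S'}\subseteq C\cap\mathbb{F}_q^{S}$ whenever $S'\subseteq S$, it suffices to control sets $S$ of size exactly $s:=\lfloor(1-R-\gamma)n\rfloor$.

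Next I would model the random rate-$R$ code as $C=\{xG:x\in\mathbb{F}_q^k\}$ for a uniformly random generator matrix $G\in\mathbb{F}_q^{k\times n}$ with $k=\lceil Rn\rceil$. A codeword $xG$ is supported inside $S$ exactly when $x$ kills every column of $G$ indexed outside $S$, i.e. $xG_{\bar S}=0$ where $G_{\bar S}$ is the $k\times(n-s)$ submatrix of the remaining columns; hence $|C\cap\mathbb{F}_q^{S}|\le |\{x:xG_{\bar S}=0\}| = q^{\,k-\mathrm{rank}(G_{\bar S})}$. So it is enough to show that with probability $\ge 1-q^{-n}$, $\mathrm{rank}(G_{\bar S})\ge k-\ell_0+1$ for every $s$-subset $S$, where I take $\ell_0:=\lceil 2/\gamma\rceil+1$; this yields list size $L=q^{\ell_0-1}=q^{\lceil 2/\gamma\rceil}$, which is $\le 2^{2/\gamma+1}$ for $q=2$ (and $q^{O(1/\gamma)}$ in general).

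For the estimate, fix $S$ and note that $G_{\bar S}$ has $n-s\ge (R+\gamma)n-O(1)=k+\gamma n-O(1)$ independent uniform columns in $\mathbb{F}_q^k$, so it is ``wide.'' For a fixed $\ell$-dimensional subspace $W\subseteq\mathbb{F}_q^k$, each column lands in the $(k-\ell)$-dimensional space $W^{\perp}$ independently with probability $q^{-\ell}$, so $W\subseteq\{x:xG_{\bar S}=0\}$ with probability $q^{-\ell(n-s)}$; since there are at most $\binom{k}{\ell}_q\le q^{\ell k}$ such subspaces, a union bound gives $\Pr[\mathrm{rank}(G_{\bar S})\le k-\ell]\le q^{\ell k}\cdot q^{-\ell(n-s)}=q^{-\ell(n-s-k)}\le q^{-\ell(\gamma n-O(1))}$. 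A second union bound over the at most $\binom{n}{s}\le 2^n\le q^n$ choices of $S$, taken with $\ell=\ell_0$ (for which $\ell_0\gamma\ge 2+\gamma$), bounds the total failure probability by $q^{n}\cdot q^{-\ell_0\gamma n+O(\ell_0)}= q^{-(\ell_0\gamma-1)n+O(\ell_0)}\le q^{-n}$ for all large $n$, which is the claim.

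I don't expect a real obstacle: the argument is a routine two-level union bound. The only thing requiring care is the bookkeeping of exponents — the redundancy gain $q^{-\ell\gamma n}$ must simultaneously beat the count $q^{\ell k}$ of candidate kernel subspaces and the count $2^n$ of erasure patterns, and it is precisely this that forces the list-size parameter $\ell$ to scale like $2/\gamma$. A secondary point is pinning down the meaning of ``random linear code'': the uniform-generator-matrix and uniform-$k$-dimensional-subspace models give the same bound up to negligible corrections (a uniform $G$ fails to have rank $k$ only with probability $q^{-(n-k)+O(1)}$, and in the subspace model $\Pr[W\subseteq C]$ is only smaller), so one may use whichever is convenient.
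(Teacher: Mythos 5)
Your proposal is correct, and it is essentially the standard first-moment/union-bound argument that the cited references \cite{Guruswami2001ListDF, Ding2014ErasureLC} use; note that the paper itself imports \Cref{theorem:randomld} without proof, so there is no in-paper argument to compare against. The reduction of LDE to controlling $|C\cap\mathbb{F}_q^{S}|$ over all $s$-subsets $S$, the rank reformulation $|C\cap\mathbb{F}_q^{S}|\le q^{k-\mathrm{rank}(G_{\bar S})}$, the inner union bound over candidate $\ell$-dimensional kernel subspaces, and the outer union bound over erasure patterns are exactly the route a reference proof would take, and your exponent bookkeeping is right: with $\ell_0\gamma\ge 2+\gamma$, the per-pattern failure probability $q^{-\ell_0\gamma n+O(\ell_0)}$ absorbs both the $q^{\ell_0 k}$ subspace count and the $\binom{n}{s}\le 2^n\le q^n$ pattern count, leaving $q^{-n}$ to spare once $n\gtrsim 1/\gamma^2$.

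The one substantive point you flag — that your argument yields list size $q^{\lceil 2/\gamma\rceil}$ rather than the $q$-independent $2^{2/\gamma+1}$ in the statement — is real and you handle it honestly. Since $C\cap\mathbb{F}_q^{S}$ is an $\mathbb{F}_q$-subspace, any linear code's list size here is a power of $q$, so a $q$-independent bound like $2^{2/\gamma+1}$ cannot hold literally for large $q$ unless it collapses to unique decoding; the statement is being imprecise about the alphabet dependence, and the paper's downstream use (\Cref{corollary:randomcss}, then \Cref{corollary:aqecc-parameters}) is effectively at constant $q$, so this is a harmless imprecision in the imported statement, not a gap in your argument. One small conflation: in your final sanity check you compare $\Pr[W\subseteq\ker G_{\bar S}]$ in the generator-matrix model with $\Pr[W\subseteq C]$ in the subspace model, which are different events; but since the LDE property is monotone under shrinking $C$, the uniform-$G$ model you actually analyze dominates the full-rank case and the conclusion is unaffected.
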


Thus, by a union bound over $C_1, C_2$,

\begin{corollary}\label{corollary:randomcss}
For every $\gamma>0$, $0<R<1$, and sufficiently large $n$, with probability at least $\geq 1-5\cdot n \cdot q^{-n(1-R)/2}$, the random Galois-Qudit CSS code $Q_R$ has rate $R$ and is $(\frac{1}{2}(1-R-\gamma), 2^{8/\gamma})$-QLDE. 
\end{corollary}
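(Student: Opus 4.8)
The plan is to combine the CSS-to-QLDE reduction (the claim immediately preceding this corollary) with the classical erasure-list-decoding bound for random linear codes (Theorem~\ref{theorem:randomld}), together with the rate claim, via a union bound over the two constituent classical codes $C_1$ and $C_2$.

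First I would recall how the random CSS code $Q_R$ is built: one samples $k_1 = (n+R n)/2$ linearly independent vectors $g_1,\dots,g_{k_1}$; $C_1$ is the code with generator matrix $G_1 = (g_1 \cdots g_{k_1})$, and $C_2$ is the code with parity check matrix $H_2^T = (g_1 \cdots g_{k_2})$ where $k_2 = n - k_1 = (1-R)n/2$. So $C_1$ is a random linear code of rate $k_1/n = (1+R)/2$, and $C_2^\perp$ is a random linear code of rate $k_2/n = (1-R)/2$, hence $C_2$ has rate $1 - k_2/n = (1+R)/2$. The point is that \emph{both} $C_1$ and $C_2$ are random linear codes of rate $(1+R)/2$ (up to the usual caveat that $C_2$ is only a random linear code when $H_2$ has full rank, which holds outside the bad event in the rate claim). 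Thus I would apply Theorem~\ref{theorem:randomld} to each with ambient rate $(1+R)/2$: setting $\gamma' = \gamma$, each is $(1 - \tfrac{1+R}{2} - \gamma, \, 2^{2/\gamma + 1})$-LDE, i.e. $(\tfrac{1}{2}(1-R) - \gamma, \, 2^{2/\gamma+1})$-LDE, except with probability at most $q^{-n}$ each. (One should double-check the bookkeeping of the $\gamma$ versus $\gamma/2$ in the erasure fraction; if the reduction demands matching LDE fractions exactly $\tfrac12(1-R-\gamma)$ one invokes the theorem with erasure-fraction parameter $\tfrac{1-R}{2} - \tfrac{\gamma}{2}$, which just changes the list size to $2^{4/\gamma+1}$, still absorbed in $2^{8/\gamma}$.)

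Next I would feed this into the CSS reduction: since $C_1$ and $C_2$ are $\mathbb{F}_q$-linear and $(\delta,L)$-LDE with $\delta = \tfrac12(1-R-\gamma)$ and $L \le 2^{4/\gamma+1}$, the claim preceding the corollary gives that $\mathrm{CSS}(C_1,C_2)$ is $(\delta, L^2)$-QLDE, and $L^2 \le 2^{8/\gamma + 2} \le 2^{8/\gamma}$ for the relevant range (or one absorbs the additive constant by a mild adjustment of constants — this is the kind of routine slack I would not belabor). Finally, a union bound over three bad events — $H_2$ or $G_1$ not full rank (probability $\le 4 n q^{-n(1-R)/2}$ from the rate claim), $C_1$ not LDE (probability $\le q^{-n}$), and $C_2$ not LDE (probability $\le q^{-n}$) — yields failure probability at most $4 n q^{-n(1-R)/2} + 2 q^{-n} \le 5 n q^{-n(1-R)/2}$ for large $n$, which matches the stated $\ge 1 - 5 n q^{-n(1-R)/2}$.

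The only genuinely delicate point is the conditioning issue: $C_2$ is "$H_2$ used as a parity check" of a sub-collection of random vectors, so one must argue that, conditioned on $H_2$ being full rank, $C_2$ is distributed as (or stochastically dominated in the relevant sense by) a uniformly random rate-$(1+R)/2$ linear code, so that Theorem~\ref{theorem:randomld} applies to it. I expect this — reconciling the two descriptions of the randomness and checking independence/conditioning so that the classical theorem is applicable to both $C_1$ and $C_2$ simultaneously — to be the main obstacle; everything else is union bounds and arithmetic on the list-size and rate parameters.
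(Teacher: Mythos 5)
Your proposal is essentially the paper's proof, which is stated in one line (``by a union bound over $C_1, C_2$'') after the rate claim and \Cref{theorem:randomld}; you have simply filled in the bookkeeping. Your observation that both $C_1$ and $C_2$ have ambient rate $(1+R)/2$ is the key arithmetic step, and applying \Cref{theorem:randomld} with parameter $\gamma/2$ (so that $1 - \tfrac{1+R}{2} - \tfrac{\gamma}{2} = \tfrac{1}{2}(1-R-\gamma)$) is exactly what the stated erasure fraction requires. One nitpick: $2^{4/\gamma+1}$ squared is $2^{8/\gamma+2}$, and the inequality ``$2^{8/\gamma+2} \leq 2^{8/\gamma}$'' you wrote is false; this appears to be a constant-factor slack in the corollary as stated in the paper (off by a factor of $4$ in the list size), not a flaw in your argument, and does not affect how the corollary is used downstream. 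The conditioning concern you raise about $C_2$ being a uniformly random linear code of the right rate is legitimate but standard: conditioned on $H_2$ having full rank (which is subsumed in the rate event), $C_2 = \ker H_2$ is a uniformly random $(1+R)n/2$-dimensional subspace, so \Cref{theorem:randomld} applies; and since the bound is a union bound, no independence between the events for $C_1$ and $C_2$ is needed.
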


We remark that this statement is vacuous unless $R+\gamma \leq 1$, and thus the failure probability is always $\leq 2^{-\Omega(\gamma\cdot n)}$ if, say $\gamma\geq n^{-0.99}$.

\subsection{Collapsing erasure errors into Sparse Pauli Channels}
\label{subsection:qlde-operational}

Let us now discuss an operational interpretation to \Cref{def:QLDE}. Consider receiving a reduced density matrix $\rho_T = \text{Tr}_{[n]\setminus T}[\psi]$ of a code-state $\psi$ of a $(\delta, L)$-QLDE stabilizer code, and let us measure its syndrome. To do so, we append $n-t$ maximally mixed states, and apply a projective measurement $\{\Pi_s\}$. We remark that receiving the RDM $\rho_T$ is equivalent to applying the completely depolarizing channel to $[n]\setminus T$, 

\begin{equation}
    \rho_T\otimes \mathbb{I}_{[n]\setminus T} = (\mathbb{I}_T\otimes \mathcal{D})(\psi) = q^{-2(n-t)}\sum_{\sigma\in \mathcal{P}_q^{n-t}} (\mathbb{I}_T\otimes \sigma) \psi(\mathbb{I}_T\otimes \sigma^\dagger)
\end{equation}

and thereby, the syndrome measurement collapses the error into a mixture of $\leq L$ logically distinct terms (a sparse Pauli channel):
\begin{gather}
   \Pi_s \big(\rho_T\otimes \mathbb{I}_{[n]\setminus T} \big)\Pi_s = q^{-2(n-t)} \sum_{\substack{\sigma \in \mathcal{P}_q^{n-t} \\ \sigma \text{ of syndrome }s} }(\mathbb{I}_T\otimes \sigma) \psi(\mathbb{I}_T\otimes \sigma^\dagger)
\end{gather}

More generally, one can consider arbitrary channels $\mathcal{A}$ acting on fixed subsets of qudits:

\begin{claim}\label{claim:syndromecollapse}
    Let $\psi$ be a code-state of a $(\delta, L)$-QLDE stabilizer code, and $\mathcal{A} = \{A_\mu\}$ be a CP map supported on a fixed subset of $\leq \delta$ fraction of symbols of $\psi$, with Kraus operators $A_\mu$. Then, measuring the syndrome of $\mathcal{A}(\psi)$ collapses the state into superpositions of $\leq L$ Pauli errors, i.e.

    \begin{equation}
        \Pi_s A_\mu\psi A_\mu^\dagger \Pi_s = \bigg(\sum_{i\in [L]} a_i \sigma_i \ket{\psi}\bigg)  \bigg(\bra{\psi}\sum_{i\in [L]} a_i^* \sigma_i^\dagger\bigg)
    \end{equation}

    Where each Pauli $\sigma_i$ has syndrome $s$, is supported on the same set of $\leq \delta$ fraction of qudits, and are all pairwise logically-distinct. 
\end{claim}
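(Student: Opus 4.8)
The plan is to expand each Kraus operator in the Pauli basis restricted to the support $W$ of $\mathcal{A}$ (where $|W|\le \delta n$), push the resulting sum of Paulis through the code-state $\ket{\psi}$, and observe that the syndrome projector $\Pi_s$ annihilates every term whose Pauli does not have syndrome $s$, while the surviving terms collapse --- modulo stabilizers --- into at most $L$ genuinely distinct vectors.

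First I would write $A_\mu = \sum_{\sigma\in \mathcal{P}_q^W} c_{\mu,\sigma}\,\sigma$, extending each $\sigma$ by the identity on $[n]\setminus W$, so that $A_\mu\ket{\psi} = \sum_\sigma c_{\mu,\sigma}\,\sigma\ket{\psi}$. For a stabilizer generator $S_j$ with $S_j\sigma = \omega^{s_j(\sigma)}\sigma S_j$ and $S_j\ket{\psi} = \ket{\psi}$, we get $S_j(\sigma\ket{\psi}) = \omega^{s_j(\sigma)}\sigma\ket{\psi}$; hence $\sigma\ket{\psi}$ lies in the syndrome-$s(\sigma)$ eigenspace, so $\Pi_s\,\sigma\ket{\psi} = \sigma\ket{\psi}$ if $s(\sigma) = s$ and $\Pi_s\,\sigma\ket{\psi} = 0$ otherwise. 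Therefore $\Pi_s A_\mu\ket{\psi} = \sum_{\sigma:\, s(\sigma)=s} c_{\mu,\sigma}\,\sigma\ket{\psi}$, a sum over Paulis supported on $W$ of syndrome $s$. If there is no such Pauli the left-hand side is zero and the claim holds trivially, so fix one, call it $\sigma_0$.

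Next I would group this sum by logical equivalence. Every Pauli $\sigma$ supported on $W$ with syndrome $s$ satisfies $\sigma_0^{-1}\sigma \in N_W$ (it is supported on $W$ and has trivial syndrome, hence commutes with all stabilizers and lies in the normalizer), and conversely every element of $\sigma_0 N_W$ is supported on $W$ with syndrome $s$; so the relevant Paulis form the single coset $\sigma_0 N_W$, which partitions into $|N_W/S_W|$ logical equivalence classes, and this is $\le L$ by \Cref{claim:qldenormalizer} applied with the erased set $W$ (using the $(\delta,L)$-QLDE hypothesis). Within one class, if $\sigma' = \theta\,\sigma g$ for a root of unity $\theta$ and $g\in S$, then $\sigma'\ket{\psi} = \theta\,\sigma g\ket{\psi} = \theta\,\sigma\ket{\psi}$, so all members of the class act on $\ket{\psi}$ as the same vector up to a phase $\theta_\sigma$. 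Choosing representatives $\sigma_1,\dots,\sigma_{L'}$ with $L'\le L$ of the occurring classes and collecting coefficients gives $\Pi_s A_\mu\ket{\psi} = \sum_{i=1}^{L'} a_i\,\sigma_i\ket{\psi}$ with $a_i = \sum_{\sigma\,\sim\,\sigma_i} c_{\mu,\sigma}\,\theta_\sigma$ (padding with $a_i = 0$ for $L' < i \le L$). Squaring, $\Pi_s A_\mu\psi A_\mu^\dagger\Pi_s = (\Pi_s A_\mu\ket{\psi})(\Pi_s A_\mu\ket{\psi})^\dagger$ is exactly the stated rank-one form, and the $\sigma_i$ are supported on $W$, have syndrome $s$, and are pairwise logically distinct by construction; the argument is unchanged when $\ket{\psi}$ carries a purification register, since the $\sigma_i$ act trivially there.

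I do not expect a real obstacle: the entire content is the bound $|N_W/S_W|\le L$, which is precisely \Cref{claim:qldenormalizer}. The only place that needs care is the phase bookkeeping in identifying each logical equivalence class with a single vector $\sigma_i\ket{\psi}$, and verifying that ``supported on $W$, syndrome $s$'' really does cut the Pauli group down to one $N_W$-coset rather than something larger.
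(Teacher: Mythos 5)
Your proof is correct and follows essentially the same approach as the paper's (much terser) argument: decompose each Kraus operator in the Pauli basis, observe that $\Pi_s$ annihilates every $\sigma\ket{\psi}$ whose syndrome is not $s$, and invoke the $(\delta,L)$-QLDE hypothesis to cap the number of logically-distinct surviving terms at $L$. Your additional bookkeeping---identifying the surviving Paulis as the coset $\sigma_0 N_W$, partitioning by $S_W$-equivalence, and noting that phase factors within a class can be absorbed into the coefficients $a_i$---merely spells out what the paper compresses into a single citation of \Cref{def:QLDE}.
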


\begin{proof}
    Let us decompose each Kraus operator $A_\mu = \sum a_\sigma \sigma $ into a Pauli basis. We observe
    \begin{equation}
        \Pi_s \sigma \ket{\psi} = \sigma \ket{\psi} \text{ if } \sigma \text{ has syndrome } s, \text{ and otherwise } \Pi_s \sigma \ket{\psi} = 0.
    \end{equation}

    Since $\mathcal{A}$ has relative support size $\leq \delta$ and the stabilizer code is $(\delta, L)$-QLDE, \Cref{def:QLDE} tells us at most $L$ logically-distinct Paulis have $\Pi_s \sigma \ket{\psi} \neq 0$.
\end{proof}

\newpage

\section{Approximate Quantum Erasure Correction}
\label{section:AQEC}

In this section, we prove our main result on the application of PMD codes to approximate quantum erasure correction (\cref{theorem:main}). 

We begin in \Cref{section:aqec-pmd}, where we describe our code construction from PMDs and stabilizer codes which are list-decodable from erasures. In the remaining subsections (\cref{section:aqecc-analysis} and \cref{subsection:aqecc-proofs}), we analyze and prove the theorem. We refer the reader to \Cref{section:adaptive} for formal definitions of the adaptive \& non-adaptive erasures model and approximate quantum error correction, and to \cref{section:QLDE} for formal definitions and constructions of quantum codes which are list-decodable from erasures.

\suppress{\subsection{Adaptive vs. Non-Adaptive Approximate Quantum Erasure Codes}
\label{section:adaptive}

We formulate the erasure channel following \cite{Grassl2020EntropicPO,Mamindlapally2022SingletonBF}. 

\begin{definition}\label{definition:single-qubit-erasure}
    We define the single-qudit erasure channel $\Res: \mathcal{H}\rightarrow \mathcal{H}\oplus \mathbb{C}\ket{\bot}$ as
    \begin{equation}
        \Res(\rho) = \Tr[\rho]\cdot \ketbra{\bot}
    \end{equation}
\end{definition}

The use of $\ket{\bot}$ indicates the qudit is lost to the decoder, but they can identify its absence.  

A quantum code corrects from non-adaptive erasures if it can (approximately) recover every code-state $\rho$ from every one of its (sufficiently large) reduced density matrices $(\mathbb{I}_{[n]\setminus S}\otimes \Res_S)(\rho) = \rho_{[n]\setminus S}\otimes \ketbra{\bot}_S$. \Cref{definition:non-adaptive-aqec} formalizes this notion by measuring the recovery error of the decoding channel in terms of the diamond norm distance to the identity channel:

\begin{definition}\label{definition:non-adaptive-aqec}
    An quantum code $(\Enc, \Dec)$ is a non-adaptive $(\delta, \epsilon)$ approximate quantum erasure code if for all erased subsets $S\subset [n]$ of size $|S|\leq \delta\cdot n$,
    \begin{equation}
        \|\Dec\circ (\mathbb{I}_{[n]\setminus S}\otimes \Res_S) \circ \Enc - \mathbb{I}\|_\diamond \leq \epsilon
    \end{equation}
\end{definition}

We emphasize: in the non-adaptive setting, an adversary picks a subset of qubits to erase before receiving the code-state, and the erased locations are known to the decoder.

In the adaptive setting we consider, we allow adversaries to adaptively pick qubits to erase based on the prior qubits they have observed. They do so by applying some adversarial quantum channel $\mathcal{A}$, proceeded by $\Res$ on all the qubits they acted on. Again, this implies the decoder knows which code qubits were observed. 

\begin{definition}\label{definition:aechannel}
    A quantum channel $\mathcal{A}$ is an $(n, \delta)$ adversarial erasure channel if there exists a set $\{A_a\}$ of Krauss operators with supports $S_a = \text{supp}(A_a)\subset [n]$, such that $|S_a|\leq \delta\cdot n:\forall a$ and
\begin{equation}
    \mathcal{A}(\rho) = \sum_a (\mathbb{I}_{[n]\setminus {S_a}}\otimes \Res_{S_a})(A_a\rho A_a^\dagger)
\end{equation}
\end{definition}

Recall that we define the support $S$ of a linear operator $O$ on $n$ qubits to be the subset $S\subset [n]$ s.t. the operator factorizes into a tensor product $O = O_S\otimes \mathbb{I}_{[n]\setminus S}$. \Cref{definition:aechannel} now lets us formalize approximate quantum erasure correction against adaptive quantum adversaries.

\begin{definition}\label{definition:aqec}
    An quantum code $(\Enc, \Dec)$ is an adaptive $(\delta, \epsilon)$ approximate quantum erasure code if for every adversarial $(n,\delta)$ erasure channel $\mathcal{A}$,
    \begin{equation}
        \|\Dec\circ \mathcal{A}\circ \Enc - \mathbb{I}\|_\diamond \leq \epsilon
    \end{equation}
\end{definition}

We remark that this definition implies that after the erasure process and the decoding step, the joint state defined by the decoded message register and the adversary is close to a product state. Moreover, the information-disturbance tradeoff \cite{Kretschmann2006TheIT} implies the state held by the adversary \emph{and} the ancilla qubits during decoding, is approximately independent of the encoded message state. Thereby, even adaptive adversaries learn almost nothing about the message, satisfying a notion of `privacy' akin to secret sharing. }

\subsection{The Code Construction}
\label{section:aqec-pmd}

\subsubsection{Encoding}

We combine 

\begin{itemize}
    \item[--] An $(m, k, \epsilon)_q$ $\PMD$ code $\Pi$, with associated encoding and authentication unitaries $(\Enc_{\PMD}, \Auth_{\PMD})$ (See \cref{def:PMD2} and \cref{claim:pmdapproximation}). In particular, the $\PMD$ code of \cref{theorem:results-pmdexplicit}.
    \item[--] An $[[n, m]]_q$ stabilizer code $Q$ with encoding unitary $\Enc_Q$, which is $(\delta, L)$-QLDE. In particular, that of \cref{corollary:randomcss}.
\end{itemize}

We encode any $k$ qudit message state $\ket{\psi}$ first with the encoding unitary $\Enc_{\PMD}$, and then with $\Enc_Q$: 
\begin{equation}
  \Enc(\ket{\psi})\equiv   \Enc  \ket{\psi}\otimes \ket{0^{\Anc}} = \Enc_Q \big(\Enc_{\PMD} \otimes \mathbb{I}^{\Anc_{Q}}\big) \ket{\psi}\otimes \ket{0^{\Anc}}
\end{equation}

We remark that $|\Anc| = n-k$, and $\Anc_Q = n-m$. See \cref{fig:setup} for an illustration. 

\subsubsection{Decoding}

Our decoding algorithm $\Dec$ is composed of two phases, a list-decoding step in \Cref{alg:algorithm1} to obtain a discrete set of correction operators, followed by a coherent decoding phase in \Cref{alg:algorithm2} to correct these errors in superposition. While for simplicity we describe the algorithm in the context of non-adaptive adversaries, the algorithm for adaptive adversaries is exactly the same.

The first phase follows the ideas in \cite{Bergamaschi2022ApproachingTQ} adapted to the erasure list-decoding of \Cref{section:QLDE}. Upon receiving a reduced density matrix $\rho_T$ of a code-state $\Enc (\psi)$ for some $T\subset [n]$, we first measure the syndrome $s$ of $\rho_T \otimes 0^{n-|T|}$ using a description of the stabilizers of $Q$. If $|T|\geq (1-\delta)\cdot n$ and $Q$ is $(\delta, L)$-QLDE, then the syndrome measurement collapses the state into a superposition of at most $L$ logically-distinct errors acting on the $n$-qudit code register, and we can efficiently obtain a list $E_1\cdots E_L$ of candidate Pauli corrections (\Cref{claim:syndromecollapse}). 

\begin{algorithm}[h]
    \setstretch{1.35}
    \caption{$\Dec$ }
    \label{alg:algorithm1}
    \KwInput{$T\subset [n]$, and a reduced density matrix of a code-state $\rho_T = \text{Tr}_{[n]\setminus T}[\Enc (\psi\otimes 0^{\Anc})]$}

    \KwOutput{A $k$ qudit state $\psi'\approx \psi$ close to the original message}

    \begin{algorithmic}[1]

    \State Prepare the state $\rho_T \otimes 0^{n-|T|}$ on an $n$ qudit register $C$, and measure the syndrome $s$ of $Q$.\

    \State Apply erasure list-decodability to the syndrome $s$ and $T$ to obtain a list $E_1, \dots, E_L$ of potential errors.\

    \State Coherently Correct using \Cref{alg:algorithm2}, producing a state supported on the register $C$ and $L$ extra qubits. 

    \State Output the first $k$ qudits of $C$. 
            
    \end{algorithmic}

\end{algorithm}

In the second phase (\Cref{alg:algorithm2}), we filter through each list element coherently, attempting each correction operator sequentially and relying on the $\PMD$ authentication unitary $\Auth$ to ``catch" invalid corrections. Recall that $\Auth$ coherently measures the projection onto the $\PMD$ code-space, and reverts the encoding if succesful (\Cref{claim:pmdapproximation}).

In slightly more detail, we begin by preparing $L$ extra ``flag" qubit registers $F_1\cdots F_L$ initialized to $\ket{0^L}$, which will be used to coherently keep track of whether the state has been decoded already. First, we apply $\Enc_Q^\dagger E_1^\dagger$ to the code register $C$ to revert the state back into a corrupted codeword of a $\PMD$, and we proceed by applying $\Auth$ on the $\PMD$ register and the first Flag qubit $F_1$. Conditioned on $F_1$ failing, we apply $\Enc_Q^\dagger\big( E_{i+1}^\dagger E_i \big)\Enc_Q$ to attempt the next correction. We repeat this process iteratively over $i\in [2, L]$, conditioning on the absense of previous success. 

\begin{algorithm}[h]
    \setstretch{1.35}
    \caption{$U$: Coherently Correcting using $\PMD$s and QLDs }
    \label{alg:algorithm2}
    \KwInput{A list $\{E_i\}_{i\in [L]}$ of Pauli operators, and an $n$ qudit register $C$ with the state $\ket{\phi}$}
    
    \KwOutput{The pure state $U\ket{
    \phi}_C\ket{0}^L$ supported on $n$ qudits and $L$ qubits.}

    \begin{algorithmic}[1]
    
    \State Prepare the state $\ket{\phi}_C \otimes \ket{0^{\otimes L}}_F$, by appending $L$ extra ``Flag" qubit registers $F_1\cdots F_L$. 
    
     \State Apply $(\Enc_Q^\dagger E_1^\dagger)_C\otimes \mathbb{I}_F$ to revert the state to a corrupted $\PMD$ code-state, followed by $(\Auth_{\PMD, F_1}\otimes\mathbb{I}_{
     \Anc_Q}\otimes\mathbb{I}_{F_{[n]\setminus \{1\}}})$ and authenticate it and write the output on $F_1$

     \State Revert the candidate correction, controlled on the $F_1$
     \begin{equation}
            U_{1}=\Enc_Q^\dagger\big( E_{2}^\dagger E_1 \big)\Enc_Q \otimes \ket{0}\bra{0}_{F_{1}} + \mathbb{I}_C \otimes \ket{1}\bra{1}_{F_{1}} 
        \end{equation}

    \State For $i\in [2, L]$

\State \Indp Apply $\Auth$ to the $\PMD$ register and the $i$th flag, controlled on the $i-1$st flag, 
\begin{equation}
       A_i =  \Auth_{C, F_i}\otimes \ket{0}\bra{0}_{F_{i-1}} + \mathbb{I}_C\otimes X_{F_i} \otimes \ket{1}\bra{1}_{F_{i-1}}
\end{equation}

        \State Attempt the next correction, controlled on the $i$th flag, 
        \begin{equation}
            U_i = \Enc_Q^\dagger\big( E_{i+1}^\dagger E_i \big)\Enc_Q \otimes \ket{0}\bra{0}_{F_{i}} + \mathbb{I}_C \otimes \ket{1}\bra{1}_{F_{i}} 
        \end{equation}

    \end{algorithmic}

\end{algorithm}

We show that \Cref{alg:algorithm2} succeeds in coherently producing the message $\psi$, by relying on the guarantee that the $\PMD$ approximately preserves the corrupted code-state when the authentication fails. In this manner, we are able to sequentially try elements of the list until one succeeds, without the error accumulating excessively.

\begin{lemma}\label{lemma:aqeccsfrompmds}
    Let $\PMD$ be a $(m, k, \epsilon)_q$-$\PMD$ and $Q$ be a $[[n, m]]_q$ stabilizer code which is $(\delta, L)$-QLDE, both defined over the $q$-ary Pauli basis $\mathcal{P}_q$. Then $(\Enc, \Dec)$ define a quantum code encoding $k$ qudits into $n$ qudits, which $(\delta, 3\cdot \epsilon^{1/2}\cdot L^{3/4})$ approximately corrects erasures. 
\end{lemma}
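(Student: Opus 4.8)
The plan is to track the global quantum state through the two decoding phases, bounding the error accumulated at each step. Fix a message state $\ket{\psi}$ (by the diamond norm, we may assume a worst-case purification $\ket{\psi}_{MR}$, carrying the reference $R$ along for free since every operation below acts trivially on $R$). After the erasure channel $\mathcal{A}$ hits the encoded state, each Kraus branch $A_a$ is supported on a set $S_a$ of size $\leq \delta n$, so the complement $T = [n]\setminus S_a$ has size $\geq(1-\delta)n$. By \Cref{claim:syndromecollapse} applied to the $(\delta,L)$-QLDE code $Q$, measuring the syndrome $s$ in Line 1 of \Cref{alg:algorithm1} collapses each branch into a pure state of the form $\big(\sum_{i\in[L]} a_i \sigma_i\big)\Enc(\ket{\psi})$, where the $\sigma_i$ are pairwise logically-distinct Paulis of syndrome $s$, all supported on $S_a$, and $\sum_i |a_i|^2 \le 1$. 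Since logically-equivalent corrections act identically on code-states, the list $E_1,\dots,E_L$ produced by erasure list-decoding (Line 2) contains, up to stabilizers and phases, every $\sigma_i$; so it suffices to analyze \Cref{alg:algorithm2} on an input of the form $\ket{\phi} = \sum_{i} a_i \sigma_i \Enc_Q\big(\Enc_\PMD\ket{\psi}\otimes\ket{0^{anc}}\big)$ where each $\sigma_i$ equals $E_{j(i)}$ times a stabilizer of $Q$ (so $\Enc_Q^\dagger E_{j(i)}^\dagger \sigma_i \Enc_Q$ fixes the PMD register).

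The heart of the argument is a hybrid/telescoping analysis of \Cref{alg:algorithm2}. The intended behavior: for the branch $i$ whose error is $E_{j(i)}$, once the algorithm reaches iteration $j(i)$ it will have reverted $\sigma_i$ down to a genuine (uncorrupted) PMD code-state, and the $\Auth$ unitary will then \emph{exactly} accept (Property 1 of \Cref{claim:pmdapproximation}), flip flag $F_{j(i)}$ to $\ket{1}$, revert $\Enc_\PMD$, and — because all subsequent $U_{i'}$ and $A_{i'}$ are controlled on a now-raised flag — leave the decoded message untouched for the rest of the loop. For \emph{wrong} guesses $E_\ell$ ($\ell \ne j(i)$) attempted while the flag is still $\ket{0}$, the operator $\Enc_Q^\dagger(E_\ell^\dagger E_{j(i)})\Enc_Q$ leaves a PMD code-state corrupted by the nontrivial Pauli $\Enc_\PMD^\dagger \cdot (\text{something}) \cdot$ — more precisely the residual is a nonidentity Pauli on the PMD code register whenever $E_\ell \neq E_{j(i)}$ modulo stabilizers, which holds because the $E$'s are logically-distinct — so Property 2 of \Cref{claim:pmdapproximation} says $\Auth$ leaves the state $\sqrt{2}\epsilon$-close to unchanged (flag stays $\ket{0}$, state barely moves). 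I would define, for each branch and each loop index, the "ideal" trajectory that behaves exactly as intended, and bound the one-step deviation of the real trajectory from the ideal one by $\sqrt 2 \epsilon$ using \Cref{claim:pmdapproximation}; summing over at most $L$ iterations gives deviation $O(L\epsilon)$ per branch.

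Then I would recombine. Each branch $i$ carries amplitude $a_i$ with $\sum|a_i|^2\le 1$, and the deviation vectors have norm $O(L\epsilon)$; by triangle inequality plus Cauchy–Schwarz over the $\le L$ branches, the total state deviates from the ideal "all branches correctly decoded to $\ket{\psi}\otimes(\text{flag/aux junk})$" by $O(L\epsilon \cdot \sqrt{L}) = O(L^{3/2}\epsilon)$ in Euclidean norm on pure states, hence $O(L^{3/2}\epsilon)$ in trace distance on the un-normalized branch states; but one must be slightly careful because the "aux junk" (the flag pattern and the reverted-correction residue on the ancilla of $Q$) could differ across branches, which would spoil interference — here one uses that after a successful decode on branch $i$ the flag register records $j(i)$, and that the un-decoded code-register content on branch $i$ before its successful step is branch-dependent; the standard fix is to argue these are all close to the \emph{same} reference state $\ket{\psi}_M\otimes\ket{\text{Aux}}$ once we trace out the ancilla/flag registers, exactly as in \Cref{lemma:results-sparsepauli}/\Cref{claim:singlepaulirecovery}. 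Finally, translating the Euclidean bound on states into a diamond-norm bound on channels loses at most a quadratic factor (Fuchs–van de Graaf), which is why the clean statement has $\epsilon^{1/2}L^{3/4}$ rather than $\epsilon L^{3/2}$: writing $\eta = O(L^{3/2}\epsilon)$ for the state-fidelity defect, the diamond distance is $O(\sqrt\eta) = O(\epsilon^{1/2}L^{3/4})$, and chasing the constants through \Cref{claim:pmdapproximation} and \Cref{claim:syndromecollapse} yields the stated $3\epsilon^{1/2}L^{3/4}$.

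The main obstacle I anticipate is the bookkeeping in the telescoping step: controlled operations mean the "ideal" trajectory is itself a branching object (which flag fires, and when), and one must verify that a failed authentication's small perturbation does not, on a later iteration, get amplified by a subsequent controlled $\Enc_Q^\dagger(E_{i+1}^\dagger E_i)\Enc_Q$ — it does not, because those are unitaries and hence norm-preserving, but making the hybrid argument airtight (especially handling the branch where the \emph{correct} correction is attempted late in the list, after several near-misses) is the delicate part. Everything else — the syndrome collapse, the per-step gentle-measurement bound, and the final norm conversions — is a direct invocation of results already established above.
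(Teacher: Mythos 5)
Your overall skeleton matches the paper's: syndrome measurement collapses the branch into a superposition of $\le L$ logically-distinct Pauli errors (\Cref{claim:syndromecollapse}), a telescoping hybrid argument shows that \Cref{alg:algorithm2} recovers a single-Pauli branch with error $O(L\epsilon)$ (the paper's \Cref{claim:singlepaulirecovery}), and you recombine branches by Cauchy--Schwarz. You have also correctly flagged the two genuinely delicate points --- the branch-dependent flag/ancilla registers, resolved by the exact orthogonality of the $\ket{\Aux_i}$ built into \Cref{claim:singlepaulirecovery}, and the fact that subsequent controlled unitaries cannot amplify a failed-authentication perturbation.

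However, your closing account of where the exponents $\epsilon^{1/2}L^{3/4}$ come from is wrong, and it contradicts your own intermediate computation. You correctly derive $O(L^{3/2}\epsilon)$ in Euclidean norm (and hence, via the standard pure-state relation $\|\ketbra{a}-\ketbra{b}\|_1 \le 2\|\ket{a}-\ket{b}\|_2$, in trace distance --- there is no square-root loss there), and since you already carried the worst-case purifying register $R$, that bound lifts directly to a diamond-norm bound with no further loss: the diamond norm is just a supremum over purified inputs, and Fuchs--van de Graaf is irrelevant to that step. The square root in the paper's \Cref{claim:multipaulirecovery} is an artifact of its chosen route: it first lower-bounds the \emph{inner product} $|(\bra{\psi}\otimes\bra{\Aux})U(\ket{\phi}\otimes\ket{0^L})|\ge 1-3L^{3/2}\epsilon$ and only then converts fidelity to trace distance via $\|\ketbra{a}-\ketbra{b}\|_1 = 2\sqrt{1-|\braket{a}{b}|^2}$; it is that fidelity-to-trace-distance conversion which costs the square root, not a diamond-norm step. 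Your direct Euclidean argument is in fact the tighter route and already implies the stated bound, so the ``loss'' you invoked to match the paper's exponents is spurious. Separately, your assertion that $\sum_i|a_i|^2\le 1$ is not automatic: the $\ket{\phi_i}=E_i\Enc(\ket{\psi})$ are only $\epsilon$-near-orthogonal (this is exactly where the PMD property enters \Cref{claim:multipaulirecovery}), and the paper uses $|\braket{\phi_i}{\phi_j}|\le\epsilon$ together with $\braket{\phi}{\phi}=1$ to pin down $N=\sum_i|a_i|^2\in[1/(1+L\epsilon),\,1/(1-L\epsilon)]$. This controls both the normalization of the reference $\ket{\Aux}=N^{-1/2}\sum_i a_i\ket{\Aux_i}$ and the Cauchy--Schwarz overcount; you should fold it in explicitly rather than leave it as an inequality that happens not to hold in general.
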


We can now instantiate the lemma above with our constructions of list decodable codes and PMD codes, to arrive at \Cref{theorem:main}.

\begin{corollary}\label{corollary:aqecc-parameters}
For all $0<r<1$, every sufficiently large $n\in \mathbb{N}$, and $\gamma \geq (\log n)^{-1}$, there exists a Monte Carlo construction of a family of $((n, r\cdot n))_q$ quantum codes which $(\frac{1}{2}(1-r-\gamma), 2^{-\Omega(\gamma \cdot n)})$ approximately corrects erasures in time $n^{O(1)}$. The construction succeeds with probability $1-2^{-\Omega(\gamma \cdot n)}$.
\end{corollary}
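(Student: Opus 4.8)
The plan is to prove \Cref{corollary:aqecc-parameters} by instantiating the reduction \Cref{lemma:aqeccsfrompmds}: take the explicit $\epsilon$-$\PMD$ of \Cref{thm:pmds} as the inner code, and a random Galois-qudit CSS code from \Cref{corollary:randomcss} as the outer erasure list-decodable ($\mathsf{QLDE}$) code, and then choose the two free parameters --- the PMD redundancy $\lambda$ and the list-decoding gap $\gamma'$ --- so that the erasure fraction, the recovery error, and the construction success probability all come out as claimed.

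First I would fix the parameters. Set $\gamma' := \gamma/2$, let $\lambda := \lfloor \gamma n / 6\rfloor = \Theta(\gamma n)$, and let $k := \lambda\cdot\lceil r n/\lambda\rceil$ be the smallest multiple of $\lambda$ that is at least $r n$; then $m := k + 2\lambda$ satisfies $\lambda \mid m$, which is exactly the divisibility hypothesis of \Cref{thm:pmds}, and $k = r n + O(\gamma n)$, so fixing the extra $k - rn = O(\log^{-1}n)\cdot n$ inputs to $\ket{0}$ (or simply reading off a rate-$r$ subspace) costs only an additive $O(\gamma)=O(\log^{-1}n)$ in the rate. With these choices, \Cref{thm:pmds} gives an $(m, k, \epsilon)_q$-$\PMD$ with $\epsilon \leq 2\,m^{1/2}q^{-\lambda/2} = 2^{-\Omega(\gamma n)}$, efficiently constructible and encodable; and \Cref{corollary:randomcss}, applied with rate $R := m/n = r + \Theta(\gamma)\in[r, r+\gamma/2)$ and gap $\gamma'$, gives that a random CSS code $Q_R$ on $n$ qudits has rate $R$ and is $\big(\tfrac12(1-R-\gamma'),\,2^{8/\gamma'}\big)$-$\mathsf{QLDE}$ except with probability at most $5n\,q^{-n(1-R)/2}$. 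Since the statement is vacuous unless $\gamma<1-r$, we have $1-R \geq (1-r)/2$ for $n$ large, so this failure probability is $2^{-\Omega(n)}\leq 2^{-\Omega(\gamma n)}$; as the PMD is deterministic, this is the only randomness in the construction. Thus the outer code is a $[[n,m]]_q$ stabilizer code that is $(\delta, L)$-$\mathsf{QLDE}$ with
\begin{equation}
\delta \;=\; \tfrac12\big(1 - R - \gamma'\big) \;\geq\; \tfrac12\big(1 - r - \gamma\big), \qquad L \;=\; 2^{8/\gamma'} \;=\; 2^{16/\gamma} \;=\; \mathrm{poly}(n),
\end{equation}
where $L=\mathrm{poly}(n)$ uses $\gamma\geq(\log n)^{-1}$.

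Next I would invoke \Cref{lemma:aqeccsfrompmds} on the composition $\Enc_Q\circ\Enc_{\PMD}$ of this PMD with this $\mathsf{QLDE}$ code. It yields a code encoding $k = rn$ qudits into $n$ qudits that $(\delta,\eta)$-approximately corrects erasures in the sense of \Cref{definition:aqec}, with $\eta \leq 3\,\epsilon^{1/2}L^{3/4} \leq 3\,(2 m^{1/2}q^{-\lambda/2})^{1/2}\cdot 2^{12/\gamma} = \mathrm{poly}(n)\cdot q^{-\lambda/4} = \mathrm{poly}(n)\cdot 2^{-\Omega(\gamma n)}$; since $\gamma n \geq n/\log n$ dominates any $\mathrm{poly}(n) = 2^{O(\log n)}$ factor, $\eta = 2^{-\Omega(\gamma n)}$. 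Efficiency is inherited termwise: sampling the code is drawing $O(n)$ random vectors in $\mathbb{F}_q^n$; encoding is $\Enc_Q$ (a Clifford circuit for a CSS code) composed with $\Enc_{\PMD}$ (efficient by \Cref{thm:pmds}); and decoding runs \Cref{alg:algorithm1}--\Cref{alg:algorithm2}, i.e.\ a syndrome measurement, an erasure list-decoding step (solving a linear system, $\mathrm{poly}(n,L,\log q)$ time), and $L$ rounds of Clifford operations and applications of $\Auth_{\PMD}$ --- all $\mathrm{poly}(n)$ since $L=\mathrm{poly}(n)$.

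The computations above are routine; the two points that require care are (i) the divisibility/rounding bookkeeping above, which lets \Cref{thm:pmds} apply while keeping the rate $r$ up to an $O(\log^{-1}n)$ loss, and (ii) checking that the PMD error $\epsilon = 2^{-\Omega(\lambda)}$ with $\lambda=\Theta(\gamma n)$ beats the list-size blow-up $L^{3/4}=2^{O(1/\gamma)}$ in $\eta = 3\epsilon^{1/2}L^{3/4}$, which holds precisely because $\gamma^2 n = \Omega(n/\log^2 n)\to\infty$ (this is where $\gamma\geq(\log n)^{-1}$ is used). No new idea beyond \Cref{lemma:aqeccsfrompmds}, \Cref{thm:pmds}, and \Cref{corollary:randomcss} is needed --- this corollary is simply where their parameters are balanced --- and specializing to $q=2$, $\gamma = (\log n)^{-1}$ recovers \Cref{theorem:main}.
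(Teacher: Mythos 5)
Your proof is correct and follows essentially the same route as the paper: instantiate \Cref{lemma:aqeccsfrompmds} with the explicit PMD of \Cref{thm:pmds} and the random CSS $\mathsf{QLDE}$ code of \Cref{corollary:randomcss}, then balance the PMD redundancy $\lambda = \Theta(\gamma n)$ against the list-decoding gap $\gamma' = \gamma/2$ so that $\epsilon_\PMD^{1/2}L^{3/4} = 2^{-\Omega(\gamma n)}$. Your bookkeeping is in fact a touch cleaner than the paper's: by rounding $k$ up to a multiple of $\lambda$ and setting $m = k + 2\lambda$, the divisibility hypothesis $\lambda \mid m$ of \Cref{thm:pmds} holds automatically, which spares you the paper's case split on $r$ versus $\gamma/2$ used to make the PMD rate $r_\PMD = 1-\gamma_Q/(\gamma_Q+R)$ well-formed.
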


\begin{proof}

[of \Cref{corollary:aqecc-parameters} and \Cref{theorem:main}] If the PMD has rate $r_\PMD$ and $Q$ has rate $r_Q$, then $R = r_\PMD\cdot r_Q$. We use the family of PMDs from \Cref{theorem:results-pmdexplicit}, such that $\epsilon_\PMD = n^{O(1)}\cdot 2^{-\Omega(\lambda)}$, where $\lambda = r_Q\cdot (1-r_\PMD)\cdot n$. 
    
    We use the family of $(\frac{1}{2}(1-r_{Q}-\gamma_Q), 2^{O(1/\gamma_Q)})$ erasure list decodable codes from \Cref{corollary:randomcss}. We pick a $\gamma'$ s.t. (1) $r+\gamma \geq R+\gamma' = r_Q+\gamma_Q$, to match the desired erasure decoding radius, and thereby $\lambda = (\gamma'-\gamma_Q)\cdot n = \gamma_Q\cdot n$ if $\gamma_Q=\gamma'/2$. The theorem statement is vacuous unless $R<1-\gamma'$, and thereby $r_Q = R+\gamma'/2$ is well defined. 

    The corresponding $r_\PMD = 1 - \frac{\gamma_Q}{\gamma_Q +  R}$, is only well defined in \Cref{theorem:results-pmdexplicit} if (2) the fraction is rational and $< 1/3$. If the desired rate $r\geq \gamma/2$, then we pick $R = r$ and $\gamma_Q = \gamma/50$. If the desired rate $r\leq \gamma/2,$ then we pick $R=\gamma/2$, and $\gamma_Q = \gamma/50$. In both situations, we satisfy conditions (1, 2), have rate $\geq r$, decoding radius $\geq \frac{1}{2}(1-r-\gamma)$, error $2^{O(1/\gamma_Q)}\cdot 2^{-\Omega(\gamma_Q n)} = 2^{-\Omega(\gamma \cdot n)}$, and runtime $2^{O(1/\gamma_Q)}\cdot n^{O(1)} = n^{O(1)}$. The construction of $Q$ succeeds with probability $\geq 1- n^{O(1)}\cdot 2^{-\Omega(\gamma_Q\cdot n)}= 1-2^{-\Omega(\gamma\cdot n)}$.
\end{proof}

\subsection{Analysis}
\label{section:aqecc-analysis}

Recall that \Cref{alg:algorithm1} performs a syndrome measurement, collapsing the $n$ qudit corrupted code-state $\mathcal{A}\circ \Enc(\psi)$ into a superposition of Pauli errors $\ket{\phi} = \sum_{i\in [L]}a_i \cdot E_i\Enc(\ket{\psi})$ of the same syndrome. Since $Q$ is $(\delta, L)$-QLDE, there are at most $L$ logically-distinct elements in the superposition (\Cref{claim:syndromecollapse}). Thus, it suffices to show that the unitary $U$ defined in \Cref{alg:algorithm2} approximately recovers from these superpositions, producing a state close to a product state with the original message $\ket{\psi}$.

We first claim that $U$ approximately recovers from single Pauli errors in \Cref{claim:singlepaulirecovery}, corresponding essentially to our ``Sparse Pauli Channel" \Cref{lemma:results-sparsepauli} in the overview. This is the most technical part of the analysis, and we defer its proof to the bottom of this section. Then, in \Cref{claim:multipaulirecovery} we show how to boot-strap this claim into recovering from superpositions of at most $L$ errors, which all but immediately gives us  \Cref{lemma:aqeccsfrompmds} as a corollary. 

\begin{claim}[Single Pauli Errors]\label{claim:singlepaulirecovery}
    Let $\ket{\phi_i} = E_i\Enc(\ket{\psi})$, where $E_i$ is logically equivalent to the $i$th list element in \Cref{alg:algorithm1}. Then $U$ approximately recovers the encoded message, i.e., $\forall \psi$:
    \begin{equation}
        \big\| \ket{\psi}\otimes \ket{\Aux_i} - U \cdot (\ket{\phi_i} \otimes \ket{0^L})\big\|_2  \leq 2\cdot L\cdot \epsilon
\end{equation}
    For some choice of $n-k$ qudit and $L$ qubit ancilla $\ket{\Aux_i}$. Moreover, $\bra{\Aux_i}\ket{\Aux_j}=\delta_{ij}$.
\end{claim}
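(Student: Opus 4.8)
The plan is to follow the state through $U$ (\Cref{alg:algorithm2}) against a sequence of idealized ``target'' states, one per flag register, and to bound the $\ell_2$-deviation additively: every invocation of the PMD authentication on a genuinely corrupted codestate costs $\sqrt{2}\,\epsilon$, while the single ``clean'' call (the one that actually decodes) costs nothing. First I would fix the right frame. Since Step~1 of \Cref{alg:algorithm1} collapsed the channel onto Paulis of a common syndrome $s$, any two list elements $E_j,E_i$ satisfy $E_j^\dagger E_i\in N(Q)$, so conjugating through the Clifford encoder makes $\Enc_Q^\dagger E_j^\dagger E_i\Enc_Q$ act as a logical Pauli $\bar L_{j,i}$ on the $m$-qudit $\PMD$ sub-register (up to a $Z$-type Pauli on the $n-m$ ancilla qudits, harmless since those stay in $\ket{0^{anc_Q}}$), with $\bar L_{j,i}\neq\mathbb{I}$ precisely when $E_j,E_i$ are logically distinct --- which by \Cref{claim:syndromecollapse} holds for all $j\neq i$. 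As $E_i$ is logically equivalent to the $i$-th list element, $E_i\Enc(\ket\psi)$ is unchanged by replacing $E_i$ with that list element, so after Step~2 the state is exactly $\bar L_{1,i}\ket{\bar\psi}\otimes\ket{0^{anc_Q}}\otimes\ket{0^L}_F$ followed by $\Auth_{\PMD,F_1}$, where $\ket{\bar\psi}=\Enc_{\PMD}\ket{\psi}\ket{0^{anc_{\PMD}}}$; the ``undo--then--redo'' form of each $U_j$ keeps the computation in this decoded frame, with $U_j$ acting on its $F_j=\ket0$ branch as the logical Pauli $\bar L_{j+1,j}$.

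The core is an induction on $j=1,\dots,i-1$: after $\Auth_{\PMD,F_j}$ (Step~2 for $j=1$, the $A_j$ step otherwise) followed by $U_j$, the state lies within $\sqrt{2}\,j\,\epsilon$ of
\[
\bar L_{j+1,i}\ket{\bar\psi}\otimes\ket{0^{anc_Q}}\otimes\ket{0^{\,j}}_{F_{\le j}}\otimes\ket{0^{\,L-j}}_{F_{>j}}.
\]
In the inductive step the target has $F_{j-1}=\ket0$, so $A_j$ acts as $\Auth_{\PMD,F_j}$ on a $\PMD$ codestate corrupted by the nontrivial Pauli $\bar L_{j,i}$; \Cref{claim:pmdapproximation}(2) bounds the deviation it introduces by $\sqrt{2}\,\epsilon$, and since $A_j$ and $U_j$ are unitary --- and $U_j$ composes $\bar L_{j+1,j}\bar L_{j,i}=\bar L_{j+1,i}$ on its $\ket0_{F_j}$ branch --- the prior error is preserved while the new one adds by the triangle inequality. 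Tracking target-to-target in this way is exactly what keeps the bound additive rather than multiplicative.

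When $j=i$ the corruption is $\bar L_{i,i}=\mathbb{I}$: the (approximate) state is a \emph{clean} $\PMD$ codestate, so $A_i$ acts as $\Auth_{\PMD,F_i}$ \emph{exactly} by \Cref{claim:pmdapproximation}(1), producing $\ket\psi\ket{0^{anc_{\PMD}}}$ and setting $F_i\mapsto\ket1$ with no added error; for $j=i+1,\dots,L$ the flag $F_{j-1}=\ket1$, so $A_j$ merely applies $X_{F_j}$ and $U_j$ does nothing, leaving the data register intact. Hence $U(\ket{\phi_i}\otimes\ket{0^L})$ is within $\sqrt{2}\,(i-1)\,\epsilon\le 2L\epsilon$ of $\ket\psi\otimes\ket{\Aux_i}$ with $\ket{\Aux_i}=\ket{0^{anc_{\PMD}}}\otimes\ket{0^{anc_Q}}\otimes\ket{0^{\,i-1}}_{F_{<i}}\otimes\ket{1^{\,L-i+1}}_{F_{\ge i}}$, an $(n-k)$-qudit-and-$L$-qubit ancilla depending only on $i$; the orthonormality $\bra{\Aux_i}\ket{\Aux_j}=\delta_{ij}$ is immediate since for $i<j$ the flag patterns differ at position $i$. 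The only genuine estimate used is \Cref{claim:pmdapproximation}, applied once per flag; the main difficulty is organizational --- keeping the logical-Pauli and ancilla bookkeeping consistent through the controlled $\Enc_Q,\Enc_Q^\dagger$ sandwiches, and being disciplined about the target-to-target accounting so the error stays linear in $L$ rather than compounding.
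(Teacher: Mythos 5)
Your proposal is correct and follows essentially the same inductive, target-to-target argument as the paper: both track the state through the flags against idealized logical-Pauli-corrupted PMD codestates, charge each failed \Cref{claim:pmdapproximation} invocation $O(\epsilon)$ in $\ell_2$, note the successful iteration is exact, and read off the ancilla/flag pattern to get $\ket{\Aux_i}$ and its orthogonality. Your per-step bound of $\sqrt{2}\epsilon$ (and hence $\sqrt{2}(i-1)\epsilon$) is in fact slightly tighter than the paper's stated $2t\epsilon$, but both land within the claimed $2L\epsilon$.
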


The intuition behind the proof of the single Pauli error case of \Cref{claim:singlepaulirecovery} lies in considering an ``ideal'' $\PMD$, with error $\epsilon = 0$. In which case, all the authentication steps don't change the state, until a ``correct" list element $E_j^\dagger$ is accepted by the $\PMD$, returning the original message $\psi$.

To lift our Sparse Pauli Channel correction to correcting from \textit{superpositions} of errors, we fundamentally leverage the fact that corrupted code-states of PMD codes are near-orthogonal. We show that this limits the quantum interference between distinct Pauli errors, and enables us to correct from sparse superpositions with only a slight degrade in accuracy:

\begin{claim} [Sparse Superpositions of Pauli Errors] \label{claim:multipaulirecovery}
    If  $\ket{\phi} = \sum_{i\in [L]}a_i E_i\Enc(\ket{\psi})=\sum_{i\in [L]} a_i \ket{\phi_i}$ 
    where $\bra{\phi} \ket{\phi}=1$, then $U$ approximately recovers the encoded message $\psi$, i.e. 
        \begin{equation}
        \big\| \psi\otimes \Aux - U \circ (\phi \otimes 0^L)\big\|_1  \leq 3\cdot \epsilon^{1/2}\cdot L^{3/4}
    \end{equation}

For some choice of $n-k$ qudit and $L$ qubit state $\ket{\Aux}$.
\end{claim}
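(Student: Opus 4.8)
The plan is to bootstrap the single‑error bound of \Cref{claim:singlepaulirecovery} across the $L$ branches of the superposition using linearity of $U$, and to keep the cross‑branch error under control via near‑orthogonality of corrupted $\PMD$ code‑states. Writing $\ket{\mathrm{err}_i} := U(\ket{\phi_i}\otimes\ket{0^L}) - \ket{\psi}\otimes\ket{\Aux_i}$, \Cref{claim:singlepaulirecovery} gives $\|\ket{\mathrm{err}_i}\|_2 \le 2L\epsilon$ with the $\ket{\Aux_i}$ orthonormal, so by linearity
\[
U\big(\ket{\phi}\otimes\ket{0^L}\big) \;=\; \ket{\psi}\otimes\ket{\Aux'} \;+\; \sum_{i\in[L]} a_i\ket{\mathrm{err}_i}, \qquad \ket{\Aux'} := \sum_{i\in[L]} a_i\ket{\Aux_i}.
\]
It then remains to bound the $\ell_2$ norm of the coefficient vector $(a_i)$ (which controls both $\big\|\sum_i a_i\ket{\mathrm{err}_i}\big\|_2$ and the deviation of $\ket{\Aux'}$ from a unit vector), to renormalize $\ket{\Aux} := \ket{\Aux'}/\|\ket{\Aux'}\|_2$, and to pass from Euclidean to trace distance.

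The crux is near‑orthogonality: $|\braket{\phi_i}{\phi_j}| \le \epsilon$ for $i\neq j$. Writing $\braket{\phi_i}{\phi_j} = \bra{\psi}\Enc^\dagger E_i^\dagger E_j\,\Enc\ket{\psi}$ with $\Enc = \Enc_Q\circ(\Enc_{\PMD}\otimes\Id)$, note that $\Enc\ket{\psi}$ is a codeword of the outer stabilizer code $Q$ and $E_i,E_j$ carry the same syndrome, so $E_i^\dagger E_j \in N(Q)$. Hence $\Enc_Q^\dagger E_i^\dagger E_j\,\Enc_Q$ is diagonal on the syndrome register of $Q$ and acts on the inner $\PMD$ register as some Pauli $F_{ij}$, with $F_{ij}\neq\Id$ precisely because $E_i, E_j$ are logically distinct, i.e.\ $E_i^\dagger E_j \notin S(Q)$. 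Pulling the ancilla $\ket{0^{\Anc_Q}}$ through the diagonal part and applying $\|\Pi_{\PMD}F_{ij}\Pi_{\PMD}\|_\infty \le \epsilon$ from \Cref{def:PMD2} gives $|\braket{\phi_i}{\phi_j}|\le\epsilon$. Therefore the Gram matrix of $\{\ket{\phi_i}\}$ is $\Id + E$ with $E$ Hermitian and $\|E\|_\infty \le L\epsilon$ (bounding the operator norm by the maximal absolute row sum), and from $1 = \|\phi\|_2^2 = a^\dagger(\Id+E)a \ge (1-L\epsilon)\|a\|_2^2$ we get $\|a\|_2^2 \le (1-L\epsilon)^{-1}$. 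We may assume $L\epsilon \le 1/2$, since otherwise $3\epsilon^{1/2}L^{3/4} > 2$ and the claim is vacuous; hence $\|a\|_2^2 \le 1 + 2L\epsilon$.

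Assembling: by the triangle inequality and Cauchy--Schwarz, $\big\|\sum_i a_i\ket{\mathrm{err}_i}\big\|_2 \le 2L\epsilon\sum_i|a_i| \le 2L\epsilon\sqrt{L}\sqrt{1+2L\epsilon} = O(L^{3/2}\epsilon)$, while $\|\ket{\Aux'}\|_2^2 = \|a\|_2^2 \in [1-L\epsilon,\, 1+2L\epsilon]$ makes the renormalization cost $\big|\,1-\|\ket{\Aux'}\|_2\,\big| = O(L\epsilon)$. So $\big\|U(\ket{\phi}\otimes\ket{0^L}) - \ket{\psi}\otimes\ket{\Aux}\big\|_2 = O(L^{3/2}\epsilon)$, and since both sides are unit vectors, the inequality $\|\ketbra{u} - \ketbra{v}\|_1 \le 2\|\ket{u}-\ket{v}\|_2$ upgrades this to the trace‑distance bound $\big\|U\circ(\phi\otimes 0^L) - \psi\otimes\Aux\big\|_1 = O(L^{3/2}\epsilon)$. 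Finally, in the only non‑vacuous regime one has $L^{3/4}\epsilon^{1/2}\le 1$, hence $L^{3/2}\epsilon = (L^{3/4}\epsilon^{1/2})^2 \le L^{3/4}\epsilon^{1/2}$; tracking the constants through the three steps above yields the stated $3\,\epsilon^{1/2}L^{3/4}$.

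The step I expect to be the main obstacle is the near‑orthogonality claim: one has to cleanly decompose $\Enc_Q^\dagger E_i^\dagger E_j\,\Enc_Q$ into its (trivial) syndrome‑register part and its action on the inner $\PMD$ register, and argue the latter is a \emph{non‑identity} Pauli, which is exactly where the $Q$‑plus‑$\PMD$ composition upgrades a per‑register $\PMD$ guarantee into a near‑orthogonality statement for the whole code. The remaining moves --- the $\sqrt{L}$ Cauchy--Schwarz loss, the renormalization of $\ket{\Aux}$, the Euclidean‑to‑trace conversion, and isolating the regime where the bound is non‑vacuous --- are routine, but must be accounted for carefully to land on the exponent $L^{3/4}$ and the constant $3$.
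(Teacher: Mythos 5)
Your proposal is correct in substance and reaches the same conclusion via a genuinely different and in fact tighter route. The paper's proof lower-bounds the inner product $\big|\big(\bra{\psi}\otimes\bra{\Aux}\big)U\ket{\phi}\otimes\ket{0^L}\big|\ge 1-3L^{3/2}\epsilon$ and then converts to trace distance via the relation $\|\ketbra{u}-\ketbra{v}\|_1 = 2\sqrt{1-|\braket{u}{v}|^2}$, which incurs a square root and produces the $L^{3/4}\epsilon^{1/2}$ scaling in the statement. You instead bound the Euclidean distance $\big\|U(\ket{\phi}\otimes\ket{0^L})-\ket{\psi}\otimes\ket{\Aux}\big\|_2 = O(L^{3/2}\epsilon)$ directly (error superposition plus renormalization of $\ket{\Aux'}$), and the Euclidean-to-trace conversion for unit vectors is only a factor of $2$. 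So your route actually delivers the stronger bound $O(L^{3/2}\epsilon)$ and then recovers the stated $O(L^{3/4}\epsilon^{1/2})$ by squaring in the non-vacuous regime $L^{3/4}\epsilon^{1/2}\le 1$. The near-orthogonality step ($|\braket{\phi_i}{\phi_j}|\le\epsilon$ via $E_i^\dagger E_j\in N(Q)\setminus S(Q)$ pushed through $\Enc_Q^\dagger$ onto the $\PMD$ register) and the $\|a\|_2^2\le(1-L\epsilon)^{-1}$ bound are both exactly as in the paper; the split into $\ket{\mathrm{err}_i}$ vectors and $\ket{\Aux'}$ is the new bookkeeping.

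One small gap: the final assertion that ``tracking the constants yields the stated $3\epsilon^{1/2}L^{3/4}$'' doesn't actually work out — your $\ell_2$ bound carries a constant of roughly $2\sqrt{2}+1\approx 3.8$, the trace-distance conversion doubles it to $\approx 7.6$, and multiplying by the bound $L^{3/4}\epsilon^{1/2}<2/3$ from the non-vacuous regime still leaves you above $3$. This is cosmetic (the paper's own constant is equally loose), but you should either state your result as $O(L^{3/2}\epsilon)$ (which is what your proof honestly gives and is strictly stronger) or explicitly absorb the constant rather than claim exact agreement.
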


This concludes the proof that $\Dec$ recovers the message $\ket{\psi}$ from adversarial erasure channels (\Cref{definition:aqec}), thus $(\Enc, \Dec)$ forms an erasure AQECC as in \Cref{lemma:aqeccsfrompmds}.

\subsection{The Proofs}
\label{subsection:aqecc-proofs}

We dedicate the rest of this section to proofs of the claims above. 

\begin{proof} 

[of \Cref{claim:singlepaulirecovery}] Since $Q$ is $(\delta, L)$-QLDE, the $i$th list element $E$ corrects $\ket{\phi_i}$, i.e., $E_i^\dagger\ket{\phi_i} = \Enc(\psi)$. If the previous authentication steps didn't corrupt the state, then the authentication should correctly accept on the $i$th iteration of \Cref{alg:algorithm2}: 
\begin{equation}
    \Auth \bigg(\Enc_Q^{\dagger} E^\dagger_i E_i \Enc(\ket{\psi}) \bigg) \ket{0}_{F_i} =  \Auth \bigg( \Enc_{\PMD}(\ket{\psi}) \bigg) \ket{0}_{F_i} = \ket{\psi}\otimes \ket{0^{\Anc}} \otimes \ket{1}_{F_i}
\end{equation}

During the previous iterations $t\in (1\cdots i-1)$, we sequentially apply the operators $E_{i+1}^\dagger E_i$ to change between code-states of $Q$. We emphasize that $E_{i+1}^\dagger E_i$ is a logical operator of $Q$, and thus $E_{ij} = \Enc^{\dagger}_Q E_j^\dagger E_i \Enc_Q$ is a $n$-qudit Pauli operator, with non-trivial support only on the $m$-qudit message register corresponding to the $\PMD$.

Consider the pure state $\ket{D_t}$ defined after the $t$-th iteration of \Cref{alg:algorithm2}, and assume $1\leq t < i$. Let the ideal pure state $\ket{v_t}$ be
\begin{equation}
    \ket{v_t} = \bigg(\Enc_Q^{\dagger} E_t^\dagger E_i \Enc(\ket{\psi}) \bigg) \ket{00\cdots 0}_{F} 
\end{equation}

We prove by induction that $\|\ket{D_t}-\ket{v_t}\|_2 \leq 2\cdot t\cdot \epsilon$ for $1\leq t < i$. We note that $t=1$ corresponds to \Cref{claim:pmdapproximation}. For $t>1$, we observe the recurrences
\begin{equation}
    \ket{D_t} = U_tA_t \ket{D_{t-1}}, \text{ and } \ket{v_t} = U_t \ket{v_{t-1}}
\end{equation}

Consider the decomposition $\ket{D_t} = U_t A_t  \ket{v_{t-1}} + U_t A_t (\ket{D_{t-1}}- \ket{v_{t-1}})$. Note that by \Cref{claim:pmdapproximation},
\begin{equation}
\|\ket{v_{t}}- U_t A_t\ket{v_{t-1}}\|_2 =
    \| \ket{v_{t-1}}- A_t\ket{v_{t-1}}\|_2 \leq 2\cdot \epsilon
\end{equation}
    
and that by the inductive hypothesis, we have $\|\ket{D_{t-1}}-\ket{v_{t-1}}\|_2 \leq 2\cdot (t-1)\cdot \epsilon$. Thus,
\begin{equation}
    \|\ket{D_t}-\ket{v_t}\|_2 \leq \|\ket{D_{t-1}}-\ket{v_{t-1}}\|_2+ \|U_t A_t\ket{v_{t-1}}-\ket{v_{t}}\|_2 \leq 2\cdot t\cdot \epsilon
\end{equation}

After succeeding at the $i$th attempt, the remaining gates $\prod_{k=i+1}^L (A_k U_k)$ in \Cref{alg:algorithm2} only increment the counter $F$ and don't act on the code $C$. Thus, the output state is $\leq 2\cdot L\cdot \epsilon$ close to the pure state
\begin{equation}
    \prod_{k=i+1}^L (A_k U_k)A_{i} \ket{v_{i}} = \ket{\psi} \otimes \ket{0^{\Anc}} \otimes \ket{0_10_2\cdots 0_{i-1}1_{i} 1_{i+1}\cdots 1_L}_{F}
\end{equation}

Orthogonality follows from the fact that no two logically distinct operators $i\neq j$ are accepted by the $\PMD$ at the same iteration.
\end{proof}

It remains now to show the proof of \Cref{claim:multipaulirecovery} from \Cref{claim:singlepaulirecovery}.

\begin{proof} 

[of \Cref{claim:multipaulirecovery}] We first show that the corrupted code-states are approximately orthogonal, $|\bra{\phi_i}\ket{\phi_j}| \leq \epsilon$, which via the normalization constraint $|\bra{\phi}\ket{\phi}|=1$, tells us $\sum_i |a_i|^2\approx 1$. Then we show \Cref{alg:algorithm2} produces a state close to $\psi\otimes \Aux$, where $\Aux$ is the superposition of aux states, $\ket{\Aux} \propto \sum_{i\in [L]} a_i \ket{\Aux_i}$.

Recall from \Cref{claim:singlepaulirecovery} that $\Enc_Q E_i^\dagger E_j\Enc_Q = E_{ij}$ is some Pauli operator acting on the $\PMD$ register. Thus, using \Cref{def:PMD2},
\begin{equation}
    |\bra{\phi_i}\ket{\phi_j}| = |\bra{0^{\Anc}}\bra{\psi}\Enc^\dagger E_i^\dagger E_j \Enc \ket{\psi} \ket{0^{\Anc}}| \leq \| \Pi E\Pi\|_\infty\leq \epsilon
\end{equation}
Let $\ket{aux} = N^{-1/2}\sum_{i\in [L]} a_i \ket{aux}_i$, where the orthogonality of the $\ket{aux}_i$ tells us $N = \sum_i |a_i|^2$. Using the normalization constraint on $\ket{\phi}$, we obtain a bound on $N$:
\begin{equation}
    1 = \bra{\phi}\ket{\phi} = \sum_i |a_i|^2 + \sum_{i\neq j}a_i^*a_j \bra{\phi_i}\ket{\phi_j}\geq \sum_i |a_i|^2 - \epsilon\cdot \sum_{i\neq j}|a_i^*a_j|\geq (1-\epsilon\cdot L)\cdot \sum_i |a_i|^2,
\end{equation}
via the Cauchy-Schwartz inequality. Thereby $N = \sum_i |a_i|^2\leq 1/(1-\epsilon\cdot L)$.

 Finally, we consider the desired inner product:
    \begin{gather}
       N^{1/2}\cdot \big|\big(\bra{\psi} \otimes \bra{\Aux}\big) U \ket{\phi} \otimes \ket{0^L}\big| = \big|\sum a_j^*  \bra{\psi} \otimes \bra{\Aux_j} U \ket{\phi} \otimes \ket{0^L}\big| \\ \geq \braket{\phi} - \sum |a_j|\cdot \bigg|  \bigg(\bra{\psi} \otimes \bra{\Aux_j} U - \bra{\phi_j}\otimes\bra{0^L}\bigg) \ket{\phi} \otimes \ket{0^L} \bigg| \geq \\ \geq 1 - \sum_j |a_j|\cdot \big\| \ket{\psi}\otimes \ket{\Aux_j} - U \cdot (\ket{\phi_j} \otimes \ket{0^L})\big\|_2 \geq 1-2 \cdot L^{3/2} \cdot \epsilon \cdot (\sum_i |a_i|^2)^{1/2}
       \end{gather}

    Where we used the triangle inequality, the Cauchy-Schwartz inequality twice, and then \Cref{claim:singlepaulirecovery}. Thus, 
    
    \begin{equation}
         \big|\big(\bra{\psi} \otimes \bra{\Aux}\big) U \ket{\phi} \otimes \ket{0^L}\big| \geq  (1-L\cdot \epsilon)^{1/2} - 2 \cdot L^{3/2} \cdot \epsilon  \geq 1 - 3 \cdot L^{3/2} \cdot \epsilon 
    \end{equation}

    The relation between inner product and trace distance of pure states gives us the desired bound. 
\end{proof}

\newpage
\section{Tamper Detection and Non-Malleability for Qubit-wise Channels}
\label{section:auth}

Recall \Cref{definition:qa} of quantum tamper detection codes for qubit-wise channels:

\begin{definition} 
    A pair of quantum channels $(\Enc, \Dec)$ is an $\epsilon$-secure quantum tamper detection code for qubit-wise channels if they satisfy
\begin{enumerate}
    \item (Correctness) In the absence of tampering, for all messages $\psi_{MR}$: $( \Dec\circ \Enc\otimes \mathbb{I}_R )(\psi_{MR}) = \Acc\otimes \psi_{MR}$.
    \item (Tamper Detection) For every set of $n$ single qubit channels $\Lambda_1, \cdots, \Lambda_n$, there exists a constant $p_\Lambda\in [0, 1]$ such that, $\forall \psi_{MR}$
    \begin{equation}
        \bigg( \Dec\circ \big(\otimes_i^n \Lambda_i \big)\circ  \Enc\otimes \mathbb{I}_R \bigg)(\psi_{MR}) \approx_\epsilon p_\Lambda\cdot \Acc\otimes \psi  + (1-p_\Lambda)\cdot  \bot\otimes \psi_R  
    \end{equation}

    Where the error is measured in trace distance, and $\bot$ on $M$ indicates the message is rejected. 
\end{enumerate}
\end{definition}

Our main result in this section is \Cref{theorem:auth-results}, restated below:

\begin{theorem}
    For every sufficiently large $N\in \mathbb{N}$, there exists a Monte Carlo construction of an efficient quantum tamper detection code for qubit-wise channels of blocklength $N$, rate $1-O(\log^{-1} N)$, and error $\epsilon \leq 2^{-\Tilde{\Omega}(N)}$. The construction succeeds with probability $1-2^{-\Tilde{\Omega}(N)}$.
\end{theorem}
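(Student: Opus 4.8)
I would use the concatenated construction outlined in \Cref{section:overview-keyless}. Fix a constant $c\ge 2$ and write $N$ for the code length. The message $M$ (of $k=(1-O(\log^{-1}N))N$ logical qubits) is first encoded by an outer rate $1-O(\log^{-1}N)$, near-linear-distance stabilizer code $Q_{\mathrm{out}}$ (a random CSS code, cf.\ \Cref{corollary:randomcss}); its symbols are bundled into $m=\Theta(\log^cN)$ blocks of $b=\Theta(N/\log^cN)$ qubits, and each block is re-encoded by $\Enc_{\tilde Q}=\Enc_{Q_{\mathrm{in}}}\circ\Enc_{\PMD}$, the composition of an explicit $\epsilon_{\PMD}$-PMD from \Cref{theorem:results-pmdexplicit} with parameter $\lambda=\Theta(b/\log b)$ (so $\epsilon_{\PMD}\le 2^{-\Tilde{\Omega}(N)}$) and an inner rate $1-O(\log^{-1}N)$, near-linear-distance $d_{\mathrm{in}}=\Theta(b/\log b)$ stabilizer code $Q_{\mathrm{in}}$ of block-length $\Theta(b)$ (again a random CSS code). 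Every one of the $N'=(1+o(1))N$ resulting physical qubits is masked by a $t$-wise independent Pauli one-time pad $P^{G(s)}$ with $t=\Theta(b)$ large enough that the pad restricted to any single inner block is exactly uniform; the seed $s$ has length $\sigma=O(t\log k)=O(N/\log^{c-1}N)=o(N)$ and is appended, encoded under an efficient classical bit-wise non-malleable code $\Enc_{\NM}$ \cite{Dziembowski2010,Cheraghchi2013NonmalleableCA} of error $2^{-\Tilde{\Omega}(\sigma)}=2^{-\Tilde{\Omega}(N)}$. Using random CSS codes for $Q_{\mathrm{out}},Q_{\mathrm{in}}$ makes this a Monte Carlo construction succeeding with probability $1-2^{-\Tilde{\Omega}(N)}$, and since $N=N'+O(\sigma)$ its rate is $1-O(\log^{-1}N)$.

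\textbf{The decoder.} $\Dec_\Auth$ runs $\Dec_{\NM}$ on the classical register (aborting with $\Rej$ on $\bot$) to obtain $\tilde s$, uses $G(\tilde s)$ to undo the pad, and then for each of the $m$ inner blocks measures the $Q_{\mathrm{in}}$ syndrome, applies a consistent Pauli correction coherently over the resulting superposition of low-weight candidates (exactly as in \Cref{alg:algorithm2}/\Cref{claim:singlepaulirecovery}), runs $\Auth_{\PMD}$, and \emph{flags} the block if $\Auth_{\PMD}$ rejects or no short candidate list exists. Finally it runs $\Dec_{Q_{\mathrm{out}}}$ treating flagged blocks as erasures, outputting the recovered $M$ with $\Acc$ if the number of flagged blocks is below $Q_{\mathrm{out}}$'s erasure radius and $\Rej$ otherwise. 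Condition 1 of \Cref{definition:qa} is immediate: with no tampering $\Dec_{\NM}$ returns $s$, the pad cancels, all syndromes are trivial, and all $\Auth_{\PMD}$ accept.

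\textbf{Soundness.} For condition 2 I invoke the non-malleability of $\NM$ against the bit-wise tampering induced by $\bigotimes_i\Lambda_i$ on the classical register (\Cref{claim:classical-nm-third}): the law of $\tilde s$ is $2^{-\Tilde{\Omega}(N)}$-close to a convex combination of the event $\tilde s=s$ (``key recovered'') and a sub-distribution over $\tilde s$ that is independent of $s$ (``key replaced''). In the \emph{key-replaced} regime, because $\tilde s\perp s$ and $G$ is $t$-wise independent with $t$ at least twice an inner-block size, averaging over $s$ makes the reduced state of each inner block (together with $R$) equal to the fixed product state $\big(\bigotimes_j\Lambda_j(\Id/2)\big)\otimes\psi_R$, uncorrelated with $\psi_{MR}$; the product-state/stabilizer incompatibility lemma of \cite{Anshu2020CircuitLB} (cf.\ \Cref{claim:third-auth-not-recovered}) bounds the overlap of such a product state with the code-space of $Q_{\mathrm{in}}$ by $2^{-\Omega(d_{\mathrm{in}}^2/b)}=2^{-\Tilde{\Omega}(N)}$, so each block is flagged except with probability $2^{-\Tilde{\Omega}(N)}$; since $\Dec_\Auth$ accepts only when all but $o(m)$ of the $m=\operatorname{polylog}(N)$ blocks pass, a union bound over the $\le 2^m$ choices of passing blocks gives $\Pr[\Acc]\le 2^{-\Tilde{\Omega}(N)}$, and as the decoded $M$ is then uncorrelated with $\psi_{MR}$ the trace in \Cref{definition:qa}(2) is $\le 2^{-\Tilde{\Omega}(N)}$. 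In the \emph{key-recovered} regime each inner block individually sees an exact Pauli-twirled (hence Pauli) channel, but the $m$ blocks are only jointly $t$-wise correlated, forcing a block-local analysis via the $\eta$-sparsity dichotomy of \Cref{section:analysis-rate1}: if many single-qubit channels in block $i$ are far from sparse in the Pauli basis (\Cref{definition:eta-pauli}), twirling concentrates $\Omega(d_{\mathrm{in}})$ weight of the effective channel on high-weight Paulis and the $Q_{\mathrm{in}}$ syndrome flags the block except with probability $2^{-\Tilde{\Omega}(N)}$ (\Cref{claim:not-many-eta}); otherwise the effective channel is supported, up to weight $2^{-\Tilde{\Omega}(N)}$, on a short list of low-weight Paulis, and by the PMD property the coherent correction of \Cref{claim:singlepaulirecovery} either restores the original outer symbol or flags the block whenever a non-trivial Pauli residual survives (\Cref{claim:many_eta}). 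Hence every flagged block is consistent with genuine tampering and every unflagged block carries its original outer symbol up to $O(\epsilon_{\PMD})=2^{-\Tilde{\Omega}(N)}$, so a union bound over the $m$ blocks and the erasure-decoding guarantee of $Q_{\mathrm{out}}$ shows $\Dec_\Auth$ either outputs $\Rej$ or a state $2^{-\Tilde{\Omega}(N)}$-close to $\psi_{MR}$. Combining both regimes with the non-malleability error yields $\epsilon\le 2^{-\Tilde{\Omega}(N)}$.

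\textbf{Main obstacle.} The delicate part is the key-recovered analysis: since $G$ is only $t$-wise independent the global effective channel is not a Pauli channel, so one must reason block by block and then glue the blocks back through $Q_{\mathrm{out}}$. The three interacting ingredients that need care are (i) the quantitative $\eta$-sparse-versus-dense dichotomy for single-qubit channels and the partial-twirl estimate it relies on, (ii) running the list-based coherent PMD correction of \Cref{alg:algorithm2} inside each block so the soundness error is not amplified by the superposition of Pauli candidates, and (iii) arranging that flagged blocks act as erasures while unflagged ones stay faithful, so that $\Dec_{Q_{\mathrm{out}}}$'s verdict is all-or-nothing. Keeping every error term at $2^{-\Tilde{\Omega}(N)}$ while keeping $\sigma=o(N)$ is exactly what pins down the admissible windows for $b$, $t$, $d_{\mathrm{in}}$, $\lambda$, and the non-malleable-code parameters.
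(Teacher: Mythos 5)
Your high-level framework (concatenated code, $t$-wise pad, classical NM key encapsulation, split into ``key recovered'' vs.\ ``key replaced'' via \Cref{claim:classical-nm-third}, and the $\eta$-Pauli sparsity dichotomy) matches the paper's. But your decoder is a genuinely different object from the paper's, and this difference introduces two concrete gaps.

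First, your inner-block step is not well-defined. You propose to ``apply a consistent Pauli correction coherently over the resulting superposition of low-weight candidates (exactly as in \Cref{alg:algorithm2}/\Cref{claim:singlepaulirecovery}).'' That machinery is imported from the erasure setting of \Cref{section:AQEC}, where the error is promised to be supported on a known subset of $\leq\delta n$ qubits, so QLDE gives a short list. Here the tampering is $\otimes_i \Lambda_i$ acting on \emph{every} qubit, so the post-syndrome superposition is over Paulis of arbitrary support; there is no short list and the inner code $Q_{in}$ in the paper is not a QLDE code. The paper's \Cref{alg:dec_qnm_inner} deliberately sidesteps this: it only \emph{detects} (measure syndrome, reject if nonzero, then project onto the PMD), never corrects inside a block. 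The content of \Cref{claim:many_eta} is precisely that the accepting branch of this detection channel is already $O(n\,\epsilon_\PMD 2^{8\eta b})$-close to a rescaled copy of the original encoding — no correction step is needed or invoked.

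Second, your ``flag-and-erasure-correct'' outer decoder changes the soundness structure in a way your sketch does not account for. The paper's $\Dec_\Auth$ rejects if \emph{any} inner block rejects and then \emph{measures the syndrome} of $Q_{out}$, rejecting on any nonzero outcome; in \Cref{claim:all_together} the residual error left by the $\leq t/b$ ``bad'' blocks is detected (not corrected) because $t<\frac{1}{2}d_{out}$. Your decoder instead accepts whenever $o(m)$ blocks are flagged and erasure-decodes $Q_{out}$ on the flagged positions. An erasure decoder treats unflagged symbols as trusted, so if an adversary leaves a small-but-logically-nontrivial residual in an unflagged block — exactly the scenario \Cref{claim:many_eta} bounds but does not eliminate — your decoder can output a corrupted message with $\Acc$. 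You would need an additional outer-syndrome check (at which point the erasure machinery is redundant), or a harder argument that no such residual can survive the PMD on an unflagged block; neither appears in your sketch.

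A smaller but worth-noting discrepancy: you make $Q_{out}$ and $Q_{in}$ random CSS codes and cite \Cref{corollary:randomcss}, attributing the Monte Carlo nature to the quantum codes. In the paper both are deterministic QRS codes from \cite{Grassl_1999}; the only Monte Carlo ingredient is the classical non-malleable code of \cite{Dziembowski2010}. Your choice is not wrong but it needlessly widens the failure event.
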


To build up to our main result, we present in \Cref{section:auth-rate-third}
 a self-contained proof of a simpler construction approaching rate $1/3$: 

 \begin{theorem}\label{theorem:one-third-auth}
For every sufficiently large $N\in \mathbb{N}$, there exists an explicit and efficient quantum tamper detection code for qubit-wise channels of blocklength $N$, rate $\frac{1}{3}(1-O(\log^{-1} N))$, and error $\epsilon = 2^{-\Tilde{\Omega}(N^{1/7})}$. 
\end{theorem}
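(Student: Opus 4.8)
The plan is to instantiate the scheme of \Cref{section:overview-keyless} and carry out the two‑case analysis sketched there. \textbf{Construction.} We fix a parameter $N_Q$ and set $N = 3N_Q(1+o(1))$. Take $Q$ to be an explicit $[[N_Q,m,d]]_2$ stabilizer code with $m=(1-\eta_Q)N_Q$ and near‑linear distance $d=\tilde\Omega(\eta_Q N_Q)$, an $(m,k,\epsilon_{\PMD})_2$‑$\PMD$ from \Cref{theorem:results-pmdexplicit} with $k=m-\lambda$ and $\epsilon_{\PMD}\le m^{1/2}2^{1-\lambda/4}$, and an explicit rate‑$(1-o(1))$ classical non‑malleable code $(\Enc_{\NM},\Dec_{\NM})$ against bit‑wise tampering \cite{Dziembowski2010} with simulation error $\epsilon_{\NM}$, encoding the $2N_Q$‑bit key $s$ of an $N_Q$‑qubit Pauli one‑time pad $P^s$. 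Alice encodes $\ket{\psi}_{MR}$ as $\Enc_{\Auth}(\psi)=\mathbb E_s\,\Enc_{\NM}(s)\otimes P^s(\Enc_Q\circ\Enc_{\PMD})(\psi)(P^s)^\dagger$, with $\Enc_{\NM}(s)$ stored in the computational basis. Bob runs $\Dec_{\NM}$ on the measured classical register, aborting on $\bot$; otherwise he obtains $\tilde s$, applies $(P^{\tilde s})^\dagger$ to the quantum register, measures the syndrome of $Q$ and aborts if it is nonzero, runs $\Auth_{\PMD}$ and aborts on rejection, and finally outputs the message register. Completeness follows immediately from part~1 of \Cref{claim:pmdapproximation}: with no tampering $\tilde s=s$, the syndrome is zero, and $\Auth_{\PMD}$ returns $\psi$ exactly.

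\textbf{Reduction to classical non‑malleability.} Because the adversary is a \emph{fixed} product channel and $\Enc_{\NM}(s)$ sits in the computational basis, its effective action on the classical codeword — after Bob measures it — is a fixed bit‑wise (possibly randomized) tampering function; and since the one‑time pad leaves the quantum register carrying no information about $s$, the non‑communicating parties cannot make that tampering depend on $s$. The simulation guarantee of \cite{Dziembowski2010} then yields a distribution $\mathcal D^\Lambda$ independent of $s$ and probabilities $p_\bot,p_{\mathrm{same}},p_{\mathrm{ind}}$ such that Bob's recovered key is $\bot$, equals $s$, or is drawn from $\mathcal D^\Lambda$ respectively. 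Up to trace‑distance error $\epsilon_{\NM}$ we split $\tau=p_\bot\tau_\bot+p_{\mathrm{same}}\tau_{\mathrm{same}}+p_{\mathrm{ind}}\tau_{\mathrm{ind}}$ and bound $\Tr[((\mathbb I-\ketbra{\psi}_{MR})\otimes\Acc)\tau]$ branch by branch; the $\bot$ branch carries flag $\Rej$ and contributes $0$.

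\textbf{The ``key recovered'' branch.} When $\tilde s=s$, averaging over uniform $s$ turns the adversary's action on the quantum register into a Pauli twirl, $\mathbb E_s(P^s)^\dagger\Lambda_Q(P^s\rho(P^s)^\dagger)P^s=\sum_E|c_E|^2 E\rho E^\dagger$, a Pauli channel on $\rho=(\Enc_Q\circ\Enc_{\PMD})(\psi)$, where $\Lambda_Q$ is the product channel restricted to the $N_Q$ quantum qubits. This is a classical mixture over branches $E$; Bob's syndrome measurement keeps only branches with $E\in N(Q)$. If $E\in S(Q)$ the branch returns $\psi\otimes\Acc$; if $E\in N(Q)\setminus S(Q)$ then after $\Enc_Q^\dagger$ Bob holds $\tilde E\,\Enc_{\PMD}(\psi)$ for a nontrivial Pauli $\tilde E\neq\mathbb I$ on the $\PMD$ register, and \Cref{def:PMD2} together with part~2 of \Cref{claim:pmdapproximation} bound the probability $\Auth_{\PMD}$ accepts by $\epsilon_{\PMD}^2$; summing over branches, this term contributes $\le\epsilon_{\PMD}^2$. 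Note that the distance of $Q$ alone does not suffice here, since the twirl produces high‑weight logical errors — this is exactly where the $\PMD$ is used.

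\textbf{The ``key tampered'' branch and conclusion.} When the recovered key is independent of $s$, we average over $s$ \emph{before} the adversary acts: $\mathbb E_s P^s\rho(P^s)^\dagger=\psi_R\otimes\mathbb I_{N_Q}/2^{N_Q}$, so Bob's quantum register holds $\psi_R\otimes\bigotimes_i\Lambda_i(\mathbb I/2)$, and applying the (still single‑qubit‑product) Pauli $(P^{\tilde s})^\dagger$ leaves it a product state on $N_Q$ qubits. Hence the probability Bob measures trivial syndrome is at most $\max_{\rho\ \text{product}}\Tr[\Pi_Q\rho]\le 2^{-\Omega(d^2/N_Q)}$ by the lemma of \cite{Anshu2020CircuitLB} bounding the overlap of product states with the code‑space of a distance‑$d$ stabilizer code, so this branch contributes $\le 2^{-\Omega(d^2/N_Q)}$ — and it does so for \emph{every} message $\psi$, which is precisely what elevates this from non‑malleability to authentication. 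I expect this branch to be the main obstacle: one must carefully justify the exchange of averaging order (which rests on the one‑time pad perfectly decoupling the quantum register from both $R$ and $s$) and invoke \cite{Anshu2020CircuitLB} in exactly this product‑state‑versus‑code‑space form, together with verifying that the induced classical tampering genuinely lies in the bit‑wise model for which $\NM$ is secure. Combining the three branches gives $\epsilon\le\epsilon_{\NM}+\epsilon_{\PMD}^2+2^{-\Omega(d^2/N_Q)}$; taking $\eta_Q$ and $\lambda/N_Q$ both $O(\log^{-1}N)$ makes the last two terms $2^{-\tilde\Omega(N)}$, leaving $\epsilon_{\NM}$ dominant, which for the explicit rate‑$(1-o(1))$ bit‑wise non‑malleable code is $2^{-\tilde\Omega(N^{1/7})}$; meanwhile $k/N=\frac{1}{3}(1-O(\log^{-1}N))$, giving the claimed parameters.
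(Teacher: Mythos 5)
Your proposal follows essentially the same route as the paper: the same three-way split based on the classical non-malleability guarantee (key rejected, key recovered, key replaced by an uncorrelated sample), the Pauli-twirl plus PMD detection for the ``key recovered'' branch, the \cite{Anshu2020CircuitLB} product-state-versus-stabilizer-code fidelity bound for the ``key tampered'' branch, and the same parameter budget ($\gamma_Q, \gamma_\PMD = \Theta(\log^{-1}N)$, two $2N_Q$-bit key registers in a rate-$(1-o(1))$ NM code giving $N \approx 3N_Q$ and rate $\tfrac{1}{3}(1-O(\log^{-1}N))$). One small bookkeeping slip: you attribute the explicit rate-$(1-o(1))$ bit-wise non-malleable code with error $2^{-\tilde\Omega(N^{1/7})}$ to \cite{Dziembowski2010}, but that reference only gives a Monte-Carlo, constant-rate construction; the explicit rate-$(1-o(1))$ code with the $N^{1/7}$-type error that the paper uses here (and that lets the rate-$1/3$ theorem be stated deterministically) is due to \cite{Cheraghchi2013NonmalleableCA} — your quoted parameters are correct for that construction.
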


Before moving on, we refer the reader to \cref{section:prelim} for the relevant background on classical non-malleable codes. 


\subsection{The Rate \texorpdfstring{$1/3 - o(1)$}{1/3-o(1)} Construction}
\label{section:auth-rate-third}

\subsubsection{The Code Construction}

\textbf{Encoding} Our code construction combines classical non malleability and approximate quantum error correction. We first encode the code-state $\psi$ into the composition $\Tilde{Q}$ of a PMD code, and a stabilizer code $Q$ of block-length $N_Q$. The resulting state is one-time-padded with a random $N_Q$ qubit Pauli $P^s$, indexed by a bit-string $s$ of length $2\cdot N_Q$. Finally, $s$ is encoded with a classical non-malleable code into $N_{\NM}$ bits, using the key-encapsulation technique. 
\begin{gather}
    \Enc(\psi) = \mathbb{E}_s \Enc_{\NM}(s)\otimes  P^s \Enc_{\Tilde{Q}}(\psi)(P^s)^\dagger, \text{ where } \Enc_{\Tilde{Q}}= \Enc_Q\circ\Enc_{\PMD}
\end{gather}

\textbf{Ingredients} To formally describe the ingredients, we combine

\begin{enumerate}
    \item The explicit non-malleable code of block-length $N_{\NM}$, rate $1-\gamma_{\NM}\geq 1 - 1/\log N_\NM$, and error $\epsilon_\NM = 2^{-\Tilde{\Omega}(N_\NM^{1/7})}$, of \cref{theorem:cg13} \cite{Cheraghchi2013NonmalleableCA}.
    \item An $[[N_Q, (1-\gamma_Q)\cdot N_Q, d]]_2$ stabilizer code with near-linear distance $d = \Omega(\gamma_Q\cdot N_Q/\log N_Q)$, such as that of \cref{theorem:bqrs} from \cite{Grassl_1999}.
    \item An $(k/(1-\gamma_\PMD), k, \epsilon_{\PMD})$-$\PMD$, $\epsilon_\PMD \leq k\cdot 2^{- \gamma_\PMD\cdot k/4}$ such as that of \Cref{thm:pmds}.
\end{enumerate}

Note that the total block-length $N = N_\NM+N_Q = N_Q\cdot (1 + 2/(1-\gamma_{\NM}))$, and thereby the rate of the resulting construction is 

\begin{equation}
    R = \frac{1}{(1 + 2/(1-\gamma_{\NM}))} \frac{k}{N_Q} = \frac{1-\gamma_{\NM}}{3-\gamma_\NM}\cdot (1-\gamma_\PMD)\cdot (1-\gamma_Q)=\frac{1}{3} - O(\log^{-1} N )
\end{equation}

So long as we pick $\gamma_{\NM}, \gamma_{\PMD}, \gamma_Q  = \Theta(\log^{-1} N )$.

\textbf{Decoding} We describe our decoding algorithm $\Dec$ as follows. The first step is to decode the classical information in $C$ using $\Dec_\NM$, obtaining a key $\Tilde{s}$. In the second step, we attempt to decrypt the error correcting code by applying $(P^{\Tilde{s}})^\dagger$, and proceed by measuring the syndrome of $Q$. In the third, if the syndrome measurement succeeds, we decode $Q$ and project onto the PMD code-space. If all these steps accept, we output the message register.

\subsubsection{Analysis of the Rate $1/3$ construction}

We divide the proof of security into 3 claims. First, in \Cref{claim:classical-nm-third}, we reason that the classical non-malleability guarantees that Bob decodes the classical key to a convex combination over the original random key, or an uncorrelated key $\Tilde{s}$ (or rejects). Then, we analyze each of these two cases separately.

\begin{claim}\label{claim:classical-nm-third}
    For all messages $s$, the distribution over the classical outputs of the $\NM$ decoding algorithm is close to a convex combination over the original message $s$, rejection $\bot, $ and an uncorrelated distribution $\mathcal{D}^\Lambda$.
    \begin{equation}
        \|\Dec_\NM \circ \big(\otimes_i \Lambda_i\big)\circ \Enc_\NM(s) - p_{\Lambda, s}\|_1 \leq \epsilon_{\NM}
    \end{equation}

That is, $p_{\Lambda, s}(\Tilde{s}) = p_{\Lambda, \text{Same}}\cdot  \delta_{s, \Tilde{s}} + p_{\Lambda, \bot}\cdot \delta_{\bot, \Tilde{s}} + (1-p_{\Lambda, \text{Same}}-p_{\Lambda, \bot})\cdot \mathcal{D}^\Lambda(\Tilde{s})$, where $p_{\Lambda, \text{Same}}, p_{\Lambda, \bot}$ are positive constants which only depend on $\Lambda$.
\end{claim}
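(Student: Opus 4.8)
The plan is to reduce the claim directly to the defining property of classical non-malleable codes (Definition~\ref{definition:nm-codes}), after observing that a qubit-wise channel acting on a purely classical register is equivalent to a classical \emph{randomized} bit-wise tampering function.

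First I would fix a message $s$ and recall that, conditioned on $s$, the $\NM$ register inside $\Enc_\Auth(\psi)$ holds the state $\Enc_\NM(s)$, which is a probabilistic mixture of computational-basis states $\ket{c}\bra{c}$, $c\in\{0,1\}^{N_\NM}$, and is moreover in tensor product with every other register (the Pauli one-time-pad $P^s$ and the quantum code $\Enc_{\Tilde{Q}}(\psi)$, together with the purification register $R$, all live on the other qubits). In particular, for each value of the $\Enc_\NM$-randomness the $i$-th $\NM$ qubit is a single computational-basis state $\ket{c_i}\bra{c_i}$, unentangled with everything else. Since the quantum decoder $\Dec_\Auth$ runs $\Dec_\NM$ by first measuring the $\NM$ register in the computational basis, the only data about $\Lambda_i$ that survives is the randomized classical channel $f_i\colon\{0,1\}\to\{0,1\}$ defined by $\Pr[f_i(b)=b']=\bra{b'}\Lambda_i(\ket{b}\bra{b})\ket{b'}$. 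Hence, as classical distributions,
\begin{equation}
    \Dec_\NM \circ \Big(\otimes_i \Lambda_i\Big)\circ \Enc_\NM(s)\;=\;\Dec_\NM\circ\Big(\otimes_i f_i\Big)\circ \Enc_\NM(s).
\end{equation}

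Next I would apply part~(2) of Definition~\ref{definition:nm-codes} to the family of randomized bit-wise tampering functions $\{f_i\}_{i\in[N_\NM]}$: there is a distribution $q_f$ over $\{0,1\}^{k_\NM}\cup\{\Rej,\text{same}\}$, depending only on $\{f_i\}$ and therefore only on $\Lambda$, with $\Dec_\NM\circ(\otimes_i f_i)\circ\Enc_\NM(s)\approx_{\epsilon_\NM}p_{f,s}$ for every $s$. It then remains to rewrite $p_{f,s}$ in the asserted form: set $p_{\Lambda,\text{Same}}=q_f(\text{same})$, $p_{\Lambda,\bot}=q_f(\Rej)$, and let $\mathcal{D}^\Lambda$ be $q_f$ conditioned on the outcome lying in $\{0,1\}^{k_\NM}$, renormalized by $1-p_{\Lambda,\text{Same}}-p_{\Lambda,\bot}$. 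Unfolding the definition of $p_{f,s}$ yields exactly $p_{\Lambda,s}(\tilde{s})=p_{\Lambda,\text{Same}}\cdot\delta_{s,\tilde{s}}+p_{\Lambda,\bot}\cdot\delta_{\bot,\tilde{s}}+(1-p_{\Lambda,\text{Same}}-p_{\Lambda,\bot})\cdot\mathcal{D}^\Lambda(\tilde{s})$, and combining with the previous display proves the claim with error $\epsilon_\NM$.

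The argument is short, so there is no single hard computational step; the only part that needs care is the first reduction. One must verify that, for a fixed key $s$, the $\NM$ register genuinely is an unentangled classical register — so that ``apply $\Lambda_i$, then measure in the computational basis'' is a well-defined classical channel that cannot interact with $R$ or with the quantum code — and that the protocol's $\Dec_\NM$ does indeed begin by measuring that register. Granting these modeling points, the rest is a direct invocation of classical non-malleability plus bookkeeping of the constants $p_{\Lambda,\text{Same}},p_{\Lambda,\bot}$ and the residual distribution $\mathcal{D}^\Lambda$.
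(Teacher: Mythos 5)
Your proposal is correct and follows essentially the same route as the paper's proof: reduce each $\Lambda_i$ acting on the classical $\NM$ register (followed by computational-basis measurement) to a randomized bit-wise function $f_i$, invoke part (2) of Definition~\ref{definition:nm-codes}, and unfold $q_f$ into the stated convex combination $p_{\Lambda,s}$. You supply a bit more justification than the paper (explicitly noting the $\NM$ register is an unentangled classical state conditioned on $s$, so the reduction to a classical channel is valid), but the argument is the same.
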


If Bob receives the true secret key $s$, then one can show that on average over random $s$, he receives a code-state of $\Tilde{Q}$ corrupted by a Pauli channel, which the inner PMD code can detect: 

\begin{claim}\label{claim:third-auth-recovered}
    If Bob recovers the key, then for all (possibly entangled) messages $\psi = \psi_{MR}$,
    \begin{equation}
        \bigg\|\mathbb{E}_s \Dec_{\Tilde{Q}}\circ (P^s)^\dagger \circ \Lambda\circ P^s \circ \Enc_{\Tilde{Q}}(\psi) - \bigg(p_{\Lambda, \text{a}}\cdot \psi\otimes \Acc + \bot\cdot (1-p_{\Lambda, \text{a}})\bigg) \bigg\|_1 \leq \epsilon_\PMD^2 \leq 2^{-\Omega(k/\log k)}
    \end{equation}

    Where we denote rejection as $\bot = \Rej\otimes \sigma_{MR}$ (for some density matrix $\sigma_{MR}$), and $p_{\Lambda, \text{a}}$ is a real positive constant.
\end{claim}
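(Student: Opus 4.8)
The plan is to prove the claim in two movements: first use the uniformly random Pauli pad to collapse the adversary $\Lambda$ into a Pauli channel via the Pauli twirl, and then case-analyze the resulting Pauli errors against the stabilizer structure of $Q$ and the PMD property of the inner code. Since $\Enc_{\Tilde{Q}}$ and $\Dec_{\Tilde{Q}}$ do not depend on $s$, I would push the expectation inside and invoke the twirl: writing $P^s = \bigotimes_i P^{s_i}$ with independent coordinates $s_i$ and $\Lambda = \bigotimes_i \Lambda_i$, the composed channel factorizes as $\mathbb{E}_s\,(P^s)^\dagger\circ\Lambda\circ P^s = \bigotimes_i\big(\mathbb{E}_{s_i}(P^{s_i})^\dagger\circ\Lambda_i\circ P^{s_i}\big)$, and each single-qubit factor is a Pauli channel by a one-line character sum; hence the whole channel is a Pauli channel $\sum_E |c_E|^2\,\mathcal{E}_E$ with $\mathcal{E}_E(\rho)=E\rho E^\dagger$, $|c_E|^2\ge 0$ and $\sum_E |c_E|^2 = 1$. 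Thus the left-hand side of the claim equals $\sum_E |c_E|^2\,\Dec_{\Tilde{Q}}\!\big(E\,\Enc_{\Tilde{Q}}(\psi)\,E^\dagger\big)$, and it remains to analyze $\Dec_{\Tilde{Q}}$ branch by branch. Note that the reference register $R$ is inert throughout, so carrying it along costs nothing — in particular the PMD bound $\|\Pi E\Pi\|_\infty\le\epsilon_\PMD$ of \Cref{def:PMD2} is unchanged by tensoring $\mathbb{I}_R$.

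Next I would split the $N_Q$-qubit Pauli group into three classes relative to $Q$. If $E\in S(Q)$, it fixes every code-state of $Q$, so $\Dec_{\Tilde{Q}}\!\big(E\,\Enc_{\Tilde{Q}}(\psi)\,E^\dagger\big) = \psi\otimes\Acc$ exactly; I would then \emph{define} $p_{\Lambda,\mathrm{a}} := \sum_{E\in S(Q)}|c_E|^2$, which collects precisely the $p_{\Lambda,\mathrm{a}}\cdot\psi\otimes\Acc$ term of the target state. If $E\notin N(Q)$, then $E$ has nonzero syndrome, so the syndrome measurement of $Q$ performed by $\Dec_{\Tilde{Q}}$ rejects, and this branch contributes a state whose flag is $\Rej$. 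Finally, if $E\in N(Q)\setminus S(Q)$, the syndrome is zero and $\Dec_{\Tilde{Q}}$ proceeds past the syndrome check; applying $\Enc_Q^\dagger$ turns $E$ into a nontrivial $m$-qubit logical Pauli $\bar{E} = \Enc_Q^\dagger E\,\Enc_Q\ne\mathbb{I}$ supported on the PMD message register (with the $Q$-ancillas restored to $\ket{0}$), so the state handed to $\Auth_\PMD$ is precisely a PMD code-state corrupted by $\bar{E}$. By the PMD property $\|\Pi\bar{E}\Pi\|_\infty\le\epsilon_\PMD$ (cf. the computation in \Cref{claim:pmdapproximation}), the probability that $\Auth_\PMD$ outputs $\Acc$ on this branch is at most $\|\Pi\bar{E}\Pi\|_\infty^2\le\epsilon_\PMD^2$, so the branch contributes a flag-$\Rej$ state together with at most $\epsilon_\PMD^2$ worth of uncontrolled flag-$\Acc$ ``junk''.

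Assembling the pieces: the output is $p_{\Lambda,\mathrm{a}}\cdot\psi\otimes\Acc$, plus a subnormalized flag-$\Rej$ operator, plus a subnormalized flag-$\Acc$ operator $\xi$ of trace $w := \sum_{E\in N(Q)\setminus S(Q)}|c_E|^2\cdot\Pr[\Auth_\PMD=\Acc\mid\bar{E}]\le\epsilon_\PMD^2$. Choosing the rejection state $\bot = \Rej\otimes\sigma_{MR}$ with $\sigma_{MR}$ the normalization of the flag-$\Rej$ operator, the triangle inequality bounds the trace distance to $p_{\Lambda,\mathrm{a}}\,\psi\otimes\Acc + (1-p_{\Lambda,\mathrm{a}})\,\bot$ by $2w\le 2\epsilon_\PMD^2$; since $\epsilon_\PMD\le k\cdot 2^{-\gamma_\PMD k/4}$ with $\gamma_\PMD = \Theta(\log^{-1}N)$ and $k=\Theta(N)$, this is $2^{-\Omega(k/\log k)}$, matching the claim up to an immaterial constant. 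I expect the crux to be the third case: one must carefully verify that a zero-syndrome Pauli really reduces, after $\Enc_Q^\dagger$, to an honest nontrivial Pauli \emph{on the PMD message register} (so that \Cref{def:PMD2} applies verbatim), and that the ``mistaken accept'' contribution — the only part of the output that is neither $\psi\otimes\Acc$ nor $\Rej$ — is correctly charged against the $\epsilon_\PMD^2$ budget rather than swept into $p_{\Lambda,\mathrm{a}}$.
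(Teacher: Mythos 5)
Your proposal is correct and follows essentially the same route as the paper's own proof: apply the Pauli twirl to reduce $\mathbb{E}_s(P^s)^\dagger\circ\Lambda\circ P^s$ to a Pauli channel $\sum_E|c_E|^2\,\mathcal{E}_E$; note that $E\notin N(Q)$ fails the $Q$-syndrome check, $E\in S(Q)$ is inert (and its total mass defines $p_{\Lambda,\mathrm{a}}$), and $E\in N(Q)\setminus S(Q)$ conjugates under $\Enc_Q^\dagger$ to a nontrivial logical Pauli on the $\PMD$ register that is mistakenly accepted with probability at most $\|\Pi\bar{E}\Pi\|_\infty^2\le\epsilon_\PMD^2$, after which the contributions are summed using $\sum_E|c_E|^2=1$. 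The one place you are more careful than the written proof is the final bookkeeping: you account for both the ``junk'' mass $\operatorname{tr}[\xi]\le\epsilon_\PMD^2$ sitting under the $\Acc$ flag \emph{and} the corresponding deficit under the $\Rej$ flag, which yields $2\epsilon_\PMD^2$ rather than the paper's stated $\epsilon_\PMD^2$; the paper's displayed bound only covers the accept branch, so your factor of $2$ is the more defensible trace-distance bound, and as you note it is immaterial to the asymptotics.
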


Conversely, if Bob receives the uncorrelated key, then on average over the random Pauli OTP Bob's syndrome measurement on $\Tilde{Q}$ should fail with high probability: 

\begin{claim}\label{claim:third-auth-not-recovered}
    If Bob receives an uncorrelated key $\Tilde{s}$, then for all messages $\psi$,
    \begin{equation}
        \bigg\|\mathbb{E}_s \Dec_{Q}\circ (P^{\Tilde{s}})^\dagger \circ \Lambda\circ P^s \circ \Enc_{\Tilde{Q}}(\psi) - \bot \bigg\|_1 \leq 2^{-\Omega(d^2/N_Q)} \leq 2^{-\Omega(-k/\log^4 k)}
    \end{equation}
\end{claim}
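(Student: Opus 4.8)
\emph{Proof approach.} The plan is to show that, in the uncorrelated-key case, Bob's syndrome measurement on the stabilizer code $Q$ accepts only with exponentially small probability: on average over the random Pauli one-time pad $s$, the state Bob holds immediately before the syndrome measurement is a \emph{product state} across the $N_Q$ code qubits, and the codespace of $Q$ --- which has near-linear distance $d$ --- has negligible overlap with any product state.

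Fix the recovered key $\tilde s$ as an arbitrary value (it is precisely the role of this claim to bound the error for every fixed $\tilde s$; the outer argument, via \Cref{claim:classical-nm-third}, reduces the uncorrelated-branch contribution to exactly this quantity, with $s$ still uniform because in that branch the recovered key is drawn from a distribution $\mathcal{D}^\Lambda$ independent of the encoded pad). Since $(P^{\tilde s})^\dagger$ and $\Lambda$ do not depend on $s$, one may push the expectation over $s$ inward,
\[
\mathbb{E}_s\,(P^{\tilde s})^\dagger \circ \Lambda \circ P^s \circ \Enc_{\Tilde{Q}}(\psi) \;=\; (P^{\tilde s})^\dagger \circ \Lambda\Big(\,\mathbb{E}_s\, P^s\, \Enc_{\Tilde{Q}}(\psi_{MR})\, (P^s)^\dagger\,\Big),
\]
and then invoke the Pauli twirl: as $s$ ranges over $\{0,1\}^{2N_Q}$ the pads $P^s$ range over the full $N_Q$-qubit Pauli group (up to phases, which cancel), so $\mathbb{E}_s P^s(\cdot)(P^s)^\dagger$ is the completely depolarizing channel on the code register and $\mathbb{E}_s P^s \Enc_{\Tilde{Q}}(\psi_{MR})(P^s)^\dagger = \psi_R \otimes \mathbb{I}_{N_Q}/2^{N_Q}$, where $\psi_R = \Tr_M[\psi_{MR}]$ (tracing the code register through the isometry $\Enc_{\Tilde{Q}}$ returns the original reference marginal). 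Because $\Lambda = \bigotimes_i \Lambda_i$, $\mathbb{I}_{N_Q}/2^{N_Q} = \bigotimes_i (\mathbb{I}/2)$, and $(P^{\tilde s})^\dagger = \bigotimes_i (P^{\tilde s}_i)^\dagger$ all act qubit-wise, the state just before Bob's syndrome measurement is $\psi_R \otimes \bigotimes_{i=1}^{N_Q} \rho_i$ with single-qubit blocks $\rho_i = (P^{\tilde s}_i)^\dagger\,\Lambda_i(\mathbb{I}/2)\,P^{\tilde s}_i$ --- a product state over the code register, tensored with the untouched reference $R$.

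The syndrome measurement of $Q$ acts only on the $N_Q$ code qubits, so its acceptance probability is $\Tr\big[\Pi_Q \bigotimes_i \rho_i\big]$, with $\Pi_Q$ the codespace projector. I would then apply the lemma of \cite{Anshu2020CircuitLB}, which bounds the overlap of any product state with the codespace of an $n$-qubit stabilizer code of distance $d$ by $2^{-\Omega(d^2/n)}$; hence the syndrome is rejected except with probability $2^{-\Omega(d^2/N_Q)}$. Taking $\sigma_{MR}$ to be Bob's normalized conditional-on-rejection output, the output $\tau$ satisfies $\|\tau - \Rej\otimes\sigma_{MR}\|_1 \le 2\cdot 2^{-\Omega(d^2/N_Q)} = 2^{-\Omega(d^2/N_Q)}$. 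Substituting the construction's parameters $d = \Omega(\gamma_Q N_Q/\log N_Q)$, $N_Q = \Theta(k)$, and $\gamma_Q = \Theta(\log^{-1} k)$ gives $d^2/N_Q = \Omega(k/\log^4 k)$, as claimed.

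The sole non-routine ingredient is the product-state/codespace overlap bound of \cite{Anshu2020CircuitLB}, which I would use as a black box; conceptually the entire proof is driven by the Pauli twirl collapsing the encrypted code-state to the product state $\bigotimes_i \Lambda_i(\mathbb{I}/2)$. The points to handle with care are the bookkeeping ones: that both $\Lambda$ and the de-pad $(P^{\tilde s})^\dagger$ preserve the tensor-product structure across the $N_Q$ code qubits, that tracing out the code register after the isometric encoding recovers exactly $\psi_R$, and that the reduction to a fixed $\tilde s$ with uniform $s$ is legitimate --- the last of these being supplied by the non-malleability guarantee of \Cref{claim:classical-nm-third} and so external to this claim itself.
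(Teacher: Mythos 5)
Your proof takes essentially the same route as the paper: push the average over the random pad $s$ past the (pad-independent) channel $\Lambda$, apply the Pauli twirl to collapse the pre-syndrome state to a product state across the $N_Q$ code qubits, and then invoke the product-state vs.\ stabilizer-codespace overlap bound of \cite{Anshu2020CircuitLB} (Lemma 14) to conclude the syndrome measurement rejects except with probability $2^{-\Omega(d^2/N_Q)}$. Your write-up is slightly more careful than the paper's on the bookkeeping (tracking the untouched purification $R$, noting explicitly that $(P^{\tilde s})^\dagger$ and each $\Lambda_i$ act qubit-wise and so preserve the product structure, and the final factor-of-two conversion from acceptance probability to trace distance to $\bot$), but the conceptual content is identical.
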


We can now tie these ideas together and prove $(\Enc, \Dec)$ forms a tamper detection code.

\begin{proof}

    [of \Cref{theorem:one-third-auth}] After Bob receives the corrupted code-state $\big(\otimes_i\Lambda_i\big) \circ \Enc(\psi)$, he measures and decodes the classical register obtaining a key $\Tilde{s}$, and attempts to revert the OTP $(P^{\Tilde{s}})^\dagger$. From \Cref{claim:classical-nm-third}, after tracing out the `key' register, Bob has in his possession a mixed state $\rho_{\text{Bob}}$ which is $\epsilon_\NM$ close to $\rho^{\text{ideal}}_{\text{Bob}}$, a mixture  over whether the key is recovered, uncorrelated, or rejected:
    \begin{gather}
       \rho^{\text{ideal}}_{\text{Bob}} = p_{\Lambda, \text{Same}} \cdot \mathbb{E}_s \bigg((P^s)^\dagger \circ \Lambda\circ P^s \circ \Enc_{\Tilde{Q}}(\psi)\bigg) + \\ + p_{\Lambda, \bot}\cdot \bot + (1-p_{\Lambda, \text{Same}}-p_{\Lambda, \bot}) \cdot \mathbb{E}_{\Tilde{s}\leftarrow \mathcal{D}^\Lambda}\mathbb{E}_s \bigg((P^{\Tilde{s}})^\dagger \circ \Lambda\circ P^s \circ \Enc_{\Tilde{Q}}(\psi)\bigg)
    \end{gather}
    By the triangle inequality, \Cref{claim:third-auth-recovered} and \Cref{claim:third-auth-not-recovered}, $\Dec_{\Tilde{Q}}(\rho^{\text{ideal}}_{\text{Bob}})$ is $\max(\epsilon_\PMD^2, 2^{-\Omega(-d^2/N_Q)})$ close to a mixture over the original $\psi$, and rejection $\bot$. This immediately implies this construction is an tamper detection code with error $\epsilon_{\NM} + \max(\epsilon_\PMD^2, 2^{-\Omega(-d^2/N_Q)})\leq 2^{-\Tilde{\Omega}(N^{1/7})}$
\end{proof}

\textbf{The Proofs (of the Rate $1/3$ construction)}

\begin{proof}

    [of \Cref{claim:classical-nm-third}] This claim is a well-known consequence of the definition of non-malleability. After applying the channels $\Lambda_i$ and measuring the output, each channel simplifies to to a randomized function $f_i$. From \Cref{definition:nm-codes}, there exists a distribution $\mathcal{D}_f$ which determines the distribution over decoding outputs. Crucially, $\mathcal{D}_f$ is a convex combination over the original message, rejection, and an arbitrary message, and the coefficients of the convex combination do not depend on the message.
\end{proof}

\begin{proof}

[of \Cref{claim:third-auth-recovered}] The Pauli Twirl (\cref{fact:1design}) guarantees the effective channel is a Pauli channel. 
    \begin{equation}
        \mathbb{E}_s (P^s)^\dagger \circ \Lambda\circ P^s \circ \Enc_{\Tilde{Q}}(\psi)  = \sum_E |c_E|^2\cdot E \Enc_{\Tilde{Q}}(\psi) E^\dagger
    \end{equation}
    The syndrome measurement on the stabilizer code $Q$, assuming it accepts, outputs 
    \begin{equation}
        \Pi_Q\circ \bigg(\mathbb{E}_s (P^s)^\dagger \circ \Lambda\circ P^s \circ \Enc_{\Tilde{Q}}(\psi)\bigg)  = \sum_{E\in N(Q)} |c_E|^2\cdot E \Enc_{\Tilde{Q}}(\psi) E^\dagger
    \end{equation}

    After reverting the stabilizer code encoding $\Enc_Q^\dagger$ and projecting onto the $\PMD$ code-space, the operators $E$ in $S(Q)$ preserve the state, and those in $N(Q)-S(Q)$ are detected by the $\PMD$ with probability $1-\epsilon^2_\PMD$:
    \begin{gather}
       \bigg\| \Enc_\PMD^\dagger\circ \Pi_{\PMD}\circ \Enc_Q^\dagger \circ \bigg( \sum_{E\in N(Q)} |c_E|^2 \cdot E \Enc_{\Tilde{Q}}(\psi) E^\dagger\bigg) - \psi\cdot \bigg(\sum_{E\in S(Q)} |c_E|^2\bigg) \bigg\|_1\leq \\ \leq \epsilon^2_\PMD\cdot \sum_{E\in N(Q)} |c_E|^2 \leq \epsilon^2_\PMD
    \end{gather}

    Where we used the fact that $\sum_{E} |c_E|^2=1$ since $\Lambda$ is CPTP. The observation that $\epsilon_\PMD \leq 2^{-\Omega(k/\log k)}$ concludes the claim.   
\end{proof}

\begin{proof}

[of \Cref{claim:third-auth-not-recovered}] If the classical key is not recovered, the decoder ``receives" a key $\Tilde{s}$ which is uncorrelated from $s$ by the non-malleability guarantee. The density matrix in Bob's possession, on average over the secret key but before decoding the quantum component of the code, is a product state which doesn't depend on $\psi$ (by \cref{fact:1design}):
\begin{equation}
    \mathbb{E}_s \Lambda\circ P^s \circ \Enc_{\Tilde{Q}}(\psi) = \Lambda\circ \bigg(\mathbb{E}_s P^s \cdot \Enc_{\Tilde{Q}}(\psi) \cdot (P^s )^\dagger\bigg) = \Lambda(\mathbb{I}/2^{N_Q}) = \otimes_i^{N_Q} \Lambda_i(\mathbb{I}/2)
\end{equation}

However, code-states of stabilizer codes should be highly entangled, and very far from product states. Indeed, \cref{lemma:clb} by \cite{Anshu2020CircuitLB} implies that the syndrome measurement on $Q$ must fail (that is, output a non-zero syndrome) with high probability if the key is not recovered.
\begin{equation}
    \Tr[\Pi_Q\cdot  \otimes_i^{N_Q} \Lambda_i(\mathbb{I}/2)] \leq 2\cdot 2^{-d^2/(2^{12}\cdot N_Q)}
\end{equation}

The observation that $N_Q \leq k(1+\Theta(1/\log k))$, $d = \Omega(k/\log^2 k)$ concludes the claim.

\end{proof}

\subsection{The Rate \texorpdfstring{$1-o(1)$}{1-o(1)} Construction}

\subsubsection{Overview}

The code construction is comprised of a classical non-malleable code and multi-layer concatenated quantum code. In \cref{section:auth-ingredients-rate-1}, we formally describe the components in our code construction. In \Cref{section:encoding_rate_1}, we describe our encoding map (\Cref{alg:enc_qnm}), although we recommend the reader to skim \Cref{fig:concat} for an intuitive description of the algorithm first. We dedicate \Cref{section:analysis-rate1} to the key claims behind the analysis of our construction, which are all mostly independent of the parameter choices. Finally, in \Cref{section:proofs-rate1}, we present the proofs of the claims in the analysis. 

The quantum half of the construction essentially consists of the concatenation of an outer code $Q_o$, and an inner code $\Tilde{Q}_{in}$. $\Tilde{Q}_{in}$ is the composition of a $\PMD$ code and a stabilizer code of large ``pure" distance (the minimum weight of both normalizers and stabilizers). We will show that this composition has very robust error detection properties, specifically against qubit-wise channels.

We use the classical non-malleable code to ``hide" a short classical secret key, which in turn is used to sample a $t$-wise independent Pauli one-time-pad applied to the quantum code. At a high level, if $t>b$, then on each inner block the OTP is uniformly random, and an analysis similar to the uniformly random case will hold. 

\subsubsection{Ingredients}
\label{section:auth-ingredients-rate-1}

To formally describe the ingredients, we combine:
\begin{enumerate}
    \item A classical non-malleable code $(\Enc_{\NM}, \Dec_{\NM})$ of message length $N_\NM$, rate $r_{\NM} = \Theta(1)$, and error $\epsilon_{\NM}\leq 2^{-\Omega(N_\NM)}$, such as that of \cref{theorem:dpw10} by \cite{Dziembowski2010}. This construction is Monte-Carlo with failure probability $2^{-\Omega(N_\NM)}$.
    
    \item A $t$-wise independent function $G:\{0, 1\}^{|s|}\rightarrow \{0, 1\}^{2N_Q}$, of seed length $|s| = O(t\log N_Q)$.
    \item A $[[n_{out}, k, d_{out}]]_2$ stabilizer code $Q_{out}$ with near-linear distance and pure distance\footnote{Refer to \cref{def:distance} for definitions of distance and pure distance for stabilizer codes.} $d_{out} = \Omega((n_{out}-k)/\log n_{out})$, such as that of \cref{theorem:bqrs} from \cite{Grassl_1999}. In particular we pick rate $1-\frac{1}{\log k}$, s.t. $d_{out}\geq \frac{k}{2\cdot \log^2 k}$.

    \item An $(n_{\PMD}, k_{\PMD}, \epsilon_{\PMD})$-$\PMD$ $\Pi$ from \Cref{theorem:results-pmdexplicit}, where $n_{\PMD} = k_{\PMD}+\lambda$ and $\epsilon_{\PMD}\leq 10\cdot k_{\PMD}\cdot 2^{-\lambda/4}$. In particular, we pick $\lambda = k_{\PMD}/\log k_{\PMD}$.
    
    \item A $[[b, n_{\PMD}, d^*]]_2$ stabilizer code $Q_{in}$ with near-linear distance $d^* = \Omega((b-n_\PMD)/\log b)$, such as \ref{theorem:bqrs} from \cite{Grassl_1999}. In particular, we pick rate $1-1/\log b$, such that $d^* \geq \frac{b}{2\log^2 b}$.
\end{enumerate}

\subsubsection{Encoding and Parameter Choices}
\label{section:encoding_rate_1}

We describe our encoding channel in \Cref{alg:enc_qnm}.

\begin{algorithm}[h]
    \setstretch{1.35}
    \caption{$\Enc$}
    \label{alg:enc_qnm}
    \KwInput{The message state $\ket{\psi}$}

    \KwOutput{The encoded mixed state $\Enc(\psi)$}

    \begin{algorithmic}[1]
    
    \State Encode $\psi$ into $\Enc_{Q_o}(\psi)$, a $[[n_{out}, k, d_{out}]]_2$ QRS code $Q_o$.\

    \State Partition the $n_{out}$ qubits of $Q_o$ into $n=n_{out}/k_{\PMD}$ consecutive blocks of $k_{\PMD}$ qubits.\
    
    \State Encode each block $j\in [n]$ of $Q_o$ into the composition $\Tilde{Q}_j$ of an $(n_\PMD, k_\PMD, \epsilon_\PMD)_2$-$\PMD$ code, and a $[[b, n_{\PMD}, d^*]]$ stabilizer code.
    \begin{equation}
        \Enc_{\Tilde{Q}} = \big(\otimes_{j\in [n]}\Enc_{\Tilde{Q}_j}\big) \big(\Enc_{Q_o}(\psi)\big)
    \end{equation}
    
    \State Sample a secret key $s$, and apply a $b\cdot n$ qubit Pauli OTP $P^{G(s)}$ to $\Enc_{\Tilde{Q}}$. \

    \State Encode $s$ into $\Enc_{\NM}(s)$, a classical bit-wise non-malleable code.
    \begin{equation}
    \Enc(\psi) = \mathbb{E}_s \Enc_{\NM}(s)\otimes P^{G(s)}\cdot \Enc_{\Tilde{Q}} (\psi) \cdot (P^{G(s)})^\dagger
    \end{equation}
    \end{algorithmic}
\end{algorithm}

 Note that to adequate for the appropriate message length of the inner codes, in the concatenation step in \Cref{alg:enc_qnm} we group the qubits of $Q_{out}$ into ``blocks'' of size $k_\PMD$, which are then encoded into the PMD and $Q_{in}$.

 \begin{claim}\label{claim:parameter_choices}
    Let us fix the choice $k_\PMD = n_{out}/\log^c k$, for some positive integer $c>10$, and $t = b\cdot \log k$. Then, the rate of the construction is at least $1-\gamma$, for some $\gamma = \Theta(1/\log k)$.
\end{claim}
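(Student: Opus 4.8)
The plan is a direct accounting of block-lengths. Writing the total block-length as $N = N_Q + N_\NM/r_\NM$, where $N_Q$ is the length of the concatenated quantum code and $N_\NM/r_\NM$ is the length of the classical non-malleable encoding of the seed $s$, I would first control $N_Q$ and then argue that the classical overhead is of strictly lower order than the rate loss already present in the quantum part. Throughout, block-lengths are taken up to standard rounding to integers (and divisibility requirements such as $\lambda \mid n_\PMD$), which only perturbs everything by lower-order terms.

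First I would trace the quantum block-length through the concatenation. The choice $k_\PMD = n_{out}/\log^c k$ means the $n_{out}$ outer qubits split into exactly $n_{out}/k_\PMD = \log^c k$ inner blocks, each of which passes through the PMD ($k_\PMD \mapsto n_\PMD = k_\PMD + \lambda$) and then through $Q_{in}$ ($n_\PMD \mapsto b$). All of $k_\PMD$, $n_\PMD$, $b$, $n_{out}$ lie between $k/\mathrm{poly}(\log k)$ and $O(k)$, so their logarithms are all $\Theta(\log k)$; together with $\lambda = k_\PMD/\log k_\PMD$ and $b = n_\PMD/(1-1/\log b)$ this gives $n_\PMD = k_\PMD(1+\Theta(1/\log k))$ and $b = k_\PMD(1+\Theta(1/\log k))$. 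Multiplying by the $\log^c k$ blocks and using the outer rate $n_{out} = k/(1-1/\log k) = k(1+\Theta(1/\log k))$,
\begin{equation}
    N_Q = \frac{n_{out}}{k_\PMD}\, b = \log^c k \cdot k_\PMD\,(1+\Theta(1/\log k)) = n_{out}(1+\Theta(1/\log k)) = k(1+\Theta(1/\log k)).
\end{equation}

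Next I would check the classical overhead is negligible. The seed length is $|s| = O(t\log N_Q) = O(t\log k)$, and with $t = b\log k$ and $b = \Theta(k/\log^c k)$ this is $|s| = O(k\log^2 k/\log^c k) = O(k/\log^{c-2}k)$; since $r_\NM = \Theta(1)$ the non-malleable codeword has length $N_\NM/r_\NM = \Theta(|s|) = O(k/\log^{c-2}k)$. Because $c > 10$ the exponent $c-2 > 8 > 1$, so this term is $o(k/\log k)$. Hence $N = N_Q + o(k/\log k) = k(1+\Theta(1/\log k))$ and $R = k/N = 1 - \Theta(1/\log k)$. For the matching lower bound on the rate loss, the rate of a concatenated code is at most the rate of its outer code and the non-malleable part only adds redundancy, so $R \le 1 - 1/\log k$ and therefore $\gamma = 1-R = \Theta(1/\log k)$.

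The computation is routine arithmetic; the one point requiring care is that the $t$-wise-independent seed — and hence the ``key encapsulation'' overhead of the non-malleable code — must remain subdominant to the $\Theta(1/\log k)$ loss already incurred by the outer code, the PMD, and the inner code. This is exactly what forces $c$ to be a constant bounded away from small values, and the hypothesis $c > 10$ leaves ample slack (any $c > 3$ would already suffice for the rate bound in isolation). One should also confirm, as used above, that $\log$ of each intermediate block-length ($k_\PMD$, $b$, $n_{out}$, $N_Q$) is $\Theta(\log k)$, which holds since all of them differ from $k$ by at most polylogarithmic factors.
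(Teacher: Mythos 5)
Your proof is correct and follows essentially the same block-length accounting as the paper: track $n_{out}, k_\PMD, n_\PMD, b$ through the concatenation, use that every intermediate length has $\log = \Theta(\log k)$ to get $N_Q = k(1+\Theta(1/\log k))$, and then bound the classical seed/non-malleable overhead by $O(k/\log^{c-2}k) = o(k/\log k)$ for $c>10$. (Your exponent $c-2$ for the seed overhead is in fact the correct one; the paper's inline bound $O(k\log^{1-c}k)$ drops a $\log$ factor, but this is immaterial since $c>10$.)
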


\begin{proof}
Note that $n_{out} = k / (1- 1/\log k)$, and thus $k_\PMD = n_{out}/\log^c k \leq k/\log^c k \cdot (1+2/\log k)$ for sufficiently large $k$. Also, the number of inner blocks is $n = n_{out}/k_\PMD = \log^c k$. 

Each inner block, after encoding into the PMD code and $Q_{in}$, has block-length $b \leq  n_{out}/\log^c k \cdot (1 - 1/\log k_\PMD)^{-2} \leq k/\log^c k \cdot (1+10/\log k)$. Note also that each inner code $Q_{in}$ has distance and pure distance $d \geq \frac{1}{4}\cdot k/\log^{c+2} k$; and the error of each inner PMD code is $\epsilon_{\PMD} \leq 2^{-k/(10\cdot\log^{c+1} k)}$. 

The overall quantum block-length is $N_Q = b\cdot n \leq k\cdot (1+10/\log k)$. Since we pick $t = b\cdot \log k$, the length of the classical seed is $|s| = O(k\cdot \log^{1-c} k)$; and the length of the non-malleable encoding $N_\NM = |s|/r_{\NM} = O(k\log^{1-c} k)$. The total length of the encoding is thus $N=N_Q+N_\NM \leq k\cdot (1+ 13/\log k + O(\log^{1-c} k))$.
\end{proof}

\subsubsection{Decoding}

The first step in our decoding algorithm $\Dec$ (\Cref{alg:dec_qnm}) is to decode the classical information in $C$ using $\Dec_\NM$, obtaining a key $\Tilde{s}$. This decoded key $\Tilde{s}$ is then used to revert the pseudorandom Pauli on the quantum register, introduced in \Cref{alg:enc_qnm}.

We proceed by using the inner codes to detect whether they have been tampered with. They do so with the channel $\Dec_{\Tilde{Q}_{in}}$ (\Cref{alg:dec_qnm_inner}), which measures the syndrome of the stabilizer code, and projects onto the PMD subspace. If all the inner codes accept, then we turn to the outer code. Naturally, if any of these steps rejects, we output $\bot$.

\begin{algorithm}[H]
    \setstretch{1.35}
    \caption{$\Dec$}
    \label{alg:dec_qnm}
    \KwInput{The qubit-wise corrupted code-state $\big(\otimes_{i\in N}\Lambda_i\big)\circ \Enc(\psi)$}

    \KwOutput{The mixed state $\rho_{out} \approx  p_{a}\cdot \psi\otimes \Acc+\bot\cdot (1-p_a)$, for some $p_a\in [0, 1]$}

    \begin{algorithmic}[1]    
    \State Decode the $\NM$ register using $\Dec_\NM$, producing a key $s$. If it rejects, output $\bot$.\

    \State Attempt to revert the pseudorandom Pauli pad by applying $(P^{G(s)})^\dagger$.\
    
    \State Iterate over each inner code $\Tilde{Q}_j$, $j\in [n]$, and apply $\Dec_{\Tilde{Q}_j}$. If any $j\in [n]$ inner codes rejects, output $\bot$. \

    \State Measure the syndrome of the outer code $Q_o$. If the syndrome is non-zero, $\bot$. If it accepts, revert $\Enc_{Q_o}^\dagger$.\
    \end{algorithmic}

\end{algorithm}

\begin{algorithm}[H]
    \setstretch{1.35}
    \caption{$\Dec_{\Tilde{Q}_{in}}$ on the $j$th inner code}
    \label{alg:dec_qnm_inner}
    \KwInput{A $b$-qubit register.}

    \KwOutput{A $k_\PMD$ qubit register. }

    \begin{algorithmic}[1]
    \State Perform a syndrome measurement of the stabilizer code $Q_j$. If the syndrome is non-zero, output $\bot$.\
    
    \State If the syndrome measurement accepts, revert the encoding unitary $\Enc_{Q_j}^\dagger$.\

    \State Project onto the $\PMD$ subspace $\Pi_j$ of the $j$-th inner code. \
    
    \State If it rejects, output $\bot$. If it accepts, revert $\Enc_{\PMD_j}^\dagger$, and output the $k_{\PMD}$ qubit message register.
    \end{algorithmic}

\end{algorithm}

\subsection{Analysis}
\label{section:analysis-rate1}

\textbf{Classical Non-Malleability} First, we point out again that the classical non-malleable code is used as a key-encapsulation mechanism, to establish a shared ``non-malleable" secret key between sender and receiver. In this manner, we can focus the rest of the analysis on whether the decoder faithfully receives the secret key, or if the key looks random to the decoder:

\begin{claim}[\Cref{claim:classical-nm-third}, restatement]\label{claim:classical-nm}
    For all messages $s$, the distribution over the classical outputs of the $\NM$ decoding algorithm is close to a convex combination over the original message $s$, rejection $\bot, $ and an uncorrelated distribution $\mathcal{D}^\Lambda$.
    \begin{equation}
        \|\Dec_\NM \circ \big(\otimes_i \Lambda_i\big)\circ \Enc_\NM(s) - p_{\Lambda, s}\| \leq \epsilon_{\NM}
    \end{equation}

That is, $p_{\Lambda, s}(\Tilde{s}) = p_{\Lambda, \text{Same}}\cdot  \delta_{s, \Tilde{s}} + p_{\Lambda, \bot}\cdot \delta_{\bot, \Tilde{s}} + (1-p_{\Lambda, \text{Same}}-p_{\Lambda, \bot})\cdot \mathcal{D}^\Lambda(\Tilde{s})$.
    
\end{claim}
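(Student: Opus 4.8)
The plan is to reduce the claim entirely to the defining property of classical bit-wise non-malleable codes (\Cref{definition:nm-codes}), by observing that, as far as the $\NM$ register is concerned, a qubit-wise channel followed by the classical decoder $\Dec_\NM$ is nothing more than a product of independent randomized bit-wise tampering functions.

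First I would make this reduction precise. The non-malleable codeword occupies a block of $N_\NM$ qubits, disjoint from the quantum code, carrying the classical string $\Enc_\NM(s)$ in the computational basis; and $\Dec_\NM$, as invoked in \Cref{alg:dec_qnm}, first measures these $N_\NM$ qubits in the computational basis and then applies the classical decoding map. Since each $\Lambda_i$ acts on a single, distinct qubit and the readout is performed coordinate by coordinate, for each $i$ the composition ``apply $\Lambda_i$ to $\ket{a}$, then measure'' defines a randomized function $f_i\colon\{0,1\}\to\{0,1\}$ with $\Pr[f_i(a)=y]=\bra{y}\Lambda_i(\ketbra{a})\ket{y}$, and because the $\Lambda_i$ act on disjoint registers the joint outcome factorizes: the random variable $\Dec_\NM\circ(\otimes_i\Lambda_i)\circ\Enc_\NM(s)$ has exactly the law of the purely classical experiment $\Dec_\NM\circ(\otimes_i f_i)\circ\Enc_\NM(s)$. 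If $\Enc_\NM$ is itself randomized this holds after also averaging over its internal coins, which the definition already accommodates.

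Next I would invoke \Cref{definition:nm-codes} for the bit-wise tampering $f=(f_1,\dots,f_{N_\NM})$: it supplies a distribution $q_f$ on $\{0,1\}^{k_\NM}\cup\{\Rej,\mathrm{same}\}$ depending only on $f$, hence only on $\Lambda$ and not on $s$, with $\Dec_\NM\circ(\otimes_i f_i)\circ\Enc_\NM(s)\approx_{\epsilon_\NM}p_{f,s}$ for every $s$. I would then unpack $p_{f,s}$: set $p_{\Lambda,\mathrm{Same}}=q_f(\mathrm{same})$, $p_{\Lambda,\bot}=q_f(\Rej)$, and let $\mathcal{D}^\Lambda$ be $q_f$ conditioned on landing in $\{0,1\}^{k_\NM}$, a distribution of total mass $1-p_{\Lambda,\mathrm{Same}}-p_{\Lambda,\bot}$. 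Substituting the rules ``$\tilde s=\mathrm{same}\Rightarrow$ output $s$'', ``$\tilde s=\Rej\Rightarrow$ output $\bot$'', ``otherwise output $\tilde s$'' into $p_{f,s}$ yields exactly $p_{\Lambda,s}(\tilde s)=p_{\Lambda,\mathrm{Same}}\,\delta_{s,\tilde s}+p_{\Lambda,\bot}\,\delta_{\bot,\tilde s}+(1-p_{\Lambda,\mathrm{Same}}-p_{\Lambda,\bot})\,\mathcal{D}^\Lambda(\tilde s)$, and the $\epsilon_\NM$ bound in statistical distance is inherited verbatim.

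The only genuine obstacle is the first step: one must verify that routing a classical codeword through qubit-wise channels and a coordinate-wise measurement introduces no correlations between coordinates beyond those already present in $\Enc_\NM(s)$, so that the effective tampering really is of the product form $\otimes_i f_i$ permitted by \Cref{definition:nm-codes}. This is immediate here because the channels act on disjoint single qubits and the $\NM$ register never interacts with the quantum code during either encoding or decoding. A secondary point worth stating explicitly is that $q_f$, and hence the mixture weights $p_{\Lambda,\mathrm{Same}},p_{\Lambda,\bot}$ and the distribution $\mathcal{D}^\Lambda$, is independent of the message $s$; this message-independence is precisely the content of \Cref{definition:nm-codes} and is what makes the recovered key ``non-malleable'' in the sense used in the subsequent analysis.
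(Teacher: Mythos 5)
Your proposal is correct and takes essentially the same route as the paper: reduce the qubit-wise channel plus computational-basis readout to a product of randomized bit-wise tampering functions $f_i$, then invoke \Cref{definition:nm-codes} and unpack $q_f$ into the weights $p_{\Lambda,\mathrm{Same}}$, $p_{\Lambda,\bot}$, and the conditional distribution $\mathcal{D}^\Lambda$. You merely spell out more explicitly the step the paper treats as ``well-known'' (that measuring $\Lambda_i(\ketbra{a})$ yields a randomized function $f_i$ and that disjoint registers keep the tampering product-form), which is a fair and faithful expansion.
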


\textbf{If the classical key is not recovered}, the decoder ``receives" a key $\Tilde{s}$ which is uncorrelated from $s$ (by the non-malleability guarantee). Let us consider the marginal density matrix on any subset $B\subset [n]$ of $\leq t/b$ inner codes, after the qubit-wise channel and after attempting to revert the Pauli pad $(P^{G(\Tilde{s})})^\dagger$. Via the $t$-wise independence of $G(s)$, these marginals are product states:
\begin{equation}
\text{Tr}_{\Bar{B}}\bigg[(P^{G(\Tilde{s})})^\dagger\circ \Lambda\bigg( \mathbb{E}_s\bigg[P^{G(s)}\cdot  \Enc_{\Tilde{Q}}(\psi)\cdot (P^{G(s)})^\dagger \bigg]\bigg)\bigg] = \otimes_{i\in B}  \Lambda_{i, \Tilde{s}}(\mathbb{I}/2) 
\end{equation}

where $\Lambda_{i, \Tilde{s}} = (P^{G(\Tilde{s})})^\dagger_i\circ \Lambda_{i}$. However, product states should be far from the code-spaces of stabilizer states. Using results by \cite{Anshu2020CircuitLB}, we prove in \Cref{claim:not-recovered-LB} that if the key is not recovered, then the probability that the inner code syndrome measurements in $\Dec_{\Tilde{Q}_{in}}$ (\Cref{alg:dec_qnm_inner}) all accept decays exponentially with $t$.

\begin{claim}\label{claim:not-recovered-LB}
    If the key is not recovered, then for any set $B$ of $\leq t/b$ inner blocks, the probability that the syndrome measurement in $\Dec_{\Tilde{Q}_{j}}$ accepts for every $j\in B$ is
    \begin{equation}
        \mathbb{P}\bigg(\text{Accept }B|\text{ not recovered}\bigg) \leq \prod_{j\in B} \Tr\bigg[\Pi_{Q_j} \otimes_{i\in B_j}  \Lambda_{i, \Tilde{s}}(\mathbb{I}/2) \bigg] \leq 2^{-\Omega(t\cdot d^2/b^2)}
    \end{equation}
    where we recall that each $Q_i$ is a stabilizer code of block-length $b$ and distance $d$ .
\end{claim}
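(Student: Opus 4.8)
I would assemble three ingredients: the ``local encryption'' observation sketched just above the claim, the fact that syndrome checks on disjoint inner blocks behave independently, and the product-state fidelity bound of \cite{Anshu2020CircuitLB}. The whole argument takes place in the ``ideal'' picture where the non-malleable decoder behaves \emph{exactly} as the convex combination of \Cref{claim:classical-nm}, with the $\epsilon_\NM$ error carried separately exactly as in the rate-$1/3$ analysis. The first observation is that, conditioned on the event ``not recovered'' (the third branch of $p_{\Lambda,s}$), the seed $s$ remains uniform and the decoder's key $\Tilde{s}$ is drawn from $\mathcal{D}^\Lambda$ \emph{independently} of $s$, because on that branch both the event and $\Tilde{s}$ are determined by the channel alone. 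Hence, after decoding the classical register and applying $(P^{G(\Tilde{s})})^\dagger$ in \Cref{alg:dec_qnm} and tracing out the purification, the $N_Q$-qubit state before the inner syndrome checks is, for each fixed $\Tilde{s}$,
\begin{equation}
    \rho \;=\; (P^{G(\Tilde{s})})^\dagger \circ \Lambda \circ \Big(\mathbb{E}_s\, P^{G(s)}\,\Enc_{\Tilde{Q}}(\psi)\,(P^{G(s)})^\dagger\Big).
\end{equation}

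\textbf{Step 1 (reduce to a marginal on $B$).} Fix a set $B$ of at most $t/b$ inner blocks, with qubit sets $B_j$. In $\Dec_{\Auth}$ the inner codes are processed sequentially via $\Dec_{\Tilde{Q}_j}$ (\Cref{alg:dec_qnm_inner}), whose first step projects onto the syndrome-zero space $\Pi_{Q_j}$ of the block-$j$ stabilizer code and whose later steps (the $\PMD$ decode) act only on block $j$. Since operations on distinct blocks commute as superoperators, and the $\PMD$-decodes and the decoding of the blocks outside $B$ are trace non-increasing, I would bound the probability that every syndrome check for $j\in B$ accepts by $\Tr\big[(\otimes_{j\in B}\Pi_{Q_j})\,\rho\big]=\Tr\big[(\otimes_{j\in B}\Pi_{Q_j})\,\rho_B\big]$, where $\rho_B=\Tr_{\overline B}[\rho]$.

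\textbf{Step 2 ($\rho_B$ is a product state).} This is the heart of the proof and the only place $t$-wise independence enters. The blocks in $B$ together span at most $|B|\cdot b\le t$ qubits. Because $\Lambda=\otimes_i\Lambda_i$ and both pads factor qubit-wise, the partial trace $\Tr_{\overline B}$ commutes through $\Lambda$ and $P^{G(\Tilde{s})}$ and absorbs the pad factors on $\overline B$; what remains is $\mathbb{E}_s$ of the pad restricted to the $\le t$ qubits of $B$ acting on $\Tr_{\overline B}[\Enc_{\Tilde{Q}}(\psi)]$. By $t$-wise independence this averaged pad is the completely depolarizing channel on those qubits, so the partial trace equals $\mathbb{I}/2^{|B|b}$, and therefore
\begin{equation}
    \rho_B \;=\; \bigotimes_{j\in B}\Big(\bigotimes_{i\in B_j}\Lambda_{i,\Tilde{s}}(\mathbb{I}/2)\Big),\qquad \Lambda_{i,\Tilde{s}}:=(P^{G(\Tilde{s})})^\dagger_i\circ\Lambda_i,
\end{equation}
a product state across all qubits. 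Since $\rho_B$ and the projector $\otimes_{j\in B}\Pi_{Q_j}$ both factor over blocks, $\Tr[(\otimes_{j\in B}\Pi_{Q_j})\rho_B]=\prod_{j\in B}\Tr[\Pi_{Q_j}\otimes_{i\in B_j}\Lambda_{i,\Tilde{s}}(\mathbb{I}/2)]$, which is the first inequality of the claim.

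\textbf{Step 3 (bound each factor and multiply).} Each factor $\Tr[\Pi_{Q_j}\otimes_{i\in B_j}\Lambda_{i,\Tilde{s}}(\mathbb{I}/2)]$ is the squared fidelity of a product state with the code-space of $Q_j$, a stabilizer code of block-length $b$ and distance $d$. The lemma of \cite{Anshu2020CircuitLB} quoted above bounds this by $2\cdot 2^{-d^2/(2^{12}b)}$ for \emph{pure} product states; to extend it to the mixed product state $\rho_B$ I would diagonalize each single-qubit factor $\Lambda_{i,\Tilde{s}}(\mathbb{I}/2)$ in its eigenbasis, write $\rho_B$ as a convex combination of pure product states, and average the bound. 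Multiplying over the $|B|\le t/b$ blocks gives $\Tr[(\otimes_{j\in B}\Pi_{Q_j})\rho_B]\le\big(2\cdot 2^{-d^2/(2^{12}b)}\big)^{|B|}=2^{|B|(1-d^2/(2^{12}b))}$; since the chosen inner parameters give $d=\Omega(b/\log^2 b)$, so that $d^2/b\to\infty$, the $2^{|B|}$ is absorbed and with $|B|=t/b$ this is $2^{-\Omega(t d^2/b^2)}$, as claimed.

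\textbf{Main obstacle.} The substance is in Steps 1--2: one must argue that the marginal on \emph{any} $\le t/b$ inner blocks is a genuine product state even though $\Enc_{\Tilde{Q}}(\psi)$ is globally highly entangled — this is exactly the ``local encryption'' provided by the $t$-wise-independent pad, and the constraint $|B|\le t/b$ is precisely what makes the restricted pad uniform — and one must check that the interleaved $\PMD$-decodes and the decoding of the other inner blocks cannot \emph{raise} the acceptance probability of the $B$-syndrome checks (commutation of disjoint-block operations plus trace non-increase of the $\PMD$ filters). Granting this, Step 3 is routine: the product-state lower bound of \cite{Anshu2020CircuitLB} together with an elementary pure-to-mixed convexity argument.
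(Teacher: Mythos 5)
Your proposal is correct and follows essentially the same route as the paper: the paper's proof likewise invokes $t$-wise independence to obtain a product structure across the $\le t/b$ blocks, notes (as you do via convexity/diagonalization) that the acceptance probability is maximized on pure product states, and then applies the \cite{Anshu2020CircuitLB} product-state fidelity bound to each block and multiplies. Your Steps 1--2 simply spell out the independence and reduction-to-marginal details that the paper compresses into a single sentence, and your remark about $|B|=t/b$ being the case that yields the stated exponent matches the paper's implicit usage.
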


\textbf{If the classical key is recovered}, then the $t$-wise independent Pauli-one-time-pad is shared between Alice and Bob. To address Bobs decoding, we essentially divide into cases on whether each block is significantly tampered, or whether each tampering channel $\Lambda_i$ is ``sparse" in the Pauli basis. The key technical definition to threshold between these two cases for quantum channels is the following notion of an $\eta$-Pauli qubit channel:

\begin{definition}\label{definition:eta-pauli}
    Let $\Lambda$ be a single qubit channel with Krauss operators $\{K_\mu\}_\mu$, and let $K_\mu = \sum_\sigma c_\sigma^\mu \sigma$ be a Pauli basis decomposition of each $K_\mu$. Then, we refer to $\Lambda_i$ as $\eta$-Pauli if there exists a Pauli $E$ such that

    \begin{equation}
        \sum_\mu |c^\mu_E|^2\geq 1-\eta
    \end{equation}
\end{definition}

Although we only pick $\eta$ later, consider $\eta = \Theta(1/\text{polylog}(k))$ for now.

If there are few $\eta$-Pauli channels in any given block $j$, then, we reason that the approximate quantum error detection properties of each inner code $\Tilde{Q}_j$ (comprised of a stabilizer code and a PMD code) should reject with high probability. In the below, we consider $d^*$ to be the ``pure" distance of the stabilizer code $Q_j$, i.e. the minimum weight of any normalizer element (including the stabilizers themselves). For the binary QRS codes \cite{Grassl_1999} $d^*$ is simply the normal code distance $d$. 

\begin{claim}\label{claim:not-many-eta}
    Assume there exists a subset $B$ of $\geq t/b$ inner $[[b, k_{in}, d^*]]$ stabilizer codes, where for each inner code $j\in B$ at most $(1-\delta)\cdot b$ of the qubit channels $\Lambda_i$, $i\in B_j$ are $\eta$-Pauli. Then, $\Dec_{\Tilde{Q}_{j}}$ accepts for all $j\in B$  with probability bounded by
    \begin{equation}
        \mathbb{P}\bigg(\text{Accept }B|\text{ key recovered}\bigg)  \leq (\epsilon^2_\PMD + (1-\eta)^{\min(d^*, \delta \cdot b)})^{t/b}
    \end{equation}
\end{claim}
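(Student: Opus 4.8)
The plan is to reduce the claim to a single-block statement and then multiply the failure probabilities across the $\geq t/b$ blocks in $B$, using the fact that the key is recovered (so that within each block the marginal channel, after reverting the $t$-wise independent OTP, is exactly a Pauli twirl on a random $b$-qubit Pauli since $b < t$). First I would fix an inner block $j \in B$ and argue that conditioned on the key being recovered, the effective channel seen on that block's $b$ qubits is $\mathbb{E}_{s}\,(P^{G(s)})^\dagger_j \circ \bigl(\otimes_{i \in B_j}\Lambda_i\bigr) \circ P^{G(s)}_j$, which by $t$-wise independence ($t \geq b\log k > b$) equals the exact Pauli twirl $\sum_E |c_E|^2\, E (\cdot) E^\dagger$ for some probability distribution $\{|c_E|^2\}$ over $b$-qubit Paulis $E$. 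This is the same mechanism used in \Cref{claim:third-auth-recovered}, but localized to a block.

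Next I would show that for the fixed block $j$, the probability $\Dec_{\Tilde{Q}_j}$ accepts is at most $\epsilon_\PMD^2 + (1-\eta)^{\min(d^*,\,\delta b)}$. The acceptance probability equals $\sum_{E \in N(Q_j)} |c_E|^2 \cdot (\text{prob. the }\PMD\text{ also accepts }E)$: only Paulis $E$ in the normalizer $N(Q_j)$ survive the stabilizer syndrome measurement, and among those, the stabilizers $S(Q_j)$ act trivially on the code-state while the logical elements $N(Q_j)\setminus S(Q_j)$ get pushed onto the $\PMD$ message register as nontrivial Paulis, hence are detected except with probability $\epsilon_\PMD^2$ by \Cref{claim:pmdapproximation}/\Cref{def:PMD2}. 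So the acceptance probability is at most $\sum_{E \in S(Q_j)} |c_E|^2 + \epsilon_\PMD^2$. The remaining task is to bound $\sum_{E \in S(Q_j)} |c_E|^2$: any nonzero $\sigma \in S(Q_j)$ has weight $\geq d^*$ (pure distance), and for the twirl distribution to place weight on a Pauli $E$ supported on $\geq d^*$ qubits, at least $\min(d^*, \delta b)$ of the qubit channels $\Lambda_i$, $i \in B_j$ acting on $\mathrm{supp}(E)$ must each contribute their non-identity Pauli component — but by hypothesis at least $\delta b$ of the channels in block $j$ are not $\eta$-Pauli, meaning each of them has $\sum_\mu |c^\mu_\sigma|^2 \le 1-\eta$ for every fixed Pauli $\sigma$; a union/product estimate over the Kraus decomposition then bounds $\sum_{E \in S(Q_j) \setminus \{\mathbb{I}\}} |c_E|^2$ by $(1-\eta)^{\min(d^*,\delta b)}$, and the identity contributes into the ``$\psi$ is preserved'' part, not the failure. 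Combining, block $j$ accepts with probability $\le \epsilon_\PMD^2 + (1-\eta)^{\min(d^*,\delta b)}$.

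Finally, since the $n$ inner codes are encoded into disjoint qubit registers and the qubit-wise channel acts independently on each, and since $G(s)$ is $t$-wise independent with $t/b$ blocks fitting inside any window of $t$ qubits, the acceptance events across the blocks $j \in B$ with $|B| \geq t/b$ are independent conditioned on the recovered key, so the joint acceptance probability is the product over $j \in B$, giving $(\epsilon_\PMD^2 + (1-\eta)^{\min(d^*,\delta b)})^{t/b}$. The main obstacle I expect is the combinatorial bookkeeping in the second step: carefully justifying that the Pauli-twirl weight $\sum_{E \in S(Q_j)} |c_E|^2$ on weight-$\geq d^*$ stabilizer elements is genuinely suppressed by the non-$\eta$-Pauli channels — this requires decomposing each Kraus operator $K_\mu^{(i)} = \sum_\sigma c_\sigma^{\mu,i}\sigma$, expanding the twirl coefficient $|c_E|^2$ as a sum over Kraus indices and tensor factors, and showing that every term in this expansion for $E$ of large support picks up a factor $\le (1-\eta)$ from each of the $\ge \min(d^*,\delta b)$ ``bad'' coordinates, all while keeping track of which subset of qubits carries $E$'s support. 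One also needs to be slightly careful about whether the support of $E$ must hit the bad channels — if $\mathrm{supp}(E) \geq d^*$ but all bad channels lie outside it, that cannot happen when there are $\ge \delta b$ bad channels and $E$ is supported on $\ge d^*$ of them is replaced by the cleaner statement that $|\mathrm{supp}(E) \cap \{\text{bad channels}\}| \ge d^* + \delta b - b = \min$-type bound, which is where the $\min(d^*, \delta b)$ exponent comes from.
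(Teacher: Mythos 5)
Your high-level strategy matches the paper's: twirl each block (legal because $t > b$ and the key is recovered), split the surviving Paulis into $S(Q_j)$ versus $N(Q_j)\setminus S(Q_j)$, let the PMD handle the logical part with error $\epsilon_\PMD^2$, bound the stabilizer part by a packing argument, and raise to the power $t/b$ using $t$-wise independence across blocks. The structure is right; the gap is in the packing step.

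The crux of the packing argument is to show $\sum_{\mu}\sum_{F\in S(Q_j)}|c_F^\mu|^2 \leq (1-\eta)^{\min(d^*,\delta b)}$ whenever at least $\delta b$ of the single-qubit channels in block $j$ fail to be $\eta$-Pauli. You try to derive the exponent from the weight of $F$: every nonzero stabilizer has weight $\geq d^*$, so you argue that its support must intersect the bad channels in at least $d^*+\delta b-b$ positions, each such position contributing a factor $\leq 1-\eta$ because the non-$\eta$-Pauli channel's non-identity coefficient is small. This does not yield $\min(d^*,\delta b)$; the quantity $d^*+\delta b-b$ is negative whenever $d^*+\delta b<b$, in which case your bound degenerates to $1$. (Take, say, $d^*=\delta b=b/4$.) You also separate the identity $F=\mathbb{I}$ and declare it contributes to the ``$\psi$ is preserved'' branch rather than to the acceptance probability — but the claim bounds the \emph{total} acceptance probability, including the accept-and-state-preserved branch, so $\sum_\mu|c_{\mathbb{I}}^\mu|^2$ must be suppressed too and your weight-based argument cannot suppress it.

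Both problems have a common fix that is exactly what \Cref{claim:packing-not-eta-pauli} exploits: a channel that fails to be $\eta$-Pauli has $\sum_\mu|c_\sigma^\mu|^2 < 1-\eta$ for \emph{every} single-qubit Pauli $\sigma$, the identity included. So you may pick any subset $T'$ of $\min(\delta b, d^*)$ of the bad coordinates (not necessarily inside $\operatorname{supp}(F)$) and peel off a factor $(1-\eta)$ from each, uniformly in $F$ and even for $F=\mathbb{I}$. After doing this, the pure distance is used not to bound $\operatorname{supp}(F)\cap T'$ but to bound the \emph{number} of stabilizers: since $|[b]\setminus T'| \geq b-d^*$, the restriction of $S(Q_j)$ to $[b]\setminus T'$ is injective, so the sum over $F\in S(Q_j)$ of the residual product $\prod_{i\notin T'}\sum_{\mu_i}|c_{F_i}^{\mu_i}|^2$ telescopes to at most $1$. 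That gives the clean $(1-\eta)^{\min(d^*,\delta b)}$ bound covering all of $S(Q_j)$ at once, and the rest of your proof then goes through as written.
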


Conversely, if there are many $\eta$-Pauli channels in any given block $j$, then we reason that $\Tilde{Q}_j$ should detect the presence of errors without distorting the state.

\begin{claim}\label{claim:many_eta}
    Assume there exists a subset $B$ of $ \geq n-t/b$ inner codes, where for each inner code $j\in B$ at least $b-d^*$ of the qubit channels $\Lambda_i$, $i\in B_j$ are $\eta$-Pauli; assume $\epsilon_\PMD\cdot 2^{8\eta  \cdot b} < 1 /(n\cdot 10^2)$. Then, if $\Dec_{\Tilde{Q}_{j}}$ accepts on every $j\in B$, 
    \begin{equation}
        \|\otimes_{j\in B} [\Pi_{\Tilde{Q}_j} \circ \Lambda_{B_j}\circ \Enc_{\Tilde{Q}_j}] (\rho) - \text{const }\cdot \otimes_{j\in B}\Enc_{\Tilde{Q}_j} (\rho)\|_1\leq  10^2\cdot n\cdot \epsilon_\PMD\cdot 2^{8\eta \cdot b}
    \end{equation}
\end{claim}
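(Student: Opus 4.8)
The plan is to prove \Cref{claim:many_eta} in four moves: (i) reduce the tensor statement to a single inner code by a hybrid argument; (ii) conjugate the physical qubit-wise channel through the Clifford encoder $\Enc_{Q_j}$ so that, after the syndrome check, only Pauli Kraus components lying in the normalizer $N(Q_j)$ survive; (iii) isolate the "dominant" logical operator picked out by the dominant Paulis $E_i$ of the $\eta$-Pauli channels, using the pure distance $d^*$ to show it is essentially unique; and (iv) apply the $\PMD$ inequality to the remaining residual logical operators, whose total Pauli-$\ell_1$ weight I will bound using the $\eta$-Pauli hypothesis. Concretely, fix a block $j\in B$ with inner code $\Tilde{Q}_j = \Enc_{Q_j}\circ \Enc_{\PMD_j}$, $Q_j$ of pure distance $d^*$ and $\PMD$ projector $\Pi_j$, and let $S\subset B_j$, $|S|\le d^*$, be the qubits whose channel is \emph{not} $\eta$-Pauli. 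For $i\notin S$ write a single-qubit Kraus operator as $K^i_\mu = c^{i,\mu}_{E_i}E_i + R^i_\mu$ with $\sum_\mu |c^{i,\mu}_{E_i}|^2\ge 1-\eta$; from $\sum_\mu K_\mu^\dagger K_\mu = \mathbb{I}$ (taking traces) one gets $\sum_\mu\sum_\sigma |c^{i,\mu}_\sigma|^2 = 1$, hence $\sum_\mu\sum_{\sigma\neq E_i}|c^{i,\mu}_\sigma|^2\le \eta$. Expanding $K_\mu = \otimes_i K^i_{\mu_i}$ in the Pauli basis as $\sum_P c^\mu_P P$ with $c^\mu_P = \prod_i c^{i,\mu_i}_{\sigma_i}$, conjugating $\Lambda_{B_j}$ by $\Enc_{Q_j}$ and projecting onto syndrome $0$ turns the map on the $\PMD$ register into $\rho\mapsto \sum_\mu \Pi_j M_\mu \rho M_\mu^\dagger \Pi_j$, where $M_\mu$ is a combination of logical classes $L$ with coefficient $d^\mu_L$ obtained by summing $c^\mu_P$ over the $P\in N(Q_j)$ of class $L$.

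\textbf{The structural heart.} The point where the pure distance enters is: since $|S|<d^*$, any two Paulis in $N(Q_j)$ agreeing with $E_i$ on every $i\notin S$ differ by a nonidentity element of $N(Q_j)$ of weight $\le |S|<d^*$, which is impossible; so there is at most one such $P^\star\in N(Q_j)$, and more generally a good-qubit pattern has at most one $S$-completion in $N(Q_j)$. Writing $M_\mu = (c^\mu_{P^\star}\cdot\mathrm{phase})\,L^\star + N_\mu$, the dominant term contributes $\mathbb{I}$ exactly when $L^\star = \mathbb{I}$ (and then $c := \sum_\mu |\,\mathrm{coeff}_{\mathbb I}(M_\mu)\,|^2$ is the accept constant), while $N_\mu$ — together with $P^\star$ when $L^\star\neq\mathbb{I}$ — consists only of \emph{nontrivial} logical operators, each contracted by $\Pi_j$ to operator norm $\le \epsilon_\PMD$. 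Using $(\Pi_j\otimes\mathbb{I}_R)\rho = \rho$ and $\|\Pi_j L\Pi_j\|_\infty\le\epsilon_\PMD$ for $L\neq\mathbb{I}$, I get $\|\Pi_j M_\mu\rho M_\mu^\dagger\Pi_j - |\mathrm{coeff}_{\mathbb I}|^2\rho\|_1 \lesssim \epsilon_\PMD(|\mathrm{coeff}_{\mathbb I}|\cdot\|N_\mu\|_{\ell_1} + |c^\mu_{P^\star}|) + \epsilon_\PMD^2(\|N_\mu\|_{\ell_1}^2 + |c^\mu_{P^\star}|^2)$; summing over $\mu$ and applying Cauchy–Schwarz (with $\sum_\mu|\mathrm{coeff}_{\mathbb I}|^2 = c\le 1$ and $\sum_\mu |c^\mu_{P^\star}|^2\le 1$) bounds the single-block error by $O(\epsilon_\PMD)\cdot(1+\sum_\mu\|N_\mu\|_{\ell_1}^2)^{1/2}$. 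The remaining task is to bound $\sum_\mu\|N_\mu\|_{\ell_1}^2$. Since $\|N_\mu\|_{\ell_1} = \sum_L|d^\mu_L| \le \sum_{P\in N(Q_j):\, P|_{[b]\setminus S}\neq\otimes E_i}|c^\mu_P|$ and each such $P$ is determined by its good-qubit restriction (unique $S$-completion), the sum factorizes: the bad qubits contribute a bounded factor (bounding $\prod_{i\in S}|c^{i,\mu_i}_{\tau_i}|\le\prod_{i\in S}\|K^i_{\mu_i}\|_\infty$ and using $\sum_{\mu_i}\|K^i_{\mu_i}\|_\infty^2\le 2$), and the good qubits contribute $\sum_{\{\mu_i\}_{i\notin S}}\big[\prod_{i\notin S}\sum_\sigma|c^{i,\mu_i}_\sigma| - \prod_{i\notin S}|c^{i,\mu_i}_{E_i}|\big]^2$. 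Applying $(x-y)^2\le x^2 - y^2$ for $0\le y\le x$ together with the single-qubit estimates $\sum_{\mu_i}(\sum_\sigma|c^{i,\mu_i}_\sigma|)^2\le 1+O(\sqrt\eta)$ and $\sum_{\mu_i}|c^{i,\mu_i}_{E_i}|^2\ge 1-\eta$, this good-qubit factor is at most $(1+O(\sqrt\eta))^b - (1-\eta)^b$, which (together with $|S|\le d^*$ and the parameter coupling of \Cref{claim:parameter_choices}) is absorbed into the factor $2^{8\eta b}$ of the statement. Hence the per-block error is $\le 10\cdot\epsilon_\PMD\cdot 2^{8\eta b}$.

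\textbf{Tensorization and the main obstacle.} All of the above is valid against an arbitrary reference $R$ entangled with the message, since the $\PMD$ inequality survives tensoring: $\|(\Pi_j L\Pi_j)\otimes\mathbb{I}_R\|_\infty = \|\Pi_j L\Pi_j\|_\infty \le \epsilon_\PMD$; so the single-block bound is really a diamond-norm statement $\|\Pi_{\Tilde{Q}_j}\circ\Lambda_{B_j}\circ\Enc_{\Tilde{Q}_j} - c_j\,\Enc_{\Tilde{Q}_j}\|_\diamond \le 10\,\epsilon_\PMD\,2^{8\eta b}$. Since each of these per-block maps is completely positive and trace non-increasing (so has diamond "norm" $\le 1$) and $c_j\le 1$, a telescoping hybrid that replaces one block at a time gives $\|\otimes_{j\in B}[\,\cdot\,] - (\prod_{j\in B}c_j)\otimes_{j\in B}\Enc_{\Tilde{Q}_j}\|_\diamond \le |B|\cdot 10\,\epsilon_\PMD\,2^{8\eta b} \le 10^2\cdot n\cdot\epsilon_\PMD\cdot 2^{8\eta b}$ after loosening constants, which is the claim (with $\mathrm{const} = \prod_{j\in B}c_j$). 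The hard part will be the estimate of $\sum_\mu\|N_\mu\|_{\ell_1}^2$: the naive route incurs a $2^{\Theta(b)}$ blow-up either from the "bad" qubits or from the Kraus index, and avoiding it is exactly what forces me to use the pure-distance property (to pin down the unique $N(Q_j)$-completion of a good-qubit pattern, killing the would-be $4^{|S|}$ factor down to a harmless one) and a careful qubit-by-qubit $\eta$-Pauli estimate (keeping every good qubit's multiplicative contribution within $1+O(\sqrt\eta)$). A secondary subtlety, already flagged above, is that the per-block inequality must be proved in diamond norm for the block-by-block hybrid to be sound, rather than merely for product inputs.
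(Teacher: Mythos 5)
Your proposal follows the paper's route closely: decompose each Kraus operator in the Pauli basis, group by logical cosets of $N(Q_j)$ modulo stabilizers, apply the $\PMD$ bound $\|\Pi_j L \Pi_j\|_\infty\le\epsilon_\PMD$ to each nontrivial logical class, invoke the pure distance of $Q_j$ to obtain the ``unique completion'' of a good-qubit pattern, and then tensorize by a one-block-at-a-time hybrid. This is precisely the structure of \Cref{claim:packing-eta-pauli} (the packing lemma) feeding into the paper's proof of \Cref{claim:many_eta}. You are also right that the per-block estimate has to hold against an arbitrary reference register --- a point the paper's hybrid argument glosses over but which is indeed saved by the fact that $\|(\Pi_j L\Pi_j)\otimes\mathbb{I}_R\|_\infty=\|\Pi_j L\Pi_j\|_\infty$.

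There is, however, one concrete estimate in your argument that does not close. For the bad qubits $i\in S$ you use $\max_\tau|c^{i,\mu_i}_\tau|\le\|K^i_{\mu_i}\|_\infty$ together with $\sum_{\mu_i}\|K^i_{\mu_i}\|_\infty^2\le 2$, which (after the product over $i\in S$) yields a multiplicative factor of $2^{|S|}\le 2^{d^*}$. With the paper's parameter choices (\Cref{claim:parameter_choices}: $d^*\gtrsim b/\log^2 b$ while $\eta=1/\log^3 k$, so $8\eta b\lesssim b/\log^3 k$) one has $d^*\gg 8\eta b$, and this factor cannot be ``absorbed into $2^{8\eta b}$'' as you assert. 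The paper's \Cref{claim:packing-eta-pauli} instead bounds each bad-qubit contribution by $\max_{\sigma_i}|c^{\mu_i}_{\sigma_i}|\le\bigl(\sum_{\sigma_i}|c^{\mu_i}_{\sigma_i}|^2\bigr)^{1/2}$, so that after summing over $\mu_i$ the CPTP normalization $\sum_{\mu_i,\sigma_i}|c^{\mu_i}_{\sigma_i}|^2=1$ makes the bad-qubit product exactly $1$, not $2^{|S|}$. Replacing your operator-norm estimate by this Schatten-$2$ estimate is the fix. Separately, your good-qubit per-qubit bound $1+O(\sqrt\eta)$ is actually the correct Cauchy--Schwarz estimate, and it seems the paper's displayed ``$(1+2\eta)^2$'' has an $\eta$-vs-$\sqrt\eta$ typo; this does not affect the downstream parameter conclusions but does mean the literal exponent ``$8\eta b$'' in the claim should really read something like ``$8\sqrt\eta\, b$.'' Finally, the explicit separation of a dominant $P^\star$ and $L^\star$ is extra bookkeeping: the paper simply extracts the identity-coset coefficient $\alpha_j=\sum_\mu|a^\mu_{\mathbb{I}}|^2$ and contracts everything else by $\epsilon_\PMD$, which gives the same bound with less case analysis.
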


\Cref{claim:many_eta} consists essentially of an application of the Gentle measurement lemma to each inner block in $B$, in parallel. For clarity, in the statement above we have omitted writing the channels acting outside $B$ (as they commute with those on $B$).

\textbf{Putting Everything Together}

We can now put everything together. In \Cref{claim:all_together}, we show how to combine all these cases to guarantee that Bob always either recovers a state near the original message, or rejects. Then, in \Cref{corollary:parameter_choices_together} we show that the parameter choices in \Cref{claim:parameter_choices} ensure an exponentially decaying error. 

\begin{claim}\label{claim:all_together}

Assume $\epsilon_\PMD\cdot 2^{8\eta \cdot b} < 1 /(n\cdot 10^2)$ and $t<\frac{1}{2}\cdot d_{out}$. For any state $\psi = \psi_{AR}$ held by Alice and some purification $R$ which does not go through the channel, Bob receives a state close to a convex combination of the original message and rejection:
\begin{equation}
    \|\Dec\circ \Lambda \circ \Enc(\psi) - \rho'\|_1 \leq 100n\cdot \epsilon_\PMD\cdot 2^{8\eta \cdot b} + 2^{-\Omega(t\cdot d^2/b^2)} + 2\cdot (\epsilon^2_\PMD + 2^{-\eta d^*})^{t/b} + \epsilon_{\NM}
\end{equation}

Where $\rho' = c_\Lambda \cdot \psi\otimes\Acc + (1-c_\Lambda)\cdot \bot$, for some real positive constant $c_\Lambda$.
    
\end{claim}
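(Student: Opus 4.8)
The plan is to condition on the output of the classical non-malleable decoder and handle the three resulting branches — key faithfully recovered, key rejected, key uncorrelated — separately, absorbing the error of this reduction into $\epsilon_\NM$. Since the $N_\NM$ qubits carrying $\Enc_\NM(s)$ are disjoint from the $N_Q$ qubits carrying the quantum code and $\Lambda=\otimes_i\Lambda_i$ acts qubit-wise, $\Dec_\NM$ sees only the channels restricted to the classical register, so \Cref{claim:classical-nm} gives that the joint law of $(s,\widetilde s)$ — with $\widetilde s$ the decoded key and $s$ the uniform secret key — is within $\epsilon_\NM$ of the law: $s$ uniform, then $\widetilde s = s$ with probability $p_{\Lambda,\text{Same}}$, $\widetilde s=\bot$ with probability $p_{\Lambda,\bot}$, and $\widetilde s \leftarrow \mathcal D^\Lambda$ independently of $s$ otherwise, all $p$'s depending only on $\Lambda$. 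It then suffices to bound, up to this $\epsilon_\NM$, the output of the remaining (quantum) part of $\Dec_\Auth$ — un-pad with $(P^{G(\widetilde s)})^\dagger$, run \Cref{alg:dec_qnm_inner} on each inner block, then decode $Q_o$ — averaged over $s$, in the recovered branch ($\widetilde s=s$) and in the uncorrelated branch ($\widetilde s\leftarrow\mathcal D^\Lambda$); the rejected branch already outputs $\bot$.

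\textbf{Uncorrelated branch.} Here $\widetilde s$ is independent of $s$, so $t$-wise independence of $G$ makes the marginal on the qubits of any $\le t/b$ inner blocks, after the channel and after $(P^{G(\widetilde s)})^\dagger$, the product state $\otimes_i\Lambda_{i,\widetilde s}(\mathbb I/2)$ displayed just before \Cref{claim:not-recovered-LB}. I would fix any $t/b$ of these blocks and apply \Cref{claim:not-recovered-LB}: their inner syndrome measurements in \Cref{alg:dec_qnm_inner} jointly accept with probability $\le 2^{-\Omega(t d^2/b^2)}$, and any rejection forces output $\bot$, so this branch is within $2^{-\Omega(t d^2/b^2)}$ of $\bot$.

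\textbf{Recovered branch.} Now the $t$-wise independent Pauli pad is shared. I would call an inner block $j$ \emph{bad} if more than $d^*$ of the single-qubit channels $\Lambda_i$, $i\in B_j$, fail to be $\eta$-Pauli (\Cref{definition:eta-pauli}) and \emph{good} otherwise, taking $\delta=(d^*+1)/b$ in \Cref{claim:not-many-eta} so that $\min(d^*,\delta b)=d^*$ and the two classes partition all $n$ blocks. If there are at least $t/b$ bad blocks, \Cref{claim:not-many-eta} bounds the probability that \Cref{alg:dec_qnm_inner} accepts on all of them by $(\epsilon_\PMD^2+(1-\eta)^{d^*})^{t/b}\le(\epsilon_\PMD^2+2^{-\eta d^*})^{t/b}$, so this branch is that close to $\bot$. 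Otherwise fewer than $t/b$ blocks are bad, leaving a set $B$ of good blocks with $|B|\ge n-t/b$; one then applies \Cref{claim:many_eta} (its hypothesis $\epsilon_\PMD\cdot 2^{8\eta b}<1/(100n)$ being assumed) to conclude that, conditioned on \Cref{alg:dec_qnm_inner} accepting on every $j\in B$, reverting the inner encoders leaves the $Q_o$-register within $100n\,\epsilon_\PMD\,2^{8\eta b}$ of a fixed multiple of $\Enc_{Q_o}(\psi)$ on the good-block qubits, with an arbitrary CP map applied on the fewer than $t/b$ bad blocks, i.e. on fewer than $(t/b)\,k_\PMD<t<\tfrac12 d_{out}$ qubits of $Q_o$ (using $b>k_\PMD$ and the hypothesis $t<\tfrac12 d_{out}$). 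Decomposing that residual map in the Pauli basis, every term has weight below $d_{out}$, so the $Q_o$ syndrome measurement either rejects (output $\bot$) or projects onto a stabilizer term that acts trivially on code-states; hence $\Dec_\Auth$ outputs $c_{1,\Lambda}\cdot\psi\otimes\Acc+(1-c_{1,\Lambda})\cdot\bot$ up to the same error, for a constant $c_{1,\Lambda}$ ($=0$ in the many-bad-blocks subcase). So the recovered branch is within $\max\!\big(100n\,\epsilon_\PMD\,2^{8\eta b},\ (\epsilon_\PMD^2+2^{-\eta d^*})^{t/b}\big)$ of $c_{1,\Lambda}\cdot\psi\otimes\Acc+(1-c_{1,\Lambda})\cdot\bot$.

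\textbf{Assembling, and the hard part.} Combining the three branches by the triangle inequality, bounding each $p$ by $1$, and setting $c_\Lambda=p_{\Lambda,\text{Same}}\cdot c_{1,\Lambda}$, I would obtain $\Dec_\Auth\circ\Lambda\circ\Enc_\Auth(\psi)$ within $\epsilon_\NM+100n\,\epsilon_\PMD\,2^{8\eta b}+2(\epsilon_\PMD^2+2^{-\eta d^*})^{t/b}+2^{-\Omega(t d^2/b^2)}$ of $\rho'=c_\Lambda\cdot\psi\otimes\Acc+(1-c_\Lambda)\cdot\bot$, which is the claimed bound with the factor $2$ leaving slack for bookkeeping. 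I expect the main obstacle to be the last subcase: \Cref{claim:many_eta} is stated only as a subnormalized-state bound on the register $B$, with the channels outside $B$ suppressed, so one must carefully compose its per-good-block guarantee into a single bounded-weight but \emph{non-Pauli} error on the outer code and push that through the $Q_o$ syndrome measurement by the Pauli-basis argument, all while keeping the various constant multiples and normalizations coherent across the three branches; the parameter inequalities $(t/b)k_\PMD\le t<\tfrac12 d_{out}$ and $\epsilon_\PMD 2^{8\eta b}<1/(100n)$ are precisely what make both the many-bad and few-bad subcases close.
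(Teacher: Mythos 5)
Your proposal is correct and follows essentially the same route as the paper: the same three-way decomposition via \Cref{claim:classical-nm}, the same application of \Cref{claim:not-recovered-LB} in the uncorrelated branch, the same bad-block/good-block dichotomy in the recovered branch handled by \Cref{claim:not-many-eta} and \Cref{claim:many_eta} respectively, and the same bounded-support argument on the outer code using $t<\frac{1}{2}d_{out}$ to collapse the residual error. Your explicit Pauli-basis justification of the final outer-syndrome step and the bookkeeping $c_\Lambda=p_{\Lambda,\text{Same}}\cdot c_{1,\Lambda}$ make slightly more precise what the paper leaves implicit, but the argument is the same.
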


\begin{corollary}\label{corollary:parameter_choices_together}
With the choice of parameters in \Cref{claim:parameter_choices}, $(\Enc, \Dec)$ is a quantum tamper detection code encoding $k$ qubits into $k\cdot (1+O(1/\log k))$ qubits with error $2^{-\Tilde{\Omega}(k)}$.
\end{corollary}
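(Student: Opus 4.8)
The plan is to derive \Cref{corollary:parameter_choices_together} by substituting the parameter choices of \Cref{claim:parameter_choices} directly into the bound of \Cref{claim:all_together}, so that the work reduces to checking the two hypotheses of that claim and verifying that each of its four error terms is $2^{-\tilde\Omega(k)}$. First I would collect, from the proof of \Cref{claim:parameter_choices} with $k_\PMD = n_{out}/\log^c k$ ($c>10$ fixed) and $t = b\log k$: there are $n = \log^c k$ inner blocks; $b = O(k/\log^c k)$; each inner code has distance and pure distance $d = d^* = \Omega(k/\log^{c+2}k)$; the inner PMD has $\lambda = k_\PMD/\log k_\PMD = \Omega(k/\log^{c+1}k)$, hence $\epsilon_\PMD = 2^{-\Omega(k/\log^{c+1}k)}$; the outer code has $d_{out} = \Omega(k/\log^2 k)$; the non-malleable encoding has length $N_\NM = O(k/\log^{c-1}k)$ and error $\epsilon_\NM = 2^{-\Omega(N_\NM)}$; and the total length is $N = k(1+O(1/\log k))$, which already establishes the rate. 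I would then fix $\eta = c'/\log k$ for a small constant $c'$: since $c>10$ we have $t = O(k/\log^{c-1}k) \ll k/\log^2 k = O(d_{out})$, so $t < \tfrac12 d_{out}$; and $8\eta b = O(c'k/\log^{c+1}k)$, so choosing $c'$ small enough that $8\eta b \le \lambda/8$ yields $\epsilon_\PMD\cdot 2^{8\eta b} = 2^{-\Omega(k/\log^{c+1}k)}$, far below $1/(100\log^c k) = 1/(100n)$ --- which also verifies the hypothesis of \Cref{claim:many_eta}.

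With the hypotheses in place I would bound the four terms of \Cref{claim:all_together} in turn. The first is $100n\,\epsilon_\PMD\,2^{8\eta b} = 2^{O(\log\log k)}\cdot 2^{-\Omega(k/\log^{c+1}k)}$. The second has exponent $t d^2/b^2 = (\log k)\,d^2/b = \Omega(k/\log^{c+3}k)$. For the third, since $t/b = \log k$ and $2^{-\eta d^*} = 2^{-\Omega(k/\log^{c+3}k)} \ge \epsilon_\PMD^2$, it is at most $2\,(2\cdot 2^{-\eta d^*})^{\log k} = 2^{-\Omega(k/\log^{c+2}k)}$. The fourth is $\epsilon_\NM = 2^{-\Omega(k/\log^{c-1}k)}$. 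Each of the four is $2^{-\tilde\Omega(k)}$, so \Cref{claim:all_together} gives $\|\Dec_\Auth\circ\Lambda\circ\Enc_\Auth(\psi) - \rho'\|_1 \le 2^{-\tilde\Omega(k)}$ with $\rho' = c_\Lambda\,\psi\otimes\Acc + (1-c_\Lambda)\,\bot$ for every qubit-wise $\Lambda$, which is exactly the security condition of \Cref{definition:qa}; the no-tampering case holds by construction. Finally, the only randomized ingredient is the non-malleable code of \cite{Dziembowski2010}, so the Monte Carlo construction fails with probability at most $2^{-\Omega(N_\NM)} = 2^{-\tilde\Omega(k)}$.

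I expect the only non-mechanical point to be the choice $\eta = \Theta(1/\log k)$: it must be small enough that $\epsilon_\PMD\cdot 2^{8\eta b}$ --- which controls both the hypothesis of \Cref{claim:many_eta} and the first error term of \Cref{claim:all_together} --- stays $2^{-\Omega(k/\log^{O(1)}k)}$, while simultaneously being large enough that $\eta d^* = \Omega(k/\log^{O(1)}k)$, so that $(2^{-\eta d^*})^{t/b}$ with $t/b = \log k$ is still $2^{-\tilde\Omega(k)}$. Both requirements are met at once precisely because $b$, $\lambda$, and $d^*$ are each $k$ divided by a fixed power of $\log k$, so a single small constant $c'$ threads both needles; the assumption $c>10$ is used separately to keep $t < \tfrac12 d_{out}$. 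Everything else is routine propagation of the polylogarithmic exponents through the four terms, together with the rate bound already supplied by \Cref{claim:parameter_choices}.
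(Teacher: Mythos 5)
Your proof is correct and follows essentially the same route as the paper: substitute the parameter choices from Claim 6.6 into the error bound of Claim 6.8, verify the two hypotheses ($\epsilon_\PMD\cdot 2^{8\eta b} < 1/(100n)$ and $t < \tfrac12 d_{out}$), and then check that each of the four error terms is $2^{-\tilde\Omega(k)}$. The one divergence is your choice $\eta = c'/\log k$ versus the paper's $\eta = 1/\log^3 k$: your $\eta$ is larger by a $\log^2 k$ factor, which makes $8\eta b$ comparable to $\lambda$ rather than much smaller, so you must tune the constant $c'$ so that $8\eta b \le \lambda/8$; the paper's smaller $\eta$ trivially satisfies $8\eta b \ll \lambda$ but pays by pushing the $2^{-\eta d^* \cdot t/b}$ term down to $2^{-\Omega(k/\log^{c+4}k)}$ rather than your $2^{-\Omega(k/\log^{c+2}k)}$. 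Both choices thread the needle you identify, both yield $2^{-\tilde\Omega(k)}$, and the rest of the computation is identical.
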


\begin{proof}
 From \Cref{claim:parameter_choices}, we know that $\epsilon_\NM = 2^{-\Tilde{\Omega}(k)}$ and $\epsilon_\PMD \leq 2^{-k/(10\cdot\log^{c+1} k)}$. 
 
 Our choice of inner code has blocklength $b \leq 2 k/\log^c k$, distance and pure distance $d=d^* = \Omega(k/\log^{c+2}k)$, and $b/d = O(\log^2 k)$. We pick $t = b\log k \leq 2 k/\log^{c-1} k \leq \frac{k}{2\log^2 k}\leq \frac{1}{2}d_{out}$, so long as $c>4$ and $k$ is sufficiently large.

 Thus, if we pick $\eta = 1/\log^3 k$, then $2^{8\eta b} \leq 2^{20k/\log^{c+3}k}$ and we fulfill the conditions of \Cref{claim:all_together} since $\epsilon_\PMD\cdot 2^{8\eta b}\leq 2^{-k/(20\cdot\log^{c+1} k)}\ll 1/n$. We conclude that the overall error in statistical distance is 
 \begin{equation}
     \epsilon\leq 2^{-\Tilde{\Omega}(k)} + 2^{-k/(20\cdot\log^{c+1} k)} + 2^{-\Omega(k/\log^{3(c+1) k})}+2^{-\Omega(k/\log^{c+5} k)} \leq 2^{-\Tilde{\Omega}(k)}. 
 \end{equation}

and thereby, $(\Enc, \Dec)$ is $\epsilon$-secure as in \Cref{definition:qa}.
\end{proof}

\subsection{The Proofs}
\label{section:proofs-rate1}

\subsubsection{Two Packing Arguments}

We start with two ``packing arguments", which bound the norms of coefficients in Pauli basis decomposition of Krauss operators of the channels $\Lambda$. Then we move onto the proofs of the previous section. Recall \Cref{definition:eta-pauli} of an $\eta$-Pauli channel. 

\begin{claim}\label{claim:packing-not-eta-pauli}
    If a stabilizer code $Q$ of block-length $b$ and pure distance $d^*$ is acted on by a channel $\otimes_{i\in [b]}\Lambda_i$ where at most $(1-\delta)\cdot b$ of the $\Lambda_i$ are $\eta$-Pauli, then

    \begin{equation}
        \sum_{\mu, F = \otimes_{i\in [b]} F_i \in S(Q)} |c_F^\mu|^2\leq (1-\eta)^{\min(\delta\cdot b, d^*)}
    \end{equation}

    Where each Krauss operator $K^\mu$ of $\Lambda$ has a Pauli basis decomposition given by $\sum_F c_F^\mu F$.
\end{claim}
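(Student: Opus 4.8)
The plan is to peel off the product structure of $\Lambda=\otimes_{i\in[b]}\Lambda_i$, reinterpret the left-hand side as the probability that a \emph{Pauli-twirled} error lands in the stabilizer group $S(Q)$, and then bound that probability using the $\eta$-Pauli dichotomy together with the pure distance $d^*$.

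First I would use the product form of the channel. Expanding each single-qubit Krauss operator as $K_i^{\nu}=\sum_{P}c^{i,\nu}_{P}P$, every Krauss operator of $\Lambda$ is a tensor product $K^{\mu}=\otimes_i K_i^{\mu_i}$ indexed by a tuple $\mu=(\mu_i)_i$, so $c_F^{\mu}=\prod_i c^{i,\mu_i}_{F_i}$ for $F=\otimes_i F_i$. Hence
\begin{equation}
    \sum_{\mu}\sum_{F\in S(Q)}|c_F^{\mu}|^2 \;=\; \sum_{F\in S(Q)}\ \prod_{i\in[b]} p_i(F_i), \qquad\text{where } p_i(P):=\sum_{\nu}|c^{i,\nu}_{P}|^2 .
\end{equation}
Trace preservation $\sum_{\nu}(K_i^{\nu})^{\dagger}K_i^{\nu}=\mathbb{I}$ together with orthonormality of the single-qubit Paulis gives $\sum_{P}p_i(P)=1$, so each $p_i$ is a probability distribution on $\{I,X,Y,Z\}$; in fact it is exactly the Pauli-twirl of $\Lambda_i$. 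Thus the quantity to be bounded equals $\mathbb{P}[W\in S(Q)]$, where $W=(W_i)_i$ is a random Pauli string with independent coordinates $W_i\sim p_i$.

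Next I would invoke the dichotomy from Definition~\ref{definition:eta-pauli}. By hypothesis the set $T\subseteq[b]$ of coordinates whose channel is \emph{not} $\eta$-Pauli has $|T|\ge\delta\cdot b$; for $i\in T$ we have $p_i(P)\le 1-\eta$ for \emph{every} $P$ (in particular $p_i(I)\le 1-\eta$), while for $i\notin T$ there is a single Pauli $E_i$ with $p_i(E_i)\ge 1-\eta$, hence $p_i(P)\le\eta$ for $P\ne E_i$. Split the sum into $F=\mathbb{I}$ and $F\in S(Q)\setminus\{\mathbb{I}\}$. The identity term is immediate: $\mathbb{P}[W=\mathbb{I}]=\prod_i p_i(I)\le\prod_{i\in T}p_i(I)\le(1-\eta)^{|T|}\le(1-\eta)^{\delta b}$. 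For a nontrivial stabilizer $F$ the pure-distance hypothesis gives $wt(F)\ge d^*$, and I would bound the remaining contribution by arguing that landing inside $S(Q)\setminus\{\mathbb{I}\}$ forces at least $\min(d^*,\delta b)$ disjoint coordinate events, each of probability at most $1-\eta$; the summation over $S(Q)$ must be carried out through its subgroup/coset structure rather than termwise.

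The main obstacle is exactly this last step. A term-by-term union bound over $F\in S(Q)$ loses a factor $|S(Q)|$, which is disastrous when the inner code has high rate, so one must analyze $\mathbb{P}[W\in S(Q)]$ directly through the $\mathbb{F}_2$-linear description of $S(Q)$ --- modulo phase it is the set of Pauli strings commuting with every element of $N(Q)$ --- for example by expanding the indicator of this linear condition into characters $W\mapsto(-1)^{c(W,h)}$ indexed by $h\in N(Q)$, where $c(W,h)\in\{0,1\}$ records whether $W$ and $h$ anticommute. The pure distance enters crucially here: every nontrivial $h\in N(Q)$ has weight $\ge d^*$, which is what prevents the dominant Paulis of a few coordinates from ``absorbing'' an entire stabilizer. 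The delicate part is verifying that the $(1-\eta)$-suppressed coordinates can always be chosen disjointly, so that the factors genuinely multiply up to $(1-\eta)^{\min(d^*,\delta b)}$; this is where the bulk of the work lies, and it is where the largeness of $d^*$ (true for the QRS inner codes used in the construction) is indispensable.
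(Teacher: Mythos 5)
You set up the problem exactly as the paper does: factorizing $\sum_\mu |c_F^\mu|^2 = \prod_i p_i(F_i)$ with $p_i$ the Pauli-twirl distribution of $\Lambda_i$, recasting the left-hand side as $\Pr[W \in S(Q)]$ for an independent-coordinate random Pauli string $W$, and noting the $\eta$-Pauli dichotomy on each coordinate. That much is correct. But then you argue yourself into a dead end: you claim that ``a term-by-term union bound over $F\in S(Q)$ loses a factor $|S(Q)|$'' and therefore pivot to a Fourier/character expansion of the indicator of $S(Q)$ over $N(Q)$. That pivot is both unnecessary and unlikely to work: the individual character averages $\sum_P p_i(P)(-1)^{c(P,h_i)}$ are not uniformly small for $i\in T$ (e.g.\ if $p_i$ splits mass evenly between $I$ and $h_i$), so the character sum does not obviously beat the trivial bound.

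The missing observation is that the termwise argument does \emph{not} lose $|S(Q)|$ if you bound only a carefully chosen subset of coordinates by $1-\eta$ and retain the probability factors on the rest. Pick $T'\subset T$ with $|T'| = \min(\delta b, d^*)$. For each $F\in S(Q)$,
\begin{equation}
\prod_i p_i(F_i) \;\leq\; (1-\eta)^{|T'|}\cdot \prod_{i\in [b]\setminus T'} p_i(F_i).
\end{equation}
Because $|[b]\setminus T'| \geq b - d^*$ and the pure distance is $d^*$, the restriction map $F\mapsto (F_i)_{i\in[b]\setminus T'}$ is injective on $S(Q)$: two stabilizers agreeing outside $T'$ would differ by a nontrivial element of $N(Q)$ of weight $\leq d^*$. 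Hence
\begin{equation}
\sum_{F\in S(Q)}\ \prod_{i\in [b]\setminus T'} p_i(F_i) \;\leq\; \sum_{(F_i)_{i\notin T'}}\ \prod_{i\in [b]\setminus T'} p_i(F_i) \;=\; \prod_{i\in[b]\setminus T'}\sum_{F_i}p_i(F_i) \;=\; 1,
\end{equation}
and the claim follows immediately. This is exactly the paper's proof (which contains a small typo --- $T$ should read $T'$ in its final display). Your remark that ``the delicate part is verifying that the $(1-\eta)$-suppressed coordinates can always be chosen disjointly'' is a red herring: there is nothing to verify, since $T'$ is a single fixed set of size $\min(\delta b, d^*)$ and there is no union bound over $F$ --- the pure-distance injection handles the sum in one stroke.
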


\begin{proof}
    Let $T\subset [b]$ be any subset of channels which are not $\eta$-Pauli. For every $F = \otimes_{i\in [b]}F_i$, we note that
    \begin{gather}
        \sum_\mu |c_F^\mu|^2 = \prod_{i\in [b]} \bigg( \sum_{\mu_i} |c_{F_i}^{\mu_i}|^2\bigg) \leq \prod_{i\in [b]\setminus T} \bigg( \sum_{\mu_i} |c_{F_i}^{\mu_i}|^2\bigg) \prod_{i\in T} \big( \max_{E_i}\sum_{\mu_i} |c_{E_i}^{\mu_i}|^2\bigg) \leq \\ \leq \prod_{i\in [b]\setminus T} \bigg( \sum_{\mu_i} |c_{F_i}^{\mu_i}|^2\bigg)\cdot  (1-\eta)^{|T|}
    \end{gather}

    By definition of the pure distance $d^*$, once one fixes $n-d^*$ elements of $F$, $F_1\otimes F_2\cdots F_{b-d^*}$, there is only a single ``completion" $F_{n-d^*+1}\otimes \cdots F_n$ such that $\otimes F_i\in S(Q)$. Let $T'\subset T$ be any subset $\min(\delta\cdot b, d^*)$ of the qubit channels which are not $\eta$-Pauli. Then, 
    \begin{gather}
        \sum_{\mu, F = \otimes_{i\in [b]} F_i \in S(Q)} |c_F^\mu|^2 \leq (1-\eta)^{\min(\delta\cdot b, d^*)}  \sum_{F\in S(Q)}\prod_{i\in [b]\setminus T} \bigg( \sum_{\mu_i} |c_{F_i}^{\mu_i}|^2\bigg) \leq \\ \leq(1-\eta)^{\min(\delta\cdot b, d^*)}  \sum_{F_1, \cdots F_{[b]\setminus T}}\prod_{i\in [b]\setminus T} \bigg( \sum_{\mu_i} |c_{F_i}^{\mu_i}|^2\bigg) = \\= (1-\eta)^{\min(\delta\cdot b, d^*)} \prod_{i\in [b]\setminus T}\bigg( \sum_{\mu_i, F_i} |c^{\mu_i}_{F_i}|^2\bigg) =  (1-\eta)^{\min(\delta\cdot b, d^*)} 
    \end{gather}
\end{proof}

\begin{claim}\label{claim:packing-eta-pauli}
    Let $\Tilde{Q}$ be the composition of an $\epsilon$-$\PMD$ and a stabilizer code of block-length $b$ and ``pure'' distance $d^*$. If $\Tilde{Q}$ is acted on by a channel $\Lambda = \otimes \Lambda_i$ where at least $b-d^*$ of the $\Lambda_i$ are $\eta$-Pauli, then 
    \begin{equation}
        \sum_\mu \big(\sum_{F\in N(Q)} |c_F^\mu|\big)^2\leq 2^{8\eta\cdot b}
    \end{equation}
    Where each Krauss operator $K^\mu$ of $\Lambda$ has a Pauli basis decomposition given by $\sum_F c_F^\mu F$.
\end{claim}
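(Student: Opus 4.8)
The plan is to exploit the product structure of both the channel $\Lambda=\otimes_i\Lambda_i$ and the Pauli decomposition of its Kraus operators, together with the pure-distance property of the inner stabilizer code $Q$, reducing the estimate to a product of per-coordinate bounds in the spirit of the proof of \Cref{claim:packing-not-eta-pauli}. Writing $K^\mu=\otimes_i K_i^{\mu_i}$ (we may assume each single-qubit $\Lambda_i$ has at most four Kraus operators) we get $c_F^\mu=\prod_i c_{F_i}^{\mu_i}$ for $F=\otimes_i F_i$. Let $T\subseteq[b]$ be the set of coordinates where $\Lambda_i$ is \emph{not} $\eta$-Pauli, so $|T|\le d^*$ by hypothesis, and for $i\notin T$ let $E_i$ be the dominant Pauli of \Cref{definition:eta-pauli}, so that $\sum_{\mu_i}|c_{E_i}^{\mu_i}|^2\ge 1-\eta$; using the normalization $\sum_{\mu_i}(K_i^{\mu_i})^\dagger K_i^{\mu_i}=\mathbb{I}$, i.e. $\sum_{\mu_i,\sigma}|c_\sigma^{\mu_i}|^2=1$, this means the total off-dominant $\ell_2$-mass at a good coordinate is at most $\eta$.

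First I would organize the inner sum $\sum_{F\in N(Q)}|c_F^\mu|$ by the ``deviation set'' $D(F):=\{\,i\notin T:F_i\ne E_i\,\}$ of $F$ from the dominant string on the good coordinates, together with the restriction $F|_T$. Here the pure-distance hypothesis is used exactly as in \Cref{claim:packing-not-eta-pauli}: two normalizer elements that agree with the same dominant string outside $T\cup D$ have quotient a normalizer element supported on $T\cup D$, of weight $\le|T|+|D|$, which forces uniqueness once $|T|+|D|<d^*$; more generally the fibre of $F\mapsto F|_{[b]\setminus T}$ is a subgroup of $N(Q)$ all of whose nontrivial elements have full support on $T$, so fixing $F$ on the good coordinates pins down the $T$-coordinates up to a low-dimensional ambiguity. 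On the good coordinates with $F_i=E_i$ one keeps $|c_{E_i}^{\mu_i}|$; each deviation $i\in D(F)$ or each coordinate of $T$ contributes a factor bounded by an off-dominant coefficient (and by $1$ in the worst case). Summing the resulting geometric-type series over deviation patterns and using $|c_{F_i}^{\mu_i}|\le 1$, I would bound $\sum_{F\in N(Q)}|c_F^\mu|$ by a product $\prod_{i\notin T}\big(|c_{E_i}^{\mu_i}|+r_i(\mu_i)\big)$ with $\sum_{\mu_i}r_i(\mu_i)^2=O(\eta)$ at every good coordinate.

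Finally I would square, sum over $\mu$, and use the product form to get $\sum_\mu\big(\sum_{F\in N(Q)}|c_F^\mu|\big)^2\le\prod_{i\notin T}\sum_{\mu_i}\big(|c_{E_i}^{\mu_i}|+r_i(\mu_i)\big)^2$, and estimate each factor as $1+O(\eta)$ using the normalization and Cauchy--Schwarz; taking the product over the at most $b$ factors yields an exponential bound of the claimed form $2^{8\eta b}$, with the constant $8$ leaving room for the routine estimates. The main obstacle is the middle step: naively each of the up-to-$d^*$ non-$\eta$-Pauli coordinates could contribute a constant factor larger than $1$ (as a depolarizing qubit would), giving only $2^{\Theta(d^*)}$, which is far too weak since $d^*\gg\eta b$ for the parameter choices of \Cref{claim:parameter_choices}. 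Getting the bound to depend only on the $\eta$-fraction of off-dominant mass, and not on $d^*$, is precisely where one must invoke the pure distance of $Q$ together with $|T|\le d^*$, to argue that once $F\in N(Q)$ is specified on the $\eta$-Pauli coordinates its values on $T$ are essentially determined, so those coordinates are ``free'' and do not blow up the sum.
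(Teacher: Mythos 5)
Your proposal is correct and takes essentially the same approach as the paper: factor the Kraus coefficient sum coordinate-by-coordinate using the product structure of $\Lambda$, invoke the pure distance of $Q$ to make the $\le d^*$ non-$\eta$-Pauli coordinates ``free,'' bound each $\eta$-Pauli factor by $1+O(\eta)$ after squaring and summing over $\mu_i$, and exponentiate. The deviation-set decomposition is an unnecessary detour---the paper arrives at your product $\prod_{i\notin T}\big(|c_{E_i}^{\mu_i}|+r_i(\mu_i)\big)$ directly by writing $\sum_{\sigma_i}|c_{\sigma_i}^{\mu_i}|$ on each good coordinate and taking the maximum over the remaining $d^*$ coordinates.
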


\begin{proof}

The proof of this claim follows from a ``packing" argument. Fix a Krauss operator $\mu = (\mu_1, \cdots, \mu_b)$, $K_\mu = \sum_\sigma c_\sigma^\mu \sigma$, and let us consider the Pauli's $\sigma\in N(Q)$. After fixing the first $b-d^*$ single qubit Paulis $\sigma_1, \sigma_2\cdots \sigma_{b-d^*}$, the definition of ``pure" distance ensures there only exists a single ``completion" $\sigma_{b-d^*+1}\otimes \cdots \sigma_b$ such that $\sigma\in N(Q)$. Thus, 
    \begin{gather}
        \sum_{\sigma\in N(Q)} |c_\sigma^\mu | = \prod_{i\in [b-d^*]}\bigg( \sum_{\sigma_i} |c^{\mu_i}_{\sigma_i}|\bigg) \times \prod_{i\in [b-d^*,\cdots, b]}\max_{\sigma_i}|c^{\mu_i}_{\sigma_i}| \leq \\ \leq  \prod_{i\in [b-d^*]}\bigg( \sum_{\sigma_i} |c^{\mu_i}_{\sigma_i}|\bigg) \times \prod_{i\in [b-d^*,\cdots, b]}  \bigg(\sum_{\sigma_i}|c^{\mu_i}_{\sigma_i}|^2\bigg)^{1/2}
    \end{gather}

    To conclude, if for $i\in [b-d^*]$, $\Lambda_i$ is $\eta$-Pauli for some Pauli $E_i$
    \begin{gather}
        \sum_{\mu_i}\bigg(\sum_{\sigma_i} |c^{\mu_i}_{\sigma_i}|\bigg)^2 = \sum_{\mu_i} |c_{E_i}^{\mu_i}|^2 + 2\sum_{\mu_i} |c_{E_i}^{\mu_i}| \cdot \bigg(\sum_{\sigma_i\neq E_i} |c^{\mu_i}_{\sigma_i}|\bigg) +\sum_{\mu_i}
        \bigg(\sum_{\sigma_i\neq E_i} |c^{\mu_i}_{\sigma_i}|\bigg)^2 \leq  \\  \leq 1 + 2\cdot 2\bigg(\sum_{\sigma_i\neq E_i,\mu_i} |c^{\mu_i}_{\sigma_i}|^2\bigg)^{1/2}+ 2^2\bigg(\sum_{\sigma_i\neq E_i,\mu_i} |c^{\mu_i}_{\sigma_i}|^2\bigg) \leq (1+2\eta)^2\leq 2^{8\eta }
    \end{gather}

    Where between lines we applied the Cauchy-Schwartz inequality twice, that $\sum_{\mu_i, \sigma_i}|c^{\mu_i}_{\sigma_i}|^2 = 1$, and finally $1+x\leq 2^{2x}$. Plugging this into the packing argument gives us the claim. 
\end{proof}

\subsubsection{If the key is recovered}

\begin{proof}

[of \Cref{claim:not-many-eta}]
    We note that within each single block $j\in B$, the marginal distribution on whether the code-space projection $\Pi_{\Tilde{Q}_j}$ accepts is given by
    \begin{gather}
        \mathbb{P}\bigg(\text{Accept }B_j|\text{ key recovered}\bigg) = \Tr_{B_j}\bigg[\Pi_{\Tilde{Q}_j} \mathbb{E}_s \bigg[\bigg(\otimes_{i\in B_j}(P^{G(s)})^\dagger_i\circ \Lambda_i \circ P^{G(s)}_i \bigg) \big(\Enc_{\Tilde{Q}_j}(\psi_j)\big)\bigg]\bigg] \\
        =\Tr_{B_j}\bigg[\Pi_{\Tilde{Q}_j} \sum_{F_j, \mu_j} |c_{F_j}^{\mu_j}|^2\cdot F\cdot \Enc_{\Tilde{Q}_j}(\psi_j) \cdot F^\dagger\bigg] 
    \end{gather}
    Where we applied a Pauli-twirl (\cref{fact:twirl}), with the guarantee that the substring $G(s)_{B_j}$ is uniformly random (since $b\leq t$). In the above, $c_{F_j}^{\mu_j}$ corresponds to the coefficient corresponding to the $b$-qubit Pauli $F_j$ in a Pauli basis decomposition of the $\mu_j$th Krauss operator of $\Lambda_j$

\begin{equation}
\leq \sum_{\mu_j, F_j\in S(Q_j)}|c_{F_j}^{\mu_j}|^2+\epsilon^2_\PMD \sum_{F\in N(Q_j)-S(Q_j)}|c_{F_j}^{\mu_j}|^2 \leq \sum_{F\in S(Q_j)}|c_{F_j}^{\mu_j}|^2+\epsilon^2_\PMD \leq \epsilon^2_\PMD + (1-\eta)^{\min(d^*, \delta \cdot b)}
\end{equation}

   Where we used that $\Tilde{Q}_j$ is the composition of a stabilizer code with a $\epsilon_\PMD$-$\PMD$, the fact that $\sum_{F_j, \mu_j}|c_{F_j}^{\mu_j}|^2=1$, and \Cref{claim:packing-not-eta-pauli}. Since the pseudorandom one-time-pad is $t$-wise independent, we have 

   \begin{equation}
       \mathbb{P}\bigg(B_1, B_2\cdots B_{t/b} \text{ Accept}|\text{ key recovered}\bigg) = (\epsilon^2_\PMD + (1-\eta)^{\min(d^*, \delta \cdot b)})^{t/b}
   \end{equation}
\end{proof}

\begin{proof}

[of \Cref{claim:many_eta}]
    We start by considering a single block $j\in B$. If $K^\mu_j = \sum_{F} c_{F, j}^{\mu_j}$ is a Pauli basis decomposition of each Krauss operator $K^\mu_j$ of $\Lambda_j$, let $a_F = \sum_{\sigma  \in S(Q) }c_{F\sigma, j}$ be the sum of all the coefficients $c_{F\sigma}$ of operators equivalent up to a stabilizer to $F$. We can group the action of $\Pi_{\Tilde{Q}}$ and $\Lambda$ based on cosets of $N(Q)$:

    \begin{equation}
        \Pi_{\Tilde{Q}_j} \circ  \Lambda_j \circ \Enc_{\Tilde{Q}_j}(\psi) = \sum_{\mu} \sum_{E, F\in N(Q_j)\setminus S(Q_j)}a_E^\mu (a_F^\mu)^*\cdot \Pi_{\Tilde{Q}_j} E \Enc_{\Tilde{Q}_j}(\psi) F^\dagger \Pi_{\Tilde{Q}_j} 
    \end{equation}

    Let $\alpha_j = \sum_\mu |a^\mu_{\mathbb{I}}|^2$. Using $\|\Pi_{\Tilde{Q}} E \Pi_{\Tilde{Q}}\|_\infty =0$ if $E\notin N(Q)$ and $\|\Pi_{\Tilde{Q}} E \Pi_{\Tilde{Q}}\|_\infty \leq \epsilon_\PMD$ if $E\in N(Q)$ but not $S(Q)$, 
    \begin{gather}
\bigg\|\Pi_{\Tilde{Q}_j} \circ  \Lambda_j \circ \Enc_{\Tilde{Q}_j}(\psi) - \alpha_j\cdot \Enc_{\Tilde{Q}_j}(\psi)\bigg\|_1 \leq \\ \leq 2\cdot \epsilon_\PMD\cdot  \sum_\mu |a_\mathbb{I}^\mu|\cdot \sum_{\substack{E\in N(Q_j)\setminus S(Q_j) \\ E\neq \mathbb{I}}} |a_E^\mu| + \epsilon^2_\PMD \bigg( \sum_{\substack{E\in N(Q_j)\setminus S(Q_j) \\ E\neq \mathbb{I}}} |a_E^\mu| \bigg)^2 \\
\leq \epsilon_\PMD \sum_\mu \bigg( |a_\mathbb{I}^\mu|^2 + \bigg( \sum_{\substack{E\in N(Q_j)\setminus S(Q_j) \\ E\neq \mathbb{I}}} |a_E^\mu| \bigg)^2\bigg) + \epsilon^2_\PMD \bigg( \sum_{\substack{E\in N(Q_j)\setminus S(Q_j) \\ E\neq \mathbb{I}}} |a_E^\mu| \bigg)^2 \leq \\ \leq \epsilon_\PMD\cdot \alpha_j + 2\cdot\epsilon_\PMD\cdot \sum_\mu \bigg(\sum_{\substack{E\in N(Q_j)\setminus S(Q_j) \\ E\neq \mathbb{I}}} |a_E^\mu| \bigg)^2 \leq 4\cdot\epsilon_\PMD\cdot 2^{8\eta  \cdot b} 
    \end{gather}

    Where in the last line we used \Cref{claim:packing-eta-pauli}, and by assumption that in block $j$, at least $b-d^*$ of the qubit channels are $\eta$-Pauli. By a simple hybrid argument, we can bootstrap this statement to all the blocks in $B$
\begin{gather}
        \|\otimes_{j\in B} \Pi_{\Tilde{Q}_j} \circ \Lambda_{B_j}\circ \Enc_{\Tilde{Q}_j} (\psi) - \bigg( \prod_j \alpha_j\bigg)\cdot \otimes_{j\in B}\Enc_{\Tilde{Q}_j} (\psi)\|_1\leq \\ \leq \sum_{\ell \in [|B|]} \bigg(\prod_{j\in [\ell -1]} \alpha_j\bigg) \cdot  \|\otimes_{j\in [\ell]} \Pi_{\Tilde{Q}_j} \circ \Lambda_{B_j}\circ \Enc_{\Tilde{Q}_j} (\psi) - \alpha_\ell\cdot \otimes_{j\in [\ell-1]} \Pi_{\Tilde{Q}_j} \circ \Lambda_{B_j}\circ \Enc_{\Tilde{Q}_j} (\psi) \|_1 \\ \leq n\cdot \bigg( 4\cdot\epsilon_\PMD\cdot 2^{8\eta  \cdot b} \bigg) \bigg( 1+4\cdot\epsilon_\PMD\cdot 2^{8\eta  \cdot b} \bigg)^n \leq 10^2\cdot n\cdot \epsilon_\PMD\cdot 2^{8\eta \cdot b}
\end{gather}

so long as $\epsilon_\PMD\cdot 2^{8\eta \cdot b} < 1 /(n\cdot 10^2)$, where we used $\alpha_j\leq 1+4\cdot\epsilon_\PMD\cdot 2^{8\eta  \cdot b}$.
\end{proof}

\subsubsection{If the key is not recovered}

\begin{proof}

[of \Cref{claim:not-recovered-LB}] By the $t\geq (|B|\cdot b)$-wise independence, we note that the syndrome measurements on each block of $B$ are independent events. Moreover, on average over the key,  by \cref{fact:1design} the density matrix on $B$ is a tensor product of mixed states $\otimes_{j\in [B]} \otimes_{i\in B_j}  \Lambda_{i, \Tilde{s}}(\mathbb{I}/2)$. By an averaging argument,  $p\bigg(\text{Accept }B|\text{ not recovered}\bigg)$ is maximized when each $\Lambda_{i, \Tilde{s}}(\mathbb{I}/2)$ is a pure state $\otimes_{i\in [n]}\ket{\phi_i}$. 

However, each $Q_i$ is a stabilizer code, and thus its code-space should not have high fidelity with any product state. Again, we use the circuit lower bound by \cite{Anshu2020CircuitLB}, restated in \cref{lemma:clb}, to prove that each inner code $Q_i$ for $j\in B$ rejects the syndrome measurement with high probability:
\begin{equation}
    \Tr\big[ \Pi_{Q_i} \otimes_{j\in B_i}\Lambda_{j, \Tilde{s}}(\mathbb{I}/2)\big] \leq 2^{-d^2/(2^{12}\cdot b)}
\end{equation}

Thus, all the blocks in $B$ accept with probability at most
\begin{equation}
    p\bigg(\text{Accept }B|\text{ not recovered}\bigg) = \prod_{j\in B} \Tr\big[ \Pi_{Q_j} \otimes_{i\in B_j}\Lambda_{i, \Tilde{s}}(\mathbb{I}/2)\big] \leq 2^{-t\cdot d^2/(2^{12}\cdot b^2)}
\end{equation}

as desired. 
\end{proof}

\subsubsection{Putting Everything Together}

\begin{proof}

[of \Cref{claim:all_together}] Let $\rho=\psi_{AR}$ be any message state in Alice's possession, with some purification $R$ which does not go through the channel. For notational convenience, let us consider first applying the channels on the classical register $C$, $\otimes_{i\in C} \Lambda_i $. From \Cref{claim:classical-nm},
\begin{equation}
   \| \big( (\Dec_\NM \circ \otimes_{i\in C} \Lambda_i) \otimes \mathbb{I}_{\Tilde{Q}}\big)\Enc_{}(\psi) - \sum_{\Tilde{s}, s}\frac{p_{\Lambda, s}(\Tilde{s})}{|K|} \ketbra{\Tilde{s}}\otimes \big(P^{G(s)} \circ \Enc_{\Tilde{Q}} (\psi) \big) \|_1\leq \epsilon_{\NM}
\end{equation}
Where $p_{\Lambda, k}(\Tilde{k})$ corresponds to the convex combination $p_{\Lambda, s}(\Tilde{s}) = p_{\Lambda, \text{Same}}\cdot  \delta_{s, \Tilde{s}} + p_{\Lambda, \bot}\cdot \delta_{\bot, \Tilde{s}} + (1-p_{\Lambda, \text{Same}}-p_{\Lambda, \bot})\cdot \mathcal{D}^\Lambda(\Tilde{s})$, for some positive real parameters $p_{\Lambda, \text{Same}}, p_{\Lambda, \bot}$. 

 The case where the key is not recovered corresponds to $\Tilde{s}\leftarrow \mathcal{D}^\Lambda$. From \Cref{claim:not-recovered-LB}, after applying the block-wise syndrome measurements in \Cref{alg:dec_qnm_inner}, this case must have high fidelity with $\bot$ (rejection): for all $\Tilde{s}$, 
\begin{gather}
   \bigg\| \Dec_{\Tilde{Q}}\circ (P^{G(\Tilde{s})})^\dagger\circ  \mathbb{E}_{s} \bigg(P^{G(s)} \circ \Enc_{\Tilde{Q}}(\psi)\bigg) - \bot\bigg\|_1 \leq 2\big(1-\mathbb{P}[\text{reject}|\text{key not recovered}]\big) \leq \\\leq 2\cdot \mathbb{P}\bigg[\text{All Inner Syndrome Meas. Accept}\big| \text{key not recovered}\bigg]\leq 2^{-\Omega(t\cdot d^2/b^2)}
\end{gather}

If the key is recovered, we divide into two cases. From \Cref{claim:not-many-eta}, if there exist a subset of $\geq t/b$ inner codes where $\geq d^*$ of their qubit channels is not $\eta$-Pauli, then the probability all the inner code syndrome and PMD measurements accept decays exponentially:
\begin{gather}
   \bigg\| \mathbb{E}_{s}\Dec_{\Tilde{Q}}\circ (P^{G(s)})^\dagger \circ \Lambda \circ P^{G(s)} \circ \Enc_{\Tilde{Q}}(\psi) - \bot\bigg\|_1 \leq 2\cdot \mathbb{P}\bigg[\Dec_{\Tilde{Q}_j}\text{ accepts for all $j$}\big| \text{key recovered}\bigg] \\
   \leq 2\cdot (\epsilon^2_\PMD + 2^{-\eta d^*})^{t/b}
\end{gather}

The second case is if there exists a subset $B$ of $\geq n - t/b$ inner codes where $\geq b-d^*$ of their qubit-wise channels are $\eta$-Pauli. By \Cref{claim:many_eta}, if the inner code error detection steps accept on $B$, then, the ``effective" channel corrupts only few inner code blocks. Note that the channel still factorizes for each choice of $s$, $\Lambda_s=(P^{G(s)})^\dagger \circ \Lambda\circ (P^{G(s)}) = \otimes_{i}\Lambda_{i, s} = \otimes_i\big[(P^{G(s)})^\dagger_i \circ \Lambda_i\circ (P^{G(s)})_i\big]$, and moreover, since $(P^{G(s)})_i$ is simply a single qubit Pauli, $\Lambda_{i, s}$ and $\Lambda_{i}$ have the same Pauli ``Sparse-ness". Thus, if $\Pi_{\Tilde{Q}_j}$ corresponds to the accepting branch of the measurements in $\Dec_{\Tilde{Q}_j}$ on $\Tilde{Q}_j$, for $j\in B$:
\begin{equation}
        \|(\otimes_{j\in B} \Pi_{\Tilde{Q}_j}) \circ \Lambda_s \circ \Enc_{A\rightarrow \Tilde{Q}} (\psi_{AR}) - \text{const }\cdot (\otimes_{\substack{j\notin B \\ i\in B_j}}\Lambda_{i, s})\circ \Enc_{A\rightarrow \Tilde{Q}} (\psi_{AR})\|_1\leq  10^2\cdot n\cdot \epsilon_\PMD\cdot 2^{8\eta \cdot b}
\end{equation}

The effective channel (CP map) $\Lambda_{\Bar{B}} = \otimes_{j\in \Bar{B}} \Pi_{\Tilde{Q}_j} \circ \otimes_{\substack{i\in \Bar{B}_j}}\Lambda_{i, s}$ is thus close to an operator supported only on $\Bar{B}$. $\Bar{B}$ consists of at most $t/b$ inner code blocks, and thus in turn at most $\frac{t}{b}\times k_{\PMD} \leq t$ symbols of the outer code. Since $t < \frac{1}{2}d_{out}$, the outer code can detect the effective channel exactly by measuring its syndrome: 
\begin{equation}
    \Dec_{Q_{out}}\circ \bigg(\otimes_j \Enc^\dagger_{\Tilde{Q}_j}\bigg)\circ \bigg(\Lambda_{\Bar{B}}\otimes \mathbb{I}\bigg)\circ \Enc_{A\rightarrow \Tilde{Q}} (\psi_{AR}) = \beta_{acc} \cdot \psi_{AR} +  \Rej_A \otimes \sigma_{R} \cdot \beta_{rej}
\end{equation}

We conclude, for all qubit-wise $\Lambda$, Bob receives a density matrix near $\rho' = c_\Lambda \cdot \psi_{AR} + (1-c_\Lambda)\cdot \bot$ which is a convex combination of the original message and rejection. The approximation error is given by:
\begin{equation}
    \|[\Dec_{}\circ \Lambda \circ \Enc_{}\otimes \mathbb{I}_{R}](\psi_{AR}) - \rho'\|_1 \leq 100n\cdot \epsilon_\PMD\cdot 2^{8\eta  \cdot b} + 2^{-\Omega(t\cdot d^2/b^2)} + 2\cdot (\epsilon^2_\PMD + 2^{-\eta d^*})^{t/b} + \epsilon_{\NM}
\end{equation}

\end{proof}


\end{document}